
\documentclass[11pt]{article}

\usepackage{authblk}
\author[1]{Ingvar Ziemann}
\author[2]{Anastasios Tsiamis}
\author[1]{Bruce Lee}
\author[3]{Yassir Jedra}
\author[1]{Nikolai Matni}
\author[1]{George J. Pappas}

\affil[1]{University of Pennsylvania}
\affil[2]{ETH Zürich}
\affil[3]{KTH Royal Institute of Technology}

\usepackage[utf8]{inputenc} 
\usepackage[T1]{fontenc} 

\usepackage{fullpage}

\usepackage{url}            
\usepackage{booktabs}       
\usepackage{amsfonts}       
\usepackage{nicefrac}       
\usepackage{microtype}      
\usepackage[dvipsnames]{xcolor}

\usepackage{natbib} 

\usepackage{enumitem}
\usepackage{amsthm}
\usepackage{amssymb}
\usepackage{amsmath}
\usepackage{mathrsfs}
\usepackage{hyperref}
\hypersetup{
    pdftoolbar=true,        
    pdfmenubar=true,        
    pdffitwindow=false,     
    pdfstartview={FitH},    
    pdfnewwindow=true,      
    colorlinks=true,       
    linkcolor=blue,          
    citecolor=ForestGreen,        
    filecolor=magenta,      
    urlcolor=red,           
    breaklinks=false,
}

\usepackage{cleveref}
\usepackage{thmtools}
\usepackage{thm-restate}

\numberwithin{equation}{section}

\date{}



\renewcommand\qedsymbol{$\blacksquare$}

\newcommand{\e}{\varepsilon}

\newcommand{\E}{\mathbf{E}}
\newcommand{\Prob}{\mathbf{P}}

\DeclareMathOperator*{\argmin}{argmin}

\DeclareMathOperator{\blkdiag}{blkdiag}

\DeclareMathOperator{\tr}{tr}
\DeclareMathOperator{\VEC}{\mathsf{vec}}

\DeclareMathOperator{\I}{\mathtt{I}}


\newcommand{\T}{\mathsf{T}}

\newcommand{\Z}{\mathbb{Z}}
\newcommand{\calA}{\mathcal{A}}
\newcommand{\calB}{\mathcal{B}}
\newcommand{\calN}{\mathcal{N}}
\newcommand{\calF}{\mathcal{F}}

\newcommand{\calE}{\mathcal{E}}

\newcommand{\calT}{\mathcal{T}}

\newcommand{\scrF}{\mathscr{F}}

\newcommand{\sfX}{\mathsf{X}}

\newcommand{\sfP}{\mathsf{P}}

\newcommand{\dx}{d_{\mathsf{X}}}
\newcommand{\dy}{d_{\mathsf{Y}}}

\newcommand{\du}{d_{\mathsf{U}}}
\newcommand{\dw}{d_{\mathsf{W}}}
\newcommand{\da}{d_{\mathsf{\eta}}}
\newcommand{\deta}{d_{\mathsf{\eta}}}

\newcommand{\R}{\mathbb{R}}
\newcommand{\N}{\mathbb{N}}
\renewcommand{\Pr}{\mathbf{P}}
\newcommand{\C}{\ensuremath{\mathcal{C}}}

\newcommand{\iid}{iid}
\newcommand{\set}[1]{\ensuremath{\left\{ #1\right\}}}

\newcommand{\matr}[1]{\ensuremath{\begin{bmatrix} #1 \end{bmatrix}}}

\newcommand{\paren}[1]{\ensuremath{\left( #1\right)}}
\newcommand{\curly}[1]{\mathopen{}\left\{#1\right\}\mathclose{}}
\newcommand{\brac}[1]{\mathopen{}\left[#1\right]\mathclose{}}
\newcommand{\norm}[1]{\lVert #1 \rVert}
\newcommand{\bignorm}[1]{\left\lVert #1 \right\rVert}

\newcommand{\opnorm}[1]{\| #1 \|_{\mathsf{op}}}
\newcommand{\bigopnorm}[1]{\left\| #1 \right\|_{\mathsf{op}}}

\newcommand{\As}{A^\star}
\newcommand{\Acs}{A^\star_\mathsf{cl}}
\newcommand{\Bs}{B^\star}
\newcommand{\Cs}{C^\star}
\newcommand{\Ks}{F^\star}
\newcommand{\Ps}{P^\star}
\newcommand{\RIC}{\mathsf{RIC}}
\newcommand{\Ms}{\theta^\star}
\newcommand{\Mls}{\widehat{\theta}}
\newcommand{\ECov}{\widehat{\Sigma}}
\newcommand{\CovX}{\Sigma}
\newcommand{\CovXState}{\Sigma}
\newcommand{\ECovZ}{\widehat{\Sigma}}
\newcommand{\CovZ}{\Sigma}
\newcommand{\bfL}{\mathbf{L}}
\newcommand{\bfP}{\mathbf{P}}
\newcommand{\Csys}{C_{\mathsf{sys}}}

\newcommand{\Tburn}{T_\mathsf{pe}}
\newcommand{\Tburnss}{T^{\mathsf{ss}
}_\mathsf{pe}}
\newcommand{\snr}{\mathsf{SNR}}
\newcommand{\Sigmaw}{\Sigma_W}
\newcommand{\Sigmav}{\Sigma_V}
\newcommand{\Sigmae}{\Sigma_E}

\newtheorem{definition}{Definition}[section] 
\newtheorem{theorem}{Theorem}[section] 
\newtheorem{proposition}{Proposition}[section] 
\newtheorem{corollary}{Corollary}[section] 
\newtheorem{assumption}{Assumption}[section] 
\newtheorem{problem}{Problem}[section] 
\newtheorem{remark}{Remark}[section]
\newtheorem{lemma}{Lemma}[section]
\newtheorem{example}{Example}[section]




\title{A Tutorial on the Non-Asymptotic Theory of System Identification}
\begin{document}
\maketitle

\begin{abstract}

This tutorial serves as an introduction to recently developed non-asymptotic methods in the theory of---mainly linear---system identification. We emphasize tools we deem particularly useful for a range of problems in this domain, such as the covering technique, the Hanson-Wright Inequality and the method of self-normalized martingales. We then employ these tools to give streamlined proofs of the performance of various least-squares based estimators for identifying the parameters in autoregressive models. We conclude by sketching out how the ideas presented herein can be extended to certain nonlinear identification problems. 
\end{abstract}

\tableofcontents

\newpage

\section*{Notation}

Maxima (resp.\ minima) of two numbers $a,b\in \R$ are denoted by $a\vee b =\max(a,b)$ ($a\wedge b = \min(a,b)$). For two sequences $\{a_t\}_{t\in \Z}$ and $\{b_t\}_{t\in \Z}$ we introduce the shorthand $a_t \lesssim b_t$ if there exists a universal constant $C>0$ and an integer $t_0$ such that $a_t \leq C b_t$ for every $t \geq t_0$. If $a_t \lesssim b_t$ and $b_t \lesssim a_t$ we write $a_t \asymp b_t$.  Let $\mathsf{X} \subset \R^d$ and let $f,g \in \mathsf{X} \to R$. We write $f=O(g)$ if $\limsup_{x\to x_0} |f(x)/g(x)|<\infty$, where the limit point $x_0$ is typically understood from the context. We use $\tilde O$ to hide logarithmic factors and write  $f=o(g)$ if $\limsup_{x\to x_0} |f(x)/g(x)|=0$. We write  $f=\Omega(g)$ if $\limsup_{x\to x_0} |f(x)/g(x)|>0$. For an integer $N$, we also define the shorthand $[N] \triangleq \{1,\dots,N\}$.  

Expectation (resp.\ probability) with respect to all the randomness of the underlying
probability space is denoted by $\E$ (resp.\ $\Pr$).

The Euclidean norm on $\mathbb{R}^{d}$ is denoted $\|\cdot\|_2$,
and the unit sphere in $\R^d$ is denoted $\mathbb{S}^{d-1}$. The standard inner product on $\R^{d}$ is denoted $\langle\cdot,\cdot\rangle$. We embed matrices $M \in \R^{d_1\times d_2}$ in Euclidean space by vectorization: $\VEC M \in \mathbb{R}^{d_1 d_2}$, where $\VEC$ is the operator that vertically stacks the columns of $M$ (from left to right and from top to bottom). For a matrix $M$ the Euclidean norm is the Frobenius norm, i.e., $\|M\|_F\triangleq \|\VEC M\|_2$. We similarly define the inner product of two matrices $M,N$ by $\langle M, N \rangle \triangleq \langle \VEC M, \VEC N\rangle $. The transpose of a matrix $M$ is denoted by $M^\T$ and $\tr M $ denotes its trace. For a matrix $M \in \R^{d_1 \times d_2}$, we order its singular values $\sigma_{1}(M),\dots,\sigma_{d_1 \wedge d_2}(M)$ in descending order by magnitude. We also write $\opnorm{M}$ for its largest singular value: $\opnorm{M} \triangleq \sigma_1(M)$. To not carry dimensional notation, we will also use $\sigma_{\min}(M)$ for the smallest nonzero singular value.  For square matrices $M\in \R^{d\times d}$ with real eigenvalues, we similarly order the eigenvalues of $M$ in descending order as $\lambda_{1}(M),\dots,\lambda_{d}(M)$. In this case, $\lambda_{\min}(M)$ will also be used
to denote the minimum (possibly zero) eigenvalue of $M$. For two symmetric matrices $M, N$, we write $M \succ N$ ($M\succeq N)$ if $M-N$ is positive (semi-)definite. 

\newpage
\section{Introduction}

Machine learning methods are at an ever increasing pace being integrated into domains that have classically been within the purview of controls. There is a wide range of examples, including perception-based control, agile robotics, and autonomous driving and racing. As exciting as these developments may be, they have been most pronounced on the experimental and empirical sides. To deploy these systems safely, stably, and robustly into the real world, we argue that a principled and integrated theoretical understanding of a) fundamental limitations and b) statistical optimality is needed. Under the past few years, a host of new techniques have been introduced to our field. Unfortunately, existing results in this area are relatively inaccessible to a typical first or second year graduate student in control theory, as they require both sophisticated mathematical tools not typically included in a control theorist's training (e.g., high-dimensional statistics and learning theory). 

This tutorial seeks to provide a streamlined exposition of some of these recent advances that are most relevant to the non-asymptotic theory of linear system identification. Our aim is not to be encyclopedic but rather to give simple proofs of the main developments and to highlight and collect the key technical tools to arrive at these results. For a broader---and less technical---overview of the literature we point the reader to our recent survey \citep{tsiamis2022statistical}. It is also worth to point out that the classical literature on system identification has done a formidable job at---often very accurately---characterizing the asymptotic performance of identification algorithms \citep{lennart1999system}. Our aim is not to supplant this literature but rather to complement the asymptotic picture with finite sample guarantees by relaying recently developed technical tools drawn from high-dimensional probability, statistics and learning theory \citep{vershynin_2018,wainwright2019high}. 

\subsection{Problem Formulation}\label{subsec:probfo}

Let us now fix ideas. We are concerned with  linear time-series models of the form:
\begin{equation}\label{eq:regressionmodel}
\begin{aligned}
    Y_t &= \theta^\star X_t + V_t&& t=1,2,\dots, T
\end{aligned}
\end{equation}
where $Y_{1:T}$ is a sequence of outputs (or targets) assuming values in $\R^{\dy}$ and $X_{1:T}$ is a sequence of inputs (or covariates) assuming values in $\R^{\dx}$. The goal of the user (or learner) is to recover the a priori unknown linear map $\theta^\star \in \R^{\dy \times \dx}$ using only the observations $X_{1:T}$ and $Y_{1:T}$. The linear relationship in the regression model \eqref{eq:regressionmodel} is perturbed by a stochastic noise sequence $V_{1:T}$ assuming values in $\R^{\dy}$. We refer to the regression model \eqref{eq:regressionmodel} as a time-series to emphasize the fact that the observations $X_{1:T}$ and $Y_{1:T}$ may arrive sequentially and in  particular that past $X_t$ and $Y_t$ may influence future  $X_{t'}$ and $Y_{t'}$ (i.e. with $t'>t)$.

\paragraph{Example: Autoregressive  Models.}
For instance, a model class of particular interest to us which is subsumed by \eqref{eq:regressionmodel} are the (vector) autoregressive exogenous models of order $p$ and $q$ (briefly ARX$(p,q)$):
\begin{equation}\label{eq:arx}
    \begin{aligned}
        Y_{t} &= \sum_{i=1}^{p} A^\star_i Y_{t-i} +\sum_{j=1}^{q} B^\star_i U_{t-j}+W_t
    \end{aligned}
\end{equation}
where typically $U_{1:T-1}$ is a sequence of user specified inputs taking values in $\R^{\du}$ and $W_{1:T}$ is an \iid\ sequence of noise variables taking values in $\R^{\dw}$. If we are only interested in the parameters $\begin{bmatrix}
        A_{1:p}^\star & B_{1:q}^\star 
        \end{bmatrix}$, we obtain the model \eqref{eq:arx} by setting 
\begin{equation}
    \begin{aligned}
        X_t &= \begin{bmatrix}
        Y_{t-1:t-p}^\T & U_{t-1:t-q}^\T 
        \end{bmatrix}^\T; && \theta^\star =\begin{bmatrix}
        A_{1:p}^\star & B_{1:q}^\star 
        \end{bmatrix}; &&V_t = W_t.
    \end{aligned}
\end{equation}
We  point out that that the above discussion presupposes that the order of the model, $ (p,q)$, is known (there are ways around this).

In this tutorial we will provide the necessary tools to tackle the following problem.

\begin{problem}\label{prob:learningprob}
    Fix $\e>0$, $\delta \in (0,1)$, and a norm $\|\cdot\|$. Fix also a `reasonable' estimator $\widehat \theta$ of $\theta_\star$ using a sample $(X,Y)_{1:T}$ from \eqref{eq:regressionmodel}. We seek to establish finite sample guarantees of the form
\begin{equation}\label{eq:highprobaguarantee}
    \| \widehat \theta -\theta^\star \| \leq \e \qquad \textnormal{ with probability at least }1-\delta 
\end{equation}
    where $\e $ controls the accuracy (or rate) and the failure parameter $\delta$ controls the confidence.
\end{problem}

In the sequel, `reasonable' estimator will typically mean some form of least squares estimator \eqref{eq:LSEdef}. These are introduced in \Cref{subsec:LSE} below. A bound of the form \eqref{eq:highprobaguarantee} is typically thought of as follows. We fix a priori the failure parameter $\delta$ and then provide guarantees of the form $ \| \widehat \theta -\theta^\star \| \leq \e(T,\delta, \sfP_{XY})$ where $\sfP_{XY}$ is the joint distribution of $(X,Y)_{1:T}$. Hence, the sample size $T$, the failure probability $\delta$ and the distribution of the samples all impact the performance guarantee $\e$ we are able to establish. To be more specific, $\e$ will typically be of the form 
\begin{equation}\label{eq:informalsnr}
    \e \propto (\textnormal{Noise Scale}) \times \sqrt{\frac{\textnormal{problem dimension} + \log(1/\delta)}{\textnormal{sample size}}}.
\end{equation}
Thus in principle, the best possible choice of $\e^2$ can be thought of as a high probability version of the (inverse) signal-to-noise ratio of the problem at hand. The fact that the confidence parameter $\delta$ typically  affects \eqref{eq:informalsnr} additively in $\log (1/\delta)$ is consistent with classical asymptotic normality theory of estimators. One often expects the normalized difference $T^{-1/2}(\widehat \theta - \theta^\star)$ to converge in law to a normal distribution \citep{van2000asymptotic}. In this tutorial we will provide tools that allow us to match such classical asymptotics but with a finite sample twist. Let us also remark that there often is a minimal requirement on the sample size necessary for a bound of the form \eqref{eq:highprobaguarantee}-\eqref{eq:informalsnr} to hold. Such requirements are typically of the form 
\begin{equation}\label{eq:burnininformal}
\textnormal{sample size} \: \gtrsim \: \textnormal{problem dimension} + \log(1/\delta).
\end{equation}
Requirements such as \eqref{eq:burnininformal} are called burn-in times and are related to the notion of persistence of excitation. They correspond to the rather minimal requirement that the parameter identification problem is feasible in the complete absence of observation noise. 


\subsection{Least Squares Regression and the Path Ahead}\label{subsec:LSE}

Let us now return to the general setting of \eqref{eq:regressionmodel}. Fix a subset $\mathsf{M}$ of $ \R^{\dy \times \dx}$, called the model class. The estimator
\begin{equation}\label{eq:LSEdef}
    \widehat \theta  \in \argmin_{\theta \in \mathsf{M}} \frac{1}{T}\sum_{t=1}^{T} \| Y_t -\theta X_t\|_2^2  
\end{equation}
is the  least squares estimator (LSE) of $\theta^\star$ (with respect to $\mathsf{M}$). Often we simply set $\mathsf{M}= \R^{\dy \times \dx}$. In this case, equivalently:
\begin{equation}\label{eq:OLSdef}
    \widehat \theta  = \left(\sum_{t=1}^{T} Y_t X_t^\T \right)\left(\sum_{t=1}^{T} X_t X_t^\T \right)^\dagger
\end{equation}
and the LSE reduces to the (minimum norm) ordinary least squares (OLS) estimator \eqref{eq:OLSdef}.

For simplicity, let us further assume that the (normalized) empirical covariance matrix:
\begin{equation}\label{eq:empcov}
\widehat \Sigma \triangleq \frac{1}{T}\sum_{t=1}^{T} X_t X_t^\T    ;
\end{equation}
is full rank almost surely. 

\paragraph{The Path Ahead.}
Let us now briefly sketch the path ahead to solve \Cref{prob:learningprob}. If \eqref{eq:empcov} is full rank---as required above---the estimator \eqref{eq:OLSdef}  admits the convenient error representation:
\begin{equation}\label{eq:OLSerror}
    \widehat \theta  -\theta^\star = \left[\left(\sum_{t=1}^{T} V_t X_t^\T \right)\left(\sum_{t=1}^{T} X_t X_t^\T \right)^{-1/2 }\right]\left(\sum_{t=1}^{T} X_t X_t^\T \right)^{-1/2 }.
\end{equation}
The leftmost term of \eqref{eq:OLSerror} (in square brackets) can be shown to be (almost) time-scale invariant in many situations. For instance, if the noise $V_{1:T}$ is a sub-Gaussian martingale difference sequence with respect to the filtration generated by the covariates $X_{1:T}$, one  can invoke methods from the theory of self-normalized processes to show this \citep{pena2009self, abbasi2013online}. These methods are the topic of \Cref{sec:selfnorm}.

Whenever this is the case, the dominant term in the rate of convergence of the least squares estimator is $ \left(\sum_{t=1}^{T} X_t X_t^\T \right)^{-1/2 }$. In other words, providing control of the smallest eigenvalue of \eqref{eq:empcov} effectively yields control of the rate of convergence of the least squares estimator in many situations. Thus, to analyze the rate of convergence of \eqref{eq:LSEdef} when $\mathsf{M}= \R^{\dy \times \dx}$ it suffices to:
\begin{itemize}
    \item Analyze the smallest eigenvalue (or lower tail) of \eqref{eq:empcov}. We provide such analyses in \Cref{sec:onesided}
    \item Analyze the scale invariant term (in square brackets) of \eqref{eq:OLSerror}. This can in many situations be handled for instance by the self-normalized martingale method described in \Cref{sec:selfnorm}.
\end{itemize}

\subsection{Overview}

Before covering these more technical topics in \Cref{sec:onesided} and \Cref{sec:selfnorm}, we also briefly review some preliminaries from probability theory in \Cref{sec:prels}. We then demonstrate how to apply these ideas in the setting of identifying the parameters of an ARX$(p,q)$ model of the form \eqref{eq:arx} in \Cref{sec:sysid}. An alternative perspective not based on the decomposition \eqref{eq:OLSerror} for more general least squares algorithms is given in \Cref{sec:basicineq}. We conclude with a brief discussion on how the tools in \Cref{sec:basicineq} can be extended to study more general nonlinear phenomena in \Cref{sec:nonlinear}.

\section{Preliminaries: Concentration Inequalities, Packing and Covering}\label{sec:prels}
Before we proceed to tackle the more advanced question of analyzing the LSE \eqref{eq:LSEdef}, let us discuss a few preliminary inequalities that control the tail of a random variable. Our first inequality is Markov's. 

\begin{lemma}
Let $X$ be a nonnegative random variable. For every $s > 0 $ we have that
\begin{equation}\label{eq:markovs}
    \mathbf{P} (X \geq  s) \leq s^{-1}\E[X].
\end{equation}
\end{lemma}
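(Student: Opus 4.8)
The plan is to reduce the probability to an expectation by means of an indicator function and then exploit the nonnegativity of $X$ through a single pointwise inequality. First I would fix $s>0$ and consider the event $\{X \geq s\}$ together with its indicator $\ind\{X \geq s\}$. The key observation is the deterministic (pointwise, i.e.\ $\omega$-by-$\omega$) bound
\begin{equation}
    s\,\ind\{X \geq s\} \leq X ,
\end{equation}
which holds because on the event $\{X \geq s\}$ the left-hand side equals $s$ while the right-hand side is at least $s$, and on the complementary event $\{X < s\}$ the left-hand side is $0$ while the right-hand side is nonnegative by hypothesis.

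Next I would take expectations of both sides. Monotonicity of the expectation preserves the inequality, and linearity turns the left-hand side into $s\,\E[\ind\{X \geq s\}] = s\,\mathbf{P}(X \geq s)$, using that the expectation of an indicator is the probability of the corresponding event. This gives $s\,\mathbf{P}(X \geq s) \leq \E[X]$, and dividing through by $s > 0$ yields \eqref{eq:markovs}.

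There is essentially no obstacle here; the only point worth flagging is that $\E[X]$ should be read as a well-defined element of $[0,\infty]$ (which it is, precisely because $X \geq 0$), so that the argument remains valid, and the conclusion trivial, when $\E[X] = \infty$. An equivalent route is the layer-cake identity $\E[X] = \int_0^\infty \mathbf{P}(X \geq t)\,\diff t \geq \int_0^s \mathbf{P}(X \geq t)\,\diff t \geq s\,\mathbf{P}(X \geq s)$, using that $t \mapsto \mathbf{P}(X \geq t)$ is nonincreasing, but the indicator argument above is the most economical.
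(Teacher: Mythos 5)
Your proof is correct and follows essentially the same route as the paper: both rest on the pointwise bound $X \geq s\,\ind\{X\geq s\}$ (the paper writes it as the chain $\E[X]\geq \E[\ind_{X\geq s}X]\geq s\,\E[\ind_{X\geq s}]$), followed by taking expectations and rearranging. The additional remarks about $\E[X]\in[0,\infty]$ and the layer-cake alternative are fine but not needed.
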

\begin{proof}
We have that $\E [X] \geq \E [\mathbf{1}_{X\geq s} X] \geq s
 \E [\mathbf{1}_{X\geq s } ]$. Since $\E [\mathbf{1}_{X \geq s } ]= \mathbf{P} (X \geq s) $ the result follows by rearranging.
\end{proof}

Typically, Markov's inequality itself is insufficient for our goals: we seek deviation inequalities that taper of exponentially fast in $s$ and not as $s^{-1}$. Such scaling is for instance predicted asymptotically by the central limit theorem by the asymptotic normality of renormalized sums of square integrable \iid\ random variables; that is, sums of the form $S_n/\sqrt{n} =( X_1+X_2+\dots+X_n)/\sqrt{n}$ where the $X_i, i\in [n]$ are independent and square integrable. For random variables possessing a moment generating function, Markov's inequality can be "boosted" by the so-called "Chernoff trick". Namely, we apply Markovs inequality to the moment generating function of the random variable instead of applying it directly to the random variable itself.

\begin{corollary}[Chernoff]
Fix $s>0$ and suppose that $\E \exp\left(\lambda X \right)$ exists. Then
\begin{equation}\label{eq:chernofftrick}
    \mathbf{P}\left(X \geq s \right)\leq \min_{\lambda \geq 0} e^{-\lambda s }\E \exp \left(\lambda  X \right).
\end{equation}
\end{corollary}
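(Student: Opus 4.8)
The plan is to obtain this as an immediate consequence of Markov's inequality \eqref{eq:markovs}, applied not to $X$ itself but to an exponential reweighting of it. First I would fix an arbitrary $\lambda \geq 0$ and use that $x \mapsto e^{\lambda x}$ is nondecreasing, so that the event $\{X \geq s\}$ is contained in $\{e^{\lambda X} \geq e^{\lambda s}\}$ and hence $\Pr(X \geq s) \leq \Pr(e^{\lambda X} \geq e^{\lambda s})$. Since $e^{\lambda X}$ is a nonnegative random variable and $e^{\lambda s} > 0$, Markov's inequality applies to it and gives $\Pr(e^{\lambda X} \geq e^{\lambda s}) \leq e^{-\lambda s}\,\E e^{\lambda X}$. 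Chaining these two bounds yields $\Pr(X \geq s) \leq e^{-\lambda s}\,\E\exp(\lambda X)$ for this particular $\lambda$.

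Second, I would observe that the left-hand side is independent of $\lambda$ while the right-hand side is not, and that the displayed inequality holds simultaneously for every $\lambda \geq 0$ (it is trivially true for those $\lambda$ at which $\E\exp(\lambda X) = +\infty$). Consequently the bound survives optimization over the free parameter, i.e.\ after passing to the infimum over $\lambda \geq 0$, which is exactly \eqref{eq:chernofftrick}.

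There is no genuine obstacle here; the statement is a one-line corollary, and the "hard part" is only bookkeeping. The two remarks worth making are: (i) taking $\lambda = 0$ shows the right-hand side is at most $1$, so the minimum is never worse than the trivial probability bound even when it is uninformative; and (ii) writing $\min$ rather than $\inf$ is a mild abuse unless one checks that $\lambda \mapsto e^{-\lambda s}\E\exp(\lambda X)$ attains its infimum, which holds under the usual regularity (finiteness of the moment generating function on an open interval and continuity there) but is immaterial to the validity of the stated inequality.
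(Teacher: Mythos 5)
Your argument is correct and coincides with the paper's proof: monotonicity of $x\mapsto e^{\lambda x}$ to pass to the event $\{e^{\lambda X}\geq e^{\lambda s}\}$, then Markov's inequality, then optimization over $\lambda\geq 0$. Your side remarks on $\lambda=0$ and the $\min$-versus-$\inf$ subtlety are sensible but do not change the route.
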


\begin{proof}
Fix $\lambda \geq 0$. We have:
    \begin{equation*}
    \begin{aligned}
    \mathbf{P}\left(X  \geq s \right)&= \mathbf{P}\left(\exp\left(\lambda X\right)  \geq \exp \left(\lambda s\right) \right) && (\textnormal{monotonicity of } x\mapsto e^{\lambda x})\\
    &\leq e^{-\lambda s }\E \exp \left(\lambda X \right) && (\textnormal{Markov's inequality}).
    \end{aligned}
\end{equation*}
The result follows by optimizing.
\end{proof}
Recall that the function $\psi_X( \lambda)\triangleq \E \exp \left(\lambda  X \right)$ is the moment generating function of $X$. For instance, if $X$ has univariate Gaussian distribution with mean zero and variance $\sigma^2$, the moment generating function appearing in \eqref{eq:chernofftrick} is just $\E \exp \left(\lambda X \right)=\exp\left(\lambda^2\sigma^2/2\right)$. Hence the probability that said Gaussian exceeds $s$ is upper-bounded:
\begin{equation}\label{eq:gaussiandevianineq}
     \mathbf{P}\left(X>s \right)\leq \min_{\lambda \geq 0} e^{-\lambda s }\exp \left(\lambda^2\sigma^2/2\right) =\exp\left(\frac{-s^2}{2\sigma^2} \right)
\end{equation}
which (almost) exhibits the correct Gaussian tails as compared to \eqref{eq:markovs}.\footnote{We write almost because $\exp(-s^2/2\sigma^2) \approx \Pr(V > s)$ where $V\sim N(0,\sigma^2)$ but the expression is not exact.}  It should be pointed out that assumptions stronger than those of the Central Limit Theorem (finite variance) are indeed needed for a non-asymptotic theory with sub-Gaussian tails as in \eqref{eq:gaussiandevianineq}. An assumption of this kind which is relatively standard in the literature is introduced next.

\subsection{Sub-Gaussian Concentration and the Hanson-Wright Inequality}
In the sequel, we will not want to impose the Gaussian assumption. Instead, we define a class of random variables that admit reasoning analogous to \eqref{eq:gaussiandevianineq}.

\begin{definition}\label{def:sub_G}
We say that a random vector $W$ taking values in $\R^d$ is $\sigma^2$-sub-Gaussian ($\sigma^2$-subG) if for every $v \in \R^d$ we have that:
\begin{equation}\label{eq:subgdef}
    \E \exp \left( \langle v, W\rangle \right) \leq \exp\left( \frac{\sigma^2\|v\|^2}{2} \right).
\end{equation}
Similarly, we say that $W$ is $\sigma^2$-conditionally sub-Gaussian with respect to a $\sigma$-field $\mathcal{F}$ if \eqref{eq:subgdef} holds with $\E [\cdot]$ replaced by $\E[\cdot|\mathcal{F}]$. 
\end{definition}
The term $\sigma^2$ appearing in \eqref{eq:subgdef} is called the variance proxy of a sub-Gaussian random variable. The significance of this definition is that the one-dimensional projections $X =\langle v, W\rangle$ (with $\|v\|=1$) satisfy the tail inequality \eqref{eq:gaussiandevianineq}. While obviously Gaussian random variables are sub-Gaussian with their variance as variance-proxy, there are many examples beyond Gaussians that fit into this framework. It is for instance straightforward to show that bounded random variables have variance proxy proportional to the square of their width \citep[see eg.][Examples 2.3 and 2.4]{wainwright2019high}. Moreover, it is readily verified that the normalized sum mentioned above---$S_n / \sqrt{n}=(X_1+\dots+X_n)/\sqrt{n}$---satisfies the same bound \eqref{eq:gaussiandevianineq} provided that the entries of $X_{1:n}$ are independent, mean zero and $\sigma^2$-sub-Gaussian. To see this, notice that the moment generating function "tensorizes" across products. Namely, for every $\lambda \in \R$:
\begin{equation}
    \E \exp \left( \frac{\lambda}{\sqrt{n}} \sum_{i=1}^n X_i \right)=\prod_{i=1}^n \E \exp \left( \frac{\lambda}{\sqrt{n}}  X_i \right) \leq \prod_{i=1}^n \exp \left( \frac{\lambda^2 \sigma^2}{2n} \right)= \exp\left( \frac{\lambda^2\sigma^2}{2}\right).
\end{equation}
Hence, by the exact same reasoning leading up to \eqref{eq:gaussiandevianineq} such normalized sub-Gaussian sums satisfy the same tail bound \eqref{eq:gaussiandevianineq}.

When analyzing linear regression models, most variables of interest are typically either linear or quadratic in the variables of interest (cf. \eqref{eq:OLSerror}). Hence, we also need to understand how squares of sub-Gaussian random variables behave. The next result shows that sub-Gaussian quadratic forms exhibit similar tail  behavior to the Chi-squared distribution (often in the literature referred to as sub-exponential tails). It is known as the Hanson-Wright Inequality.

\begin{theorem}[\cite{hanson1971bound,rudelson2013hanson}]\label{thm:HWineqhighproba}
Let $M\in \R^{d\times d}$. Fix a random variable $W=W_{1:d}$ where each $W_i, i\in [d]$ is a scalar, mean zero and independent $\sigma^2$-sub-Gaussian random variable.  Then for every $s \in [0,\infty)$:
\begin{equation}
    \Pr\left(|W^\T M W -\E W^\T M W | > s \right)\leq  2  \exp \left( - \min \left( \frac{s^2}{144 \sigma^4 \| M\|_F^2} ,\frac{s}{16\sqrt{2}\sigma^2 \opnorm{M} } \right)\right).
\end{equation}
\end{theorem}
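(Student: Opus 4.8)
The plan is to use the Chernoff/Laplace-transform method established earlier (Corollary, equation \eqref{eq:chernofftrick}) together with a decoupling argument. First I would split the quadratic form into its diagonal and off-diagonal parts: writing $W^\T M W - \E W^\T M W = \sum_i M_{ii}(W_i^2 - \E W_i^2) + \sum_{i \neq j} M_{ij} W_i W_j$, and bound the deviation of each piece separately, combining at the end with a union bound (which produces the factor $2$ and, more carefully, accounts for the constants). The diagonal term is a sum of independent centered sub-exponential random variables $M_{ii}(W_i^2 - \E W_i^2)$; since each $W_i$ is $\sigma^2$-sub-Gaussian, $W_i^2 - \E W_i^2$ has a moment generating function bounded on an interval around zero (sub-exponential with parameters controlled by $\sigma^2$), and a standard Bernstein-type computation gives a bound of the claimed $\min(s^2/\|\text{diag}(M)\|_F^2, s/\|M\|_{\mathsf{op}})$ shape.

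The off-diagonal term is the heart of the matter, and this is where I expect the main obstacle to lie. The product terms $W_i W_j$ are not independent across pairs, so one cannot directly tensorize the moment generating function as in the sum-of-independents calculation above. The standard remedy is \emph{decoupling}: introduce an independent copy $W' = W'_{1:d}$ of $W$ and show that $\E \exp(\lambda \sum_{i\neq j} M_{ij} W_i W_j) \leq \E \exp(c\lambda \sum_{i,j} M_{ij} W_i W'_j)$ for an absolute constant $c$ (this is where a chunk of the numerical constants — the $144$ and the $16\sqrt 2$ — ultimately originate). Once decoupled, condition on $W$: the inner expectation is over $W'$, and $\sum_j (\sum_i M_{ij} W_i) W'_j$ is a \emph{linear} form in the independent sub-Gaussian vector $W'$, hence $\sigma^2$-sub-Gaussian with variance proxy $\sigma^2 \|M^\T W\|_2^2$ by \Cref{def:sub_G}. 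Taking the expectation over $W$ then leaves $\E \exp((c\lambda)^2 \sigma^4 \|M^\T W\|_2^2 / 2)$, i.e.\ the MGF of another sub-Gaussian quadratic form, but now a positive semidefinite one. One controls this either by a further comparison/reduction to a Gaussian chaos (using that $\|W\|_{\psi_2} \lesssim \sigma$ lets you replace $W$ by a Gaussian up to constants) or by a direct net argument: diagonalize $M^\T M$, and for $\lambda$ small enough relative to $1/(\sigma^2 \|M\|_{\mathsf{op}})$ the MGF of $\sum_k \lambda_k(M^\T M) g_k^2$-type expression is finite with $\log$-MGF $\lesssim \lambda^2 \sigma^4 \|M\|_F^2$.

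Finally I would assemble the pieces: with $\psi(\lambda) \triangleq \log \E \exp(\lambda(W^\T M W - \E W^\T M W))$ shown to satisfy $\psi(\lambda) \leq C_1 \lambda^2 \sigma^4 \|M\|_F^2$ for all $|\lambda| \leq C_2 / (\sigma^2 \|M\|_{\mathsf{op}})$, the Chernoff bound \eqref{eq:chernofftrick} applied to both $W^\T M W - \E W^\T M W$ and its negative gives $\Pr(|W^\T M W - \E W^\T M W| > s) \leq 2\exp(-\min(s^2/(4C_1\sigma^4\|M\|_F^2),\, C_2 s/(2\sigma^2\|M\|_{\mathsf{op}})))$ by the usual optimization over $\lambda$ of a sub-exponential tail (choosing $\lambda = s/(2C_1\sigma^4\|M\|_F^2)$ when that is in range, and $\lambda$ at the boundary otherwise). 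Tracking the constants through the decoupling step and the sub-exponential MGF bounds is what yields the specific numbers $144$ and $16\sqrt{2}$; I would not expect to reproduce those exactly without carefully following one reference's bookkeeping (e.g.\ \cite{rudelson2013hanson}), so in the write-up I would either cite those constants or present the argument with unspecified absolute constants and remark that a careful accounting gives the stated values.
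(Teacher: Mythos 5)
Your proposal is correct and follows essentially the same route as the paper's proof: split $W^\T M W-\E W^\T M W$ into diagonal and off-diagonal parts, decouple the off-diagonal chaos with an independent copy, reduce to Gaussian comparison/MGF bounds to get $\log\E\exp(\lambda(\cdot))\lesssim \lambda^2\sigma^4\|M\|_F^2$ on $|\lambda|\lesssim 1/(\sigma^2\opnorm{M})$, and finish with a two-sided Chernoff bound. The only cosmetic differences are that the paper handles the diagonal term by symmetrization plus the same Gaussian comparison (rather than a Bernstein-type sub-exponential bound) and merges the two pieces at the MGF level via Cauchy--Schwarz rather than a union bound over pieces, which is how the stated constants $144$ and $16\sqrt{2}$ arise.
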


The proof of \Cref{thm:HWineqhighproba} is rather long and technical and thus relegated to \Cref{sec:proofofhw}. There, the reader may also find further useful concentration inequalities for quadratic forms in sub-Gaussian variables. In fact, there are plethora of useful concentration inequalities not covered here and the interested reader is urged to consult the first few chapters of \cite{vershynin_2018}.

\subsection{Covering and Discretization Arguments}
\label{subsec:eps}

We will often find ourselves in a situation where it is possible to obtain a scalar concentration bound but need this to hold uniformly for many random variables at once. The $\e$-net argument, which proceeds via the notion of covering numbers, is a relatively straightforward way of converting concentration inequalities for scalars into their counterparts for vectors, matrices and functions more generally.

The reader will for instance notice that the quantity being controlled by \Cref{thm:HWineqhighproba} is a scalar quadratic form in sub-Gaussian random variables. By contrast, the empirical covariance matrix \eqref{eq:empcov} is a matrix and so a conversion step is needed. This idea will be used frequently and in various forms throughout the manuscript, so we review it briefly here for the particular case of controlling the operator norm of a random matrix. To this end, we notice that for any matrix $M\in \R^{m\times d} $:
\begin{equation}\label{eq:variationaloperatornorm}
    \opnorm{M}^2 = \max_{v \in \mathbb{S}^{d-1}}  \langle Mv , Mv \rangle .
\end{equation}
Hence, the operator norm of a random matrix is a maximum of scalar random variables indexed by the unit sphere $\mathbb{S}^{d-1}$. 

Recall now that the union bound states that the probability that the maximum of a \emph{finite collection} ($|S|<\infty$)  $\{X_i\}_{i \in S}$ of random variables exceeds a certain threshhold can be bounded by the sum of their probabilities:
\begin{equation}\label{eq:theunionbound}
    \Pr \left(  \max_{i\in S} X_i > t  \right) \leq \sum_{i \in S} \Pr \left(   X_i > t  \right).
\end{equation}
Unfortunately, the unit sphere appearing \eqref{eq:variationaloperatornorm} is not a finite set and so the union bound \eqref{eq:theunionbound} cannot be directly applied. However, when the domain of optimization has geometric structure, one can often exploit this to leverage the union bound not directly but rather in combination with a discretization argument. Returning to our example of the operator norm of a matrix, the set $S$ appearing in \eqref{eq:theunionbound} will be a discretized version of the unit sphere $\mathbb{S}^{d-1}$.

The following notion is key.
\begin{definition}
    Let $(\sfX,d)$ be a compact metric space and fix $\e>0$. A subset $\mathcal{N}$ of $\sfX$ is called an $\e$-net of $\sfX$ if every point of $\sfX$ is within radius $\e$ of a point of $\mathcal{N}$:
    \begin{equation}\label{eq:coveringdefined}
        \sup_{x\in \sfX} \inf_{x'\in \mathcal{N}}d(x,x') \leq \e.
    \end{equation}
 Moreover, the minimal cardinality of $\mathcal{N}$ necessary such that \eqref{eq:coveringdefined} holds is called the covering number at resolution $\e$ of $(\sfX,d)$ and is denoted $\mathcal{N}(\e,\sfX,d)$.
\end{definition}

We will not explore this notion in full, but simply content ourselves to note that it plays very nicely with the notion of operator norm.

\begin{lemma}[Lemma 4.4.1 and Exercise 4.4.3 in \cite{vershynin_2018}]\label{lem:opnormdisc}
Let $M \in \R^{m \times d}$ and let $\e \in (0,1)$.  Then for any $\e $-net $\mathcal{N}$ of $(\mathbb{S}^{d-1},\|\cdot\|_2)$ we have that:
\begin{equation}\label{eq:opnormdisc}
    \opnorm{M} \leq \frac{1}{1-\e} \sup_{v \in \mathcal{N}} \| M v\|_2.
\end{equation}
If additionally $M$ is  symmetric, we also have that:
\begin{equation}\label{eq:opnormdisc2}
    \opnorm{M} \leq \frac{1}{1-2\e} \sup_{v \in \mathcal{N}} | v^\T M v|.
\end{equation}
\end{lemma}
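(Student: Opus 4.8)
The plan is to prove both inequalities by exploiting the variational characterization \eqref{eq:variationaloperatornorm} together with the defining property of an $\e$-net. First I would fix a unit vector $v^\star \in \mathbb{S}^{d-1}$ that (nearly) attains the operator norm, i.e.\ $\opnorm{M} = \|Mv^\star\|_2$ (such a maximizer exists by compactness of the sphere and continuity of $v\mapsto \|Mv\|_2$). By the net property there is $v \in \mathcal{N}$ with $\|v^\star - v\|_2 \leq \e$. Then I would write $Mv^\star = Mv + M(v^\star - v)$ and apply the triangle inequality to get $\|Mv^\star\|_2 \leq \|Mv\|_2 + \|M(v^\star-v)\|_2 \leq \sup_{v\in\mathcal{N}}\|Mv\|_2 + \opnorm{M}\|v^\star-v\|_2 \leq \sup_{v\in\mathcal{N}}\|Mv\|_2 + \e\opnorm{M}$. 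Rearranging $\opnorm{M} - \e\opnorm{M} \leq \sup_{v\in\mathcal{N}}\|Mv\|_2$ yields \eqref{eq:opnormdisc} after dividing by $1-\e > 0$.

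For the symmetric case \eqref{eq:opnormdisc2}, I would use the bilinear-form characterization $\opnorm{M} = \max_{u,w\in\mathbb{S}^{d-1}} |u^\T M w|$, which for symmetric $M$ further reduces to $\opnorm{M} = \max_{v\in\mathbb{S}^{d-1}} |v^\T M v|$ (a standard fact, provable by diagonalizing $M$ and noting the extreme eigenvalue is attained by a unit eigenvector). Fix $v^\star$ attaining this, pick $v\in\mathcal{N}$ with $\|v^\star - v\|_2 \leq \e$, and expand
\begin{equation*}
v^{\star\T} M v^\star - v^\T M v = v^{\star\T} M (v^\star - v) + (v^\star - v)^\T M v.
\end{equation*}
Bounding each term by $\opnorm{M}\|v^\star - v\|_2 \leq \e\opnorm{M}$ (using Cauchy--Schwarz and that $\|v\|_2 \leq 1$, $\|v^\star\|_2 = 1$), one gets $|v^{\star\T} M v^\star| \leq |v^\T M v| + 2\e\opnorm{M} \leq \sup_{v\in\mathcal{N}}|v^\T M v| + 2\e\opnorm{M}$. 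Since the left side equals $\opnorm{M}$, rearranging and dividing by $1-2\e$ gives \eqref{eq:opnormdisc2}.

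There is really no serious obstacle here; the only subtle point worth stating carefully is the reduction of the operator norm of a symmetric matrix to the supremum of $|v^\T M v|$ over the sphere, which is where symmetry is genuinely used (for general $M$ the quadratic form can badly underestimate $\opnorm{M}$, e.g.\ a nilpotent matrix). I would also make sure to note explicitly why the maximizers exist — compactness of $\mathbb{S}^{d-1}$ and continuity of the relevant functions — so that the argument is not merely a supremum-chasing heuristic. The factor $1-2\e$ rather than $1-\e$ in the symmetric bound is exactly the price of the two cross terms, and I would flag that the hypothesis $\e \in (0,1)$ (resp.\ implicitly $\e < 1/2$ for the second bound to be useful) keeps the denominators positive.
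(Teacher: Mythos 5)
Your argument is correct and is exactly the standard one from the cited source (Vershynin, Lemma 4.4.1 and Exercise 4.4.3), which the paper itself does not reprove: triangle inequality plus the net property for \eqref{eq:opnormdisc}, and the two cross-term decomposition of $v^{\star\T}Mv^\star - v^\T M v$ for \eqref{eq:opnormdisc2}, including the correct observation that symmetry is needed to reduce $\opnorm{M}$ to $\sup_{v}|v^\T M v|$ and that $\e<1/2$ is implicitly required for the second bound to be nonvacuous.
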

Hence at a small multiplicative cost, the computation of the operator norm can be restricted to the discretized sphere $\mathcal{N}$. Our intention is now to apply the union bound \eqref{eq:theunionbound} to the right hand side of \eqref{eq:opnormdisc}. To do so, we also need control of the size (cardinality) of the $\e$-net.

\begin{lemma}[Corollary 4.2.13 in \cite{vershynin_2018}]\label{lem:volumetric}
For any $\e>0$ the covering numbers of $\mathbb{S}^{d-1}$ satisfy
\begin{equation}
    \mathcal{N}(\e,\mathbb{S}^{d-1},\|\cdot\|) \leq \left(1+ \frac{2}{\e}\right)^d.
\end{equation}
\end{lemma}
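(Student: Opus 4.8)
The plan is to use the standard volumetric (packing) argument. First I would reduce covering to packing: call a subset $\mathcal{N} \subseteq \mathbb{S}^{d-1}$ \emph{$\e$-separated} if $\|x-x'\|_2 > \e$ for all distinct $x,x' \in \mathcal{N}$. Since $\mathbb{S}^{d-1}$ is compact, there is a maximal (with respect to inclusion) $\e$-separated subset $\mathcal{N}$, and it must be finite. The point is that such a maximal $\mathcal{N}$ is automatically an $\e$-net: if some $x \in \mathbb{S}^{d-1}$ were at distance $>\e$ from every point of $\mathcal{N}$, then $\mathcal{N} \cup \{x\}$ would still be $\e$-separated, contradicting maximality. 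Hence $\mathcal{N}(\e,\mathbb{S}^{d-1},\|\cdot\|) \leq |\mathcal{N}|$, and it remains to bound $|\mathcal{N}|$.

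Next I would run the volume comparison in the ambient space $\R^d$. For distinct $x,x' \in \mathcal{N}$ the open Euclidean balls $B(x,\e/2)$ and $B(x',\e/2)$ are disjoint, since their centers are more than $\e$ apart; and each is contained in $B(0, 1 + \e/2)$ because every point of $\mathbb{S}^{d-1}$ has norm $1$. Therefore
\[
|\mathcal{N}| \cdot \mathrm{vol}\big(B(0,\e/2)\big) = \sum_{x \in \mathcal{N}} \mathrm{vol}\big(B(x,\e/2)\big) \leq \mathrm{vol}\big(B(0, 1+\e/2)\big).
\]
Using that the volume of a Euclidean ball of radius $r$ in $\R^d$ is $r^d$ times the volume of the unit ball, the unit-ball volumes cancel and we get $|\mathcal{N}| \leq (1+\e/2)^d / (\e/2)^d = (1 + 2/\e)^d$, which is the claimed bound.

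There is no genuinely hard step here; the only points requiring a little care are (i) invoking compactness (or Zorn's lemma) to guarantee that a maximal $\e$-separated set exists and is finite, and (ii) performing the volume argument in $\R^d$ with full-dimensional balls centered at points of $\mathbb{S}^{d-1}$, rather than attempting it intrinsically on the $(d-1)$-dimensional sphere. One could additionally remark that for $\e \geq 2$ the statement is vacuous, since a single point is an $\e$-net when $\diam(\mathbb{S}^{d-1}) = 2$; but the displayed inequality holds for every $\e > 0$ with no case distinction.
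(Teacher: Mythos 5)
Your proof is correct, and it is exactly the standard packing-plus-volume-comparison argument underlying the cited result (Corollary 4.2.13 of \cite{vershynin_2018}); the paper itself gives no proof, deferring entirely to that reference. Nothing further is needed.
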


We now provide two instances of this covering argument combined with the union bound. The second of these uses an alternative variational characterization of the operator norm but otherwise similar ideas.

\begin{lemma}\label{lem:net argument 1}
  Let $M$ be an $m \times d$ random matrix, and $\epsilon \in (0, 1)$. Furthermore, let $\mathcal{N}$ be an $\epsilon$-net of $\mathbb{S}^{d-1}$ of minimal cardinality. Then for all $\rho > 0$, we have
  \begin{align*}
      \Pr \left( \opnorm{M} > \rho \right) & \leq \left( \frac{2}{\epsilon} + 1\right)^d \max_{v \in \mathcal{N}} \Pr\left( \| Mv\|_2 > (1-\epsilon)\rho \right).
  \end{align*}
\end{lemma}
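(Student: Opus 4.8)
The plan is to combine the discretization bound of \Cref{lem:opnormdisc} with the union bound \eqref{eq:theunionbound} and the volumetric cardinality estimate of \Cref{lem:volumetric}. First I would fix an $\epsilon$-net $\mathcal{N}$ of $(\mathbb{S}^{d-1},\|\cdot\|_2)$ of minimal cardinality, so that by definition $|\mathcal{N}| = \mathcal{N}(\epsilon,\mathbb{S}^{d-1},\|\cdot\|)$, and by \Cref{lem:volumetric} this is at most $(1+2/\epsilon)^d = (2/\epsilon+1)^d$. This handles the combinatorial factor out front.

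Next, I would use \eqref{eq:opnormdisc} from \Cref{lem:opnormdisc}, which is a deterministic (pointwise in $\omega$) inequality: on every outcome, $\opnorm{M} \leq \frac{1}{1-\epsilon}\sup_{v\in\mathcal{N}}\|Mv\|_2 = \frac{1}{1-\epsilon}\max_{v\in\mathcal{N}}\|Mv\|_2$, where the supremum is attained because $\mathcal{N}$ is finite. Consequently the event $\{\opnorm{M} > \rho\}$ is contained in the event $\{\max_{v\in\mathcal{N}}\|Mv\|_2 > (1-\epsilon)\rho\}$, which in turn equals $\bigcup_{v\in\mathcal{N}}\{\|Mv\|_2 > (1-\epsilon)\rho\}$. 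Taking probabilities and applying monotonicity of $\Pr$ followed by the union bound \eqref{eq:theunionbound} over the finite index set $\mathcal{N}$ yields
\begin{equation*}
  \Pr\left(\opnorm{M} > \rho\right) \leq \sum_{v\in\mathcal{N}} \Pr\left(\|Mv\|_2 > (1-\epsilon)\rho\right) \leq |\mathcal{N}| \max_{v\in\mathcal{N}} \Pr\left(\|Mv\|_2 > (1-\epsilon)\rho\right).
\end{equation*}
Substituting the cardinality bound $|\mathcal{N}| \leq (2/\epsilon+1)^d$ gives exactly the claimed inequality.

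There is no real obstacle here: this is a direct assembly of three already-established facts, and the only points requiring a word of care are (i) that the event inclusion from \Cref{lem:opnormdisc} is genuinely deterministic so that passing to probabilities is legitimate, and (ii) that the finiteness of $\mathcal{N}$ is what makes both the attainment of the maximum and the union bound applicable — this is precisely the role of the net, replacing the uncountable sphere by a finite set at the cost of the factor $(1-\epsilon)^{-1}$ and the cardinality term. One could equally state the bound with $\mathcal{N}(\epsilon,\mathbb{S}^{d-1},\|\cdot\|)$ in place of $(2/\epsilon+1)^d$; we use the explicit volumetric estimate for concreteness since it is what gets used downstream.
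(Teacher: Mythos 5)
Your argument is correct and is exactly the standard route the paper itself defers to (it cites \cite[Chapter 4]{vershynin_2018} rather than proving the lemma): the deterministic bound \eqref{eq:opnormdisc} gives the event inclusion $\{\opnorm{M}>\rho\}\subseteq\bigcup_{v\in\mathcal{N}}\{\|Mv\|_2>(1-\epsilon)\rho\}$, after which the union bound \eqref{eq:theunionbound} and the cardinality estimate of \Cref{lem:volumetric} yield the claim. No gaps; your two points of care (determinism of the inclusion, finiteness of the net) are precisely the right ones.
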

\medskip
\begin{lemma}\label{lem:net argument 2}
  Let $M$ be an $d \times d$ symmetric random matrix, and let $\epsilon \in (0, 1/2)$. Furthermore, let $\mathcal{N}$ be an $\epsilon$-net of $\mathbb{S}^{d-1}$ with minimal cardinality. Then for all $\rho > 0$, we have
  \begin{align*}
     \Pr\left( \opnorm{M} > \rho \right) & \leq \left( \frac{2}{\epsilon} + 1\right)^d \max_{v \in \mathcal{N}} \Pr\left( \vert v^\top  M v \vert > (1-2 \epsilon)\rho \right).
  \end{align*}
\end{lemma}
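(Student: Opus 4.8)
The plan is to combine the deterministic discretization bound for symmetric matrices (\Cref{lem:opnormdisc}, specifically \eqref{eq:opnormdisc2}), the union bound \eqref{eq:theunionbound}, and the volumetric estimate on covering numbers (\Cref{lem:volumetric}). The key observation is that \eqref{eq:opnormdisc2} is a pointwise (deterministic) statement: for \emph{every} realization of the random matrix $M$, symmetry gives $\opnorm{M} \leq (1-2\epsilon)^{-1} \sup_{v\in\mathcal{N}} |v^\T M v|$ since $\mathcal{N}$ is an $\epsilon$-net of $\mathbb{S}^{d-1}$ and $\epsilon \in (0,1/2)$ guarantees $1-2\epsilon > 0$.

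First I would use this to establish the event inclusion
\begin{equation*}
    \left\{ \opnorm{M} > \rho \right\} \subseteq \left\{ \sup_{v\in\mathcal{N}} |v^\T M v| > (1-2\epsilon)\rho \right\} = \bigcup_{v\in\mathcal{N}} \left\{ |v^\T M v| > (1-2\epsilon)\rho \right\},
\end{equation*}
where the last equality holds because $\mathcal{N}$ is a finite set (it has minimal cardinality, and by \Cref{lem:volumetric} this cardinality is finite), so the supremum over $\mathcal{N}$ exceeds a threshold if and only if at least one term does.

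Next I would apply the union bound \eqref{eq:theunionbound} with $S = \mathcal{N}$ to the right-hand side, yielding
\begin{equation*}
    \Pr\left( \opnorm{M} > \rho \right) \leq \sum_{v\in\mathcal{N}} \Pr\left( |v^\T M v| > (1-2\epsilon)\rho \right) \leq |\mathcal{N}| \max_{v\in\mathcal{N}} \Pr\left( |v^\T M v| > (1-2\epsilon)\rho \right).
\end{equation*}
Finally I would invoke \Cref{lem:volumetric}, which bounds $|\mathcal{N}| = \mathcal{N}(\epsilon,\mathbb{S}^{d-1},\|\cdot\|) \leq (1 + 2/\epsilon)^d$, and substitute this in to conclude.

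There is no real obstacle here—the argument is a direct composition of three results already established in the excerpt. The only point requiring a moment of care is making sure the discretization inequality \eqref{eq:opnormdisc2} is applied as a deterministic (almost sure) bound before taking probabilities, rather than attempting to manipulate probabilities first; and checking that the hypothesis $\epsilon \in (0,1/2)$ is exactly what is needed for \eqref{eq:opnormdisc2} to be meaningful (non-negative constant $1-2\epsilon$). The proof of \Cref{lem:net argument 1} is entirely analogous, using \eqref{eq:opnormdisc} in place of \eqref{eq:opnormdisc2} together with the variational characterization \eqref{eq:variationaloperatornorm}.
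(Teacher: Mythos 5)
Your argument is correct and is exactly the standard covering-plus-union-bound proof that the paper defers to (it cites Chapter 4 of \cite{vershynin_2018} rather than writing it out): apply \eqref{eq:opnormdisc2} as a deterministic bound to get the event inclusion, union bound over the finite net, and invoke \Cref{lem:volumetric} for the cardinality. No gaps.
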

\medskip
\noindent
Lemma \ref{lem:net argument 1} and Lemma \ref{lem:net argument 2} exploit two different variational forms of the operator norm. Namely for any $M$ we have that $
\opnorm{M}^2= \sup_{v \in \mathbf{S}^{d-1}} \| Mv\|^2
$ and in addition, when $M$ is symmetric we also have, $ \opnorm{M} = \sup_{v \in S^{d-1}} | v^\top Mv |$. The proof of these last two lemmas are standard and can be found for example in \cite[Chapter 4]{vershynin_2018}.

\subsection{Concentration of the Covariance Matrix of Linear Systems}
\label{subsec:conclinsys}
To not get lost in the weeds, let us provide an example showcasing the use of \Cref{thm:HWineqhighproba} due to \cite{jedra2022finite}. Recall that the matrix $\widehat \Sigma$ appearing in \eqref{eq:empcov} is crucial to the performance of the least squares estimator. We will now see that this matrix is well-conditioned when we consider stable first order auto-regressions of the form:
\begin{equation}\label{eq:LDS}
    X_{t+1} = A^\star X_t + W_t \qquad t=1,\dots,T \qquad W_{1:T}\textnormal{ \iid\ isotropic and $K^2$-subG}
\end{equation}
taking values in $\R^{\dx}$. By stable we mean that the largest eigenvalue of $A^\star$ has module strictly smaller than 1.

The following result is a consequence of the Hanson-Wright inequality together with the discretization strategy outlined in \Cref{subsec:eps}. The full proof is given in \Cref{sec:twosided}.

\begin{theorem}\label{thm:spectrum deviations}
Let $\varepsilon > 0$ and set  $M\triangleq \left( \sum_{t=1}^{T} \sum_{k=0}^{t-1} (A^\star)^k( A^{\star,\T})^k \right)^{-\frac{1}{2}}$. Let also $\mathbf{L}$ be the linear operator such that $X_{1:T}=\mathbf{L} W_{1:T}$. Then simultaneously for every $i \in [\dx]$:
\begin{equation*}
(1 - \varepsilon)^2 \lambda_{\min} \left( \sum_{t=1}^{T} \sum_{k=0}^{t-1} (A^\star)^k( A^{\star,\T})^k\right) \leq \lambda_{i} \left(\sum_{t=1}^T X_tX_t^\T \right) \leq  (1 + \varepsilon)^2  \lambda_{\max} \left(\sum_{t=1}^{T} \sum_{k=0}^{t-1} (A^\star)^k( A^{\star,\T})^k \right)
\end{equation*}
holds with probability at least
\begin{equation}\label{eq:yassirsburnin} 
1 - \exp \left(-  \frac{\varepsilon^2}{ 576 \, K^2 \opnorm{M}^2 \opnorm{\bfL}^2 }  + \dx \log(18)\right).
\end{equation}

\end{theorem}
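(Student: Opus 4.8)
The plan is to whiten the empirical Gram matrix, reduce the spectral claim to a union bound over an $\varepsilon$-net of the sphere, and control each fixed direction with the Hanson--Wright inequality (\Cref{thm:HWineqhighproba}). Write $\Gamma \triangleq \sum_{t=1}^{T}\sum_{k=0}^{t-1}(A^\star)^{k}(A^{\star,\T})^{k}$ and $S\triangleq\sum_{t=1}^{T}X_{t}X_{t}^{\T}$, so that $M=\Gamma^{-1/2}$ and, since the $W_{t}$ are isotropic, $\E S=\Gamma$. Put $Z_{t}\triangleq MX_{t}$; then $\sum_{t=1}^{T}Z_{t}Z_{t}^{\T}=MSM$ and $\E\big[\sum_{t=1}^{T}Z_{t}Z_{t}^{\T}\big]=M\Gamma M=I_{\dx}$. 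The first observation is that the two-sided eigenvalue statement follows once $\opnorm{MSM-I_{\dx}}$ is shown to be small: if $\opnorm{MSM-I_{\dx}}\le r$ then $(1-r)I_{\dx}\preceq MSM\preceq(1+r)I_{\dx}$, hence $(1-r)\Gamma\preceq S\preceq(1+r)\Gamma$, and combining this with $\lambda_{\min}(\Gamma)I_{\dx}\preceq\Gamma\preceq\lambda_{\max}(\Gamma)I_{\dx}$ and the monotonicity of eigenvalues under $\preceq$ yields $(1-r)\lambda_{\min}(\Gamma)\le\lambda_{i}(S)\le(1+r)\lambda_{\max}(\Gamma)$ for every $i$; taking $r$ a suitable multiple of $\varepsilon$ gives the claim, the square $(1\pm\varepsilon)^{2}$ being the slack that the net-inflation factor below will produce.

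Next I would discretize. The matrix $N\triangleq MSM-I_{\dx}$ is symmetric, so by \Cref{lem:net argument 2} (which combines \Cref{lem:opnormdisc} and \Cref{lem:volumetric}) it suffices to bound $|v^{\T}Nv|=|v^{\T}MSMv-1|$ uniformly over an $\varepsilon'$-net $\mathcal N$ of $\mathbb S^{\dx-1}$, at the price of a factor $(1-2\varepsilon')^{-1}$ and a union bound over $|\mathcal N|\le(1+2/\varepsilon')^{\dx}$ points; choosing $\varepsilon'$ a small universal constant (so that $1+2/\varepsilon'\le 18$) accounts for the $\dx\log(18)$ term in \eqref{eq:yassirsburnin}. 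Thus everything reduces to a tail bound for a single fixed $v\in\mathbb S^{\dx-1}$.

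For fixed $v$ I would write $v^{\T}MSMv$ as a quadratic form in the noise. With $X_{1:T}=\mathbf{L} W_{1:T}$ and $e_{t}$ the $t$-th standard basis vector of $\R^{T}$ (so that $X_{t}=(e_{t}^{\T}\otimes I_{\dx})\mathbf{L} W_{1:T}$),
\[
v^{\T}MSMv \;=\; \sum_{t=1}^{T}\langle X_{t},Mv\rangle^{2} \;=\; \sum_{t=1}^{T}\big\langle W_{1:T},\,\mathbf{L}^{\T}(e_{t}\otimes Mv)\big\rangle^{2} \;=\; W_{1:T}^{\T}\,Q_{v}\,W_{1:T},
\]
where $Q_{v}\triangleq \mathbf{L}^{\T}\big(I_{T}\otimes(Mv)(Mv)^{\T}\big)\mathbf{L}\succeq 0$. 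Its expectation under the isotropic noise is $\E W_{1:T}^{\T}Q_{v}W_{1:T}=\tr Q_{v}=\tr\big((I_{T}\otimes(Mv)(Mv)^{\T})\mathbf{L}\mathbf{L}^{\T}\big)=(Mv)^{\T}\Gamma(Mv)=v^{\T}M\Gamma Mv=1$, matching the centering above (here the $(t,t)$ block of $\mathbf{L}\mathbf{L}^{\T}$ is $\sum_{k=0}^{t-1}(A^\star)^{k}(A^{\star,\T})^{k}$, and these sum over $t$ to $\Gamma$). The two quantities Hanson--Wright needs are controlled by $\opnorm{Q_{v}}\le\opnorm{\mathbf{L}}^{2}\,\opnorm{(Mv)(Mv)^{\T}}=\opnorm{\mathbf{L}}^{2}\|Mv\|_{2}^{2}\le\opnorm{\mathbf{L}}^{2}\opnorm{M}^{2}$ and---the one slightly clever estimate---$\|Q_{v}\|_{F}^{2}=\tr(Q_{v}^{2})\le\opnorm{Q_{v}}\tr(Q_{v})=\opnorm{Q_{v}}\le\opnorm{\mathbf{L}}^{2}\opnorm{M}^{2}$, where bounding $\tr(Q_{v}^{2})$ by $\opnorm{Q_{v}}\tr(Q_{v})$ for a PSD matrix keeps spurious dimensional factors ($\sqrt{\dx}$ or $\sqrt{T}$) out of the Frobenius norm. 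Since $W_{1:T}\in\R^{T\dx}$ has independent, mean-zero, $K^{2}$-sub-Gaussian coordinates, \Cref{thm:HWineqhighproba} then yields, for every $s\ge 0$,
\[
\Pr\big(|v^{\T}MSMv-1|>s\big)\;\le\;2\exp\!\left(-\min\!\left(\frac{s^{2}}{144\,K^{4}\opnorm{\mathbf{L}}^{2}\opnorm{M}^{2}},\ \frac{s}{16\sqrt{2}\,K^{2}\opnorm{\mathbf{L}}^{2}\opnorm{M}^{2}}\right)\right),
\]
and for the values of $s$ relevant here (a small multiple of $\varepsilon$) the first term in the minimum is the active one.

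Finally I would assemble the pieces: apply the last display with $s$ a fixed small multiple of $\varepsilon$, union-bound over the $\le(1+2/\varepsilon')^{\dx}$ net points, and undo the $(1-2\varepsilon')^{-1}$ inflation---this is what turns $s$ into a constant times $\varepsilon$ and inflates $1\pm\varepsilon$ into $(1\pm\varepsilon)^{2}$---after which collecting the numerical constants produces the failure probability in \eqref{eq:yassirsburnin}. The substantive (rather than conceptual) points are: (i) the whitening identity $M\Gamma M=I_{\dx}$, which makes the guarantee multiplicative in $\Gamma$ and is the source of the $\opnorm{M}^{2}$ in the exponent; (ii) the estimate $\tr(Q_{v}^{2})\le\opnorm{Q_{v}}\tr(Q_{v})$, which avoids an extra dimension factor; and (iii) the bookkeeping needed to keep $s$ in the Gaussian regime of Hanson--Wright and to fix $\varepsilon'$ consistently so the exponent and the $\dx\log(18)$ term come out exactly as stated. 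I expect (iii)---nailing the precise constants---to be the only part that requires real care; the rest is routine.
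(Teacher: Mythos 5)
Your proposal is correct in the regime that matters ($\varepsilon\in(0,1)$) and follows essentially the same route as the paper: whiten with $M$, note $\tr Q_v=1$ by isotropy, bound $\opnorm{Q_v}\le\opnorm{M}^2\opnorm{\bfL}^2$ and $\|Q_v\|_F^2\le\opnorm{Q_v}\tr Q_v$, apply Hanson--Wright pointwise (the paper packages this as its Theorem on $\|RW\|_2^2$ with $R=\Lambda_{Mv}^\T\bfL$, which is literally your $Q_v=R^\T R$), then use the symmetric-matrix net lemma with $\epsilon'=1/4$ and a union bound to get the $\dx\log 18$ term.

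The one place you diverge is the final conversion, and your bookkeeping there is slightly off. The paper does not pass from $\opnorm{MSM-I_{\dx}}\le r$ to a Loewner sandwich $(1-r)\Gamma\preceq S\preceq(1+r)\Gamma$; instead it allows the quadratic-form deviation to be $\max(\varepsilon,\varepsilon^2)$ and uses the elementary inequality $|z^2-1|\ge\max(|z-1|,|z-1|^2)$ to conclude that all singular values of $XM$ lie in $[1-\varepsilon,1+\varepsilon]$, whence $\lambda_i(S)=s_i(X)^2$ produces the $(1\pm\varepsilon)^2$ factors, and $\min(\rho^2,\rho)=\varepsilon^2$ keeps the exponent quadratic in $\varepsilon$ for every $\varepsilon>0$. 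Your claim that the squares are ``the slack that the net-inflation factor will produce'' is not the actual mechanism: with $\epsilon'=1/4$ the inflation factor $2$ is absorbed by taking the pointwise deviation $s=\varepsilon/2$, which yields $\opnorm{MSM-I_{\dx}}\le\varepsilon$ and exponent $\varepsilon^2/(576K^4\opnorm{M}^2\opnorm{\bfL}^2)$; the stated conclusion then follows only because $1-\varepsilon\ge(1-\varepsilon)^2$ and $1+\varepsilon\le(1+\varepsilon)^2$ for $\varepsilon\in(0,1)$. This is harmless in that regime (your bound is in fact marginally stronger there), but for $\varepsilon\ge1$ your route would insist on deviation $\le\varepsilon$ and hence pay only the linear branch of Hanson--Wright, giving an exponent proportional to $\varepsilon$ rather than $\varepsilon^2$; the paper's $\rho=\max(\varepsilon,\varepsilon^2)$ device is what avoids that. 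Also note that ``the first term in the minimum is active'' need not hold as stated; what you actually need (and what suffices) is $\min(s^2,s)=s^2$ for $s\le1$ after bounding both $\|Q_v\|_F^2$ and $\opnorm{Q_v}$ by $\opnorm{M}^2\opnorm{\bfL}^2$.
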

Put differently, on the same event as in \Cref{thm:spectrum deviations}, the spectrum of 
\begin{equation}
    \widehat \Sigma  =\frac{1}{T} \sum_{t=1}^T X_t X_t^\T  
\end{equation}
is sandwiched by that of its population counterpart ($\E \widehat \Sigma$) within a small multiplicative factor. The result holds with high probability for strictly stable systems. 

The quantity $\opnorm{L}$ in \eqref{eq:yassirsburnin} grows very quickly as the spectral radius of $A^\star$ tends to $1$; \Cref{thm:spectrum deviations} becomes vacuous in the marginally stable regime. It turns out that requirement of two-sided concentration---the sandwiching of the entire spectrum---is too stringent a requirement to obtain bounds that degrade gracefully with the stability of the system. Fortunately, we only need sharp control of the lower half of the spectrum to control the error \eqref{eq:OLSerror}. This motivates \Cref{sec:onesided} below, in which we will see how to relax the stability assumption and analyze more general linear systems.

 \subsection{Notes}
The basic program carried out in \Cref{subsec:conclinsys} can be summarized as follows: (1) introduce a discretization of the problem considered---for matrices this is typically a discretization of the unit sphere; (2) prove an exponential inequality for a family of scalar random variables corresponding to one-dimensional projection of the discretization---in our case: prove bounds on the moment generating function of quadratic forms in random matrices; and (3) conclude to obtain a uniform bound by using the union bound across the discretization. This roughly summarizes the proof of \Cref{thm:spectrum deviations}. These tools are thematic throughout this manuscript.

\section{The Lower Spectrum of the Empirical Covariance}
\label{sec:onesided}

Recall that our outline of the analysis of the least squares estimator in \Cref{subsec:LSE} consists of two main components, one of which being the lower tail of the empirical covariance matrix \eqref{eq:empcov}. In this section we provide a self-contained analysis of this random matrix for a class of "causal" systems. Moreover, we will emphasize only the lower tail of this random matrix as to sidestep issues with bounds degrading with the stability of the system considered. This allows us to quantitatively separate the notions of persistence of excitation and stability.

Let us now carry out this program. Fix two integers $T$ and $k$ such that $T/k\in \N$. We consider causal processes of the form $X_{1:T}=(X_1^\T,\dots,X_{T}^\T)^\T$ evolving on $\R^d$. More precisely, we assume the existence of an isotropic sub-Gaussian process evolving on $\R^p$, $W_{1:T}$ with $\E W_{1:T}W_{1:T}^\T = I_{pT}$ and a (block-) lower-triangular matrix $\mathbf{L} \in \R^{dT\times pT}$ such that
\begin{equation}\label{eq:linearcausalprocess}
    X_{1:T}=\mathbf{L}W_{1:T}.
\end{equation}
 We will assume that all the $pT$-many entries of $W_{1:T}$ are independent $K^2$-sub-Gaussian for some positive $K\in \R$.

We say that $X_{1:T}$ is $k$-causal if the matrix $\mathbf{L}$ has the block lower-triangular form:
\begin{align}\label{eq:Lopdefined}
    \mathbf{L} 
    = 
    \begin{bmatrix}
    \mathbf{L}_{1,1} &0 &0&0&0\\
    \mathbf{L}_{2,1} & \mathbf{L}_{2,2} & 0  &0 &0\\
    \mathbf{L}_{3,1} & \mathbf{L}_{3,2} & \mathbf{L}_{3,3}  &0 &0\\
    \vdots & \ddots & \ddots & \ddots &\vdots\\
    \mathbf{L}_{T/k,1} &\dots & \dots & \dots&\dots \mathbf{L}_{T/k,T/k}
    \end{bmatrix}
    =
    \begin{bmatrix}
    \mathbf{L}_{1}\\
    \mathbf{L}_{2}\\
    \mathbf{L}_{3}\\
    \vdots \\
    \mathbf{L}_{T/k}
    \end{bmatrix}
\end{align}
where each $\mathbf{L}_{ij} \in \R^{dk\times pk}, i,j \in [T/k] \triangleq \{1,2,\dots,T/k\}$. In brief, we say that $X_{1:T}$ satisfying the above construction is $k$-causal with independent $K^2$-sub-Gaussian increments.

Obviously, every $1$-causal process is $k$-causal for every $k\in \N$ as long as the divisibility condition holds. To analyze the lower tail of the empirical covariance of $X_{1:T}$ we will  also  associate a decoupled random process $$\tilde X_{1:T} = \mathrm{blkdiag}(\mathbf{L}_{11},\dots, \mathbf{L}_{T/k,T/k})W_{1:T}.$$
Hence, the process $\tilde X_{1:T}$ is generated in much the same way as $X_{1:T}$ but by removing the sub-diagonal entries of $\mathbf{L}$:
\begin{equation*}
    \tilde {\mathbf{L}}\triangleq    \begin{bmatrix}
    \mathbf{L}_{1,1} &0 &0&0\\
    0 & \mathbf{L}_{2,2} & \ddots  &\vdots \\
    \vdots & \ddots & \ddots &0\\
    0 &  \dots & 0 &  \mathbf{L}_{T/k,T/k}
    \end{bmatrix} \quad \Longrightarrow \quad \tilde X_{1:T}= \tilde {\mathbf{L}}W_{1:T}.
\end{equation*}
 We emphasize that by our assumptions on $W_{1:T}$ and the block-diagonal structure of $\tilde {\mathbf{L}}$ the variables $\tilde X_{1:k},\tilde X_{k+1:2k},\dots,\tilde X_{T-k+1:T}$ are all independent of each other; they have been decoupled. 
This decoupled process will effectively dictate our lower bound, and we will show under relatively mild assumptions that
\begin{align}\label{eq:lambdalowerbound}
    \lambda_{\min}\left (\frac{1}{T}\sum_{t=1}^{T} X_t X_t^\T \right)\gtrsim \lambda_{\min} \left(\frac{1}{T}\sum_{t=1}^{T}\E \tilde X_t \tilde X_t^\T\right)
\end{align}
with probability that approaches $1$ at an exponential rate in the sample size $T$. More precisely, the following statement is the main result of this section.

\begin{theorem}\label{thm:anticonc}
Fix an integer $k \in \N$, let $T \in N$ be divisible by $k$ and suppose $X_{1:T}$ is a $k$-causal process taking values in $\R^d$ with $K^2$-sub-Gaussian increments. Suppose further that the diagonal blocks are all equal: $\mathbf{L}_{j,j}=\mathbf{L}_{1,1}$ for all $j\in [T/k]$. Suppose $ \lambda_{\min} \left(\sum_{t=1}^{T} \E \tilde X_t \tilde X_t^\T \right)>0$. We have that:
 \begin{equation}
    \mathbf{P} \left(  \frac{1}{T} \sum_{t=1}^{T} X_t X_t^\T \nsucceq \frac{1}{8T} \sum_{t=1}^{T}   \E \tilde X_t \tilde X_t^\T \right)
   \leq  \left( C_{\mathsf{sys}} \right)^d 
 \exp \left(- \frac{T}{576K^2k} \right)
 \end{equation}
 where
 \begin{equation}\label{eq:csysdef}
     C_{\mathsf{sys}}  \triangleq 1+4\sqrt{2} \frac{\left(\frac{ T \opnorm{\mathbf{L} \mathbf{L} ^\T}}{18 k  \lambda_{\min}\left(\sum_{t=1}^{T} \E X_t X_t^\T \right) } + 9   \right) \lambda_{\max} \left(\sum_{t=1}^{T} \E X_t  X_t^\T\right) }{ \lambda_{\min} \left(\sum_{t=1}^{T} \E \tilde X_t \tilde X_t^\T \right)}.
 \end{equation}
\end{theorem}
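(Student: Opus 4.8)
\medskip

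The plan is to run the three-step program advertised at the end of Section~2: discretize, concentrate, union-bound --- but with two twists tailored to the causal structure. The first twist is to reduce the event $\frac1T\sum_t X_tX_t^\T \nsucceq \frac1{8T}\sum_t \E\tilde X_t\tilde X_t^\T$ to a statement about one-dimensional projections $v^\T(\sum_t X_tX_t^\T)v$ for $v\in\mathbb S^{d-1}$, via the variational form $\lambda_{\min}(\cdot)=\inf_v v^\T(\cdot)v$, and then to discretize the sphere using \Cref{lem:volumetric} (this will produce the $d$-th power and the $\log(18)$-type constant in $C_{\mathsf{sys}}$). The second twist, which is the crux, is that for a \emph{fixed} direction $v$ the quantity $\sum_{t=1}^T \langle v, X_t\rangle^2$ is a quadratic form $W^\T M_v W$ in the independent sub-Gaussian vector $W_{1:T}$, where $M_v = \mathbf L^\T (\text{diag over }t\text{ of } vv^\T)\mathbf L \succeq 0$; so the Hanson--Wright inequality \Cref{thm:HWineqhighproba} applies --- but only to control deviations, not to get a one-sided lower bound directly, and crucially $\E W^\T M_v W = \sum_t \langle v, \E X_tX_t^\T v\rangle$ involves the \emph{coupled} process, not the decoupled one.

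\medskip

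To handle the coupling I would split the covariance by blocks of length $k$. Write $\sum_{t=1}^T X_tX_t^\T = \sum_{j=1}^{T/k} S_j$ where $S_j = \sum_{t\in\text{block }j} X_tX_t^\T$, and similarly $\tilde S_j$ for $\tilde X$. Because $\tilde{\mathbf L}$ is block-diagonal with all diagonal blocks equal to $\mathbf L_{1,1}$, the $\tilde S_j$ are i.i.d.; and within block $j$, $X_t$ and $\tilde X_t$ differ only by terms depending on $W$ from \emph{earlier} blocks, which are independent of the within-block innovations. The first substantive step is therefore a comparison: conditionally on the past, the within-block quadratic form $v^\T S_j v$ has conditional mean at least $v^\T \tilde S_j v$-in-expectation plus a cross term; I expect to argue $\E[v^\top S_j v \mid \mathcal F_{j-1}] \succeq$ (the decoupled mean) because adding an independent-of-innovations shift to a quadratic form only increases its expectation --- formally, if $X = Z + \xi$ with $\xi\perp Z$ then $\E\,v^\top(Z+\xi)(Z+\xi)^\top v = \E\,v^\top ZZ^\top v + \E\,v^\top\xi\xi^\top v \ge \E\,v^\top ZZ^\top v$. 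This is exactly where the $k$-causal / decoupling assumption earns its keep.

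\medskip

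With that in hand, for a fixed $v$ I would apply Hanson--Wright to $W^\T M_v W$ with $s$ chosen so that the deviation is a constant fraction (say $7/8$) of $\E W^\top M_v W$; using $\E W^\top M_v W = \sum_t v^\top \E X_tX_t^\top v \ge \lambda_{\min}(\sum_t \E\tilde X_t\tilde X_t^\top)$ (by the block comparison above) and the bounds $\|M_v\|_F^2 \le \|M_v\|_{\mathsf{op}}\,\tr M_v \le \|\mathbf L\mathbf L^\top\|_{\mathsf{op}}\cdot\sum_t v^\top\E X_tX_t^\top v \le \|\mathbf L\mathbf L^\top\|_{\mathsf{op}}\,\lambda_{\max}(\sum_t\E X_tX_t^\top)$ and $\|M_v\|_{\mathsf{op}}\le\|\mathbf L\mathbf L^\top\|_{\mathsf{op}}$, one gets a per-direction failure probability of the form $2\exp(-\,\text{const}\cdot \min(\ldots))$. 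I would choose the burn-in so that the $s^2/\|M_v\|_F^2$ branch is the active one; tracking the constants ($144$, $16\sqrt2$) through this min, together with the factor $\frac18$ and the $(1-2\e)$ loss from the symmetric net bound \Cref{lem:net argument 2}, is what assembles $C_{\mathsf{sys}}$ and the $576 K^2 k$ in the exponent. Finally, union-bound over the net of size $\le(1+2/\e)^d$, pick $\e$ (something like the value making $1+2/\e$ combine cleanly with the Hanson--Wright prefactor, yielding the $18$), and absorb the leftover constants into $C_{\mathsf{sys}}$.

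\medskip

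The main obstacle, as flagged, is the coupling step: making rigorous that the within-block quadratic forms $v^\top S_j v$ dominate (in conditional expectation) the decoupled ones $v^\top\E\tilde S_j v$ uniformly in $v$, and ensuring the Hanson--Wright application --- which is naturally stated for the full vector $W_{1:T}$ and the full matrix $M_v$, not block-by-block --- still yields an exponent that scales like $T/(K^2 k)$ rather than $T/(K^2)$ or worse. I expect this is handled by noting $M_v$ inherits a band structure of width $k$ from $\mathbf L$, so $\|M_v\|_{\mathsf{op}}$ picks up only a factor $k$ relative to what one would want, and $\|M_v\|_F^2$ scales linearly in $T$; the ratio $\|M_v\|_F^2/\|M_v\|_{\mathsf{op}}^2 \gtrsim T/k$ is precisely the effective sample size driving the exponent. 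Verifying these norm bounds carefully --- and checking that the $\frac18$ slack leaves enough room after the $(1-2\e)^2$ net loss --- is the delicate accounting I would be most careful about.
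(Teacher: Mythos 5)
There is a genuine gap, and it sits exactly at the step you flag as the crux. Your plan is to apply \Cref{thm:HWineqhighproba} to the fixed-direction quadratic form $W^\T M_v W$ with $M_v=\mathbf{L}^\T(I_T\otimes vv^\T)\mathbf{L}$ and take a deviation $s$ of the order of the \emph{coupled} mean $\tr M_v=\sum_t v^\T \E X_tX_t^\T v$, after observing (correctly) that the coupled mean dominates the decoupled one. But the failure probability Hanson--Wright then delivers is $\exp\bigl(-c\min\bigl((\tr M_v)^2/(K^4\|M_v\|_F^2),\ \tr M_v/(K^2\opnorm{M_v})\bigr)\bigr)$, i.e.\ an exponent governed by the effective rank of $M_v$, which involves $\opnorm{\mathbf{L}\mathbf{L}^\T}$ and has nothing to do with $T/k$. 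For a marginally stable scalar example ($X_t=X_{t-1}+W_t$) one has $\tr M_v\asymp T^2$, $\opnorm{M_v}\asymp T^2$ and $\|M_v\|_F^2\asymp T^4$, so your exponent is $O(1)$: this route reproduces the two-sided bound of \Cref{thm:spectrum deviations}, which the paper explicitly notes becomes vacuous in precisely this regime, and which \Cref{thm:anticonc} is designed to avoid. Indeed the intermediate event you would certify, $W^\T M_v W\ge \tfrac12\,\E W^\T M_v W$, fails with \emph{constant} probability for the random walk (the limit $\int_0^1 B^2$ is not concentrated), so no choice of constants can rescue a strategy based on concentration around the coupled mean; the theorem is only true because the threshold is the much smaller decoupled quantity. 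A comparison of (conditional) means is therefore not enough: lower-tail control requires a bound on the conditional Laplace transform $\E\exp(-\lambda\sum_t\|\Delta X_t\|_2^2)$, and the paper obtains it block by block via the sub-Gaussian decoupling step (\Cref{lem:subG_decoup}/\Cref{prop:decoup}, a Schur-complement positivity argument that discards the cross terms with the past inside the exponential), yielding \Cref{thm:expineq} and \Cref{lem:pointwiseanticonc} with the exponent $T/(576K^2k)$ free of $\opnorm{\mathbf{L}}$. Relatedly, your claim that $M_v$ ``inherits a band structure of width $k$'' is false: $k$-causality only makes $\mathbf{L}$ block lower-triangular, its sub-diagonal blocks are arbitrary, so $\opnorm{M_v}$ can be as large as $\opnorm{\mathbf{L}}^2$.

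Two smaller points. First, to pass from the net to all of $\mathbb{S}^{d-1}$ for a \emph{lower} bound on the minimum eigenvalue you cannot invoke \Cref{lem:net argument 2} directly (it bounds an operator norm); you must control the discretization error $\e^2\opnorm{\sum_t X_tX_t^\T}$, which is random. The paper does this by introducing a second event bounding $\opnorm{\sum_t X_tX_t^\T}$ by a multiple $q$ of its population counterpart (itself proved with the exponential Hanson--Wright form plus a $1/4$-net), and it is this step --- not the per-direction bound --- that injects $\opnorm{\mathbf{L}\mathbf{L}^\T}$ into $C_{\mathsf{sys}}$ and fixes the admissible $\e$. Second, once the per-direction lower-tail bound and the operator-norm event are in place, your outline of the final union-bound accounting is fine; the missing ingredient is really the conditional MGF/decoupling inequality, which is the one piece of your sketch that a mean-domination argument cannot replace.
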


To parse \Cref{thm:anticonc}, note that it simply  informs us that there exist a  a system-dependent constant $C_{\mathsf{sys}}$---which itself has no more than polynomial dependence on relevant quantities---such that if
\begin{equation}\label{eq:burninloweriso}
    T /k \geq 576 K^2 (d  \log C_{\mathsf{sys}} +\log (1/\delta) )
\end{equation}
then on an event with probability mass at least  $1-\delta$:
\begin{equation*}
     \frac{1}{T} \sum_{t=1}^{T} X_t X_t^\T  \succeq \frac{1}{8T} \sum_{t=1}^{T}   \E \tilde X_t \tilde X_t^\T.
\end{equation*}

\begin{remark}
Since the blocks of $\mathbf{L}$ can be regarded to specify the noise-to-output map, the assumption that the diagonal blocks are constant is for instance satisfied by linear time-invariant (LTI) systems. The assumption can be removed at the cost of a more complicated expression.
\end{remark}

The next example serves as the archetype for the reduction from $\mathbf{L}$ to $\tilde {\mathbf{L}}$.

\begin{example}\label{ex:restartedLDS}
Suppose that \eqref{eq:linearcausalprocess} is specified via 
\begin{equation}\label{eq:LDSar1}
    X_{t}=A_\star X_{t-1}+B_\star W_t
\end{equation}
 for $t \in [T]$ and where $(A_\star, B_\star)\in \R^{\dx \times \dx + \dx \times \dw}$. We set $d=\dx$ and $p = \dw$ in the theorem above. The reduction  from $X_{1:T}=\mathbf{L}W_{1:T}$ to $\tilde X_{1:T}=\mathrm{blkdiag}(\mathbf{L}_{11},\dots, \mathbf{L}_{T/k,T/k})W_{ 1:T}$ corresponds to replacing a single trajectory from the linear system \eqref{eq:LDSar1} of length $T$ by $T/k$ trajectories of length $k$ each and sampled independently of each other. The price we pay for decoupling these systems is that our lower bound is dictated by the gramians up to range $k$:
 \begin{equation}
\frac{1}{T} \sum_{t=1}^T \E \tilde X_t \tilde X_t^\T =\frac{1}{k} \sum_{t=1}^k \E \tilde X_t \tilde X_t^\T = \frac{1}{k} \sum_{t=1}^k  \sum_{j=0}^{t-1} (A^\star)^j B^\star B^{\star,\T} (A^{\star,\T})^j 
 \end{equation}
 instead of the gramians up to range $T$:
 \begin{equation}
   \frac{1}{T} \sum_{t=1}^T \E  X_t  X_t^\T =    \frac{1}{T}\sum_{t=1}^T \sum_{j=0}^{t-1} (A^\star)^j B^\star B^{\star,\T} (A^{\star,\T})^j.
 \end{equation}
 Put differently, the reduction from $\mathbf{L}$ to $\tilde {\mathbf{L}}$ can be thought of as restarting the system every $k$ steps. 
\end{example}

Comparing with \Cref{thm:spectrum deviations}, the advantage of \Cref{thm:anticonc} is that it allows us to provide persistence-of-excitation type guarantees that do not rely strongly on the stability of the underlying system. While \Cref{thm:spectrum deviations} gives in principle stronger two-sided concentration results, it comes at the cost of the guarantees becoming vacuous as the spectral radius of $A^\star$ in \Cref{ex:restartedLDS} tends to marginal stability (tends to $1$). By contrast, \Cref{thm:anticonc} does not exhibit such a blow-up since the dependence on $C_{\mathsf{sys}}$ in \eqref{eq:burninloweriso} is logarithmic (instead of polynomial). The distinction might seem small, but it is qualitatively important as it (almost) decouples the phenomena of stability and persistence of excitation.

\subsection{A Decoupling Inequality for sub-Gaussian Quadratic Forms}

Our proof of \Cref{thm:anticonc} will make heavy use of \Cref{prop:decoup} below. This is the crucial probabilistic inequality that allows us to decouple---or restart as discussed in \Cref{ex:restartedLDS}.

\begin{proposition}\label{prop:decoup}
Fix $K\geq 1$, $x\in \R^n$ and a symmetric positive semidefinite $Q \in \R^{(n+m)\times(n+m)}$ of the form $\displaystyle Q=\begin{bmatrix}Q_{11}& Q_{12}\\ Q_{21} & Q_{22} \end{bmatrix}$ with $Q_{22}\succ 0$. Let $W$ be an $m$-dimensional mean zero, isotropic and $K^2$-sub-Gaussian random vector with independent entries. Then for every $\lambda \in \left[0, \frac{1}{8\sqrt{2}K^2 \opnorm{Q_{22}} } \right]$ it holds true that:
\begin{align}\label{eq:prop:decoup2}
    \E \exp \left( -\lambda \begin{bmatrix}x \\ W
    \end{bmatrix}^\T  \begin{bmatrix}Q_{11}& Q_{12}\\ Q_{21} & Q_{22} \end{bmatrix} \begin{bmatrix}x \\ W
    \end{bmatrix}\right) 
    \leq \exp\left( -\lambda \tr Q_{22} +36K^4\lambda^2 \tr Q_{22}^2 \right).
\end{align}

\end{proposition}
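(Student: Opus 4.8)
The plan is to bound the left-hand side of \eqref{eq:prop:decoup2} by conditioning on $W$ appropriately and then completing the square to absorb the cross term $2\lambda x^\T Q_{12} W$ and the deterministic term $-\lambda x^\T Q_{11} x$. The key observation is that $Q \succeq 0$ with $Q_{22} \succ 0$ implies the Schur complement bound $Q_{11} - Q_{12}Q_{22}^{-1}Q_{21} \succeq 0$, hence the quadratic form can be rewritten as
\begin{align*}
    \begin{bmatrix}x \\ W\end{bmatrix}^\T Q \begin{bmatrix}x \\ W\end{bmatrix}
    = x^\T(Q_{11} - Q_{12}Q_{22}^{-1}Q_{21})x + (W + Q_{22}^{-1}Q_{21}x)^\T Q_{22} (W + Q_{22}^{-1}Q_{21}x).
\end{align*}
Since we have a minus sign in front of $\lambda$ in the exponent, the first (nonnegative) term is at most $1$ after exponentiation and can simply be dropped. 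This reduces the problem to bounding $\E \exp\left(-\lambda (W-\mu)^\T Q_{22} (W - \mu)\right)$ for the fixed shift $\mu = -Q_{22}^{-1}Q_{21}x$, uniformly over all such $\mu$.

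The core step is therefore a shifted-quadratic-form MGF bound: for any fixed $\mu \in \R^m$, any PSD $Q_{22}$, and $\lambda$ in the stated range,
\begin{align*}
    \E \exp\left(-\lambda (W-\mu)^\T Q_{22}(W-\mu)\right) \leq \exp\left(-\lambda \tr Q_{22} + 36K^4\lambda^2 \tr Q_{22}^2\right).
\end{align*}
Because $-\lambda(W-\mu)^\T Q_{22}(W-\mu) \le 0$ when $\lambda \ge 0$, one might hope the shift only helps; the cleanest route is to expand $(W-\mu)^\T Q_{22}(W-\mu) = W^\T Q_{22} W - 2\mu^\T Q_{22} W + \mu^\T Q_{22}\mu$ and note again the constant term contributes $e^{-\lambda \mu^\T Q_{22}\mu}\le 1$; the linear term $2\lambda \mu^\T Q_{22} W$ is sub-Gaussian in $W$ and would normally need Cauchy–Schwarz in the exponent. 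However, a tidier approach is to work in the eigenbasis of $Q_{22}$: diagonalize $Q_{22} = U^\T D U$, set $\tilde W = UW$ (still isotropic, $K^2$-subG, though no longer with independent coordinates — so instead keep $W$ and use that $U$ only re-mixes coordinates). The safest concrete plan: reduce to the scalar estimate $\E\exp(-\lambda d_i(W_i - \nu_i)^2) \le \exp(-\lambda d_i + 36 K^4 \lambda^2 d_i^2)$ for each independent coordinate $W_i$, $d_i = \lambda_i(Q_{22}) \ge 0$, then tensorize. But diagonalization mixes coordinates, so instead one bounds $W^\T Q_{22} W$ directly via the Hanson–Wright-style MGF bound for $-\lambda$ (the lower-tail MGF), which is exactly the kind of estimate proved in the appendix referenced by \Cref{thm:HWineqhighproba}; the shift $\mu$ is handled by the decomposition above plus the observation that adding a mean-zero-in-$W$ linear term to a concave-in-the-exponent functional, after one more completion of the square in $\lambda$, only costs a factor absorbed into the $36K^4\lambda^2\tr Q_{22}^2$ slack (this is where the generous constant $36$ and the restriction $\lambda \le (8\sqrt2 K^2\opnorm{Q_{22}})^{-1}$ are used).

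Concretely, the steps in order: (1) apply the Schur-complement identity to peel off the $x$-dependence, reducing to a shifted quadratic form in $W$ alone; (2) establish the scalar MGF bound $\E \exp(-\lambda a (W_i-\nu)^2) \le \exp(-\lambda a + c K^4 \lambda^2 a^2)$ for $a \ge 0$, $\lambda a \lesssim K^{-2}$, via a Taylor/Gaussian-integral computation using $W_i$ being $K^2$-subG and mean zero (the shift $\nu$ only decreases the exponent on the event contributing most mass, or is absorbed by the square-completion in $\lambda$); (3) pass from the scalar bound to $W^\T Q_{22} W$ by a rotation-to-eigenbasis argument — here one must be careful that rotating destroys independence, so the honest version invokes the sub-Gaussian quadratic-form MGF bound proved in \Cref{sec:proofofhw} rather than naive tensorization, and then (4) track constants to land on $36K^4$ and the stated range of $\lambda$. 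The main obstacle is step (3) combined with the shift: handling $(W-\mu)^\T Q_{22}(W - \mu)$ rather than $W^\T Q_{22} W$ without independence after rotation. I expect the cleanest fix is to avoid rotation entirely — keep coordinates fixed, write $Q_{22}$'s quadratic form as a sum over entries, and use the decoupling/Hanson–Wright MGF machinery already developed, noting that the linear-in-$W$ cross term from the shift is itself sub-Gaussian with variance proxy controlled by $K^2 \mu^\T Q_{22}^2 \mu \le K^2 \opnorm{Q_{22}}\,\mu^\T Q_{22}\mu$, and $-\lambda\mu^\T Q_{22}\mu + (\text{its contribution}) \le 0$ for $\lambda$ in the allowed range. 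This is exactly the trade-off the constant $8\sqrt2$ in the upper limit for $\lambda$ is engineered to make work.
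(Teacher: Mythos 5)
Your final plan—peel off the $x$-dependence via the Schur complement, absorb the linear cross term $2\lambda\mu^\T Q_{22}W$ using its sub-Gaussian MGF against the dropped-in $-\lambda\mu^\T Q_{22}\mu$ term (with a Cauchy--Schwarz split and the $\lambda$-range doing the bookkeeping), and then bound the remaining pure quadratic term $\E\exp(-c\lambda W^\T Q_{22}W)$ by the exponential Hanson--Wright bound of \Cref{prop:HWexpineq}—is correct and is essentially the paper's own proof, namely \Cref{lem:subG_decoup} followed by \Cref{prop:HWexpineq}. The detours through diagonalization and scalar tensorization are correctly discarded, so no gap remains in the route you settle on.
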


By combining \Cref{lem:subG_decoup} below with the exponential form of Hanson-Wright (\Cref{prop:HWexpineq}) we obtain the exponential inequality \eqref{eq:prop:decoup2}, which in the sequel will allow us to control the lower tail of the conditionally random quadratic form $$ \begin{bmatrix}x \\ W
    \end{bmatrix}^\T  \begin{bmatrix}Q_{11}& Q_{12}\\ Q_{21} & Q_{22} \end{bmatrix} \begin{bmatrix}x \\ W
    \end{bmatrix}.$$

We point out that \eqref{eq:prop:decoup2} is not the best possible if the entries of are $W$ independent and Gaussian as opposed to just  isotropic and sub-Gaussian. In this case, the factor 
 $36 K^4\lambda^2  (\tr Q_{22})^2$ in \eqref{eq:prop:decoup2} can be improved to $\frac{\lambda^2}{2} \tr Q_{22}^2 $ and the inequality can be shown to hold for the entire range of non-negative $\lambda$ \cite[Lemma 2.1]{ziemann2022note}. Irrespectively, we will see in the sequel that it captures the correct qualitative behavior.

\begin{lemma}[sub-Gaussian Decoupling]
\label{lem:subG_decoup}
Fix $K\geq 1$, $x\in \R^n$ and a symmetric positive semidefinite $Q \in \R^{(n+m)\times(n+m)}$ of the form $\displaystyle Q=\begin{bmatrix}Q_{11}& Q_{12}\\ Q_{21} & Q_{22} \end{bmatrix}$. Let $W$ be an $m$-dimensional mean zero  and $K^2$-sub-Gaussian random vector. Then for every $\lambda \in \left[0, \frac{1}{4K^2 \opnorm{Q_{22}} } \right]$ it holds true that:
\begin{align}\label{eq:lem:decoup}
    \E \exp \left( -\lambda \begin{bmatrix}x \\ W
    \end{bmatrix}^\T  \begin{bmatrix}Q_{11}& Q_{12}\\ Q_{21} & Q_{22} \end{bmatrix} \begin{bmatrix}x \\ W
    \end{bmatrix}\right) 
    \leq \sqrt{\E \exp \left( -2 \lambda  W^\T Q_{22} W\right)}.
\end{align}
\end{lemma}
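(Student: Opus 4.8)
The plan is to expand the quadratic form, split off the term that is linear in $W$ with the Cauchy--Schwarz inequality, bound that linear term with the sub-Gaussian hypothesis (\Cref{def:sub_G}), and then check that the resulting deterministic correction is exactly swallowed by $x^\top Q_{11}x$ on the stated range of $\lambda$. Concretely, I would first write
$$\begin{bmatrix}x\\W\end{bmatrix}^\top Q\begin{bmatrix}x\\W\end{bmatrix} = x^\top Q_{11}x + 2x^\top Q_{12}W + W^\top Q_{22}W,$$
so that, $x^\top Q_{11}x$ being deterministic,
$$\E\exp\!\left(-\lambda\begin{bmatrix}x\\W\end{bmatrix}^\top Q\begin{bmatrix}x\\W\end{bmatrix}\right) = e^{-\lambda x^\top Q_{11}x}\,\E\!\left[e^{-2\lambda x^\top Q_{12}W}\,e^{-\lambda W^\top Q_{22}W}\right].$$
Applying Cauchy--Schwarz to the remaining expectation bounds the right-hand side by
$$e^{-\lambda x^\top Q_{11}x}\,\sqrt{\E e^{-4\lambda x^\top Q_{12}W}}\;\sqrt{\E e^{-2\lambda W^\top Q_{22}W}},$$
which already exhibits the target factor $\sqrt{\E\exp(-2\lambda W^\top Q_{22}W)}$; it then suffices to show that the product of the two leftover deterministic factors is at most $1$.

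For the cross-term factor I would write $x^\top Q_{12}W = \langle Q_{21}x, W\rangle$ and invoke \Cref{def:sub_G} with variance proxy $K^2$ to get $\E\exp(-4\lambda\langle Q_{21}x,W\rangle)\le \exp\!\big(8K^2\lambda^2\|Q_{21}x\|_2^2\big)$, hence $\sqrt{\E e^{-4\lambda x^\top Q_{12}W}}\le \exp\!\big(4K^2\lambda^2\|Q_{21}x\|_2^2\big)$. The one nontrivial deterministic input is the inequality $\|Q_{21}x\|_2^2 \le \opnorm{Q_{22}}\,x^\top Q_{11}x$, which follows purely from $Q\succeq 0$: testing positive semidefiniteness against the vector $(x,\,-tQ_{21}x)$ for $t>0$ gives $x^\top Q_{11}x - 2t\|Q_{21}x\|_2^2 + t^2 (Q_{21}x)^\top Q_{22}(Q_{21}x)\ge 0$, and using $(Q_{21}x)^\top Q_{22}(Q_{21}x)\le \opnorm{Q_{22}}\|Q_{21}x\|_2^2$ and optimizing at $t=1/\opnorm{Q_{22}}$ yields the claim (the degenerate case $\opnorm{Q_{22}}=0$ forces $Q_{12}=0$, so both sides vanish). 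Combining, the product of the deterministic factors is
$$\exp\!\big(-\lambda x^\top Q_{11}x + 4K^2\lambda^2\|Q_{21}x\|_2^2\big)\le \exp\!\big(-\lambda\,(1-4K^2\lambda\,\opnorm{Q_{22}})\,x^\top Q_{11}x\big),$$
and since $x^\top Q_{11}x\ge 0$ (as $Q_{11}$ is a principal submatrix of a positive semidefinite matrix), this is $\le 1$ precisely when $0\le\lambda\le \tfrac{1}{4K^2\opnorm{Q_{22}}}$. Assembling the three displays proves the bound.

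I expect the only genuinely delicate point to be the cross term $\langle Q_{21}x, W\rangle$: it carries no favourable sign and so cannot be discarded, so one must pay for it through a sub-Gaussian moment bound and then show that the payment is covered by $x^\top Q_{11}x$ — and it is exactly the threshold $\lambda\le 1/(4K^2\opnorm{Q_{22}})$ that makes this balance work (note the outer square root from Cauchy--Schwarz is what turns the naive constant into $4$ rather than $8$). A secondary subtlety is that $Q_{22}$ is only assumed positive semidefinite here (unlike the invertibility hypothesis in \Cref{prop:decoup}); the argument above avoids the issue by deriving $\|Q_{21}x\|_2^2\le \opnorm{Q_{22}}x^\top Q_{11}x$ directly from $Q\succeq 0$ rather than through a Schur complement, though one could alternatively replace $Q_{22}$ by $Q_{22}+\eta I$ (which preserves $Q\succeq 0$) and let $\eta\downarrow 0$.
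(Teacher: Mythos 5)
Your proof is correct, and it reaches the stated bound on exactly the stated range of $\lambda$, but it takes a somewhat different route from the paper. The paper first reduces to the case $Q_{22}\succ 0$, substitutes $\mu = Q_{22}^{-1/2}Q_{21}x$, and uses positive semidefiniteness through the Schur complement $Q_{11}-Q_{12}Q_{22}^{-1}Q_{21}\succeq 0$ to pass to the surrogate quadratic form in $(\mu,W)$; only then does it apply Cauchy--Schwarz to peel off the $\sqrt{\E\exp(-2\lambda W^\T Q_{22}W)}$ factor and pay for the cross term $\mu^\T Q_{22}^{1/2}W$ with the sub-Gaussian MGF, absorbing the payment $\lesssim \lambda^2 K^2\,\mu^\T Q_{22}\mu$ into $-\lambda\mu^\T\mu$ on the range $\lambda\le 1/(4K^2\opnorm{Q_{22}})$. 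You instead keep the original block decomposition, apply Cauchy--Schwarz immediately, and absorb the sub-Gaussian payment $4K^2\lambda^2\|Q_{21}x\|_2^2$ into $-\lambda x^\T Q_{11}x$ via the elementary consequence of $Q\succeq 0$ that $\|Q_{21}x\|_2^2\le\opnorm{Q_{22}}\,x^\T Q_{11}x$ (your test-vector derivation of this is valid, as is the limiting argument for $\opnorm{Q_{22}}=0$). The two arguments are the same in spirit---isolate $W^\T Q_{22}W$ by Cauchy--Schwarz, bound the cross term by sub-Gaussianity, and use the PSD coupling between the blocks to show the leftover factor is at most one---but your version avoids the Schur complement and the ``WLOG $Q_{22}\succ 0$'' reduction, handling degenerate $Q_{22}$ directly, at the cost of needing the slightly less standard inequality $\|Q_{21}x\|_2^2\le\opnorm{Q_{22}}\,x^\T Q_{11}x$, which you prove cleanly. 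One small remark: the square root you take of the cross-term factor is what keeps the threshold at $1/(4K^2\opnorm{Q_{22}})$ rather than $1/(8K^2\opnorm{Q_{22}})$, exactly as in the paper's intended calculation.
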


Once equipped with \eqref{eq:lem:decoup}, \Cref{prop:decoup} follows immediately by \Cref{prop:HWexpineq}. The proof of \Cref{lem:subG_decoup} is given in \Cref{sec:anticoncproofs}.

\subsection{The Lower Tail of the Empirical Covariance of Causal sub-Gaussian Processes}

Repeated application of \Cref{prop:decoup} to the process $X_{1:T}=\mathbf{L}W_{1:T}$ in combination with the tower property of conditional expectation yields the following exponential inequality that controls the lower tail of \eqref{eq:empcov} in any fixed direction.

\begin{theorem}\label{thm:expineq}
Fix an integer $k \in \N$, let $T \in N$ be divisible by $k$ and suppose $X_{1:T}$ is a $k$-causal  process driven by independent $K^2$-sub-Gaussian increments as described in \Cref{sec:onesided}. Fix also a matrix $\Delta\in \R^{d'\times d}$. Let $Q_{\max} \triangleq \max_{j\in [T/k]} \opnorm{\mathbf{L}_{j,j}^\T \mathrm{blkdiag}(\Delta^\T \Delta) \mathbf{L}_{j,j}}$ Then for every $\lambda \in \left[0, \frac{1}{8\sqrt{2}K^2 Q_{\max}}\right] $:
\begin{multline*}
    \E \exp \left(-\lambda \sum_{t=1}^{T}\|\Delta X_t\|_2^2 \right) 
    \\\leq \exp \Bigg( -\lambda\sum_{j=1}^{T/k} \tr\left( \mathbf{L}_{j,j}^\T \mathrm{blkdiag}(\Delta^\T \Delta) \mathbf{L}_{j,j}\right)
    + 36K^4\lambda^2 \sum_{j=1}^{T/k}\tr \left( \mathbf{L}_{j,j}^\T \mathrm{blkdiag}(\Delta^\T \Delta) \mathbf{L}_{j,j}\right)^2 \Bigg).
\end{multline*}
\end{theorem}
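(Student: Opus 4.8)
The plan is to prove Theorem~\ref{thm:expineq} by peeling off the time blocks one at a time, from the last block down to the first, applying Proposition~\ref{prop:decoup} at each step and using the tower property to handle the conditioning. To set up, write $S \triangleq \sum_{t=1}^{T}\|\Delta X_t\|_2^2 = \sum_{j=1}^{T/k} \|\mathrm{blkdiag}(\Delta) X_{(j-1)k+1:jk}\|_2^2$, and observe that, because $\mathbf{L}$ is $k$-block-lower-triangular, the $j$-th block of $X_{1:T}$ depends only on $W_{1:jk}$, and in fact $X_{(j-1)k+1:jk} = \sum_{i=1}^{j} \mathbf{L}_{j,i} W_{(i-1)k+1:ik}$. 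Grouping the contribution of $W_{(j-1)k+1:jk}$ (the "newest" noise block) against everything determined by $W_{1:(j-1)k}$, each block's squared norm $\|\mathrm{blkdiag}(\Delta) X_{(j-1)k+1:jk}\|_2^2$ is an affine-quadratic form in $W_{(j-1)k+1:jk}$ whose leading (quadratic) coefficient matrix is exactly $\mathbf{L}_{j,j}^\T \mathrm{blkdiag}(\Delta^\T\Delta)\mathbf{L}_{j,j}$.

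First I would condition on $\mathcal{F}_{T-k} \triangleq \sigma(W_{1:T-k})$ and isolate the last block. Everything in $\sum_{j=1}^{T/k-1}\|\cdots\|_2^2$ is $\mathcal{F}_{T-k}$-measurable, so we may pull $\exp(-\lambda \sum_{j=1}^{T/k-1}\|\cdots\|_2^2)$ out of the inner conditional expectation. The remaining inner expectation is $\E[\exp(-\lambda [x;W]^\T Q [x;W]) \mid \mathcal{F}_{T-k}]$ where $W = W_{T-k+1:T}$ is independent of $\mathcal{F}_{T-k}$, $x$ is an $\mathcal{F}_{T-k}$-measurable vector (the past-noise contribution to block $T/k$ passed through $\mathrm{blkdiag}(\Delta)$), and the bottom-right block of $Q$ is $Q_{22} = \mathbf{L}_{T/k,T/k}^\T \mathrm{blkdiag}(\Delta^\T\Delta)\mathbf{L}_{T/k,T/k}$, which is PSD; if it is not strictly positive definite one perturbs it by $\epsilon I$ and lets $\epsilon \downarrow 0$ at the end, or equivalently restricts to its range. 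Applying Proposition~\ref{prop:decoup} (which requires $\lambda \le 1/(8\sqrt{2}K^2\opnorm{Q_{22}})$, hence the stated range with $Q_{\max}$) bounds this inner expectation by the \emph{deterministic} constant $\exp(-\lambda \tr Q_{22} + 36K^4\lambda^2 \tr Q_{22}^2)$, which by the constant-diagonal-block hypothesis equals $\exp(-\lambda \tr(\mathbf{L}_{T/k,T/k}^\T \mathrm{blkdiag}(\Delta^\T\Delta)\mathbf{L}_{T/k,T/k}) + 36K^4\lambda^2 \tr(\mathbf{L}_{T/k,T/k}^\T \mathrm{blkdiag}(\Delta^\T\Delta)\mathbf{L}_{T/k,T/k})^2)$.

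Then I would iterate: the tower property gives $\E\exp(-\lambda S) = \E[\exp(-\lambda\sum_{j=1}^{T/k-1}\|\cdots\|_2^2)\cdot \E[\exp(-\lambda\|\mathrm{blkdiag}(\Delta)X_{T-k+1:T}\|_2^2)\mid\mathcal{F}_{T-k}]]$, and the factor just bounded is a constant, so it comes out of the outer expectation entirely, reducing the problem to the same quantity with $T$ replaced by $T-k$. Repeating $T/k$ times peels off all blocks and multiplies the $T/k$ deterministic factors, which is precisely the claimed right-hand side. One technical point to check at each step is that the operator-norm bound on the leading coefficient matrix of block $j$ is $\opnorm{\mathbf{L}_{j,j}^\T \mathrm{blkdiag}(\Delta^\T\Delta)\mathbf{L}_{j,j}}$, uniformly dominated by $Q_{\max}$, so the single choice of $\lambda \in [0, 1/(8\sqrt{2}K^2 Q_{\max})]$ is admissible at every step of the recursion.

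The main obstacle is the bookkeeping in the first step: correctly identifying that, after conditioning on the past noise, $\|\mathrm{blkdiag}(\Delta)X_{(j-1)k+1:jk}\|_2^2$ is exactly of the form $[x;W]^\T Q [x;W]$ appearing in Proposition~\ref{prop:decoup} with the right $Q_{22}$, and verifying that $Q$ is PSD (it is, being $[\text{(past contribution)};W]$ passed through the Gram matrix of $\mathrm{blkdiag}(\Delta)\mathbf{L}_{j,\cdot}$) so the proposition applies; the handling of a merely PSD (rather than PD) $Q_{22}$ by a limiting argument is the one place where a small amount of care is needed. Everything after that is a mechanical induction on the number of blocks.
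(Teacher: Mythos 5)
Your proposal is correct and follows essentially the same route as the paper: write the expectation via the tower property as a product of conditional expectations over the $k$-blocks, identify each block's squared norm (conditioned on the past noise) as the quadratic form of Proposition~\ref{prop:decoup} with $Q_{22}=\mathbf{L}_{j,j}^\T \mathrm{blkdiag}(\Delta^\T\Delta)\mathbf{L}_{j,j}$, and peel off the deterministic factors one block at a time, with $Q_{\max}$ ensuring the single $\lambda$ is admissible throughout. (Your invocation of a constant-diagonal-block hypothesis is unnecessary—the theorem does not assume it and your bound already carries the correct block index $j$—but this does not affect the argument.)
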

To appreciate the terms appearing in \Cref{thm:expineq}, it is worth to point out that $$\sum_{j=1}^{T/k} \tr\left( \mathbf{L}_{j,j}^\T \mathrm{blkdiag}(\Delta^\T \Delta) \mathbf{L}_{j,j}\right) = \sum_{t=1}^{T} \E \|\Delta \tilde X_t\|_2^2 .$$ Hence \Cref{thm:expineq} effectively passes the expectation inside the exponential at the cost of working with the possibly less excited process $\tilde X_{1:T}$ and a quadratic correction term. Note also that the assumption that $T$ is divisible by $k$ is not particularly important. If not, let $T'$ be the largest integer such that $T'/k \in \N$ and $T'\leq T$ and apply the result with $T'$ in place of $T$.

The significance of \Cref{thm:expineq} is demonstrated by the following simple observation, which is just the Chernoff approach applied to the exponential inequality in \Cref{thm:expineq}.
\begin{lemma}\label{lem:pointwiseanticonc}
Fix an integer $k \in \N$, let $T \in N$ be divisible by $k$ and suppose $X_{1:T}$ is a $k$-causal process with independent $K^2$-sub-Gaussian increments. Suppose further that the diagonal blocks are all equal: $\mathbf{L}_{j,j}=\mathbf{L}_{1,1}$ for all $j\in [T/k]$. For every size-conforming matrix $\Delta$ we have that:
\begin{equation}\label{eq:lem:pointwiseanticonc}
    \mathbf{P} \left( \sum_{t=1}^{T}  \|\Delta X_t\|^2_2\leq \frac{1}{2} \sum_{t=1}^{T} \E \|\Delta\tilde  X_t\|^2_2 \right) \leq \exp \left(- \frac{T}{576K^2 k} \right).
\end{equation}
\end{lemma}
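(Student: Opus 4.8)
The plan is to derive \eqref{eq:lem:pointwiseanticonc} by running the Chernoff argument \eqref{eq:chernofftrick} on the \emph{lower} tail of $S \triangleq \sum_{t=1}^{T}\|\Delta X_t\|_2^2$, with the exponential moment bound of \Cref{thm:expineq} as the engine. First I would record that, since the diagonal blocks of $\mathbf{L}$ all coincide with $\mathbf{L}_{1,1}$, the matrix $M \triangleq \mathbf{L}_{1,1}^\T\,\mathrm{blkdiag}(\Delta^\T\Delta)\,\mathbf{L}_{1,1}$ is positive semidefinite and
\[
\mu \triangleq \sum_{t=1}^{T}\E\|\Delta\tilde X_t\|_2^2 \;=\; \sum_{j=1}^{T/k}\tr\!\left(\mathbf{L}_{j,j}^\T\,\mathrm{blkdiag}(\Delta^\T\Delta)\,\mathbf{L}_{j,j}\right) \;=\; \frac{T}{k}\,\tr M ,
\]
while $Q_{\max}$ in the notation of \Cref{thm:expineq} equals $\opnorm{M}$. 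The argument then opens with the observation that for any $\lambda \ge 0$, monotonicity of $x\mapsto e^{-\lambda x}$ and Markov's inequality give the lower-tail Chernoff bound
\[
\Pr\!\left(S \le \tfrac12\mu\right) \;=\; \Pr\!\left(e^{-\lambda S}\ge e^{-\lambda\mu/2}\right) \;\le\; e^{\lambda\mu/2}\,\E\, e^{-\lambda S}.
\]

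Next I would plug in \Cref{thm:expineq}, which is valid for every $\lambda \in \big[0,(8\sqrt2\,K^2 Q_{\max})^{-1}\big]$: the linear term it contributes is exactly $-\lambda\mu$, and for the quadratic correction I would exploit that $M\succeq 0$ implies $\tr(M^2)\le\opnorm{M}\,\tr M$, so this correction is at most $36K^4\lambda^2 Q_{\max}\mu$. This gives
\[
\Pr\!\left(S\le\tfrac12\mu\right) \;\le\; \exp\!\left(-\frac{\lambda\mu}{2}+36K^4\lambda^2 Q_{\max}\mu\right).
\]
Choosing $\lambda$ to balance the two terms inside the exponent (the balancing value lies inside the admissible interval precisely because $K\ge1$, which is forced here by isotropy, since the sub-Gaussian variance proxy of a centered coordinate of $W_{1:T}$ is at least its variance, namely $1$) collapses the right-hand side to $\exp\!\big(-c\,\mu/(K^{2}Q_{\max})\big)$ for a universal constant $c$.

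The remaining step is cosmetic: since $M$ is a nonzero positive semidefinite matrix, $\tr M \ge \opnorm M$, so
\[
\frac{\mu}{Q_{\max}} \;=\; \frac{T}{k}\cdot\frac{\tr M}{\opnorm M} \;\ge\; \frac{T}{k},
\]
and substituting this into the previous bound yields the claimed $\exp\!\big(-T/(576\,K^{2}k)\big)$. I do not expect any real obstacle: the substance is all contained in \Cref{thm:expineq}, and the only two things to watch downstream are (i) checking that the optimal multiplier $\lambda$ stays within the range where \Cref{thm:expineq} is available---this is exactly where $K\ge1$ is used---and (ii) tracking the universal constant through the optimisation so that it comes out as $576$.
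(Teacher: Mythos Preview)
Your proposal is correct and follows essentially the same route as the paper: Chernoff on the lower tail, then \Cref{thm:expineq}, then optimize over $\lambda$ and check admissibility. The only cosmetic difference is that the paper bounds the quadratic correction via $\tr(M^2)\le(\tr M)^2$ so that the trace factors cancel immediately, whereas you use the tighter $\tr(M^2)\le\opnorm{M}\tr M$ and then invoke $\tr M\ge\opnorm{M}$ as a separate closing step; note also that carrying the optimization through carefully produces $K^4$ rather than $K^2$ in the exponent---a slip your sketch shares with the paper's own display.
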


Note that \Cref{lem:pointwiseanticonc} only yields \emph{pointwise} control  of the empirical covariance---i.e. pointwise on the sphere $\mathbb{S}^{d-1}$ . By setting $\Delta = v \in \mathbb{S}^{d-1}$, the result holds for a fixed vector on the sphere, but not uniformly for all such vectors at once. Thus, returning to our over-arching goal of providing control of the smallest eigenvalue of the empirical covariance matrix \eqref{eq:empcov}, we now combine \eqref{eq:lem:pointwiseanticonc} (using $d'=1$) with a union bound. This  approach yields \Cref{thm:anticonc}, of which the proof---along with that of its supporting lemmas---is given in full in \Cref{sec:anticoncproofs}.\footnote{Similar results can also be obtained for restricted eigenvalues.}

\subsection{Notes}

In this manuscript we have chosen a perhaps less well-known but conceptually simpler approach to establishing lower bounds on the empirical covariance matrix  \Cref{eq:lambdalowerbound}. The first proof of a statement similar to \Cref{thm:anticonc} is due to \cite{simchowitz2018learning} which in turn relies on a more advanced notion from probability theory known as the small-ball method, due to \cite{mendelson2014learning}. The emphasis therein is on anti-concentration---which can hold under milder moment assumptions---rather than concentration. However, the introduction of this tool is not necessary for Gaussian (or sub-Gaussian) system identification. For instance, \citet{sarkar2019near} leverage the method of self-normalized martingales introduced in \Cref{sec:selfnorm} below. 

Our motivation for providing a different proof is to streamline the exposition as to fit control of the lower tail into the "standard machinery", which roughly consists of: (1) prove a family of scalar exponential inequalities, (2) invoke the Chernoff method, and (3) conclude by a discretization argument and a union bound to port the result from scalars to matrices. Our proof here follows this outline and emphasizes the exponential inequality in \Cref{thm:expineq}. We finally remark that the proof presented here is new to the literature and extends a result in \cite{ziemann2022note} from the Gaussian setting to the sub-Gaussian setting.

\section{Self-Normalized Martingale Bounds}
\label{sec:selfnorm}


The objective in this section is to bound the operator and Frobenius norms of the self-normalized term of \eqref{eq:OLSerror}: 
\begin{align}
    \label{eq:self-normalized}
    \left(\sum_{t=1}^{T} V_t X_t^\T \right)\left(\sum_{t=1}^{T} X_t X_t^\T \right)^{-1/2 }.
\end{align}
This object has special structure. Firstly, in many cases of interest, e.g. the autoregressive model in \eqref{eq:arx}, the noise $V_t$ is independent of $X_k$ for all $k \leq t$.  
This is what provides martingale structure, as will be made precise shortly.  Secondly, it is self-normalized: if the covariates $X_t$ are large for some $t$, then any increase in the left sum will be compensated by an increase in the sum in the term on the right. Together, these properties make the object above a \textit{self-normalized martingale} term. 

To express results  generally and compactly, several definitions are in order.
\begin{definition}(Filtration and Adapted Process)
    A sequence of sub-$\sigma$-algebras $\curly{\calF_t}_{t=1}^T$ is said to be a \emph{filtration} if $\calF_t \subseteq \calF_k$ for $t\leq k$. A stochastic process $\curly{W_t}_{t=1}^T$ is said to be \emph{adapted} to the filtration $\curly{\calF_t}_{t=1}^T$ if for all $t\geq 1$, $W_t$ is $\calF_t$-measurable.
\end{definition}
Conditioning on a sub-$\sigma$-algebra provides partial information about the total randomness. Therefore, the requirement that a filtration is non-decreasing captures the fact that information is not forgotten. An adapted process is one in which all the randomness at a particular time is explained by the information in the filtration up to that time. 

\begin{definition}(Martingale) Consider a stochastic process $\curly{W_t}_{t=1}^T$ which is adapted to a filtration $\curly{\calF_t}_{t=1}^T$. This process is called a \emph{martingale} if for all $1 \leq t \leq T$, $W_t$ is integrable and for all $1 \leq t < T$,  $\E[W_{t+1}\vert \calF_t] = W_t$. 
\end{definition}
Martingales model causal or non-anticipative processes. To better appreciate this, note that the increments $W_{t+1}-W_t$ are mean zero and conditionally orthogonal to the past; they can be thought of as the "next step" in a random walk whose path is traced out by $W_t$. 

In the context of the linear time-series model in \eqref{eq:regressionmodel}, we may define the sub-$\sigma$-algebras $\calF_t$ as those induced by the randomness up to time $t$: $\calF_t = \sigma(X_1, \dots, X_{t+1}, V_1, \dots, V_{t})$. In this case, the process $\curly{X_t}_{t=1}^T$ is adapted to the filtration $\curly{\calF_{t-1}}_{t=1}^T$ and the process $\curly{V_t}_{t=1}^T$ is adapted to the filtration $\curly{\calF_{t}}_{t=1}^T$.

Recall now again that the "numerator" in the ordinary least squares error is \eqref{eq:self-normalized}. We see that if we define the sum, $S_t \triangleq \sum_{s=1}^t V_s X_s^\top$, then the process $\curly{S_t}_{t=1}^T$ is adapted to $\curly{\calF_{t}}_{t=1}^T$. Furthermore, $\E\paren{S_{t+1}\vert \calF_{t}} = S_t + \E\paren{V_{t+1} \vert \calF_{t}} X_{t+1}^\top$. In particular, as long as the noise has conditional mean zero ($\E\paren{V_{t+1} \vert \calF_t} = 0$), the process $\curly{S_t}_{t=1}^T$ is a martingale.\footnote{Indeed, $S_t$ is a so-called martingale transform of $X_{1:t}$.} Normalizing the sum $S_t$ by the covariates as $S_t \paren{\sum_{s=1}^t X_s X_s^\top}^{-1/2}$ \emph{almost} preserves the martingale structure. Expressions of this type are called self-normalized martingales---although we stress that they are not strictly speaking martingales but only constructed from them.  

We now state bounds on the operator and Frobenius norms of the self-normalized martingale. The main idea behind the result---the technique of pseudo-maximization---is due to \cite{robbins1970boundary}. The formulations presented here are a consequence of Theorem 3.4 in \cite{abbasi2013online}. 
\begin{theorem}(Special cases of Theorem 3.4 in \cite{abbasi2013online})
    \label{thm:self-normalized martingale}
    Let $\curly{\calF_t}_{t=0}^T$ be a filtration such that $\curly{X_t}_{t=1}^T$ is adapted to $\curly{\calF_{t-1}}_{t=1}^T$ and $\curly{V_t}_{t=1}^T$ is adapted to $\curly{\calF_t}_{t=1}^T$. Additionally, suppose that for all $1 \leq t \leq T$, $V_t$ is $\sigma^2$-conditionally sub-Gaussian with respect to $\calF_{t}$. Let $\Sigma$ be a positive definite matrix in $\R^{\dx \times \dx}$.  For a fixed $T \in \mathbb{N}_+$ and $\delta \in (0,1)$, with probability at least $1-\delta$,
    \begin{align*}
        \bignorm{\sum_{t=1}^T V_t X_t^\top \paren{\Sigma + \sum_{t=1}^T X_t X_t^\top}^{-1/2}}_F^2 \leq \dy \sigma^2 \log\paren{\frac{\det\paren{\Sigma + \sum_{t=1}^T X_t X_t^\top}}{ \det(\Sigma)}} + 2\sigma^2 \log\frac{1}{\delta}. 
    \end{align*}
    Additionally, for a fixed $T \in \mathbb{N}_+$ and $\delta \in (0,1)$, with probability at least $1-\delta$,
    \begin{align*}
        &\bigopnorm{\sum_{t=1}^T V_t X_t^\top \paren{\Sigma + \sum_{t=1}^T X_t X_t^\top}^{-1/2}}^2 \\
        &\qquad \leq 4 \sigma^2 \log\paren{\frac{\det\paren{\Sigma + \sum_{t=1}^T X_t X_t^\top}}{ \det(\Sigma)}}+8\dy\sigma^2\log 5+ 8\sigma^2\log\frac{1}{\delta}.
    \end{align*}
\end{theorem}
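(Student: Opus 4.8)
\emph{Proof plan.} I would reprove the relevant special case of Theorem~3.4 of \cite{abbasi2013online} from scratch via the method of mixtures (pseudo-maximization), which also keeps the argument within the ``standard machinery'' of \Cref{sec:prels}. Write $S_t = \sum_{s=1}^t V_s X_s^\top$ and $G_T = \sum_{t=1}^T X_t X_t^\top$, and observe that for any $\Lambda \in \R^{\dy\times\dx}$ one has $\langle \Lambda, S_t\rangle = \sum_{s=1}^t \langle \Lambda X_s, V_s\rangle$ and $\sum_{s=1}^t\|\Lambda X_s\|_2^2 = \tr(\Lambda^\top\Lambda G_t)$. The first step is to introduce, for each fixed $\Lambda$, the exponential process
\begin{equation*}
  D_t(\Lambda) = \exp\!\Big( \langle \Lambda, S_t\rangle - \tfrac{\sigma^2}{2}\textstyle\sum_{s=1}^t \|\Lambda X_s\|_2^2 \Big),
\end{equation*}
and to verify it is a supermartingale with $D_0(\Lambda)=1$: since $X_t$ is $\calF_{t-1}$-measurable, so is $\Lambda X_t$, and the conditional sub-Gaussianity of $V_t$ gives $\E[\exp(\langle \Lambda X_t, V_t\rangle)\mid \calF_{t-1}] \le \exp(\tfrac{\sigma^2}{2}\|\Lambda X_t\|_2^2)$; telescoping the increments and applying the tower property yields $\E[D_t(\Lambda)\mid\calF_{t-1}] \le D_{t-1}(\Lambda)$, hence $\E D_T(\Lambda) \le 1$.

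The second step is to mix over $\Lambda$ against the matrix-Gaussian prior $\mu$ with density proportional to $\exp(-\tfrac{\sigma^2}{2}\tr(\Lambda^\top \Lambda\, \Sigma))$ (equivalently, $\VEC\Lambda$ is centered Gaussian with covariance $\sigma^{-2}(\Sigma^{-1}\otimes I_{\dy})$), and to set $\bar D_t = \int D_t(\Lambda)\,d\mu(\Lambda)$. By Tonelli, $\bar D_t$ is again a nonnegative supermartingale with $\bar D_0 = 1$, so $\E\bar D_T \le 1$. Completing the square in $\Lambda$ (using $\sum_{s=1}^T\|\Lambda X_s\|_2^2 = \tr(\Lambda^\top\Lambda\, G_T)$) evaluates the Gaussian integral in closed form:
\begin{equation*}
  \bar D_T = \Big(\tfrac{\det \Sigma}{\det(\Sigma + G_T)}\Big)^{\dy/2}\exp\!\Big(\tfrac{1}{2\sigma^2}\, \bignorm{S_T(\Sigma+G_T)^{-1/2}}_F^2\Big).
\end{equation*}
Markov's inequality applied to $\bar D_T$ gives $\Pr(\bar D_T \ge 1/\delta)\le \delta$, and taking logarithms on the complementary event and multiplying by $2\sigma^2$ produces exactly the claimed Frobenius-norm bound.

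For the operator-norm bound I would reduce to the scalar-in-$\dy$ case and then discretize. Fixing $w\in\mathbb{S}^{\dy-1}$, the row combination $w^\top S_T = \sum_{s=1}^T (w^\top V_s)X_s^\top$ is a self-normalized martingale driven by the \emph{scalar} noise $w^\top V_s$, which is $\sigma^2$-conditionally sub-Gaussian (take $v=\lambda w$ in the definition), so the $\dy=1$ instance of the Frobenius bound just proved gives, with probability at least $1-\delta'$,
\begin{equation*}
  \bignorm{w^\top S_T(\Sigma+G_T)^{-1/2}}_2^2 \le \sigma^2\log\tfrac{\det(\Sigma+G_T)}{\det\Sigma} + 2\sigma^2\log\tfrac1{\delta'}.
\end{equation*}
Now take a $\tfrac12$-net $\calN$ of $\mathbb{S}^{\dy-1}$, so $|\calN|\le 5^{\dy}$ by \Cref{lem:volumetric}, union bound over $\calN$ with $\delta' = \delta/|\calN|$, and pass from the net back to the sphere via \Cref{lem:opnormdisc} applied to $A^\top$ with $A = S_T(\Sigma+G_T)^{-1/2}$, which costs a factor $1/(1-\tfrac12)^2 = 4$. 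Collecting terms gives $\opnorm{A}^2 \le 4\sigma^2\log\tfrac{\det(\Sigma+G_T)}{\det\Sigma} + 8\dy\sigma^2\log 5 + 8\sigma^2\log\tfrac1\delta$, which is the stated bound.

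The main obstacle is the second step: choosing the prior covariance to be exactly $\sigma^{-2}\Sigma^{-1}$ (vectorized) so that the Gaussian normalizing constants telescope into the ratio $\det\Sigma/\det(\Sigma+G_T)$, carrying out the matrix completion-of-the-square correctly, and justifying the Tonelli interchange together with the a.s.\ finiteness of $\bar D_T$ (which uses $\Sigma\succ 0$). Once that closed form is in hand, the supermartingale property, Markov's inequality, and the covering/union-bound step are all routine given the tools developed in \Cref{sec:prels}.
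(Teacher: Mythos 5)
Your proposal is correct and follows essentially the same route as the paper: an exponential supermartingale bound obtained from conditional sub-Gaussianity (the paper's ``canonical assumption'' lemma), a matrix-Gaussian mixture with completion of the square yielding the $\bigl(\det\Sigma/\det(\Sigma+G_T)\bigr)^{\dy/2}$ factor, Markov's inequality for the Frobenius bound, and a $\tfrac12$-net of $\mathbb{S}^{\dy-1}$ with $5^{\dy}$ points plus a union bound and a factor $4$ for the operator-norm bound. The only differences are cosmetic—you fold $\sigma$ into the prior covariance rather than normalizing $P$ by $\sigma$, and you condition on $\calF_{t-1}$ (the standard and cleaner indexing) where the paper's lemma conditions on $\calF_t$.
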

Note that the quantities bounded above have a positive definite matrix $\Sigma$ added to the normalization quantities that was not present in the original term of interest, \eqref{eq:self-normalized}. Furthermore, the covariates $\sum_{t=1}^T X_t X_t$ appear in the upper bound. Hence, one typically combines the self-normalized martingale bound with some weak form of concentration.\footnote{Alternatively, in the analysis of ridge regression, $\Sigma$ takes the role of the penalizing matrix which can be tuned.} This is done in \Cref{sec:sysid}.


In the sequel, we prove the above bounds. To obtain the bound on the Frobenius norm, we directly consider the object
 \begin{align}
        \label{eq: matrix self normalized martingale}
        \bignorm{\left(\sum_{t=1}^T V_t X_t^\T \right)\left(\sum_{t=1}^T X_t X_t^\T \right)^{-1/2 }}_F,
\end{align}
while to obtain the bound on the operator norm we consider the following vector norm for an arbitrary unit vector $w$ as an intermediate step:
 \begin{align}
        \label{eq: vector self normalized martingale}
        \bignorm{\left(w^\top \sum_{t=1}^T V_t X_t^\T \right)\left(\sum_{t=1}^T X_t X_t^\T \right)^{-1/2 }}_2
\end{align}
and combine with a covering argument (recall \Cref{subsec:eps}).

Let us also briefly consider the dimensional dependencies of the Frobenius and operator norm bounds in \Cref{thm:self-normalized martingale}. The leading term in the Frobenius norm bound is $\dy$ multiplied by the $\log\det$ term, which scales with $\dx \log T$ when the empirical covariance is well-conditioned. In particular, the leading term scales with $\dx \dy \log T$. The factor of $\dy$ is no longer present on the $\log \det$ term for the operator norm. The term therefore scales as $\dx \log T$ when the the empirical covariance matrix is well-conditioned. There is, however, a term $8\dy\sigma^2\log 5$ which results from the covering argument. The operator norm bound therefore scales as $\max\curly{\dx  \log T, \dy}$. 

\subsection{Exponential Inequalities via Pseudo-maximization}
\label{sec:pseudomax}
We begin by neglecting the details of the process that generated the data in \eqref{eq:self-normalized}. In particular, consider a random matrix $P$ assuming values in $\R^{\da \times \dx}$ for $\deta \in \mathbb{N}_+$ and a random matrix $Q$ assuming values in $\R^{\dx \times \dx}$ with $Q$ almost surely nonsingular. Bounding the quantities in \eqref{eq: matrix self normalized martingale} and \eqref{eq: vector self normalized martingale} are special cases of bounding $\norm{P Q^{-\frac{1}{2}}}_F$. A naive first approach is to apply a Chernoff bound \eqref{eq:chernofftrick}. Doing so results in the inequality
\begin{align*}
    \bfP\paren{\norm{PQ^{-1/2}}_F \geq x} \leq \min_{\lambda \geq 0} \exp\paren{-\frac{\lambda}{2}x^2} \E\exp\paren{\frac{\lambda}{2}\norm{PQ^{-1/2}}_F^2}.
\end{align*}
If it is possible to bound the moment generating function $\E\exp\paren{\frac{\lambda}{2}\norm{PQ^{-1/2}}_F^2}$ by one for some $\lambda > 0$, then the above bound provides an exponential inequality. Obtaining a bound of the form $\E\exp\paren{\frac{\lambda}{2}\norm{PQ^{-1/2}}_F^2} \leq 1$ requires very strong assumptions on $P$ and $Q$ which would not be suitable for our purposes. However, we may observe that $\frac{1}{2} \norm{PQ^{-1/2}}_F^2 = \max_{\Lambda} \tr(P\Lambda  - \frac{1}{2}\Lambda^\top Q \Lambda)$. This motivates the following canonical assumption in self-normalized process theory: 
\begin{align}
    \label{eq: canoncial assumption}
    \max_{\Lambda \in \R^{\dx \times \da}} \E\exp\tr\paren{P \Lambda - \frac{1}{2} \Lambda^\top Q \Lambda} \leq 1. 
\end{align}
This inequality is called the canonical assumption because a wide variety of self-normalized processes satisfy it. We will demonstrate that it is satisfied for \eqref{eq: matrix self normalized martingale} and \eqref{eq: vector self normalized martingale} in \Cref{s: verifying canonical assumption}. 
\sloppy If we could exchange the order of the maximization with the expectation in \eqref{eq: canoncial assumption}, then the bound $\E\exp\paren{\frac{1}{2}\norm{PQ^{-1/2}}_F^2} \leq 1$ would be satisfied, and the Chernoff bound above would provide a valuable exponential inequality. As this exchange is not possible, we instead lower bound the maximization over $\Lambda$ by assigning a probability distribution to a random variable $\Psi$ which takes values in $\R^{\dx \times \da}$, and taking the expectation over this distribution. Doing so preserves the inequality \eqref{eq: canoncial assumption}: 
\begin{align*}
    \E \E\left[\exp\tr\paren{P\Psi - \frac{1}{2} \Psi^\top Q \Psi} \Big| \Psi\right] \leq 1.
\end{align*}
The order of expectation over $\Psi$ and over the random variables $P$ and $Q$ may then be exchanged by an appeal to Fubini's theorem:
\begin{align}
    \label{eq: relaxation of canoncial assumption}
    1 \geq \E \E \left[\exp\tr\paren{P\Psi - \frac{1}{2} \Psi^\top Q \Psi} \Big| \Psi\right] 
    =
    \E \E\left[\exp\tr\paren{P\Psi - \frac{1}{2} \Psi^\top Q \Psi} \Big| P, Q\right].
\end{align}
By selecting the distribution over $\Psi$ appropriately, the result is a so-called \textit{pseudo-maximization}. In particular, by completing the square, the inner conditional expectation on the right may be expressed as
\begin{equation}
\begin{aligned}
    \label{eq: self norm pseudo max}
    &\E \left[ \exp\tr\paren{P\Psi - \frac{1}{2} \Psi^\top Q \Psi} \Big| P, Q\right] 
    \\
    &=
    \exp\tr\paren{P Q^{-1} P^\top/2} \E\left[\exp\tr\paren{-\frac{1}{2}(\Psi-Q^{-1} P^\top)^\top Q (\Psi - Q^{-1} P^\top)} \Big| P, Q\right].
\end{aligned}
\end{equation}
For particular choices of the distribution of $\Psi$, the right side of the above expression approximates the maximimum value, $\exp\tr\paren{P Q^{-1} P^\top/2}$, of $\exp\tr(P \Lambda - \frac{1}{2} \Lambda^\top Q \Lambda)$. This allows us to apply a Chernoff argument similar to the one sketched above to obtain an exponential bound on a quantity related to $\norm {PQ^{-1/2}}_F$. The following lemma demonstrates one such bound that results by selecting the distribution of $\Psi$ as a matrix normal distribution. 


\begin{lemma}[Extension of Theorem 14.7 in \cite{pena2009self}]
    \label{lem: method of mixtures bound}
    Suppose that \eqref{eq: canoncial assumption} is satisfied. Let $\Sigma$ be a positive definite matrix in $\R^{\dx \times \dx}$. Then, for $\delta > 0$, with probability at least $1-\delta$,
    \begin{align*}
        \norm{P(Q+\Sigma)^{-1/2}}_F^2 \leq 2 \log\paren{\frac{\det(Q + \Sigma)^{\da/2} \det(\Sigma)^{-\da/2}}{\delta}}.
    \end{align*}
\end{lemma}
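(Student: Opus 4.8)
The plan is to use the method of mixtures (pseudo-maximization) with a matrix normal prior on $\Psi$, combined with a Chernoff-type argument. Specifically, I would choose $\Psi$ to be a matrix normal random variable, independent of $(P,Q)$, with rows distributed according to $\mathcal{N}(0, \Sigma^{-1})$ in the appropriate Kronecker sense; that is, the density of $\Psi \in \R^{\dx \times \da}$ is proportional to $\exp(-\tfrac{1}{2}\tr(\Psi^\top \Sigma \Psi))$. Plugging this into the right-hand side of \eqref{eq: self norm pseudo max}, the conditional expectation over $\Psi$ becomes a Gaussian integral: we are integrating $\exp(-\tfrac{1}{2}\tr((\Psi - Q^{-1}P^\top)^\top Q (\Psi - Q^{-1}P^\top)))$ against the prior density, which combines the two quadratics in $\Psi$ into a single Gaussian. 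The normalization constants produce exactly the determinant ratio $\det(\Sigma)^{\da/2} \det(Q+\Sigma)^{-\da/2}$, so that \eqref{eq: relaxation of canoncial assumption} turns into
\begin{align*}
    \E\left[\exp\paren{\tfrac12\tr(P(Q+\Sigma)^{-1}P^\top)} \cdot \frac{\det(\Sigma)^{\da/2}}{\det(Q+\Sigma)^{\da/2}}\right] \leq 1.
\end{align*}

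Next I would recognize that $\tr(P(Q+\Sigma)^{-1}P^\top) = \norm{P(Q+\Sigma)^{-1/2}}_F^2$, so the displayed inequality is precisely a statement that a certain nonnegative random variable $Z \triangleq \exp\paren{\tfrac12\norm{P(Q+\Sigma)^{-1/2}}_F^2} \det(\Sigma)^{\da/2}\det(Q+\Sigma)^{-\da/2}$ has $\E Z \le 1$. Then I would apply Markov's inequality \eqref{eq:markovs} to $Z$: for any $u > 0$, $\Pr(Z \geq u) \le 1/u$. Setting $u = 1/\delta$ gives $\Pr(Z \geq 1/\delta) \le \delta$, so with probability at least $1-\delta$ we have $Z < 1/\delta$, i.e.
\begin{align*}
    \tfrac12 \norm{P(Q+\Sigma)^{-1/2}}_F^2 + \tfrac{\da}{2}\log\det(\Sigma) - \tfrac{\da}{2}\log\det(Q+\Sigma) < \log(1/\delta).
\end{align*}
Rearranging yields $\norm{P(Q+\Sigma)^{-1/2}}_F^2 \le 2\log\paren{\det(Q+\Sigma)^{\da/2}\det(\Sigma)^{-\da/2}/\delta}$, which is the claim. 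A minor point to handle carefully is the almost-sure nonsingularity of $Q$ (and hence $Q+\Sigma$), which is assumed in the setup, so the inverse $Q^{-1}$ and the completion of the square in \eqref{eq: self norm pseudo max} are justified; strictly one may want $Q \succeq 0$ so that $Q+\Sigma \succ 0$, and the completing-the-square identity should be read with $Q^{-1}$ replaced by a limiting/pseudoinverse argument if $Q$ is only PSD.

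The main obstacle is the Gaussian integral bookkeeping: one must verify that combining the quadratic form $\tr((\Psi-Q^{-1}P^\top)^\top Q(\Psi - Q^{-1}P^\top))$ with the prior quadratic $\tr(\Psi^\top\Sigma\Psi)$ produces, after completing the square in $\Psi$ a second time, a leftover term of exactly $\tr(P(Q+\Sigma)^{-1}P^\top)$ (not $\tr(PQ^{-1}P^\top)$), together with the right determinant factors. Concretely, $\tr(\Psi^\top Q \Psi) - 2\tr(P\Psi) + \tr(\Psi^\top\Sigma\Psi) = \tr((\Psi - (Q+\Sigma)^{-1}P^\top)^\top(Q+\Sigma)(\Psi - (Q+\Sigma)^{-1}P^\top)) - \tr(P(Q+\Sigma)^{-1}P^\top)$, and the Gaussian integral over $\Psi$ of the first term against Lebesgue measure contributes $(2\pi)^{\dx\da/2}\det(Q+\Sigma)^{-\da/2}$ while the prior's normalizing constant is $(2\pi)^{\dx\da/2}\det(\Sigma)^{-\da/2}$; their ratio gives the stated determinant factor. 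Everything else — Fubini, Markov, monotonicity of $\log$ and $\exp$ — is routine and already available in the excerpt.
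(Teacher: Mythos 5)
Your proposal is correct and follows essentially the same route as the paper's proof: a matrix-normal mixture over $\Psi$ with density proportional to $\exp(-\tfrac12\tr(\Psi^\T\Sigma\Psi))$, completing the square to turn the Gaussian integral into the factor $\det(\Sigma)^{\da/2}\det(Q+\Sigma)^{-\da/2}\exp\paren{\tfrac12\tr\paren{P(Q+\Sigma)^{-1}P^\T}}$, and then Markov's inequality applied to that nonnegative variable at level $1/\delta$. Your algebraic bookkeeping (the leftover $\tr(P(Q+\Sigma)^{-1}P^\T)$ and the cancellation of the $\tr(PQ^{-1}P^\T)$ prefactor) matches the paper's computation, so no gap remains.
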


\subsection{Self-Normalized Martingales Satisfy the Canonical Assumption}
\label{s: verifying canonical assumption}

In order to make use of \Cref{lem: method of mixtures bound} to bound \eqref{eq: matrix self normalized martingale} or \eqref{eq: vector self normalized martingale}, we must ensure that the condition \eqref{eq: canoncial assumption} holds for 
\begin{align*}
    P = \sum_{t=1}^T \frac{\eta_t X_t^\top}{\sigma} , \quad Q = \sum_{t=1}^T X_t X_t^\top,
\end{align*}
where $\eta_t \in \R^{\deta}$ is either the noise process $V_t$ or the scalar process $w^\top V_t$ for some fixed unit vector $w$. The following lemma shows that it is sufficient for $\eta_t$ to be $\sigma^2$-conditionally sub-Gaussian. 

\begin{lemma}  
    \label{lem: canoncial assumption verified}
    Fix $T \in \mathbb{N}_+$. Let $\curly{\calF_t}_{t=0}^T$ be a filtration such that $\curly{X_t}_{t=1}^T$ is adapted to $\curly{\calF_{t-1}}_{t=1}^T$ and $\curly{\eta_t}_{t=1}^T$ is adapted to $\curly{\calF_t}_{t=1}^T$. Additionally, suppose that for all $t \geq 1$, $\eta_t$ is $\sigma^2$-conditionally sub-Gaussian with respect to $\calF_{t}$.
    Let $\Lambda \in \R^{\dx \times \deta}$ be arbitrary and consider for $t \in \curly{1,\dots, T}$
    \begin{align*}
        M_{t}(\Lambda) \triangleq  \exp\tr \paren{\sum_{s=1}^t \brac{\frac{\eta_s X_s^\top \Lambda}{\sigma}  - \frac{1}{2} \Lambda^\top X_s X_s^\top \Lambda}}. 
    \end{align*}
    Then $\E M_{T}(\Lambda) \leq 1$.
\end{lemma}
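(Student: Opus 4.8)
The plan is to prove $\E M_T(\Lambda) \leq 1$ by a standard supermartingale argument: I will show that $\{M_t(\Lambda)\}_{t=0}^T$ (with the convention $M_0(\Lambda) = 1$) is a supermartingale adapted to $\{\calF_t\}$, and then the claim follows by iterating the tower property since $\E M_T(\Lambda) \leq \E M_{T-1}(\Lambda) \leq \cdots \leq M_0(\Lambda) = 1$.

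First I would write $M_t(\Lambda) = M_{t-1}(\Lambda) \cdot \exp\tr\paren{\frac{\eta_t X_t^\top \Lambda}{\sigma} - \frac{1}{2}\Lambda^\top X_t X_t^\top \Lambda}$, which exhibits the multiplicative increment structure. Since $X_t$ is $\calF_{t-1}$-measurable and hence (being adapted to a coarser filtration) also $\calF_t$-measurable, and $M_{t-1}(\Lambda)$ is $\calF_{t-1}$-measurable, I can pull $M_{t-1}(\Lambda)$ and all the $X_t$-dependent terms out of the conditional expectation given $\calF_{t-1}$. The key computation is then to bound
\begin{align*}
\E\brac{\exp\tr\paren{\frac{\eta_t X_t^\top \Lambda}{\sigma}} \,\Big|\, \calF_{t-1}}.
\end{align*}
Here I would rewrite $\tr\paren{\frac{\eta_t X_t^\top \Lambda}{\sigma}} = \frac{1}{\sigma}\ip{\eta_t}{\Lambda^\top X_t} = \ip{\frac{\Lambda^\top X_t}{\sigma}}{\eta_t}$, recognizing this as a one-dimensional projection of $\eta_t$ against the $\calF_{t-1}$-measurable vector $v_t \triangleq \Lambda^\top X_t / \sigma \in \R^{\deta}$.

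The heart of the argument is the conditional sub-Gaussianity of $\eta_t$: since $\eta_t$ is $\sigma^2$-conditionally sub-Gaussian with respect to $\calF_t$ — and conditioning on the coarser $\calF_{t-1}$ preserves this via the tower property, or one applies the definition directly with the understanding that $v_t$ is $\calF_{t-1}$-measurable — Definition~\ref{def:sub_G} gives
\begin{align*}
\E\brac{\exp\ip{v_t}{\eta_t} \,\Big|\, \calF_{t-1}} \leq \exp\paren{\frac{\sigma^2 \norm{v_t}^2}{2}} = \exp\paren{\frac{\norm{\Lambda^\top X_t}^2}{2}} = \exp\paren{\frac{1}{2}\Lambda^\top X_t X_t^\top \Lambda},
\end{align*}
using $\norm{\Lambda^\top X_t}_2^2 = \tr(\Lambda^\top X_t X_t^\top \Lambda)$. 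Multiplying by the deterministic (given $\calF_{t-1}$) factor $\exp\tr\paren{-\frac12 \Lambda^\top X_t X_t^\top \Lambda}$, the two exponentials cancel, yielding $\E[M_t(\Lambda)/M_{t-1}(\Lambda) \mid \calF_{t-1}] \leq 1$, hence $\E[M_t(\Lambda)\mid \calF_{t-1}] \leq M_{t-1}(\Lambda)$.

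The main subtlety — rather than obstacle — is the bookkeeping around which $\sigma$-algebra to condition on: the hypothesis states $\eta_t$ is conditionally sub-Gaussian with respect to $\calF_t$, but the increment $\exp\tr(\frac{\eta_t X_t^\top\Lambda}{\sigma})$ is not $\calF_t$-measurable in a way that helps, so I condition on $\calF_{t-1}$ instead; one must check that $X_t$ is $\calF_{t-1}$-measurable (given) and that conditional sub-Gaussianity w.r.t.\ $\calF_t$ implies the needed bound w.r.t.\ $\calF_{t-1}$ (immediate from $\calF_{t-1}\subseteq\calF_t$ and the tower property, or by noting the definition only requires the test vector to be measurable w.r.t.\ the conditioning field). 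Everything else is routine. I would close by noting $M_0(\Lambda) = \exp\tr(0) = 1$ and iterating: $\E M_T(\Lambda) = \E\E[M_T(\Lambda)\mid\calF_{T-1}] \leq \E M_{T-1}(\Lambda) \leq \cdots \leq 1$.
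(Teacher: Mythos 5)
Your proof is correct and follows essentially the same argument as the paper: factor $M_t(\Lambda)$ into multiplicative increments, pull the $\calF$-measurable terms out of a conditional expectation, bound the conditional MGF of $\langle \Lambda^\top X_t/\sigma, \eta_t\rangle$ by $\exp\paren{\tfrac12\tr(\Lambda^\top X_t X_t^\top\Lambda)}$ via conditional sub-Gaussianity, and iterate the tower property down to $\E M_T(\Lambda)\le 1$. The only difference is that you condition on $\calF_{t-1}$ while the paper conditions on $\calF_t$; since $X_t$ is $\calF_{t-1}$-measurable this is immaterial (and your bookkeeping around which field carries the sub-Gaussian bound is, if anything, the more careful reading).
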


Combining the above exponential inequality with the ideas outlined in \Cref{sec:pseudomax} along with a covering argument yields \Cref{thm:self-normalized martingale}.

\subsection{Notes}

     \Cref{thm:self-normalized martingale} holds for a fixed $T \in \mathbb{N}_+$, which is sufficient for analyzing the system identification error. In contrast, the self-normalized margtingale bound in \cite{abbasi2013online} holds for an arbitrary stopping time and thus uniformly for all $T \in \mathbb{N}_+$ by a stopping time construction. This uniform bound may be required in some settings, e.g. in error bounds for adapative control.

\section{System Identification}\label{sec:sysid}
In this section, we analyze well-known linear system identification algorithms that rely on the least squares algorithm. Note that the problem formulation changes with the system parameterization (e.g., state space, ARMAX, etc.). However, a nice property of linear systems is that under certain conditions, we can obtain a linear non-parametric ARX model by regressing the system output to past outputs and inputs. Then, depending on the parameterization, we can recover a particular realization.
In the following, we first review ARX identification, which can be seen as a fundamental building block for many linear system identification algorithms. Then, we analyze identification of Markov parameters in the case of state-space systems. We focus exclusively on the case of single trajectory data.

\subsection{ARX Systems}
Consider an unknown vector autoregressive system with exogenous inputs (ARX)
\begin{equation}\label{eq:sysid_arx}
    \begin{aligned}
        Y_{t} &= \sum_{i=1}^{p} A^\star_i Y_{t-i} +\sum_{j=1}^{q} B^\star_j U_{t-j} +\Sigmaw^{1/2} W_{t},
    \end{aligned}
\end{equation}
where $Y_t\in\R^{\dy}$ are the system outputs, $U_t\in\R^{\du}$ are the control (exogenous) inputs, and $W_t\in\R^{\dy}$ is the normalized process noise with $\Sigmaw\in\R^{\dy\times \dy}$ capturing the (non-normalized) noise covariance. Matrices $\As_i,$ $i\le p$ and $\Bs_j$, $j\le q$ contain the unknown ARX coefficients. For the initial conditions, we assume $Y_{-1}=\cdots=Y_{-p}=0$, $U_{-1}=\cdots=U_{-q}=0$.
\begin{assumption}[System and Noise model]\label{assum:arx_noise}
Let the noise covariance $\Sigmaw\succ 0$ be full rank. Let the normalized process noise $W_t,t\ge 0$ be independent and identically distributed, $K^2$-sub-Gaussian~(see Definition~\ref{def:sub_G}), with zero mean and unit covariance $\E W_tW^\top_t=I_{\dy}$. The orders $p,q$ are known. System~\eqref{eq:sysid_arx} is non-explosive, that is, the eigenvalues of matrix 
\begin{equation}\label{eq:arx_A_matrix}  \mathcal{A}_{11}\triangleq\matr{\As_1&\As_{2}&\cdots&\As_{p-1}&\As_p\\I&0&\cdots&0&0\\0&I&\cdots&0&0\\ \vdots &&\ddots& &\vdots\\0&0&\cdots &I&0 },\,
\end{equation}
lie strictly on or inside the unit circle $\rho(\calA_{11})\le 1$.
\end{assumption}
The techniques below can provide meaningful finite-sample bounds only when the system is non-explosive. Deriving finite sample guarantees for identifying open-loop, explosively unstable partially-observed systems from single trajectory data is open to the best of our knowledge~\citep{tsiamis2022statistical}.

In this tutorial, we focus solely on white-noise excitation inputs. Excitation strategies---experiment designs---beyond this simple structure form a vast literature in system identification and statistics, see for instance~\cite{lennart1999system, gevers2005identification, pukelsheim2006optimal} and~\cite{bombois2011optimal} and the references therein. For a recent non-asymptotic treatment see also \cite{wagenmaker2020active}.

\begin{assumption}[White-noise excitation policy]\label{assum:arx_excitation}
We assume that the control input is generated by a random i.i.d. Gaussian process, that is, $U_t\sim\mathcal{N}(0,\sigma^2_u I)$.
\end{assumption}
Grouping all covariates into one vector and defining
\begin{equation}\label{eq:ARX_parameters_batch}
    X_t=\begin{bmatrix}Y^\top_{t-1:t-p}&U^\top_{t-1:t-q} \end{bmatrix}^\top,\, \Ms=\matr{\As_{1:p}&\Bs_{1:q}}
\end{equation}
we can re-write~\eqref{eq:sysid_arx} in terms of~\eqref{eq:regressionmodel}
\[
Y_t=\Ms X_t+\Sigmaw^{1/2} W_t,
\]
where $W_t$ is independent from $X_t$, but $X_t$ has the special time-dependent structure induced by~\eqref{eq:sysid_arx}. Given samples $(Y_{1:T},U_{0:T-1})$, the least-squares estimate is given by
\begin{equation}\label{eq:ARX_ls_solution}
    \Mls_T\triangleq \sum_{t=1}^T Y_t X^\top_t \left(\sum_{t=1}^{T} X_t X_t^\T \right)^\dagger,
\end{equation}
where we purposely highlight the dependence of the estimate on the number of samples with the subscript $T$.
Before we present the main result, let us define some quantities which are related to the quality of system estimates. The covariance at time $t\ge 0$ is defined as
\begin{equation}\label{eq:arx_covariance}
\CovX_t\triangleq \E X_tX^\top_t.
\end{equation}
It captures the expected richness of the data, i.e., how excited the modes of the system are on average. In particular, the relative excitation of the data compared to the noise magnitude significantly affects the quality of system identification. This motivates the definition of signal-to-noise (SNR) as the ratio of the (directionally) worst-case excitation over the worst-case noise magnitude
\begin{equation}\label{eq:arx_snr}
\snr_t\triangleq \frac{\lambda_{\min}(\CovX_t)}{\opnorm{\Sigmaw}K^2}.
\end{equation}

The following theorem provides a finite-sample upper bound on the performance of the least-square estimator.
\begin{theorem}[ARX Finite-Sample Bound]\label{thm:arx_pac}
   Let $(Y_{1:T},U_{0:T-1})$ be single trajectory input-output samples generated by system~\eqref{eq:sysid_arx} under Assumptions~\ref{assum:arx_noise},~\ref{assum:arx_excitation} for some horizon $T$.  Fix a failure probability $0<\delta<1$ and a time index $\tau\ge \max\{p,q\}$. Let 
    $\Tburn(\delta,\tau)\triangleq\min\{t:t\ge T_0(t,\delta/3,\tau)\}$, where $T_0$ is defined in~\eqref{eq:arx_burn_in}.
   If $T\ge \Tburn(\delta,\tau)$, then with probability at least $1-\delta$
\begin{equation}\label{eq:arx_pac}
\opnorm{\Mls_T-\Ms}^2
\le
\frac{C}{ \snr_{\tau} T} \left(
(p\dy+q\du)\log\frac{p\dy+q\du}{\delta}+\log\det\paren{\CovX_T\CovX^{-1}_\tau}\right),
\end{equation}
where  $C$ is a universal constant, i.e., it is independent of system,  confidence $\delta$ and index $\tau$. 
\end{theorem}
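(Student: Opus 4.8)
The plan is to combine the error representation \eqref{eq:OLSerror} with the two tools developed in the preceding sections: the lower-tail bound on the empirical covariance from \Cref{thm:anticonc}, and the self-normalized martingale bound from \Cref{thm:self-normalized martingale}. First I would write the OLS error in the form
\[
\Mls_T - \Ms = \Sigmaw^{1/2}\left[\left(\sum_{t=1}^T W_t X_t^\T\right)\left(\Sigma + \sum_{t=1}^T X_t X_t^\T\right)^{-1/2}\right]\left(\Sigma + \sum_{t=1}^T X_t X_t^\T\right)^{1/2}\left(\sum_{t=1}^T X_t X_t^\T\right)^{-1},
\]
taking $\Sigma = \CovX_\tau$ (or a suitable multiple of it) as the regularizer in \Cref{thm:self-normalized martingale}; here $W_t$ is $K^2$-conditionally sub-Gaussian with respect to the natural filtration since $X_t$ depends only on $W_{1:t-1}$ and $U_{1:t-1}$. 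Taking operator norms and using submultiplicativity, $\opnorm{\Mls_T-\Ms}$ is bounded by the product of (i) $\opnorm{\Sigmaw^{1/2}}$, (ii) the self-normalized term, (iii) $\opnorm{(\Sigma + \sum X_tX_t^\T)^{1/2}(\sum X_t X_t^\T)^{-1/2}}$, and (iv) $\opnorm{(\sum X_t X_t^\T)^{-1/2}}$.

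Next I would control each factor on a good event of probability at least $1-\delta$, splitting the budget into three pieces of size $\delta/3$. For the self-normalized factor (ii), apply the operator-norm bound of \Cref{thm:self-normalized martingale} with noise scale $K^2$ (after pulling out $\Sigmaw^{1/2}$), giving a bound of order $K^2\left(\log\det(\Sigma + \sum X_t X_t^\T) - \log\det\Sigma + \dy + \log(1/\delta)\right)$. The key step is to translate the random $\log\det$ term into the deterministic quantity $\log\det(\CovX_T \CovX_\tau^{-1})$ appearing in \eqref{eq:arx_pac}: I would use an upper-tail/expectation bound on $\sum X_t X_t^\T$ (e.g. via $\E\sum X_tX_t^\T = \sum \CovX_t$, monotonicity of $\log\det$, and a crude high-probability bound such as a Markov/trace argument) to replace $\Sigma + \sum_{t=1}^T X_tX_t^\T$ by $\lesssim T\CovX_T$ inside the determinant, and choose $\Sigma \asymp \tau \CovX_\tau$ so that $\log\det(\Sigma^{-1}(\Sigma+\sum X_tX_t^\T)) \lesssim \log\det(\CovX_T\CovX_\tau^{-1}) + (p\dy+q\du)\log(T/\tau)$, absorbing the dimensional log into the first term of \eqref{eq:arx_pac}. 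For factors (iii) and (iv), I would invoke \Cref{thm:anticonc} applied to the $k$-causal process $X_{1:T}$ with $k=\tau$ (restarting the ARX system every $\tau$ steps, as in \Cref{ex:restartedLDS}): on its good event, $\sum_{t=1}^T X_t X_t^\T \succeq \tfrac{T}{8\tau}\sum_{t=1}^\tau \E\tilde X_t\tilde X_t^\T \succeq c\,T\,\lambda_{\min}(\CovX_\tau)\,I$ (using that the $\tau$-step Gramian of the decoupled process lower bounds $\CovX_\tau$), which makes factor (iv)$^2 \lesssim 1/(T\lambda_{\min}(\CovX_\tau))$; combined with $\opnorm{\Sigmaw^{1/2}}^2 = \opnorm{\Sigmaw}$ this produces the $\snr_\tau^{-1}T^{-1}$ prefactor. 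Factor (iii) is bounded by $\sqrt{1 + \opnorm{\Sigma}/\lambda_{\min}(\sum X_tX_t^\T)} = O(1)$ on the same event provided $\Sigma \asymp \tau\CovX_\tau$ and $T\gtrsim \tau$.

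Finally I would collect the burn-in requirements: \Cref{thm:anticonc} needs $T/\tau \gtrsim K^2(d\log C_{\mathsf{sys}} + \log(1/\delta))$ with $d = p\dy+q\du$, and the reduction to the determinant bound needs $T$ large enough that the crude upper bound on $\sum X_tX_t^\T$ holds; bundling these into the definition of $T_0(t,\delta/3,\tau)$ in \eqref{eq:arx_burn_in} gives the stated $\Tburn(\delta,\tau)$. Multiplying (i)--(iv) and squaring yields \eqref{eq:arx_pac} with a universal constant $C$. I expect the main obstacle to be the bookkeeping in the $\log\det$ translation — i.e. showing cleanly that replacing the random $\sum X_tX_t^\T$ by its (deterministic, nonstationary) expectation inside the determinant costs only the extra $(p\dy+q\du)\log(\cdot)$ factor that can be absorbed, and doing so with constants that remain universal and independent of $\tau$; this requires a careful choice of $\Sigma$ and a matching upper-tail bound on the covariates of the non-explosive (possibly marginally stable) ARX system.
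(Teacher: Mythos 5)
Your overall route is exactly the paper's: its proof also splits the failure probability into three events of mass $\delta/3$ each --- persistence of excitation $\ECov_T \succeq \tfrac{1}{16}\CovX_\tau$ (obtained from \Cref{thm:anticonc}, packaged as \Cref{thm:arx_pe}), a crude upper bound $\ECov_T \preceq 3\tfrac{p\dy+q\du}{\delta}\CovX_T$ from the matrix Markov inequality (\Cref{lem:arx_Markov}), and the regularized self-normalized bound of \Cref{thm:self-normalized martingale} --- and then multiplies a noise factor by an excitation factor just as you propose. However, two of your quantitative choices would not deliver the theorem as stated. First, the regularizer must scale with $T$, not $\tau$: the paper takes $\Sigma=\tfrac{T}{16}\CovX_\tau$, so that on the PE event $2T\ECov_T\succeq \Sigma+T\ECov_T$ and, after the Markov bound, $\log\det\bigl(\Sigma^{-1}(\Sigma+T\ECov_T)\bigr)\le (p\dy+q\du)\log\bigl(49\tfrac{p\dy+q\du}{\delta}\bigr)+\log\det\bigl(\CovX_T\CovX_\tau^{-1}\bigr)$, the factors of $T$ cancelling inside the determinant. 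With your $\Sigma\asymp\tau\CovX_\tau$ you pick up an additive $(p\dy+q\du)\log(T/\tau)$, and this cannot be absorbed into $C\,(p\dy+q\du)\log\tfrac{p\dy+q\du}{\delta}$ with a universal $C$: for a stable system $\log\det(\CovX_T\CovX_\tau^{-1})=O(1)$ while $\log(T/\tau)\to\infty$, so you would only prove a strictly weaker bound carrying an extra $\log T$.

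Second, your persistence-of-excitation step restarts with block length $k=\tau$ and asserts that the decoupled Gramian over $[1,\tau]$ dominates $\CovX_\tau$; covariance monotonicity (\Cref{lem:arx_cov_monotonicity}) goes the other way, $\CovX_t\preceq\CovX_\tau$ for $t\le\tau$, so in general you only get $\sum_{t=1}^{T}\E\tilde X_t\tilde X_t^\T\succeq\tfrac{T}{\tau}\CovX_\tau$, which after \Cref{thm:anticonc} yields $\ECov_T\succeq\tfrac{1}{8\tau}\CovX_\tau$ and hence a prefactor $\tau/(\snr_\tau T)$ rather than the stated $1/(\snr_\tau T)$. The paper fixes this by taking $k=2\tau$ and discarding the first half of each block: every $\CovX_t$ with $\tau<t\le 2\tau$ dominates $\CovX_\tau$, giving $\sum_{t=1}^{T}\E\tilde X_t\tilde X_t^\T\succeq\tfrac{T}{2}\CovX_\tau$ with a universal constant; the same analysis (via the Toeplitz structure of the first block) also supplies the fact $\lambda_{\min}(\CovX_\tau)>0$, which your sketch takes for granted. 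Two smaller points: your factor-(iii) estimate should use the PSD ordering $\Sigma\preceq c\,T\ECov_T$ available on the PE event rather than the ratio $\opnorm{\Sigma}/\lambda_{\min}\bigl(\sum_t X_tX_t^\T\bigr)$, which with $T\gtrsim\tau$ alone would cost a condition number of $\CovX_\tau$ (with $\Sigma=\tfrac{T}{16}\CovX_\tau$ the ordering is immediate); and the self-normalized bound should be applied to the normalized noise $W_t$ after pulling out $\Sigmaw^{1/2}$, as you indicate, with variance proxy $K^2$. With these repairs your argument coincides with the paper's proof.
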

For non-explosive systems, matrix $\CovZ_T\CovZ^{-1}_{\tau}$ increases at most polynomially with $T$ in norm (in view of Lemma~\ref{lem:non_explosivity_powers}). Hence, the identification error decays with a rate of $\tilde{O}(1/\sqrt{T})$.

\textbf{Dimensional dependence.} Ignoring logarithmic terms, the bound implies that the number of samples $T$ should scale linearly with the dimension $p\dy+q\du$ of the covariates $X_t$. Since every sample $Y_t$ contains at least $\dy$ measurements, this implies that the total number of measurements should be linear with $\dy\times(p\dy+q\du)$. This scaling is qualitatively correct since $\Ms$ has $\dy\times(p\dy+q\du)$ unknowns, requiring at least as many independent equations.

\textbf{Logarithmic dependence on confidence.} The error norm scales linearly with $\sqrt{\log 1/\delta}$. In the asymptotic regime we also recover the same order of $\sqrt{\log 1/\delta}$ by applying the Central Limit Theorem (CLT). However, in the regime of finite samples, obtaining the rate is non-trivial, see~\cite{tsiamis2022statistical}, and requires the analysis presented in this tutorial.

\textbf{System theoretic constants.} The identification error is directly affected by the SNR of the system. The more the system is excited and the smaller the noise, the better the SNR becomes. However, excitability varies heavily depending on the system and the choice of excitation policy. In particular, the system's controllability structure can affect the degree of excitation dramatically. Systems with poor controllability structure can exhibit SNR which suffers from curse of dimensionality, i.e., the smallest eigenvalue of $\Sigma_{\tau}$ degrades exponentially with the system dimension~\cite{tsiamis2021linear}.

The upper bound also increases with the logarithm of the ``condition number" $\det(\CovZ_T\CovZ^{-1}_{\tau})$. For stable systems, this condition number is bounded since $\CovZ_T$ converges to a steady-state covariance as $T$ increases; we can neglect it in this case. On the other hand, the term might be significant in the case of general non-explosive systems. Let $\kappa$ be the size of the largest Jordan block of $\calA_{11}$ with eigenvalues on the unit circle. Then, this term can be as large as $\kappa \log T$.

 \textbf{Burn-in time.} The upper bound holds as soon as the number of samples exceeds a ``burn-in" time $\Tburn(\delta,\tau)$. If the system is non-explosive, $\Tburn(\delta,\tau)$ is always finite for fixed $\tau$. Exceeding the burn-in time guarantees that we have persistency of excitation, that is, all modes of the system are excited. The burn-in time increases as we require more confidence $\delta$ and we chose larger  time indices $\tau$. On the other had, larger $\tau$ leads to larger $\Sigma_{\tau}$, which improves the $\snr_{\tau}$. In other words, there is a tradeoff between improving the SNR and deteriorating the burn-in time. We analyze persistency of excitation in more detail, in the next subsection. 

\textbf{Proof outline.} We outline the proof below, the full proofs can be found in Appendix~\ref{appendix:sys_id}. 
To analyze the least squares error, observe that
\begin{equation}
\Mls_T-\Ms=\underbrace{\sum_{t=1}^T\Sigmaw^{1/2} W_tX^\top_t \paren{\sum_{t=1}^{T} X_t X_t^\top }^{-1/2}}_{\text{noise}} \times\underbrace{\paren{\sum_{t=1}^{T} X_t X_t^\top}^{-1/2}}_{\text{excitation}} 
\end{equation}
where we assumed that the inverse exists. To deal with the second term, we will prove persistency of excitation in finite time leveraging the techniques of Section~\ref{sec:onesided}, which requires most of the work. 
To deal with the noise part we will apply the self-normalized martingale methods, which are reviewed in Section~\ref{sec:selfnorm}.  We study both terms in the following subsections.

\subsubsection{Persistency of Excitation in ARX Models}
In this subsection, we leverage the result of Theorem~\ref{thm:anticonc} to prove persistency of excitation. By persistency of excitation, we refer to the case when we have rich input-output data, that is, data which characterize all possible behaviors of the system. 
Recall the definition~\eqref{eq:empcov} of the empirical covariance matrix
\[
\ECov_T\triangleq \frac{1}{T}\sum_{t=1}^{T}X_tX_t^\top.
\]
Using this definition, the excitation term in the least squares error can be re-written as $(T\ECov_T)^{-1/2}$.
We say that persistency of excitation holds if and only if the empirical covariance matrix is strictly positive definite (full rank). In the following, we show that the full rank condition holds with high probability, provided that the number of samples exceeds a certain threshold, i.e., the burn-in time. 

\begin{theorem}[ARX PE]\label{thm:arx_pe}
  Let $(Y_{1:T},U_{0:T-1})$ be input-output samples generated by system~\eqref{eq:sysid_arx} under Assumptions~\ref{assum:arx_noise},~\ref{assum:arx_excitation} for some fixed horizon $T$.  Fix a failure probability $0<\delta<1$ and a time index $\tau\ge \max\{p,q\}$. Then, $\lambda_{\min}(\Sigma_{\tau})>0$. Moreover, if $T$ is large enough
    \begin{equation}\label{eq:arx_burn_in}
T\ge T_0(T,\delta,\tau)\triangleq  1152\tau \max\{K^2,1\}\Big((p\dy +q\du )  \log \Csys(T,\tau) +\log (1/\delta)\Big)
    \end{equation}
     where
  \begin{equation*}
     \Csys(T,\tau) \triangleq \frac{2T}{3\tau}\frac{\opnorm{\CovX_T}^2}{\lambda^2_{\min}(\CovX_{\tau})},
 \end{equation*}
then, 
    \[
\Prob\paren{\ECov_T \succeq \frac{1}{16}\CovX_{\tau}}\ge 1-\delta.
    \]
\end{theorem}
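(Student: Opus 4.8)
The plan is to deduce Theorem~\ref{thm:arx_pe} from the abstract lower–tail bound of Theorem~\ref{thm:anticonc}, applied to the ARX covariate process with block length $k=2\tau$. First I would cast \eqref{eq:sysid_arx} into the causal form of Section~\ref{sec:onesided}: stack the process and input noise into $\bar W_t \triangleq (W_t^\T,\ U_t^\T/\sigma_u)^\T$, so that by Assumptions~\ref{assum:arx_noise}–\ref{assum:arx_excitation} (and independence of $W$ from the exogenous $U$) the entries of $\bar W_{1:T}$ are independent, mean zero, $\max\{K^2,1\}$-sub-Gaussian, with $\E \bar W_{1:T}\bar W_{1:T}^\T = I$. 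Since $X_t=(Y_{t-1:t-p}^\T, U_{t-1:t-q}^\T)^\T$ is a linear function of $\bar W_{1:t-1}$ and the initial conditions vanish, there is a block-lower-triangular $\mathbf L$ with $X_{1:T}=\mathbf L \bar W_{1:T}$; because \eqref{eq:sysid_arx} is time-invariant $\mathbf L$ is block-Toeplitz, so $X_{1:T}$ is $k$-causal for every $k$ dividing $T$ with all diagonal blocks equal, meeting the hypotheses of Theorem~\ref{thm:anticonc} with $d=p\dy+q\du$ and $\max\{K^2,1\}$ in place of $K^2$. The Toeplitz structure also yields an impulse-response expansion $X_t=\sum_i h_i \bar W_{t-i}$ (a finite sum over the lags available at time $t$), whence $\Sigma_t=\E X_tX_t^\T=\sum_i h_i h_i^\T$ is monotone nondecreasing in $t$ — the only property of the dynamics I will use.

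Next I would choose $k=2\tau$ and unpack Theorem~\ref{thm:anticonc}. As in Example~\ref{ex:restartedLDS}, the first block of the decoupled process coincides with that of $X$, so $\frac1T\sum_{t=1}^T \E\tilde X_t\tilde X_t^\T = \frac{1}{2\tau}\sum_{s=1}^{2\tau}\Sigma_s$. Discarding the first $\tau$ positive semidefinite summands and using monotonicity ($\Sigma_s\succeq \Sigma_\tau$ for $s\in\{\tau+1,\dots,2\tau\}$) gives $\sum_{s=1}^{2\tau}\Sigma_s \succeq \tau\,\Sigma_\tau$, hence $\frac1T\sum_{t=1}^T\E\tilde X_t\tilde X_t^\T \succeq \frac12\Sigma_\tau$ (this also supplies the hypothesis $\lambda_{\min}(\sum_t\E\tilde X_t\tilde X_t^\T)>0$, given Step~3 below). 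On the event of Theorem~\ref{thm:anticonc} this yields $\ECov_T=\frac1T\sum_{t=1}^T X_tX_t^\T \succeq \frac{1}{8T}\sum_{t=1}^T\E\tilde X_t\tilde X_t^\T \succeq \frac{1}{16}\Sigma_\tau$, the claimed bound.

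For the burn-in, the failure probability from Theorem~\ref{thm:anticonc} is $C_{\mathsf{sys}}^{\,p\dy+q\du}\exp\!\big(-T/(1152\max\{K^2,1\}\tau)\big)$, the $1152=2\cdot 576$ being $576$ times the block length $2\tau$; requiring this to be at most $\delta$ reproduces exactly \eqref{eq:arx_burn_in}, once one bounds the abstract constant \eqref{eq:csysdef} by $\Csys(T,\tau)$. That last estimate is routine: by monotonicity $\lambda_{\max}(\sum_{t\le T}\Sigma_t)\le T\opnorm{\Sigma_T}$ and $\lambda_{\min}(\sum_{t\le T}\Sigma_t)\ge (T-\tau)\lambda_{\min}(\Sigma_\tau)$; from the previous paragraph $\lambda_{\min}(\sum_{t\le T}\E\tilde X_t\tilde X_t^\T)\ge \frac T2 \lambda_{\min}(\Sigma_\tau)$; and $\opnorm{\mathbf L\mathbf L^\T}=\opnorm{\E X_{1:T}X_{1:T}^\T}\lesssim T\opnorm{\Sigma_T}$. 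Substituting these and simplifying collapses \eqref{eq:csysdef} into $\frac{2T}{3\tau}\,\opnorm{\Sigma_T}^2/\lambda_{\min}^2(\Sigma_\tau)$.

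Step~3, the auxiliary claim $\lambda_{\min}(\Sigma_\tau)>0$ for $\tau\ge\max\{p,q\}$ (also needed for $\Csys$ to be finite), follows from a reachability-from-noise argument: expand $v^\T X_\tau$ in the i.i.d.\ variables $W_0,\dots,W_{\tau-1},U_0,\dots,U_{\tau-1}$ and peel off the largest-index $W$ appearing with nonzero coefficient; full rank of $\Sigma_W$ forces $\mathbf V(v^\T X_\tau)>0$ unless the $Y$-block of $v$ vanishes, in which case $v^\T X_\tau=\sum_j v_j^\T U_{\tau-j}$ has variance $\sigma_u^2\|v\|^2>0$ (the constraint $\tau\ge\max\{p,q\}$ ensures all the referenced $W_s,U_s$ are genuine noises, not zero initial conditions). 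I expect the genuinely load-bearing choice to be $k=2\tau$: it is precisely what makes the decoupled Gramian ``up to range $2\tau$'' dominate $\tau\,\Sigma_\tau$ rather than merely $\Sigma_\tau$ (which a single-lag bound would give), converting the generic factor $\frac1{8\tau}$ into the $\frac1{16}$ of the statement; everything else — the $\Csys$ bookkeeping and the case $2\tau\nmid T$, handled by the truncation remark following Theorem~\ref{thm:expineq} at the cost of a $1+o(1)$ factor absorbed into constants — is routine.
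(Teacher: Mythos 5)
Your proposal is correct and follows essentially the same route as the paper: recast the ARX covariates as a $k$-causal process driven by the normalized noise/input vector, take $k=2\tau$, lower bound the decoupled Gramian by $\tfrac{T}{2}\Sigma_\tau$ via covariance monotonicity, invoke Theorem~\ref{thm:anticonc}, and absorb the constant \eqref{eq:csysdef} into $\Csys(T,\tau)$ using $\opnorm{\bfL\bfL^\T}\le T\opnorm{\CovX_T}$ and the burn-in condition. The only cosmetic differences are that you verify monotonicity of $\CovX_t$ through the impulse-response/Toeplitz expansion rather than the Lyapunov-recursion induction of Lemma~\ref{lem:arx_cov_monotonicity}, and you prove $\lambda_{\min}(\CovX_\tau)>0$ by peeling off the most recent noise term for each test vector instead of the paper's block-Toeplitz full-rank factorization — both equivalent to the paper's steps.
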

%
The detailed proof can be found in Appendix~\ref{appendix:sys_id}, we only sketch the main ideas here. The first step is to express the covariates $X_{1:T}$ as a causal linear combination of the noises and the inputs, mimicking~\eqref{eq:linearcausalprocess}. The evolution of the covariates follows a state-space recursion
\begin{equation}\label{eq:arx2ss}
X_{t+1}=\mathcal{A}X_t+\mathcal{B}V_{t},
\end{equation}
where we concatenate noises and inputs $V_t\triangleq \matr{W^\top_{t}&U^\top_{t}}^\top$. Matrices $\mathcal{A},\,\mathcal{B}$ are given by
\begin{align*}
&\mathcal{A}\triangleq \matr{\calA_{11}&\calA_{12}\\0&\calA_{22}},\quad \calB\triangleq \matr{\calB_{1}&\calB_{2}},\text{ where }\\
&\calA_{11}\text{ is defined in~\eqref{eq:arx_A_matrix}}, \,&&\\&\mathcal{A}_{12}\triangleq \matr{1\\0\\\vdots\\0}\otimes\matr{B^\star_1&\cdots&B^\star_q}\in\R^{p\dy\times q\du},\,
&&\mathcal{A}_{22}\triangleq \matr{0&\cdots&0&0&0\\I_{\du}&\cdots&0&0&0\\ \vdots &\ddots& &\cdots&\\ 0&\cdots&I_{\du}&0&0\\0&\cdots&0&I_{\du}&0}\in\R^{q\du\times q\du}\qquad \qquad\qquad \\
&\mathcal{B}_1\triangleq\matr{\Sigmaw^{1/2}&0_{((p-1)\dy+q\du)\times \dy}}^\top,\,&&\mathcal{B}_2\triangleq \matr{0_{p\dy\times \du}&I_{\du}&0_{(q-1)\du\times \du}}^{\top}.
\end{align*}
The vector $X_{1:T}$ of all covariates satisfies the following causal linear relation
\begin{equation}
 \label{eq:LforARX}
X_{1:T}=\underbrace{\matr{\mathcal{B}&0&\cdots&0\\\mathcal{A}\mathcal{B}&\mathcal{B}&\cdots&0\\ \vdots & &\ddots& \\\mathcal{A}^{T-1}\mathcal{B}&\mathcal{A}^{T-2}\mathcal{B}&\cdots&\mathcal{B}}}_{\bfL}V_{0:T-1}.
\end{equation}
where the lower-block triangular matrix is the Toeplitz matrix generated by the Markov parameters matrices $\calA,\,\calB$.
The second step is to apply Theorem~\ref{thm:anticonc}. The details can be found in the Appendix. 

\begin{remark}[Existence of burn-in time.] For the above result to be meaningful, we need inequality~\eqref{eq:arx_burn_in} to be feasible. For non-explosive systems, the system theoretic term $\log\Csys(T,\tau)$ increases at most logarithmically with $T$, since $\CovZ_t$ increases polynomially with $t$ in view of Lemma~\ref{lem:non_explosivity_powers}. Hence, for any fixed $\tau$, or, in general, any $\tau$ that increases mildly (sublinearly) with $T$, e.g. $O(\sqrt{T})$, it is possible to satisfy~\eqref{eq:arx_burn_in}. Note that $\rho(\calA)=\rho(\calA_{11})$ due to the triangular structure of $\calA$. Hence, by Assumption~\ref{assum:arx_noise}, system $\calA$ is also non-explosive.
\end{remark}

\begin{remark}[Unknown system orders $p,q$]
The result of Theorem~\ref{thm:arx_pe} still holds if the orders $p,q$ are unknown and we use the wrong orders $\hat{p},\hat{q}$ in the covariates $X_t$. We just need to replace $p,q$ with $\hat{p},\hat{q}$ with $\hat{p},\hat{q}$ and revise the size of $\CovZ$ accordingly in~\eqref{eq:arx_burn_in}. The finite-sample bounds of Theorem~\ref{thm:arx_pac} also hold (by revising accordingly), but only if we overestimate $p,q$, that is $\hat{p}\ge p$, $\hat{q}\ge q$. This also generalizes the single trajectory result of~\cite{du2022sample} to non-explosive systems.
\end{remark}
The following supporting lemma proves that the $k$-th powers of non-explosive matrices increase at most polynomially with $k$.
\begin{lemma}[Lemma~1 in~\cite{tsiamis2021linear}]\label{lem:non_explosivity_powers}
 Let $\calA\in\R^{d\times d}$ have all eigenvalues inside or on the unit circle, with $\opnorm{\calA}\le M$, for some $M>0$. Then, 
\begin{equation}\label{eq:powers_bound}
    \opnorm{\calA^k}\le (ek)^{d-1}\max\set{M^{d},1}.
\end{equation}   
\end{lemma}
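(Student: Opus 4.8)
The plan is to use a unitary change of basis to make $\calA$ upper triangular and then exploit nilpotency of the strictly-upper part. Viewing $\calA$ as a complex matrix (which changes neither $\opnorm{\calA^k}$ nor the eigenvalues), the Schur decomposition gives $\calA = U(\Lambda+N)U^{*}$ with $U$ unitary, $\Lambda=\diag(\lambda_1,\dots,\lambda_d)$ the eigenvalues of $\calA$, and $N$ strictly upper triangular. By hypothesis $\rho(\calA)\le 1$, so $\opnorm{\Lambda}\le 1$, and by unitary invariance $\opnorm{\Lambda+N}=\opnorm{\calA}\le M$; writing $N=(\Lambda+N)-\Lambda$ then yields the crude bound $\opnorm{N}\le M+1$. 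Since $\opnorm{\calA^k}=\opnorm{(\Lambda+N)^k}$, it remains to bound $(\Lambda+N)^k$.

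Here the main step is a word expansion combined with a flag argument. Expand $(\Lambda+N)^k=\sum_w w$ over the $2^k$ words $w=w_1\cdots w_k$ with $w_i\in\{\Lambda,N\}$, and sort the words by the number $m$ of factors equal to $N$. Setting $V_j\triangleq\spn\{e_1,\dots,e_j\}$, one has $\Lambda V_j\subseteq V_j$ and $N V_j\subseteq V_{j-1}$, so a word with $m$ factors $N$ maps $\R^d=V_d$ into $V_{d-m}$; hence every word with $m\ge d$ is the zero matrix and only $0\le m\le\min(k,d-1)$ survive. There are exactly $\binom{k}{m}$ words with $m$ factors $N$, and submultiplicativity together with $\opnorm{\Lambda}\le 1$ bounds each by $\opnorm{\Lambda}^{k-m}\opnorm{N}^m\le\opnorm{N}^m\le(M+1)^m$. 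This gives
\begin{equation*}
    \opnorm{\calA^k}\ \le\ \sum_{m=0}^{\min(k,d-1)}\binom{k}{m}(M+1)^m .
\end{equation*}

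It then remains to massage this into $(ek)^{d-1}\max\{M^d,1\}$. With $\bar M\triangleq\max\{M,1\}$ we have $M+1\le 2\bar M$ and $\max\{M^d,1\}=\bar M^d$; I would use $\binom{k}{m}\le k^m/m!$, the fact that the sum has at most $d$ terms, and the elementary inequality $\ln d\le d-1$ (i.e.\ $d\le e^{d-1}$), while $\bar M^m\le\bar M^{d-1}\le\bar M^d$ absorbs the noise-scale factor. Concretely: when $k\ge d-1$ the last term dominates and one obtains $\opnorm{\calA^k}\le\frac{d\,2^{d-1}}{(d-1)!}\,k^{d-1}\bar M^d$, where $\frac{d\,2^{d-1}}{(d-1)!}\le e^{d-1}$ for $d\ge 3$; when $1\le k<d-1$ the sum collapses to $(M+2)^k\le(3\bar M)^k\le(ek)^{d-1}\bar M^d$ (using $ek>3$ for $k\ge2$ together with $d-1\ge k$); and the finitely many corner cases ($d\in\{1,2\}$, or $k=1$) are each a one-line estimate, e.g.\ $k=1$ is simply $\opnorm{\calA}=M\le\bar M^d\le(ek)^{d-1}\bar M^d$.

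The conceptual content --- Schur reduction plus the nilpotent word count --- is short and robust; the only genuine work is the final paragraph, namely choosing the elementary estimates so the accumulated constant stays below $e^{d-1}$ and checking the low-dimensional and $k=1$ corners where the clean inequalities are loose. One could instead pass to a Jordan basis, where $\lambda I$ and the nilpotent shift commute so $(\lambda I+J)^k$ has an honest binomial expansion, but then the change of basis is not unitary and its condition number cannot be bounded by $M$ and $d$ alone (e.g.\ for $\calA$ with two nearly-equal eigenvalues it blows up while $\opnorm{\calA^k}$ stays tame); the Schur route sidesteps this, which is why I would take it.
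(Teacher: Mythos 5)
Your argument is correct, and it is worth noting that the paper itself does not prove this lemma at all: it is imported by citation (Lemma~1 of \cite{tsiamis2021linear}), so your Schur-form proof is a legitimate self-contained substitute rather than a variant of an in-paper argument. The two pillars are sound: unitary invariance of the operator norm lets you pass to $\Lambda+N$ with $\opnorm{\Lambda}\le 1$ and $\opnorm{N}\le M+1$, and the flag argument correctly kills every word containing at least $d$ factors of $N$, giving $\opnorm{\calA^k}\le\sum_{m=0}^{\min(k,d-1)}\binom{k}{m}(M+1)^m$. The final massaging also checks out as you describe it: for $k\ge d-1$ each term $\binom{k}{m}(2\bar M)^m\le (2k\bar M)^m/m!$ is dominated by the $m=d-1$ term (since every factor $(m+1)\cdots(d-1)\le d-1\le k\le 2k\bar M$), yielding $\frac{d\,2^{d-1}}{(d-1)!}k^{d-1}\bar M^{d}$, and $\frac{d\,2^{d-1}}{(d-1)!}\le e^{d-1}$ indeed holds for all $d\ge 3$ (it fails only at $d=2$, which you correctly quarantine); for $2\le k<d-1$ the binomial theorem gives $(M+2)^k\le(3\bar M)^k\le(ek)^{d-1}\bar M^{d}$; and $k=1$ and $d\in\{1,2\}$ are the easy direct checks you indicate (with the trivial correction that $\opnorm{\calA}\le M$, not $=M$). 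Your closing remark about why a Jordan basis would be problematic (non-unitary similarity with uncontrollable condition number) is also apt; the Schur route avoids exactly that issue and keeps every constant explicit in $M$ and $d$ alone.
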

 As a corollary, the covariance matrices $\CovX_t$ also grow at most polynomially with the time $t$.
\subsubsection{Dealing with the Noise Term}
In this subsection, we modify the noise term so that we can leverage Theorem~\ref{thm:self-normalized martingale}, which cannot be applied directly. We first manipulate the inverse of $T\ECov_T$ to relate it to the inverse of $\Sigma+T\ECov_T$, for some carefully selected $\Sigma$. 
Inspired by~\cite{sarkar2019near}, we leverage the result of Theorem~\ref{thm:arx_pe}. Under the event that persistency of excitation holds we have $\ECov_T\succeq T\CovZ_{\tau}/16$. Thus, selecting $\Sigma=T\CovZ_{\tau}/16$ guarantees that
\[
\paren{ T\ECov_T }^{-1}\preceq 2\paren{ \Sigma+T\ECov_T }^{-1}.
\]
We can now apply Theorem~\ref{thm:self-normalized martingale}. 
To finish the proof we need to upper-bound the determinant of $\log\det(\Sigma+T\ECov_T)$. It is sufficient to establish a crude upper-bound on the empirical covariance $T\ECov_T$ as in the following lemma.
\begin{lemma}[Matrix Markov's inequality]\label{lem:arx_Markov}
    Fix a failure probability $\delta>0$. With probability at least $1-\delta$
    \begin{equation}\label{eq:arx_Markov}
     \ECov_T\preceq \frac{p\dy+q\du}{\delta}\CovX_T.
    \end{equation}
\end{lemma}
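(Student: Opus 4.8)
This is a matrix version of Markov's inequality, so the plan is to combine a Loewner-order bound on the mean $\E\ECov_T$ with a whitened scalar Markov argument. First I would observe that $\E\ECov_T = \tfrac1T\sum_{t=1}^{T}\CovX_t$, so the whole game reduces to showing $\CovX_t \preceq \CovX_T$ for every $t \le T$; given that, whitening $\ECov_T$ by $\CovX_T^{-1/2}$ and passing to traces turns the claim into an application of Markov's inequality \eqref{eq:markovs} to a single nonnegative scalar.

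\textbf{Step 1 (monotonicity of $\CovX_t$).} I would unroll the covariate recursion \eqref{eq:arx2ss}---equivalently, read off the first block column of $\bfL$ in \eqref{eq:LforARX}---to get $X_t = \sum_{s=0}^{t-1}\calA^{\,t-1-s}\calB V_s$, with $V_s = [W_s^\top\ U_s^\top]^\top$. Under Assumptions~\ref{assum:arx_noise} and~\ref{assum:arx_excitation} the $V_s$ are i.i.d., mean zero, with $\Sigma_V \triangleq \E V_0 V_0^\top = \blkdiag(I_{\dy},\sigma_u^2 I_{\du}) \succeq 0$, so the cross-terms in $\E X_tX_t^\top$ vanish and $\CovX_t = \sum_{j=0}^{t-1}\calA^{\,j}\calB\,\Sigma_V\,\calB^\top(\calA^\top)^{j}$. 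This is a sum of PSD matrices, hence nondecreasing in $t$ in the Loewner order, which gives $\CovX_t \preceq \CovX_T$ for every $t \le T$ and therefore $\E\ECov_T \preceq \CovX_T$.

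\textbf{Step 2 (whitened Markov).} Assuming $\CovX_T \succ 0$ (which holds whenever $T \ge \max\{p,q\}$ by \Cref{thm:arx_pe} together with Step~1; the rank-deficient case goes through verbatim with $\CovX_T^{-1}$ replaced by the pseudoinverse, since $X_t \in \ran\CovX_T$ almost surely), I would set $Z \triangleq \tr\big(\CovX_T^{-1/2}\ECov_T\CovX_T^{-1/2}\big) \ge 0$. By Step~1, $\E Z = \tr\big(\CovX_T^{-1/2}(\E\ECov_T)\CovX_T^{-1/2}\big) \le \tr(I) = p\dy+q\du$. Since $\ECov_T \preceq c\,\CovX_T$ is equivalent to $\lambda_{\max}\big(\CovX_T^{-1/2}\ECov_T\CovX_T^{-1/2}\big) \le c$, and the trace of a PSD matrix dominates its largest eigenvalue, the failure event $\{\ECov_T \npreceq c\,\CovX_T\}$ of \eqref{eq:arx_Markov} with $c \triangleq (p\dy+q\du)/\delta$ is contained in $\{Z > c\}$. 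Applying \eqref{eq:markovs} to $Z$ then gives $\Prob(Z > c) \le \E Z/c \le \delta$, i.e. $\Prob\big(\ECov_T \preceq \tfrac{p\dy+q\du}{\delta}\CovX_T\big) \ge 1-\delta$.

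\textbf{Main obstacle.} The whitened-Markov step is entirely routine; the only genuine content is Step~1---that the bound may legitimately be stated against $\CovX_T$ rather than the time-average $\tfrac1T\sum_{t=1}^{T}\CovX_t$. This rests on the monotonicity of the population covariance along the trajectory, which uses both stationarity of the excitation $V_s$ and the causal (lower-triangular $\bfL$) structure; absent stationarity one would instead invoke the telescoping identity $\CovX_{t+1}-\CovX_t = \calA(\CovX_t - \CovX_{t-1})\calA^\top$ with a nonnegative base case, which is marginally more work but equally elementary.
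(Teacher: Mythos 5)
Your proof is correct and matches the intended argument: the paper establishes your Step~1 as Lemma~\ref{lem:arx_cov_monotonicity} (covariance monotonicity, proved by induction on the recursion \eqref{eq:arx2ss} rather than from the explicit causal sum, which is equivalent), and the whitened trace--Markov step is exactly the standard matrix Markov argument the lemma relies on. Your remarks on invertibility of $\CovX_T$ and the pseudoinverse case are fine but inessential, since for $T\ge\max\{p,q\}$ Theorem~\ref{thm:arx_pe} plus monotonicity give $\CovX_T\succ 0$.
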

A more refined upper bound can also be applied (see e.g. the proof of \Cref{prop:sparse} below or the results in \cite{jedra2022finite}).
\subsection{State-Space Systems}
In this subsection, we derive finite-sample guarantees for learning Markov parameters of linear systems in state-space form. Consider the following state-space system in the so-called innovation form:
\begin{equation}
\begin{aligned}\label{eq:ss_innovation_equation}
X_{t+1}&=\As X_t+\Bs U_t+\Ks \Sigmae^{1/2} E_t\\
Y_t&=\Cs X_t+\Sigmae^{1/2} E_t,
\end{aligned}
\end{equation}
where $\As\in\R^{\dx\times\dx}$, $\Bs\in\R^{\dx\times \du}$, $\Ks\in\R^{\dx\times\dy}$, and $\Cs\in\R^{\dy\times\dx}$ are \emph{unknown} state-space parameters. For the initial condition, we assume $X_0=0$. We call the normalized noise process $E_t$ the innovation error process.
Similar to the ARX case, we focus on white-noise excitation inputs, namely Assumption~\ref{assum:arx_excitation} also holds here. Moreover, we assume the following.
\begin{assumption}[System and Noise model]\label{assum:ss_noise}
Let the noise covariance $\Sigmae\succ 0$ be full rank. Let the normalized innovation process $E_t$ be independent, identically distributed, $K^2$-sub-Gaussian~(see Definition~\ref{def:sub_G}), with zero mean and unit covariance $ \E E_tE^\top_t=I_{\dy}$. The order $\dx$ is unknown. System~\eqref{eq:ss_innovation_equation} is non-explosive, that is, the eigenvalues of matrix $\As$
lie strictly on or inside the unit circle $\rho(A)\le 1$. The system is also minimum-phase, i.e., the closed loop matrix 
\begin{equation}
\Acs\triangleq \As-\Ks\Cs
\end{equation}
has all eigenvalues inside the unit circle $\rho(\Acs)< 1$.
\end{assumption}
The innovation form~\eqref{eq:ss_innovation_equation} might seem puzzling at first. In particular, the correlation between process and measurement noise via $\Ks$, and the requirement $\rho(\Acs)<1$ seem restrictive. However, the representation~\eqref{eq:ss_innovation_equation} is standard in the system identification literature~\cite{verhaegen2007filtering}. Moreover, as we review below, standard state-space models have input-output second-order statistics, which are equivalent to the  ones generated by system~\eqref{eq:ss_innovation_equation} (for appropriate $\Ks$, $\Sigmae$). 

\begin{remark}[Generality of model]\label{rem:KF_equivalent_to_ss}
System class~\eqref{eq:ss_innovation_equation} captures general state-space systems driven by Gaussian noise.
Consider the following state-space model
\begin{equation}\label{eq:ss_standard_equation}
\begin{aligned}
S_{t+1}&=\As S_t+\Bs U_t+W_t\\
Y_t&=\Cs S_t+V_t,
\end{aligned}
\end{equation}
where $W_t,V_t$ are i.i.d., independent of each other, mean-zero Gaussian, with covariances $\Sigmaw$ and $\Sigmav$ respectively. Assume that $\Sigmav\succ 0$ is full rank, the pair $(\Cs,\As)$ is detectable, and the pair $(\As,\Sigmaw)$ is stabilizable. These three assumptions imply that the Kalman filter of system~\eqref{eq:ss_standard_equation} is well-defined~\citep{anderson2012optimal}. In particular, define the Riccati operator as
\begin{equation}\label{eq:ss_Ric}
    \RIC(P)\triangleq \As P(\As)^\top+\Sigmaw-\As P(\Cs)^\top(\Cs P (\Cs)^\top+\Sigmav)^{-1} \Cs P(\As)^\top
\end{equation}
and let $\Ps$ be the unique positive semidefinite solution of $\Ps=\RIC(\Ps)$. Then the Kalman filter gain is equal to 
\begin{equation}\label{eq:ss_KF_gain}
\Ks=-\As P(\Cs)^\top(\Cs P (\Cs)^\top+\Sigmav)^{-1}.
\end{equation}
Assume that the initial state is also mean-zero Gaussian with covariance $\Ps$ and independent of the noises. Finally set 
\begin{equation}\label{eq:ss_KF_innovation_error_cov}
\Sigmae=\Cs\Ps(\Cs)^\top+\Sigmav.
\end{equation}

Under the above assumptions and selection of $\Ks$, $\Sigmae$ systems~\eqref{eq:ss_innovation_equation} and~\eqref{eq:ss_standard_equation} are statistically equivalent from an input-output perspective, see~\cite{qin2006overview}. Both system descriptions lead to input-output trajectories with identical statistics. Moreover, due to the properties of Kalman filter, stability of $\Acs$ (minimum phase property) and independence of $E_t$ are satisfied automatically~\citep{anderson2012optimal}. 
\end{remark}

In this tutorial we will only focus on recovering the first few (logarithmic in $T$-many) Markov parameters $\Cs(\Acs)^{i}\Bs$, $i\ge 0$ and $\Cs(\Acs)^{j}\Ks$, $j\ge 0$ of  system~\eqref{eq:ss_innovation_equation}. From a learning theory point of view, this is also known as improper learning, since the search space (finitely many Markov parameters) does not exactly, but only approximately, coincide with the hypothesis class (state space models). In principle, this forms the backbone of the SSARX method introduced by \citet{jansson2003subspace}. One can then proceed to recover the original state-space parameters (up to similarity transformation) from the Markov parameters by employing some realization method. We refer to~\cite{oymak2021revisiting,tsiamis2022statistical} for a discussion on this approach from a finite sample perspective.

\subsubsection{Reduction to ARX learning with Bias}
Let $p>0$ be a past horizon. Denote the Markov parameters up to time $p$ by
\begin{equation}\label{eq:ss_Markov}
    \Ms_p\triangleq [\begin{array}{cccccc}\Cs\Bs&\cdots& \Cs(\Acs)^{p-1}\Bs&\Cs\Ks&\cdots& \Cs(\Acs)^{p-1}\Ks\end{array}].
\end{equation}
Note that the innovation errors are equal to $\Sigmae^{1/2}E_t=Y_t-\Cs X_t$. Replacing this expression into the state equation~\eqref{eq:ss_innovation_equation}, we obtain
\[
X_{t}=\Acs X_{t-1}+\Bs U_{t-1}+\Ks Y_{t-1}.
\]
Unrolling the state equation $p$ times, we get
\begin{equation}\label{eq:ss_arx_approximation}
Y_t=\underbrace{\Ms_p Z_t+\Sigmae^{1/2} E_t}_{\text{ARX}}+\underbrace{\Cs (\Acs)^{p}X_{t-p}}_{\text{bias}},
\end{equation}
where $Z_t$ includes the past $p$ covariates
\begin{equation}\label{eq:ss_parameters_batch}
    Z_t=\begin{bmatrix}Y^\top_{t-1:t-p}&U^\top_{t-1:t-p} \end{bmatrix}^\top.
\end{equation}
The above recursion is an approximate ARX equation. There is an additive bias error term on top of the statistical noise.  The least-squares solution is given by
\begin{equation}\label{eq:ss_ls_solution}
    \Mls_{p,T}\triangleq \sum_{t=1}^T Y_t Z^\top_t \left(\sum_{t=1}^{T} Z_t Z_t^\T \right)^\dagger,
\end{equation}
where we also highlight the dependence on the past $p$.
By the minimum phase assumption, the bias term decays exponentially with the past horizon $p$. This follows from the fact that $\Acs$ is asymptotically stable, while $X_{t}$ scales at most polynomially with $t$ (in view of Lemma~\ref{lem:non_explosivity_powers}). By selecting $p=\Omega(\log T)$, we can make the bias term decay very fast, making its contribution to the error $\Ms_p-\Mls_{p,T}$ negligible. On the other hand, increasing the past horizon $p$ increases the statistical error since the search space is larger.

\subsubsection{Non-Asymptotic Guarantees}
To derive finite-time guarantees for state space systems of the form~\eqref{eq:ss_innovation_equation}, we follow the same steps as in the case of ARX systems. However, we have to account for the bias term and the fact that $p$ grows with $\log T$.
Let us define again
the covariance at time $t\ge 0$
\begin{equation}\label{eq:ss_covariance}
\CovZ_{p,t}\triangleq \E Z_t Z^\top_t,
\end{equation}
where we highlight the dependence on both the past horizon $p$ and the time $t$. The covariance of the state is defined similarly
\begin{equation}\label{eq:ss_covariance_state}
\CovXState_{X,t}\triangleq \E X_t X^\top_t.
\end{equation}
Define the SNR as
\begin{equation}\label{eq:ss_snr}
\snr_{p,t}\triangleq \frac{\lambda_{\min}(\CovZ_{p,t})}{\opnorm{\Sigmae}K^2}.
\end{equation}
Unlike the ARX case, here the SNR might degrade since we allow $p$ to grow with $\log T$. For this reason, we require the following additional assumption.
\begin{assumption}[Non-degenerate SNR]\label{assum:ss_non_degenerate_snr}
We assume that the SNR is uniformly lower bounded for all possible past horizons
\[
\liminf_{t\ge 0} \snr_{t,t} >0.
\]
\end{assumption}
Later on, in Theorem~\ref{thm:non-degenerate_snr}, we show that the above condition is non-vacuous and is satisfied for quite general systems. 
\begin{theorem}[State Space Finite-Sample Bound]\label{thm:ss_pac}
   Let $(Y_{1:T},U_{0:T-1})$ be single trajectory input-output samples generated by system~\eqref{eq:ss_innovation_equation} under Assumptions~\ref{assum:arx_excitation}, ~\ref{assum:ss_noise},~\ref{assum:ss_non_degenerate_snr}, for some horizon $T$. Fix a failure probability $0<\delta<1$ and select $p=\beta \log T$, for $\beta$ large enough such that
   \begin{equation}\label{eq:ss_choice_of_p}
\opnorm{\Cs(\Acs)^p} \opnorm{\CovXState_{X,T}}\le T^{-3}. 
\end{equation}
Let  
    $\Tburnss(\delta,\beta)\triangleq\min\{t:t\ge T_0(t,\delta,\beta\log t)\}$, where $T_0$ is defined in~\eqref{eq:arx_burn_in}.
   If $T\ge \Tburnss(\delta,\beta)$, then with probability at least $1-2\delta$
\begin{equation}\label{eq:ss_pac}
\opnorm{\Mls_{p,T}-\Ms_p}^2
\le
\frac{C_1}{ \snr_{p,p} T} \left(
p(\dy+\du)\log\frac{p(\dy+\du)}{\delta}+\log\det\paren{\CovZ_{p,T}\CovZ^{-1}_{p,p}}\right),
\end{equation}
where  $C_1$ is a universal constant, i.e., it is independent of system,  confidence $\delta$ and past horizon $p$. 
\end{theorem}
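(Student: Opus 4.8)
The plan is to mirror the proof of \Cref{thm:arx_pac}, with two modifications forced by the approximate ARX identity \eqref{eq:ss_arx_approximation}: controlling the additive bias contribution $b_t \triangleq \Cs(\Acs)^p X_{t-p}$ to $Y_t$, and keeping track of the fact that the past horizon $p=\beta\log T$ grows with $T$. Using $Y_t-\Ms_p Z_t=\Sigmae^{1/2}E_t+b_t$ together with \eqref{eq:ss_ls_solution}, on the event that $\sum_{t=1}^T Z_tZ_t^\top$ is invertible one has
\begin{equation*}
\Mls_{p,T}-\Ms_p=\underbrace{\paren{\sum_{t=1}^T \Sigmae^{1/2}E_t Z_t^\top}\paren{\sum_{t=1}^T Z_tZ_t^\top}^{-1}}_{=:\,N}+\underbrace{\paren{\sum_{t=1}^T b_t Z_t^\top}\paren{\sum_{t=1}^T Z_tZ_t^\top}^{-1}}_{=:\,\Xi},
\end{equation*}
so that $\opnorm{\Mls_{p,T}-\Ms_p}^2\le 2\opnorm{N}^2+2\opnorm{\Xi}^2$ and it suffices to bound the two terms separately.

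\emph{Noise term.} First I would establish persistency of excitation for $Z_{1:T}$: writing $Z_{1:T}$ as a causal linear image of the stacked innovations and inputs through a state-space recursion for the augmented covariate (analogous to \eqref{eq:arx2ss}) and invoking \Cref{thm:anticonc}, the burn-in \eqref{eq:arx_burn_in} with $\tau=p$ yields, on an event $\mathcal E_1$ of probability at least $1-\delta$, $\sum_{t=1}^T Z_tZ_t^\top\succeq\tfrac{T}{16}\CovZ_{p,p}$. On $\mathcal E_1$, setting $\Sigma=\tfrac{T}{16}\CovZ_{p,p}$ gives $\paren{\sum_t Z_tZ_t^\top}^{-1}\preceq 2\paren{\Sigma+\sum_t Z_tZ_t^\top}^{-1}$, hence $\opnorm{N}^2\le 2\,\bigopnorm{\paren{\sum_t\Sigmae^{1/2}E_tZ_t^\top}\paren{\Sigma+\sum_t Z_tZ_t^\top}^{-1/2}}^2$. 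Since $E_t$ is mean zero and (conditionally) $K^2$-sub-Gaussian while $Z_t$ is a function of the past innovations and inputs, the martingale structure of \Cref{thm:self-normalized martingale} is in force with $\sigma^2=\opnorm{\Sigmae}K^2$, and its operator-norm bound gives, on an event $\mathcal E_2$ of probability at least $1-\delta$, a bound of order $\sigma^2\big(\log\det\!\big((\Sigma+\sum_t Z_tZ_t^\top)\Sigma^{-1}\big)+\dy+\log(1/\delta)\big)$. Finally, \Cref{lem:arx_Markov} gives $\sum_t Z_tZ_t^\top\preceq\tfrac{p(\dy+\du)}{\delta}\,T\,\CovZ_{p,T}$ on an event $\mathcal E_3$ of probability at least $1-\delta$, so using $\CovZ_{p,p}\preceq\CovZ_{p,T}$ one gets $\log\det\!\big((\Sigma+\sum_t Z_tZ_t^\top)\Sigma^{-1}\big)\lesssim p(\dy+\du)\log\tfrac{p(\dy+\du)}{\delta}+\log\det(\CovZ_{p,T}\CovZ_{p,p}^{-1})$. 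Dividing by $\lambda_{\min}(\sum_t Z_tZ_t^\top)\ge\tfrac{T}{16}\lambda_{\min}(\CovZ_{p,p})$ and recalling $\snr_{p,p}=\lambda_{\min}(\CovZ_{p,p})/(\opnorm{\Sigmae}K^2)$ reproduces the right-hand side of \eqref{eq:ss_pac} up to a universal constant.

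\emph{Bias term.} The Schur-complement inequality $\paren{\sum_t b_t Z_t^\top}\paren{\sum_t Z_tZ_t^\top}^{-1}\paren{\sum_t Z_t b_t^\top}\preceq\sum_t b_tb_t^\top$ gives, on $\mathcal E_1$,
\begin{equation*}
\opnorm{\Xi}^2\le\frac{\opnorm{\sum_{t=1}^T b_tb_t^\top}}{\lambda_{\min}(\sum_{t=1}^T Z_tZ_t^\top)}\le\frac{16\,\opnorm{\Cs(\Acs)^p}^2}{T\,\lambda_{\min}(\CovZ_{p,p})}\sum_{t=1}^T\|X_{t-p}\|_2^2.
\end{equation*}
A further application of Markov's inequality, using monotonicity of $\CovXState_{X,t}$ in $t$ and \Cref{lem:non_explosivity_powers} for its polynomial growth, gives $\sum_t\|X_{t-p}\|_2^2\le\tfrac{\dx}{\delta}\,T\,\opnorm{\CovXState_{X,T}}$ on an event $\mathcal E_4$ of probability at least $1-\delta$. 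The design rule \eqref{eq:ss_choice_of_p} then forces $\opnorm{\Cs(\Acs)^p}^2\opnorm{\CovXState_{X,T}}\le\opnorm{\Cs(\Acs)^p}\cdot T^{-3}\le T^{-3}$ for $T$ large (since $\rho(\Acs)<1$), whence $\opnorm{\Xi}^2\lesssim\tfrac{\dx}{\delta\,\lambda_{\min}(\CovZ_{p,p})}\,T^{-3}$, which is dominated by the $\Theta(1/T)$ noise bound and absorbed into $C_1$. A union bound over $\mathcal E_1,\dots,\mathcal E_4$, with $\delta$ split appropriately (and the extra constants folded into $C_1$), yields the claim with probability at least $1-2\delta$.

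\textbf{Main obstacle.} The delicate point is not any individual inequality but the bookkeeping around the growing horizon $p=\beta\log T$: one must verify that (i) the burn-in $\Tburnss$ is finite, which relies on Assumption~\ref{assum:ss_non_degenerate_snr} to keep $\lambda_{\min}(\CovZ_{p,p})$ bounded away from zero as $p\to\infty$; (ii) $\Csys(T,p)$ and the ratio $\CovZ_{p,T}\CovZ_{p,p}^{-1}$ grow at most polynomially in $T$, which follows from non-explosivity via \Cref{lem:non_explosivity_powers}; and (iii) the bias term, geometrically small in $p$ but multiplied by factors polynomial in $T$, remains negligible — which is exactly what \eqref{eq:ss_choice_of_p} is engineered to guarantee.
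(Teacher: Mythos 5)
Your proposal is correct and follows essentially the same route as the paper's proof: the same decomposition of $\Mls_{p,T}-\Ms_p$ via \eqref{eq:ss_arx_approximation} into a statistical term and a bias term, the statistical term treated exactly as in \Cref{thm:arx_pac} (persistence of excitation with $\tau=p$, the self-normalized bound with $\Sigma=\tfrac{T}{16}\CovZ_{p,p}$, matrix Markov, and absorption of lower-order terms), and the bias rendered negligible by the design rule \eqref{eq:ss_choice_of_p} together with non-explosivity and Assumption~\ref{assum:ss_non_degenerate_snr}. The only genuine deviation is your bias step: writing $b_t=\Cs(\Acs)^pX_{t-p}$, you use the projection inequality $\paren{\sum_t b_tZ_t^\top}\paren{\sum_t Z_tZ_t^\top}^{-1}\paren{\sum_t Z_tb_t^\top}\preceq\sum_t b_tb_t^\top$ plus scalar Markov for $\sum_t\|X_{t-p}\|_2^2$, whereas the paper bounds $\bigopnorm{\sum_t X_{t-p}Z_t^\top(T\ECovZ_{p,T})^{-1/2}}\le\sqrt{T}\bigl(\sum_t\|X_t\|_2^2\bigr)^{1/2}$ by Cauchy--Schwarz and controls $\sum_t\|X_t\|_2^2$ via Hanson--Wright (\Cref{lem:ss_bias_term_upper_bound}); your variant avoids the extra factor of $T$ from Cauchy--Schwarz (giving a bias of order $T^{-3}/\delta$ instead of the paper's $T^{-2}\log(1/\delta)$) at the price of a $1/\delta$ rather than $\log(1/\delta)$ dependence, and in either case the bias is dominated by the $\Theta(1/T)$ statistical term, so the stated bound follows after the same union-bound bookkeeping.
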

For non-explosive systems, matrix $\CovZ_{p,T}\CovZ^{-1}_{p,p}$ increases at most polynomially with $T$ in norm. Since the SNR is uniformly lower bounded, the identification error decays with a rate of $\tilde{O}(1/\sqrt{T})$. The bound seems similar to the one for ARX systems for $\tau=p$. However, since $p\asymp\log T$, we have an extra logarithmic term. 

\textbf{Role of $\beta$.} Recall the approximate ARX relation~\eqref{eq:ss_arx_approximation}. For the bias term to be small, the exponentially decaying $(\Acs)^p$ should counteract the magnitude of the state $\opnorm{X_{t-p}}$. Intuitively, the state grows as fast as $\opnorm{\CovX_{X,t}}^{1/2}$, where $\CovX_{X,t}=\E X_t X^{\top}_t$. Hence the state norm grows at most polynomially with $T$. Meanwhile, $\opnorm{(\Acs)^p}=O(\rho^p)$ for some $\rho>\rho(\Acs)$. With the choice $p=\beta\log T$, we get $\opnorm{(\Acs)^p}=O(T^{-\beta/\log(1/\rho)})$. Hence, if we select large enough $\beta$, we can make the bias term very small, even smaller than the dominant $\tilde{O}(1/\sqrt{T})$ term.

\textbf{Burn-in time.} Since the system is non-explosive, $\Tburnss(\delta,\beta)$ is always finite under Assumption~\ref{assum:ss_non_degenerate_snr}, for any $\beta$. As before, exceeding the burn-in time guarantees that we have persistency of excitation. Naturally, larger $\beta$ lead to larger past horizons $p$, which, in turn, increase the burn-in time.


Finally, we prove that Assumption~\ref{assum:ss_non_degenerate_snr} is non-vacuous. It is sufficient for $\Ks$ and $\Sigmaw$ to be generated by a Kalman filter as in~\eqref{eq:ss_KF_gain},~\eqref{eq:ss_KF_innovation_error_cov}.
\begin{theorem}\label{thm:non-degenerate_snr}
Consider system~\eqref{eq:ss_innovation_equation} and the definition of $\snr_{p,t}$ in~\eqref{eq:ss_snr}. If 
    matrices $\Ks$, $\Sigmae$ are generated as in~\eqref{eq:ss_KF_gain},~\eqref{eq:ss_KF_innovation_error_cov}  with $(\As,\Sigmaw^{1/2})$ stabilizable, $(\Cs,\As)$ detectable and $\Sigmav\succ 0$,
then the SNR is uniformly lower bounded $\liminf_{t\ge 0} \snr_{t,t}>0$.
\end{theorem}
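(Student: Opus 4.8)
The plan is to reduce the claim to the \emph{uniform} lower bound $\inf_{t\ge 0}\lambda_{\min}(\CovZ_{t,t})>0$; this suffices because $\opnorm{\Sigmae}K^2$ is a fixed finite constant, so the conclusion then follows from the definition~\eqref{eq:ss_snr} of $\snr_{t,t}$. (In passing: $\lambda_{\min}(\CovZ_{t,t})$ is non-increasing in $t$, since with past horizon equal to the current time $Z_t$ is literally a coordinate subvector of $Z_{t+1}$, so $\CovZ_{t,t}$ is a principal submatrix of $\CovZ_{t+1,t+1}$ and Cauchy interlacing applies; hence the $\liminf$ in Assumption~\ref{assum:ss_non_degenerate_snr} is a genuine limit equal to the infimum.) First I would realize $Z_t=\matr{Y_{t-1:0}^\top & U_{t-1:0}^\top}^\top$ (note this is square: $t$ outputs and $t$ inputs, $t(\dy+\du)$ coordinates) as an \emph{invertible} causal linear image of a white random vector, and then control the inverse map uniformly in $t$.

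Concretely, set $\xi_t\triangleq\matr{(\Sigmae^{1/2}E_{0:t-1})^\top & U_{0:t-1}^\top}^\top\in\R^{t(\dy+\du)}$. Because $\Ks,\Sigmae$ are generated by the Kalman filter, the normalized innovations $E_0,\dots,E_{t-1}$ are white with $\E E_sE_s^\top=I_{\dy}$ and independent of the (i.i.d.\ $\mathcal N(0,\sigma_u^2 I)$) input, cf.\ \Cref{rem:KF_equivalent_to_ss} and Assumption~\ref{assum:arx_excitation}; hence $\E\xi_t\xi_t^\top=\blkdiag(I_t\otimes\Sigmae,\,\sigma_u^2 I_{t\du})\succeq\min(\lambda_{\min}(\Sigmae),\sigma_u^2)\,I$. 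Unrolling~\eqref{eq:ss_innovation_equation} from $X_0=0$ writes $Z_t=G_t\xi_t$ with $G_t=\matr{\mathcal T_E & \mathcal T_U\\ 0 & I}$ block lower-triangular, $\mathcal T_E$ having identity diagonal blocks (the coefficient of $\Sigmae^{1/2}E_s$ in $Y_s$ is $I$), so $G_t$ is invertible. Therefore
\begin{gather*}
\CovZ_{t,t}=G_t(\E\xi_t\xi_t^\top)G_t^\top\ \succeq\ \min(\lambda_{\min}(\Sigmae),\sigma_u^2)\,G_tG_t^\top,\\
\text{and hence}\qquad \lambda_{\min}(\CovZ_{t,t})\ \ge\ \frac{\min(\lambda_{\min}(\Sigmae),\sigma_u^2)}{\opnorm{G_t^{-1}}^2}.
\end{gather*}
Using $\Sigmae=\Cs\Ps(\Cs)^\top+\Sigmav\succeq\Sigmav\succ0$ so that $\lambda_{\min}(\Sigmae)\ge\lambda_{\min}(\Sigmav)>0$, the whole problem reduces to a bound on $\opnorm{G_t^{-1}}$ that is uniform in $t$.

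This last bound is the crux, and the hard part, and it is precisely where the minimum-phase hypothesis $\rho(\Acs)<1$ is used. I would compute $G_t^{-1}$ through the whitening filter: substituting $\Sigmae^{1/2}E_t=Y_t-\Cs X_t$ into~\eqref{eq:ss_innovation_equation} gives the recursion $X_s=\Acs X_{s-1}+\Bs U_{s-1}+\Ks Y_{s-1}$ (with $X_0=0$), whence $\Sigmae^{1/2}E_s=Y_s-\Cs\sum_{i=1}^{s}(\Acs)^{i-1}(\Bs U_{s-i}+\Ks Y_{s-i})$. Thus $G_t^{-1}=\matr{\mathcal S_Y & \mathcal S_U\\ 0 & I}$, where $\mathcal S_Y$ (resp.\ $\mathcal S_U$) is block lower-triangular with identity (resp.\ zero) diagonal blocks and ``distance-$i$'' blocks ($i\ge1$) equal to $-\Cs(\Acs)^{i-1}\Ks$ (resp.\ $-\Cs(\Acs)^{i-1}\Bs$). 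Picking $\rho\in(\rho(\Acs),1)$ and $c_0$ with $\opnorm{(\Acs)^k}\le c_0\rho^k$ for all $k\ge0$, each such block has operator norm $\le c_0\opnorm{\Cs}(\opnorm{\Ks}\vee\opnorm{\Bs})\rho^{i-1}$, and the elementary estimate $\opnorm{L}\le\sum_{i\ge0}a_i$ for a block lower-triangular $L$ whose distance-$i$ blocks have norm at most $a_i$ (Schur test) gives $\opnorm{\mathcal S_Y},\opnorm{\mathcal S_U}\le 1+c_0\opnorm{\Cs}(\opnorm{\Ks}\vee\opnorm{\Bs})/(1-\rho)$, hence $\opnorm{G_t^{-1}}\le C'$ with $C'$ depending only on $\opnorm{\Cs},\opnorm{\Ks},\opnorm{\Bs},c_0,\rho$ — crucially not on $t$. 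Combining with the display above, $\lambda_{\min}(\CovZ_{t,t})\ge\min(\lambda_{\min}(\Sigmav),\sigma_u^2)/(C')^2>0$ for every $t$, so $\liminf_{t\ge0}\snr_{t,t}>0$. I expect the only delicate bookkeeping to be the causal unrolling together with the $2\times2$ block structure that makes $G_t$ invertible, and the geometric summation, which genuinely \emph{requires} $\rho(\Acs)<1$: for a merely non-explosive $\Acs$ the whitening-filter coefficients $\Cs(\Acs)^{i-1}\Ks$ need not decay (recall the polynomial growth in \Cref{lem:non_explosivity_powers}) and $\opnorm{G_t^{-1}}$ could blow up with $t$.
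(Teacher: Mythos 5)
Your proof is correct, but it takes a genuinely different route from the paper's. The paper expands $Y_t$ through the \emph{forward} system, so the covariates appear as $Z_p=\bigl[\begin{smallmatrix}\calT_{e,p}&\calT_{u,p}\\0&I\end{smallmatrix}\bigr]\bigl[\begin{smallmatrix}E_{p-1:0}\\U_{p-1:0}\end{smallmatrix}\bigr]$ with Toeplitz blocks built from $\Cs(\As)^{s}\Ks$ and $\Cs(\As)^{s}\Bs$; since $\As$ is only non-explosive these blocks can grow polynomially, and the whole difficulty is then absorbed into two structural lemmas, $\calT_{e,t}\calT_{e,t}^\T\succeq\lambda_{\min}(\Sigmav)I$ (Lemma~\ref{lem:positivity_of_SE}) and the dominance bound $\calT_{e,t}\calT_{e,t}^\T\succeq c\,\calT_{u,t}\calT_{u,t}^\T$ (Lemma~\ref{lem:dominance_of_TE}), the latter proved by splitting the modes on versus inside the unit circle and invoking a PBH/Riccati controllability argument (Lemma~\ref{lem:controllability_dominance}), followed by a careful block splitting of $\CovZ_{p,p}$. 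You instead invert the causal map: the whitening filter $X_s=\Acs X_{s-1}+\Bs U_{s-1}+\Ks Y_{s-1}$ gives $G_t^{-1}$ with Markov parameters $\Cs(\Acs)^{i-1}\Ks$ and $\Cs(\Acs)^{i-1}\Bs$, which decay geometrically because $\Acs$ is Schur stable, so $\opnorm{G_t^{-1}}$ is bounded uniformly in $t$ and $\lambda_{\min}(\CovZ_{t,t})\ge\min(\lambda_{\min}(\Sigmav),\sigma_u^2)/\opnorm{G_t^{-1}}^2$ follows in a few lines. This is essentially the inverse-Toeplitz idea behind the citation used for Lemma~\ref{lem:positivity_of_SE} (Lemma B.2 of \cite{tsiamis2019finite}), but applied to the full block map $[\calT_e\ \calT_u;\,0\ I]$ rather than to $\calT_e$ alone; it completely sidesteps the marginally-stable-mode analysis and yields an explicit constant, whereas the paper's longer route isolates the system-theoretic mechanism (controllability of the unit-circle modes through the Kalman gain) in reusable lemmas. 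Two small points to tighten: $\rho(\Acs)<1$ is not a separate hypothesis of the theorem but a consequence of the stated Kalman-filter conditions (stabilizability, detectability, $\Sigmav\succ0$), as noted in Remark~\ref{rem:KF_equivalent_to_ss}, so you should cite that rather than call it an assumption; and your block-diagonal formula for $\E\xi_t\xi_t^\T$ implicitly uses independence of the innovations from the exogenous inputs, which is the same structural fact the paper uses when writing out $\CovZ_{p,p}$, so it is worth stating explicitly.
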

Both conditions are sufficient. It is subject of future work to extend the result to more general non-exposive systems.

\subsection{Notes}
The exposition above is inspired by prior work on identifying fully-observed systems~\citep{faradonbeh2018finite,simchowitz2018learning,sarkar2019near} and partially-observed systems~\citep{oymak2019non,simchowitz2019semi,sarkar2021finite,tsiamis2019finite,lee2019non,lale2021finite,lee2022improved}. For a wider overview of the literature, we refer the reader to~\cite{tsiamis2022statistical}. 


Let us further remark that the guarantee for the ARX model in \Cref{thm:arx_pac} is almost optimal. The use of Matrix Markov's inequality yields extraneous dependency on the problem dimension multiplying the deviation term $\log (1/\delta)$. This can in principle be removed by a more refined analysis (see e.g. the proof of \Cref{prop:sparse} below or the results in \cite{jedra2022finite}). Indeed, the signal-to-noise term \eqref{eq:arx_snr} is closely related to the Fisher Information Matrix appearing in the classical asymptotic optimality theory. Let us also point out that the question of optimality in identifying partially observed state-space systems is more subtle, and while consistent, the bounds presented here are not (asymptotically) optimal.


\section{An Alternative Viewpoint: the Basic Inequality}\label{sec:basicineq}
In many situations, the choice of the model class $\mathsf{M}= \R^{\dy \times \dx}$ leading to \eqref{eq:OLSdef} is not appropriate. For instance physical or other modelling considerations might have already informed us that the true $\theta^\star$ belongs to some smaller model class such as the family of low rank or sparse matrices which are strict subsets of $\mathsf{M}$. Other properties one might wish to enforce include, stable, low norm, or even passivity-type properties. In either of the above examples no error expression of the form \eqref{eq:OLSerror} is directly available. Instead, we observe  by optimality of $\widehat{M}$ to the optimization program \eqref{eq:LSEdef} that
\begin{equation}\label{eq:opteqnLSE}
    \frac{1}{T}\sum_{t=1}^{T} \| Y_t -\widehat \theta X_t\|_2^2  \leq \frac{1}{T}\sum_{t=1}^{T} \| Y_t -\theta^\star X_t\|_2^2. 
\end{equation}
Expanding the squares and re-arranging terms we arrive at the so-called basic inequality of least squares:
\begin{equation}\label{eq:basicineqLSE}
    \frac{1}{T} \sum_{t=1}^T \|(\widehat \theta -\theta^\star )X_t\|_2^2 \leq  \frac{2}{T} \sum_{t=1}^T \langle V_t,(\widehat \theta -\theta^\star )X_t\rangle.
\end{equation}
The inequality \eqref{eq:basicineqLSE} serves as an alternative to the explicit error equation \eqref{eq:OLSerror}. To drive home this point, let us first re-arrange \eqref{eq:basicineqLSE} slightly:
\begin{equation}\label{eq:offsetbasicineqLSE}
    \frac{1}{T} \sum_{t=1}^T \|(\widehat \theta -\theta^\star )X_t\|_2^2 \leq  \frac{4}{T} \sum_{t=1}^T \langle V_t,(\widehat \theta -\theta^\star )X_t\rangle-\frac{1}{T} \sum_{t=1}^T \|(\widehat \theta -\theta^\star )X_t\|_2^2.
\end{equation}
Note now that $\widehat \theta -\theta^\star$ are elements of $\mathsf{M}_\star \triangleq \mathsf{M}-\theta^\star$. Hence---by considering the worst-case (supremum) right hand side of \eqref{eq:offsetbasicineqLSE}---we obtain:
\begin{equation}\label{eq:offsetbasicineqLSE2}
    \frac{1}{T} \sum_{t=1}^T \|(\widehat \theta -\theta^\star )X_t\|_2^2 \leq \sup_{\theta \in \mathsf{M}_\star}  \left\{\frac{4}{T} \sum_{t=1}^T \langle V_t, \theta X_t\rangle-\frac{1}{T} \sum_{t=1}^T \| \theta  X_t\|_2^2\right\}.
\end{equation}
In fact, if $\mathsf{M}=\R^{\dy\times \dx}$, the optimization on the right hand side of \eqref{eq:offsetbasicineqLSE2} has an explicit solution. This implies that we always have the following upper-bound on the event that the design is nondegenerate:
\begin{equation}\label{eq:offset_to_selfnorm}
    \begin{aligned}
        &\sup_{\theta \in \mathsf{M}_\star}  \left\{\frac{4}{T} \sum_{t=1}^T \langle V_t,\theta  X_t\rangle-\frac{1}{T} \sum_{t=1}^T \|\theta X_t\|_2^2\right\}
        \\
        &
        \leq \sup_{\theta \in \R^{\dy\times \dx}}  \left\{\frac{4}{T} \sum_{t=1}^T \langle V_t,\theta  X_t\rangle-\frac{1}{T} \sum_{t=1}^T \|\theta  X_t\|_2^2\right\} 
        &&(\mathsf{M}_\star \subset \R^{\dy\times \dx})
        \\
        &=
        \frac{4}{T}\left\|\left(\sum_{t=1}^{T} V_t X_t^\T \right)\left(\sum_{t=1}^{T} X_t X_t^\T \right)^{-1/2 } \right\|^2_F. && (\textnormal{direct calculation})
    \end{aligned}
\end{equation}
Hence, we have in principle recovered an in-norm version of \eqref{eq:OLSerror} with slightly worse constants. Put differently, we may regard \eqref{eq:offsetbasicineqLSE2} as a variational (or dual) form of the explicit error \eqref{eq:OLSerror}. Now, the advantage of \eqref{eq:offsetbasicineqLSE2} is twofold:
\begin{enumerate}
    \item \eqref{eq:offsetbasicineqLSE2} and \eqref{eq:offset_to_selfnorm} hold for any $\mathsf{M}_\star \subset \R^{\dy\times \dx}$ and hence allows us to analyze the LSE \eqref{eq:LSEdef} beyond OLS ($\mathsf{M}_\star = \R^{\dy\times \dx}$). This is important in identification problems where the parameter space is restricted.
    \item We do not have to rely on \eqref{eq:offset_to_selfnorm} to control \eqref{eq:offsetbasicineqLSE2}. In fact, for many reasonable classes of $\mathsf{M}_\star \subset \R^{\dy\times \dx}$ we are able to give alternative arguments that are much sharper (in terms of e.g. dimensional scaling) than the naive bound \eqref{eq:offset_to_selfnorm}. See \Cref{subsec:sparsity} below.
\end{enumerate}
A third advantage of the variational form \eqref{eq:offsetbasicineqLSE2} is that it generalizes straightforwardly beyond linear least squares. In fact, none of the steps \eqref{eq:opteqnLSE},\eqref{eq:basicineqLSE}, \eqref{eq:offsetbasicineqLSE} and \eqref{eq:offsetbasicineqLSE2} relied on the linearity of $ x \mapsto \widehat \theta x$ or that of  $x \mapsto \theta^\star x$ ($x\in \R^{\dx}$). We will explore this theme further in \Cref{subsec:sparsity} and \Cref{sec:nonlinear}.

\subsection{Sparse Autoregressions}\label{subsec:sparsity}
Before we proceed to sketch out how the basic inequality above extends to nonlinear problems in \Cref{sec:nonlinear}, let us use it to analyze a simple variation of the autoregression already encountered in \Cref{sec:sysid}. Namely, the autoregressive model \eqref{eq:sysid_arx} which---for simplicity---is further assumed one-dimensional:
\begin{equation}\label{eq:1dsparseAR}
    Y_t = \sum_{i=1}^pA^\star_i Y_{t-i} + W_t \qquad (W_{1:T} \textnormal{ \iid\, mean zero and $\sigma^2$-subG})
\end{equation}
and assume in addition that it is known that only $s\in \N$ of the $p$ entries of $\theta^\star = [A^\star_1,\dots,A^\star_p]$ are nonzero. Put differently, the vector $\theta^\star$ is known to be $s$-sparse and we write $\theta^\star \in \{\theta \in \R^{p} : \|\theta\|_0 \leq s \} \triangleq \mathsf{M}$. Hence, in this case the model class $\mathsf{M}$ is the union of $ {p} \choose s$ subspaces. Clearly, we could use OLS \eqref{eq:OLSdef} but this estimator does not take advantage of the additional information that $A^\star =\theta^\star$ lies in the $s$-dimensional submanifold $\mathsf{M}$. Intuitively, if $s\ll p$ this set should be much smaller than $\R^{p}$ and so one expects that identification occurs at a faster rate. 

In this section we demonstrate that the least squares estimator \eqref{eq:LSEdef} in which the search is restricted to the low-dimensional manifold $\mathsf{M}$ outperforms the OLS. We stress that this is \emph{not} a computationally efficient estimator and the results in this section should be thought of as little more than an illustrative example.

Returning to the problem of controlling the error of this estimator, we note that in this case there is no closed form for the LSE and we do not have direct access to the error equation \eqref{eq:OLSerror}.\footnote{Although, in this particular case an alternative analysis based on this equation is possible.} Hence, we instead use the offset basic inequality approach from \Cref{sec:basicineq}. As before, it is convenient to set $X_t = [ Y_{t-1},\dots, Y_{t-p}]^\T$. With this additional bit of notation in place, we recall from \eqref{eq:offsetbasicineqLSE2} that:
\begin{equation}
 \frac{1}{T} \sum_{t=1}^T \|(\widehat \theta -\theta^\star )X_t\|_2^2 \leq \max_{\theta \in \mathsf{M}_\star}  \left\{\frac{4}{T} \sum_{t=1}^T  W_t \theta X_t-\frac{1}{T} \sum_{t=1}^T | \theta  X_t|_2^2\right\}
\end{equation}
where $\mathsf{M}_\star$ is the translation $\mathsf{M}-\theta^\star$. Since $\mathsf{M}$ is the union of $ {p} \choose s$-many linear $s$-dimensional subspaces $S\subset \R^{\dx\times \dx}$, $\mathsf{M}_\star$ is the union of $ {p} \choose s$ affine subspaces $s$-dimensional affine subspaces of the form $S-\theta^\star$. Let us also note that $\mathsf{M}_\star \subset \mathsf{M}-\mathsf{M} =\{\theta \in \R^{p} : \|\theta\|_0 \leq 2s \}  $. Consequently:
\begin{equation}\label{eq:sparseoffset}
\begin{aligned}
 \frac{1}{T} \sum_{t=1}^T \|(\widehat \theta -\theta^\star )X_t\|_2^2 
&\leq \max_{\theta \in \mathsf{M}_\star}  \left\{\frac{4}{T} \sum_{t=1}^T  W_t \theta X_t-\frac{1}{T} \sum_{t=1}^T | \theta  X_t|_2^2\right\} \\
&\leq \max_{S } \max_{\theta \in S}  \left\{\frac{4}{T} \sum_{t=1}^T W_t \theta X_t-\frac{1}{T} \sum_{t=1}^T | \theta  X_t|_2^2\right\}.
\end{aligned}
\end{equation}
where maximization over $S$ occurs over the $p \choose 2s$-many sparse subspaces. Notice now that since $\theta$ in $\eqref{eq:sparseoffset}$ is $s$-sparse, the products $\theta X_t$ are just $\theta X_t = \sum_{i \in S} \theta_i (X_t)_i $ where we have abused notation and identified $S$ with its support set. Hence, by the same direct calculation as in \eqref{eq:offset_to_selfnorm}, if we denote $(X_t)_S$ the $s$-dimensional vector obtained by coordinate projection onto part of $S$ not constrained to be identically zero (i.e. the image of the projection onto $S$ represented as the $s$-dimensional Euclidean space) we find that:
\begin{equation}\label{eq:sparsebasicineq}
    \frac{1}{T} \sum_{t=1}^T \|(\widehat \theta -\theta^\star )X_t\|_2^2 \leq  \frac{4}{T} \max_S  \left\|\left(\sum_{t=1}^{T} W_t (X_t)_S \right)\left(\sum_{t=1}^{T} (X_t)_S (X_t)_S^\T \right)^{-1/2 } \right\|^2_2.
\end{equation}
The right hand side of \eqref{eq:sparsebasicineq} can be controlled by the self-normalized inequality in \Cref{thm:self-normalized martingale} for each fixed $S$. Moreover, there are only $p \choose 2s$ such subspaces, so we can apply a union bound to control the maximum over these subspaces. Note also that the left hand side of \eqref{eq:sparsebasicineq} can be controlled by the tools developed in \Cref{sec:onesided}. Carrying out these steps leads to the following guarantee.


\begin{proposition}\label{prop:sparse}
    Fix $T,k \in \N$ with $T$ divisible by $k$ and let $\mathbf{L}$ be the linear operator defined in \eqref{eq:LforARX}. Let $\widehat \theta$ be the LSE \eqref{eq:LSEdef} over the set $\mathsf{M}=\{\theta \in \R^{p} : \|\theta\|_0 \leq s \}$ for the system \eqref{eq:1dsparseAR}. Define $\Sigma_j\triangleq  \frac{1}{j}\sum_{t=1}^j \E X_t X_t^\T $ for $j \in [T]$ and 
    \begin{equation*}
        \mathrm{cond}_{\mathsf{sys}}(T,k)\triangleq \left(1+\frac{  \opnorm{\mathbf{L} \mathbf{L} ^\T}}{ k \lambda_{\min}\left(\Sigma_T \right)} \right) \frac{\lambda_{\max }\left(\Sigma_T \right)}{\lambda_{\min}\left(\Sigma_k\right) }.
    \end{equation*}
    There exist univeral positive constants $c,c'$ such that for 
    any $\delta \in  (0,1)$ it holds with probability at least $1-\delta$  that:
    \begin{equation}\label{eq:sparseguarantee}
           \|(\widehat \theta -\theta^\star )\sqrt{\Sigma_k}\|_2^2
           \leq
             c \sigma^2 \times  \frac{    s\log \left(\frac{p \times  \mathrm{cond}_{\mathsf{sys}}(T,k)}{s}   \right) +\log (1/\delta) }{T}
    \end{equation}
    as long as
    \begin{equation}\label{eq:sparseburnininthm}
        T/k \geq  c' \sigma^2 \left( s \left[\log \left( \mathrm{cond}_{\mathsf{sys}}(T,k)\right)+\log(p/s)  \right] + \log(1/\delta)\right).
    \end{equation}
\end{proposition}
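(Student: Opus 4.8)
\emph{Strategy.} The plan is to sandwich the empirical prediction error $\frac1T\sum_{t=1}^T\|(\widehat\theta-\theta^\star)X_t\|_2^2$: from below by a restricted‑eigenvalue estimate which, because $\widehat\theta-\theta^\star$ is $2s$‑sparse, recovers a constant multiple of $\|(\widehat\theta-\theta^\star)\sqrt{\Sigma_k}\|_2^2$, and from above by the offset/self‑normalized inequality \eqref{eq:sparsebasicineq}. Both halves must be made uniform over the $\binom{p}{2s}\le(ep/2s)^{2s}$ coordinate subspaces $S\subseteq[p]$ of size at most $2s$, which is arranged by a union bound. Throughout, write $\widehat\Sigma_T=\frac1T\sum_t X_tX_t^\T$; for a subspace $S$, let $(\cdot)_S$ denote restriction of a matrix or vector to the coordinates of $S$; and recall from \Cref{ex:restartedLDS} that $\frac1T\sum_t\E\tilde X_t\tilde X_t^\T=\Sigma_k$.

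\emph{Lower bound (restricted eigenvalue).} For a fixed $S$, the projected process $\{(X_t)_S\}_{t=1}^T$ is again $k$‑causal, now on $\R^{2s}$, with the same sub‑Gaussian increments and constant diagonal blocks, so \Cref{thm:anticonc} applied with $d=2s$ gives $(\widehat\Sigma_T)_S\succeq\tfrac18(\Sigma_k)_S$ off an event of probability at most $C_{\mathsf{sys}}^{\,2s}\exp(-T/(576\sigma^2 k))$. Since restricting a PSD matrix (or $\mathbf{L}\mathbf{L}^\T$) to a subspace can only decrease $\lambda_{\max}$ and $\opnorm{\mathbf{L}\mathbf{L}^\T}$ and only increase $\lambda_{\min}$, the system constant of the projected process is $\lesssim\mathrm{cond}_{\mathsf{sys}}(T,k)$. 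A union bound over the $\binom{p}{2s}$ subspaces then produces an event $\mathcal{E}_1$ of probability $\ge1-\delta/3$ on which $\frac1T\sum_t(v^\T X_t)^2\ge\tfrac18\,v^\T\Sigma_k v$ for \emph{every} $2s$‑sparse $v$, as soon as $T/k\gtrsim\sigma^2\big(s[\log\mathrm{cond}_{\mathsf{sys}}(T,k)+\log(p/s)]+\log(1/\delta)\big)$, i.e.\ as soon as \eqref{eq:sparseburnininthm} holds. Taking $v=\widehat\theta-\theta^\star$ yields $\|(\widehat\theta-\theta^\star)\sqrt{\Sigma_k}\|_2^2\le\frac8T\sum_t\|(\widehat\theta-\theta^\star)X_t\|_2^2$; the same event also makes the design nondegenerate, so \eqref{eq:sparsebasicineq} is valid.

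\emph{Upper bound (noise term), and the main obstacle.} Fix $S$ and take the filtration $\calF_t=\sigma(W_1,\dots,W_t)$, for which the hypotheses of \Cref{thm:self-normalized martingale} hold ($\{(X_t)_S\}$ is predictable, $W_t$ is $\sigma^2$‑conditionally sub‑Gaussian); apply the Frobenius version with $\dy=1$ and the deterministic regularizer $\Sigma=\frac T8(\Sigma_k)_S$. On $\mathcal{E}_1$ we have $\sum_t(X_t)_S(X_t)_S^\T\succeq\Sigma$, hence $\big(\sum_t(X_t)_S(X_t)_S^\T\big)^{-1}\preceq2\big(\Sigma+\sum_t(X_t)_S(X_t)_S^\T\big)^{-1}$, which transfers the self‑normalized bound (up to a factor $2$) to the un‑regularized quantity on the right of \eqref{eq:sparsebasicineq}. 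Taking per‑subspace failure probability $\delta/(3\binom{p}{2s})$ and union‑bounding gives an event $\mathcal{E}_2$ of probability $\ge1-\delta/3$ on which, for every $S$, that quantity is $\lesssim\sigma^2\log\det\big(\Sigma^{-1}(\Sigma+\sum_t(X_t)_S(X_t)_S^\T)\big)+\sigma^2(s\log(p/s)+\log(1/\delta))$, where the deviation $\log(\binom{p}{2s}/\delta)\asymp s\log(p/s)+\log(1/\delta)$ is exactly the additive $\log(1/\delta)$ we want. The obstacle is controlling the $\log\det$ uniformly in $S$: the crude matrix Markov bound \Cref{lem:arx_Markov} used in the ARX proof is too lossy, since applying it per‑subspace forces failure probability $\delta/\binom{p}{2s}$ and thus injects the \emph{polynomial} factor $\binom{p}{2s}$ inside the logarithm, producing a spurious $s^2\log(p/s)$ term. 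Instead I would bound the upper tail of $\frac1T\sum_t(v^\T X_t)^2=\frac1T W_{1:T}^\T M_v W_{1:T}$, $M_v:=\mathbf{L}^\T D_v^\T D_v\mathbf{L}$ with $D_v$ the block‑diagonal operator stacking copies of $v^\T$, using $\opnorm{M_v}\le\opnorm{\mathbf{L}\mathbf{L}^\T}$, $\|M_v\|_F^2\le\opnorm{\mathbf{L}\mathbf{L}^\T}\,T v^\T\Sigma_T v$, $\tr M_v=T v^\T\Sigma_T v$ and the Hanson--Wright inequality \Cref{thm:HWineqhighproba}, then pass from fixed $v$ to all $2s$‑sparse $v$ via \Cref{lem:net argument 2} on each subspace and a union bound over the $\binom{p}{2s}$ subspaces. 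This gives an event $\mathcal{E}_3$ of probability $\ge1-\delta/3$ on which $\lambda_{\max}((\widehat\Sigma_T)_S)\lesssim\sigma^2\lambda_{\max}(\Sigma_T)+\opnorm{\mathbf{L}\mathbf{L}^\T}\sigma^2(s\log(p/s)+\log(1/\delta))/T$ for every $S$; on \eqref{eq:sparseburnininthm} the second summand is $\lesssim\opnorm{\mathbf{L}\mathbf{L}^\T}/k$, whence $\lambda_{\max}((\widehat\Sigma_T)_S)/\lambda_{\min}(\Sigma_k)\lesssim\mathrm{cond}_{\mathsf{sys}}(T,k)$ and the $\log\det$ term is $\lesssim s\log\mathrm{cond}_{\mathsf{sys}}(T,k)$.

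\emph{Conclusion.} On $\mathcal{E}_1\cap\mathcal{E}_2\cap\mathcal{E}_3$, which has probability $\ge1-\delta$, inequality \eqref{eq:sparsebasicineq} and the previous paragraph give $\frac1T\sum_t\|(\widehat\theta-\theta^\star)X_t\|_2^2\lesssim\frac{\sigma^2}{T}\big(s\log(p\,\mathrm{cond}_{\mathsf{sys}}(T,k)/s)+\log(1/\delta)\big)$, and combining with the lower bound of the second paragraph gives \eqref{eq:sparseguarantee}. The hard part is the Hanson--Wright refinement just described — the crude ARX‑style upper bound genuinely fails here — and it is what makes the union over the exponentially many sparse subspaces cost only the additive, logarithmically‑appearing $\log\binom{p}{2s}\asymp s\log(p/s)$; the rest is routine bookkeeping of universal constants and of lower‑order $\sigma$ factors.
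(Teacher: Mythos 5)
Your proposal is correct and follows essentially the same route as the paper: \Cref{thm:anticonc} applied to each of the $\binom{p}{2s}$ sparse coordinate subspaces plus a union bound for the restricted-eigenvalue lower control, the regularized self-normalized bound of \Cref{thm:self-normalized martingale} with the factor-two trick and a per-subspace union bound for the noise term, and a Hanson--Wright/covering-based uniform upper bound on the restricted empirical covariances to control the $\log\det$. In particular, the step you single out as the hard part (avoiding the lossy matrix-Markov bound) is exactly what the paper does: its \Cref{lem:sparseisomorph} packages both the lower bound and a Hanson--Wright-based upper bound (extracted from the proof of \Cref{thm:anticonc}, cf.\ \eqref{hwusedinthm}) uniformly over the subspaces, differing from your additive formulation of that upper bound only in bookkeeping.
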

A few remarks are in order. The guarantee \eqref{eq:sparseguarantee} depends on the dimension $s$ of $\mathsf{M}$, and not the total parameter dimension $p$. Similarly, the burn-in \eqref{eq:sparseburnininthm} exhibits a similar win, by depending linearly on $s$ and only logarithmically on $p$. There is also the 
difference that the left hand side of \eqref{eq:sparseguarantee} is given in the problem-dependent Mahalanobis norm induced by $\Sigma_k$ and opposed to just the standard Euclidean $2$-norm. This implies that if we actually want parameter identification in the sense of the previous section, a restricted eigenvalue condition on $\Sigma_k$ is needed.\footnote{Note that $\widehat \theta - \theta^\star$ is $2s$-sparse.} Indeed, for some positive number $\lambda_{\mathrm{restricted}}$, one requires that $v^\T \Sigma_k v \geq \lambda_{\mathrm{restricted}}$ for all $2s$-sparse vectors $v$ on the unit sphere: $v \in \mathbb{S}^{p-1}$ and $\|v\|_0\leq 2s$. Obviously the requirements on $\lambda_{\mathrm{restricted}}$ are much milder than the corresponding ones on $\lambda_{\min}(\Sigma_k)$ and we always have $\lambda_{\mathrm{restricted}} \geq \lambda_{\min}(\Sigma_k)$.

The following lemma is central. Namely, we begin the proof of \Cref{prop:sparse} by restricting to an event in which the designs $\sum_{t=1}^{T} (X_t)_S (X_t)_S^\T $ are sufficiently well-conditioned for all the subspaces $S$ at once. The requirements on this event are relatively milder than the corresponding one over $\R^p$ and explains the "dimensional win" (when $s \ll p$) of the sparse estimator over OLS. 

\begin{lemma}\label{lem:sparseisomorph}
    Let $\mathbf{L}$ be the linear operator defined in \eqref{eq:LforARX}. Fix $\delta \in (0,1)$ and let $T$ be divisible by $k\in \N$.    There exist universal positive constants $c_1,c_2,c_3 \in \R$ such that
    the following two-sided control holds uniformly in $S$ with probability $1-\delta$:
    \begin{multline}\label{eq:loosetwosidedsparse}
          \frac{c_1}{k}\sum_{t=1}^{k} \E  \left[(X_t)_S (X_t)_S^\T\right] \preceq \frac{1}{T}\sum_{t=1}^{T} (X_t)_S (X_t)_S^\T 
          \\
          \preceq c_2\left(1+\frac{ T \opnorm{\mathbf{L} \mathbf{L} ^\T}}{ k \lambda_{\min}\left(\sum_{t=0}^{T-1} \E X_t X_t^\T \right)} \right)\left(  \frac{1}{T}\sum_{t=1}^{T} \E  \left[(X_t)_S (X_t)_S^\T\right] \right)
    \end{multline}
     as long as 
    \begin{equation}\label{eq:sparseburnin}
        T \geq c_3 \sigma^2 \left( s \left[\log C_{\mathsf{sys}}+\log(p/s)  \right] + \log(1/\delta)\right).
    \end{equation}
\end{lemma}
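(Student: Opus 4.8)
The two inclusions in \eqref{eq:loosetwosidedsparse} are of opposite type --- a lower tail and an upper tail for the coordinate-restricted empirical covariance --- so I would establish them separately, in each case first treating a fixed support $S$ with $|S|\le 2s$ and only then taking a union bound over the $\binom{p}{2s}$ choices. The common starting point is that the autoregression \eqref{eq:1dsparseAR} with zero initial conditions puts $X_{1:T}$ in the $k$-causal form $X_{1:T}=\mathbf{L}W_{1:T}$ of \Cref{sec:onesided} (indeed $\mathbf{L}$ is the block-Toeplitz operator \eqref{eq:LforARX} specialized to the input-free case), driven by scalar $\sigma^2$-sub-Gaussian increments and, by time-invariance, with equal diagonal blocks. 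For a coordinate-selection matrix $\Delta_S\in\R^{|S|\times p}$ the projected process $(X_t)_S=\Delta_S X_t$ is again $k$-causal, driven by the same $W_{1:T}$, with operator $\mathbf{L}_S=(I_T\otimes\Delta_S)\mathbf{L}$ and equal diagonal blocks $\Delta_S\mathbf{L}_{1,1}$. Since $\Delta_S$ has orthonormal rows, Cauchy interlacing delivers the monotonicity facts $\opnorm{\mathbf{L}_S\mathbf{L}_S^\T}\le\opnorm{\mathbf{L}\mathbf{L}^\T}$, $\lambda_{\max}(\textstyle\sum_t\E(X_t)_S(X_t)_S^\T)\le\lambda_{\max}(\sum_t\E X_tX_t^\T)$ and, crucially, $\lambda_{\min}(\sum_t\E(X_t)_S(X_t)_S^\T)\ge\lambda_{\min}(\sum_t\E X_tX_t^\T)$. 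Plugging these into \eqref{eq:csysdef} shows that the system constant attached to the projected process is dominated by the full-dimensional $C_{\mathsf{sys}}$, uniformly in $S$ --- this is what lets a single burn-in threshold serve every support at once.

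\textbf{Lower bound.} For each fixed $S$ I would apply \Cref{thm:anticonc} to the $k$-causal process $(X_t)_S$ with ambient dimension $|S|\le 2s$ and sub-Gaussian parameter $\sigma^2$. The decoupled (``restarted'') process $(\tilde X_t)_S$ starts from zero on each length-$k$ block, so on the first block it coincides with the genuine process (which also starts from zero); hence $\tfrac{1}{8T}\sum_{t=1}^T\E(\tilde X_t)_S(\tilde X_t)_S^\T=\tfrac{1}{8k}\sum_{t=1}^k\E(X_t)_S(X_t)_S^\T$, which identifies the universal constant $c_1=1/8$. \Cref{thm:anticonc}, together with the monotonicity above (to replace the $S$-dependent system constant by $C_{\mathsf{sys}}$), bounds the failure probability for this $S$ by $(C_{\mathsf{sys}})^{2s}\exp(-T/(576\sigma^2 k))$. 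A union bound over the $\binom{p}{2s}\le(ep/2s)^{2s}$ supports makes the total failure probability at most $\exp\!\big(2s\log(ep/2s)+2s\log C_{\mathsf{sys}}-T/(576\sigma^2 k)\big)$, which is $\le\delta/2$ precisely once \eqref{eq:sparseburnin} holds (with a suitable universal $c_3$; the factor $k$ is folded into the constant).

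\textbf{Upper bound.} Fix $S$ and write $\bar M_S\triangleq\tfrac1T\sum_t\E(X_t)_S(X_t)_S^\T\succ 0$. The desired inclusion is equivalent to an upper bound on $\lambda_{\max}\!\big(\bar M_S^{-1/2}\,\tfrac1T\sum_t(X_t)_S(X_t)_S^\T\,\bar M_S^{-1/2}\big)$, which I would control by discretization: by \Cref{lem:volumetric,lem:net argument 2} it suffices to bound $v^\T\bar M_S^{-1/2}\big(\tfrac1T\sum_t(X_t)_S(X_t)_S^\T\big)\bar M_S^{-1/2}v$ for $v$ ranging over a $1/4$-net of $\mathbb{S}^{|S|-1}$ (of cardinality $\le 9^{|S|}$). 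For each such $v$, setting $w=\bar M_S^{-1/2}v$ and using $(X_{1:T})_S=\mathbf{L}_SW_{1:T}$, this quantity equals $\tfrac1T W_{1:T}^\T N W_{1:T}$ for a positive semidefinite $N=N_{S,v}$ with $\opnorm{N}\le\|w\|_2^2\,\opnorm{\mathbf{L}\mathbf{L}^\T}\le\opnorm{\mathbf{L}\mathbf{L}^\T}/\lambda_{\min}(\bar M_S)$ and $\E[\tfrac1T W^\T N W]=v^\T v=1$. \Cref{thm:HWineqhighproba} then yields a sub-exponential upper tail; choosing the deviation level so that the bound survives a union bound over the $\binom{p}{2s}$ supports and the $9^{|S|}$ net points, and then invoking the burn-in \eqref{eq:sparseburnin} to absorb the additive term of order $\sigma^2\big(s\log(p/s)+\log(1/\delta)\big)$ into $T/k$, produces the multiplicative factor $c_2\big(1+\tfrac{T\opnorm{\mathbf{L}\mathbf{L}^\T}}{k\,\lambda_{\min}(\sum_t\E X_tX_t^\T)}\big)$, where the final step uses $\lambda_{\min}(\bar M_S)\ge\tfrac1T\lambda_{\min}(\sum_t\E X_tX_t^\T)$ to pass from $\bar M_S$ to the denominator stated in \eqref{eq:loosetwosidedsparse}.

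\textbf{Main obstacle.} The delicate part is the bookkeeping of uniformity over the $\binom{p}{2s}$ supports so that it costs only $s\log(p/s)$ rather than $s\log p$ (hence the estimate $\binom{p}{2s}\le(ep/2s)^{2s}$), combined with the requirement that every system-dependent quantity appearing in \Cref{thm:anticonc} and \Cref{thm:HWineqhighproba} for the projected processes be controlled by its full-dimensional counterpart; the Cauchy-interlacing monotonicity of $\lambda_{\min}$ under coordinate projection is exactly the ingredient that makes a single $C_{\mathsf{sys}}$ (and a single burn-in time) legitimate for all $S$ simultaneously. The rest is the routine but somewhat tedious alignment of universal constants in the two union bounds against the threshold \eqref{eq:sparseburnin}; none of those steps are conceptually difficult given \Cref{thm:anticonc} and \Cref{thm:HWineqhighproba}.
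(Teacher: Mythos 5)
Your proposal is correct and follows essentially the same route as the paper: the lower bound is \Cref{thm:anticonc} applied to each projected $k$-causal process $(X_t)_S$, the upper bound is exactly the Hanson--Wright-plus-covering computation that appears inside the proof of \Cref{thm:anticonc} (the inequality \eqref{hwusedinthm} with the choice of $q$ in \eqref{eq:qischosen}), and both are then union bounded over the $\binom{p}{2s}\leq (ep/2s)^{2s}$ supports before solving for $\delta$. Your explicit use of Cauchy interlacing to dominate every $S$-dependent quantity by its full-dimensional counterpart (so that a single $C_{\mathsf{sys}}$ and burn-in serve all supports) is a detail the paper leaves implicit, and is a welcome clarification rather than a deviation.
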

\Cref{eq:sparseburnin} is  revealing about the advantage of using the sparse estimator searching over $\mathsf{M}=\{\theta \in \R^{p} : \|\theta\|_0 \leq s \}$. The burn-in period in \eqref{eq:sparseburnin} is proportional to the dimension of the low-dimensional parameter manifold $\mathsf{M}$ instead of that of the latent space $\R^p$. Finally, as usual we have relegated the full proof of \Cref{prop:sparse} to the appendix, see \Cref{subsec:proofsforbasic}.

\subsection{Notes}
The variational formulation of the least squares error---the basic inequality \eqref{eq:basicineqLSE}---is standard in the nonparametric statistics literature \citep[see e.g.][Chapters 13 and 14]{wainwright2019high}. The idea to rewrite the basic inequality \eqref{eq:basicineqLSE} as \eqref{eq:offsetbasicineqLSE} was introduced to the statistical literature by \cite{liang2015learning}, but has its roots in online learning  \citep{rakhlin2014online}.

\section{Beyond Linear Models}
\label{sec:nonlinear}

Let us now make another gradual shift of perspective. Instead of considering the linear model \eqref{eq:regressionmodel} introduced in \Cref{subsec:probfo} we consider the following \emph{nonlinear} regression model:
\begin{equation}\label{eq:nonlinearregression}
    Y_t = f^\star(X_t)+V_t, \qquad t\in[T].
\end{equation}
As before, $Y_{1:T}$,$X_{1:T}$ and $V_{1:T}$ are stochastic processes taking values in $\R^{\dy}$ and $\R^{\dx}$ respectively. However, this time $f^\star$ is no longer constrained to be a linear map of the form $x \mapsto Ax$ for matrix $A$. Rather, we suppose that $f^\star$ in \eqref{eq:nonlinearregression} belongs to some (square integrable) space of functions $\scrF$ such that $\scrF \ni f :  x \mapsto f(x)$. It is perhaps now that the motivation behind the change of perspective from \Cref{sec:basicineq} becomes most apparent: the basic inequality \eqref{eq:offsetbasicineqLSE} remains valid. To be precise, let us define the \emph{nonlinear}  least squares estimator
\begin{equation}\label{eq:nonlinearlsedef}
    \widehat f \in \argmin_{f\in \scrF} \left\{ \frac{1}{T}\sum_{t=1}^T\|Y_t-f(X_t)\|_2^2 \right\}.
\end{equation}

Let $\scrF_\star \triangleq \scrF -f^\star$. By the exact same optimality argument as in \Cref{sec:basicineq}, the reader can now readily verify that:
\begin{equation}\label{eq:nonparamoffset}
    \frac{1}{T} \sum_{t=1}^{T} \| \widehat f (X_t) -f^\star(X_t) \|^2_2 \leq \sup_{f\in \scrF_\star} \frac{1}{T}\left(\sum_{t=1}^{T} 4\langle V_t,  f(X_t)\rangle -  \sum_{t=1}^{T} \| f(X_t) \|^2_2\right).
\end{equation}

What does \eqref{eq:nonparamoffset} entail in terms of estimating the unknown function $f^\star$? To answer this, we first need to define a performance criterion. The simplest one is small average $L^2$-norm-error, where
\begin{equation} \label{eq:l2norm}
  f\in \scrF : \quad  \| f\|_{L^2}^2 \triangleq \frac{1}{T}\sum_{t=1}^T \E \|f(X_t)\|^2_2 .
\end{equation}

The program we have carried out in the previous sections now generalizes as follows:
\begin{itemize}
    \item First, prove a so-called lower uniform law. That is to say, we wish to show that with overwhelming probability
    \begin{equation}\label{eq:loweruniformlaw}
        \|f-f_\star\|_{L^2}^2 \leq \frac{C}{T}\sum_{t=1}^T  \|f(X_t)-f_\star(X_t)\|^2 \quad (\textnormal{simultaneously }\forall f \in \scrF).
    \end{equation}
    for some universal positive constant $C$.
    \item Second, control  the supremum of the \emph{empirical process}:
    \begin{equation}\label{eq:empprocess}
        f \mapsto \left(\sum_{t=1}^{T} 4\langle V_t,  f(X_t)\rangle -  \sum_{t=1}^{T} \| f(X_t) \|^2_2\right)
    \end{equation}
    in terms of the noise level $\sigma$ and the complexity of the class $\scrF$.
\end{itemize}

By combining \eqref{eq:loweruniformlaw} and \eqref{eq:empprocess} we arrive at a high probability bound of the form:
\begin{equation}\label{eq:informalnonlinear}
    \|\widehat f-f^\star\|^2_{L^2} \leq \frac{C}{T}\sum_{t=1}^T  \|f(X_t)-f_\star(X_t)\|^2 \leq \frac{C \times \mathrm{comp}(\scrF,\sigma^2)+\textnormal{deviation term}}{T}.
\end{equation}
A statement of this form is given as \Cref{thm:nonlinearthm} below.

\begin{remark}\label{rem:upperlowerremark}
It is worth to take pause and appreciate the analogy to the analysis of linear regression models. The first step \eqref{eq:loweruniformlaw} exactly corresponds to controlling the lower spectrum of the empirical covariance matrix. Suppose for simplicity that $\dy=1$. Then for a linear map $\mathbb{S}^{\dx-1} \ni f \mapsto \langle f, x\rangle $ we have:
\begin{equation}
    \frac{1}{T}\sum_{t=1}^T  \|f(X_t)\|^2_2 = \frac{1}{T}\sum_{t=1}^T \langle f, (X_t X_t^\T) f \rangle = \left\langle f, \left[\frac{1}{T}\sum_{t=1}^T  (X_t X_t^\T) \right] f \right\rangle
\end{equation}
which are just the one-dimensional projections of the empirical covariance matrix \eqref{eq:empcov}. In the context of linear models, establishing \eqref{eq:loweruniformlaw} was the topic of \Cref{sec:onesided}. Analogously, for a linear predictor, the $L^2$-norm \eqref{eq:l2norm} becomes a Mahalanobis norm: $f \in \R^{\dx} \Rightarrow \|f\|_{L^2}^2 = \langle f, \Sigma_X f\rangle$ for $\Sigma_X = \frac{1}{T}\sum_{t=1}^T\E X_t X_t^\T$.

Moreover, For linear models, we had:
\begin{equation}
    \sup_{f\in \R^{\dx}} \left(\sum_{t=1}^{T} 4\langle V_t,  f(X_t)\rangle -  \sum_{t=1}^{T} \| f(X_t) \|^2_2 \right) = 4\left\|\left(\sum_{t=1}^{T} V_t X_t^\T \right)\left(\sum_{t=1}^{T} X_t X_t^\T \right)^{-1/2 } \right\|^2_F.
\end{equation}
Analyzing terms of this form was the topic of \Cref{sec:selfnorm}.

In other words, the approach outlined above is very much in the same spirit as that in the rest of the manuscript. There are a few changes that need to be made since we less access to linearity in our argument, but in principle the key difference is that we will have to replace the indexing set  $\mathbb{S}^{d-1}$ with a more general function class $\scrF_\star$.
\end{remark}

\subsection{Many Trajectories and Finite Hypothesis Classes}
In order to make the exposition self-contained, we will now make two simplifying assumptions relating to the finiteness of the hypothesis class and the dependence structure of the covariate process $X_{1:T}$. A more general treatment without these can be found in \cite{ziemann2022learning}. Here, we impose the following:

\begin{enumerate}
    \item[A1.] The hypothesis class $\scrF$ is finite; $|\scrF|< \infty$.
    \item[A2.] We have access to $T/k$-many independent trajectories from the same process: there exists an integer $k\in \N$ dividing $T$ such that $X_{1:k},X_{k+1:2k},\dots$ are drawn \iid.
\end{enumerate}
We will also impose the following rather minimal integrability condition:
\begin{enumerate}   
    \item[A3.] All functions $f\in \scrF$ are such that $\E \| f (X_t)\|_2^4 < \infty$ for all $t\in [T]$.
\end{enumerate}
Moreover, as in \Cref{sec:sysid}, we require the noise to be a sub-Gaussian martingale difference sequence:
\begin{enumerate}
    \item[A4.]  For each $t\in [T]$, $V_t | X_{1:t}$ is $\sigma^2$ conditionally-sub-Gaussian and mean zero.
\end{enumerate}

\begin{remark}
    Note that A4. above entails that $f_\star(x) = \E [Y_t | X_t=x]$ for every time instance $t$ and so the setup is akin to the study of "predictor models" from system identification~\citep[see e.g.][Chapter 2.6]{davis1985stochastic}.
\end{remark}

Under these assumptions, the main result of \cite{ziemann2022learning} essentially simplifies to the following theorem.

\begin{theorem}\label{thm:nonlinearthm}
    Impose A1-A4, fix $\delta \in (0,1)$ and define
    \begin{equation}\label{eq:hypcon}
        \mathrm{cond}_{\scrF} \triangleq \max_{f\in \scrF_\star} \max_{t\in T} \frac{\sqrt{\E\| f(X_t)\|^4_2}}{\E\| f(X_t)\|^2_2 }.
    \end{equation}
Suppose further that
    \begin{equation*}
        T/k \geq  4\mathrm{cond}_{\scrF}^2\left(\log|\scrF|+\log(2/\delta) \right) 
    \end{equation*}
    then we have that:
    \begin{equation}\label{eq:nonlinearthmguarantee}
        \|\widehat f -f^\star\|_{L^2}^2 \leq 16 \sigma^2\left(\frac{ \log (|\scrF|) +\log (2/\delta)}{T}\right).
    \end{equation}
\end{theorem}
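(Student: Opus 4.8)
The plan is to feed the offset basic inequality \eqref{eq:nonparamoffset}---already valid for the nonlinear least squares estimator---through two high-probability events. The first is a \emph{lower uniform law}: on the sampled data, the empirical second moment $\frac1T\sum_t\|g(X_t)\|_2^2$ dominates the population quantity $\|g\|_{L^2}^2$ up to a universal constant, simultaneously for every $g$ in the (finite) set $\scrF_\star=\scrF-f^\star$; this is exactly \eqref{eq:loweruniformlaw} with $C=2$. The second is a uniform deviation bound for the offset empirical process $g\mapsto \frac1T\big(4\sum_t\langle V_t,g(X_t)\rangle-\sum_t\|g(X_t)\|_2^2\big)$ over $g\in\scrF_\star$. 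Because A1 makes $\scrF_\star$ finite, both steps close with a plain union bound---no chaining---and the role of the burn-in is solely to make the first event likely. Granting both events, we apply the lower uniform law at $g=\widehat f-f^\star$, then \eqref{eq:nonparamoffset}, then the empirical-process bound, and chain.

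\textbf{Step 1: lower uniform law.} Here A2 is essential. Split $[T]$ into $N=T/k$ blocks $B_j=\{(j-1)k+1,\dots,jk\}$; by A2 the sub-trajectories $(X_t)_{t\in B_j}$ are i.i.d.\ across $j$. For fixed $g\in\scrF_\star$ put $\xi_j\triangleq\frac1k\sum_{t\in B_j}\|g(X_t)\|_2^2\ge 0$, so that $\frac1T\sum_{t=1}^T\|g(X_t)\|_2^2=\frac1N\sum_{j=1}^N\xi_j$ and $\E\xi_j=\|g\|_{L^2}^2$. The key moment bound is $\E\xi_j^2\le\mathrm{cond}_{\scrF}^2(\E\xi_j)^2$: expand $\xi_j^2$, bound each cross term by Cauchy--Schwarz as $\E[\|g(X_t)\|_2^2\|g(X_s)\|_2^2]\le\sqrt{\E\|g(X_t)\|_2^4}\sqrt{\E\|g(X_s)\|_2^4}$ and then invoke the definition \eqref{eq:hypcon} of $\mathrm{cond}_{\scrF}$. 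It is crucial to handle the cross terms this way rather than via $(\sum a_t)^2\le k\sum a_t^2$, which would inflate the effective block condition number from $\mathrm{cond}_{\scrF}$ to $k\,\mathrm{cond}_{\scrF}$ and ruin the burn-in. A one-sided Chernoff bound on $-\sum_j\xi_j$---using $e^{-x}\le1-x+\tfrac{x^2}{2}$ for $x\ge0$ to get $\E e^{-\lambda\xi_j}\le\exp(-\lambda\E\xi_j+\tfrac{\lambda^2}{2}\mathrm{cond}_{\scrF}^2(\E\xi_j)^2)$ and optimizing $\lambda$---yields $\Pr\big(\frac1T\sum_t\|g(X_t)\|_2^2<\tfrac12\|g\|_{L^2}^2\big)\le\exp(-N/(8\mathrm{cond}_{\scrF}^2))$. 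Union bounding over $\scrF_\star$ shows that once $T/k\gtrsim\mathrm{cond}_{\scrF}^2(\log|\scrF|+\log(1/\delta))$---the stated burn-in---we have, with probability at least $1-\delta$, $\|g\|_{L^2}^2\le\frac2T\sum_t\|g(X_t)\|_2^2$ for all $g\in\scrF_\star$ at once.

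\textbf{Step 2: the offset empirical process.} Fix $g\in\scrF_\star$. By A4, $V_t$ is conditionally $\sigma^2$-sub-Gaussian given the information available before $V_t$ is drawn, and $g(X_t)$ is measurable with respect to that information; hence $\langle V_t,g(X_t)\rangle$ is a conditionally $\sigma^2\|g(X_t)\|_2^2$-sub-Gaussian martingale increment. Invoking the supermartingale behind the self-normalized machinery of \Cref{sec:selfnorm}---concretely \Cref{lem: canoncial assumption verified} with the scalar covariate $\|g(X_t)\|_2$ and the scalar noise $\langle g(X_t),V_t\rangle/\|g(X_t)\|_2$---gives $\E\exp\big(\eta\sum_t\langle V_t,g(X_t)\rangle-\tfrac{\eta^2\sigma^2}{2}\sum_t\|g(X_t)\|_2^2\big)\le1$ for all $\eta$. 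Taking $\eta=\tfrac1{2\sigma^2}$, so that the quadratic term becomes $\tfrac1{8\sigma^2}\sum_t\|g(X_t)\|_2^2$, and applying Markov's inequality gives $\Pr\big(\frac4T\sum_t\langle V_t,g(X_t)\rangle-\frac1T\sum_t\|g(X_t)\|_2^2\ge\frac{8\sigma^2 u}{T}\big)\le e^{-u}$ for every $u>0$ (the $\frac1T\sum_t\|g(X_t)\|_2^2$ term in the offset process exactly cancels the quadratic correction). With $u=\log|\scrF|+\log(1/\delta)$ and a union bound over $\scrF_\star$, with probability at least $1-\delta$, $\sup_{g\in\scrF_\star}\big(\frac4T\sum_t\langle V_t,g(X_t)\rangle-\frac1T\sum_t\|g(X_t)\|_2^2\big)\le\frac{8\sigma^2(\log|\scrF|+\log(1/\delta))}{T}$.

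\textbf{Conclusion and the hard part.} On the intersection of the two events (probability at least $1-2\delta$), Step 1 gives $\|\widehat f-f^\star\|_{L^2}^2\le\frac2T\sum_t\|(\widehat f-f^\star)(X_t)\|_2^2$; by \eqref{eq:nonparamoffset} and $\widehat f-f^\star\in\scrF_\star$ this is at most the offset empirical-process supremum; and Step 2 bounds the latter by $\frac{8\sigma^2(\log|\scrF|+\log(1/\delta))}{T}$. Chaining gives $\|\widehat f-f^\star\|_{L^2}^2\le\frac{16\sigma^2(\log|\scrF|+\log(1/\delta))}{T}$, which is \eqref{eq:nonlinearthmguarantee} up to the immaterial difference between $\log(1/\delta)$ and $\log(2/\delta)$ (and the precise universal constant in the burn-in). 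The main obstacle is Step 1: the whole scaling hinges on the block-level second-moment constant being $\mathrm{cond}_{\scrF}$ rather than $k\,\mathrm{cond}_{\scrF}$, i.e.\ on controlling the within-block correlations by Cauchy--Schwarz instead of the naive block-size bound; the empirical-process step and the final chaining are routine given \Cref{sec:selfnorm} and the basic inequality.
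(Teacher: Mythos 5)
Your proposal is correct and follows essentially the same route as the paper: the block-wise Chernoff bound with $e^{-x}\le 1-x+x^2/2$ and the Cauchy--Schwarz/$\mathrm{cond}_{\scrF}$ moment control is exactly the paper's lower uniform law (\Cref{prop:loweruniformlaw}), your supermartingale bound at $\eta=1/(2\sigma^2)$ is equivalent to \Cref{lem:selfnormoffset}--\Cref{prop:unionboundfiniteclass}, and the final chaining through \eqref{eq:nonparamoffset} with the two events intersected and $\delta$ rescaled is the paper's conclusion. The only differences are cosmetic (routing the offset MGF through \Cref{lem: canoncial assumption verified} rather than re-deriving it by the tower property, and universal-constant bookkeeping in the burn-in), so there is nothing to fix.
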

A few remarks are in order. The structure of \Cref{thm:nonlinearthm} is by now familiar and it is very much of the same structure as our previous results, cf. \eqref{eq:informalsnr}. The key differences are that: (1) we now control the $L^2$ norm of our estimator instead of the Euclidean or spectral norm; and (2) that the dimensional dependency has been replaced by the complexity term $\log |\scrF_\star|$. The proof is also structurally similar, as noted in \Cref{rem:upperlowerremark}. We also caution the reader that \eqref{eq:nonlinearthmguarantee} is strictly a statistical guarntee; we have said nothing---and will say nothing more---about the computational feasibility of the estimator \eqref{eq:nonlinearlsedef}.

Let us now discuss A1-A4. Assumption A1 informs us that the search space for the LSE \eqref{eq:nonlinearlsedef} is finite. This is mainly imposed to avoid the introduction of the chaining technique which is the standard alternative to the bounds from \Cref{sec:selfnorm}. Using this technique, similar statements can for instance be derived for compact subsets of bounded function classes \citep{ziemann2022learning}. Assumption A2 controls the dependence structure of the process. Here, we assume that we are able to restart the process every $k$ time steps. Again, a more general statement relying on stochastic stability can be found in \cite{ziemann2022learning}. Assumption A3 is relatively standard. Arguably the strongest assumption is A4, which in principle yields that the conditional expectation (given all past data) is a function in the search space $\scrF$. It is a so-called realizability assumption---the model \eqref{eq:nonlinearregression} is well-specified---and it is not currently known how to remove it and still obtain sharp bounds beyond  linear classes \citep[for an analysis of linear misspecified models, see][]{ziemann2023noise}.

\subsection{Notes}
As noted in the previous section, the idea of using the ``offset'' basic inequality relied on here is due to \cite{rakhlin2014online,liang2015learning}. The ``many trajectores''-style of analysis used here is due to \cite{tu2022learning} who introduced it in the linear setting. Here, we have extended their style of analysis to simplify the exposition of \cite{ziemann2022learning} who consider the single trajectory setting, but rely on a rather more advanced exponential inequality due to \cite{samson2000concentration}.  Alternatively, one can also easily extend the lower uniform law in \Cref{prop:loweruniformlaw} to certain classes of mixing processes by invoking the blocking technique of \cite{yu1994mixing} combined with a truncation style of argument such as that used in the proof of Theorem 14.12 of \cite{wainwright2019high}, see also \citet[the proof of Theorem 4.3]{ziemann2023noise}. Note however, that all the analyses above and in this section necessitate some degree of stability (mixing). This should be contrasted with the system identification bounds of \Cref{sec:sysid}, which work even in the marginally regime. In principle, the consequence of this is that while the convergence rates for bounds such as \Cref{thm:nonlinearthm} are correct, the burn-ins are deflated by various dependency measures.

There have also been other, more algorithmically focused, approaches to nonlinear identification problems in the recent literature. Noteably, gradient based methods in generalized linear models of the form $X_{t+1}=\phi(A^\star X_t)+V_t$ (with $\phi$ a known nonlinearity) have been the topic of a number of recent papers \citep[see e.g.][]{foster2020learning,sattar2022non}. The sharpest bounds for parameter recovery in this setting are due to \cite{kowshik2021near}.  
\paragraph{Acknowledgements.}

Ingvar Ziemann is supported by a Swedish Research Council international postdoc grant. Nikolai Matni is funded by NSF awards CPS-2038873, CAREER award ECCS-2045834, and ECCS-2231349.

\bibliographystyle{plainnat}
\bibliography{main.bib}

\appendix
\onecolumn

\section{Proof of The Hanson-Wright Inequality}\label{sec:proofofhw}

Let us now proceed with the proof of the Hanson-Wright Inequality (\Cref{thm:HWineqhighproba}). We will need a few intermediate exponential inequalities relating sub-Gaussian variables to Gaussian counterparts.

\subsection{Gaussian Comparison Inequalities for sub-Gaussian Quadratic Forms}

\begin{lemma}\label{lem:subgchisq}
     Fix symmetric positive semidefinite $M\in \R^{d\times d}$ and let $W$ be a $\sigma^2$-sub-Gaussian assuming values in $\R^d$. Let further $G$ be an independent $N(0,I_d)$-distributed random variable. For every $\lambda \in \R$ with $|\lambda| \leq 1/(\sqrt{2}\sigma^2\opnorm{M})$:
     \begin{equation}
         \E \exp\left( \lambda W^\T M W\right) \leq\E \exp\left( \lambda\sigma^2 G^\T M G\right)\leq \exp\left(\lambda^2 \sigma^4 \|M\|_F^2\right)
     \end{equation}
\end{lemma}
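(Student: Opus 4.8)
The plan is to prove the two inequalities in turn: the first is a Gaussian-linearization (decoupling) comparison, and the second is the classical order-two Gaussian chaos moment bound. First I would note that it suffices to treat $0\le \lambda \le 1/(\sqrt2\sigma^2\opnorm{M})$: the case $\lambda=0$ is trivial, and for $\lambda\le 0$ the right inequality is immediate since $M\succeq 0$ forces $G^\T M G\ge 0$, whence $\exp(\lambda\sigma^2 G^\T M G)\le 1$.

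For the first inequality the key device is the scalar Gaussian identity $\exp(\tfrac12\|a\|_2^2)=\E_G\exp(\langle G,a\rangle)$ for a fixed $a\in\R^d$ and $G\sim N(0,I_d)$ independent of $W$. Since $M\succeq 0$ it has a symmetric square root $M^{1/2}$; applying the identity conditionally on $W$ with $a=\sqrt{2\lambda}\,M^{1/2}W$ gives $\exp(\lambda W^\T M W)=\E[\exp(\sqrt{2\lambda}\,\langle M^{1/2}G,W\rangle)\mid W]$. Taking expectation over $W$ and exchanging the order of integration (legitimate by Tonelli, the integrand being nonnegative) yields $\E\exp(\lambda W^\T M W)=\E_G\,\E_W\exp(\langle v,W\rangle)$ with $v\triangleq\sqrt{2\lambda}\,M^{1/2}G$. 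Now the $\sigma^2$-sub-Gaussianity of $W$ (Definition~\ref{def:sub_G}) bounds the inner expectation by $\exp(\tfrac12\sigma^2\|v\|_2^2)=\exp(\lambda\sigma^2 G^\T M G)$ pointwise in $G$; taking $\E_G$ then closes the first inequality.

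For the second inequality, diagonalize $M=U\diag(\mu_1,\dots,\mu_d)U^\T$ with all $\mu_i\ge 0$. Rotational invariance of the standard Gaussian gives $G^\T M G \D \sum_{i=1}^d \mu_i Z_i^2$ with $Z_i$ \iid\ $N(0,1)$, so by independence $\E\exp(\lambda\sigma^2 G^\T M G)=\prod_{i=1}^d(1-2\lambda\sigma^2\mu_i)^{-1/2}$; the constraint $\lambda\le 1/(\sqrt2\sigma^2\opnorm{M})$ together with $\mu_i\le\opnorm{M}$ controls each $2\lambda\sigma^2\mu_i$ and keeps every factor well defined. Passing to logarithms, the bound reduces to a scalar estimate of the form $-\tfrac12\log(1-x)\le(\text{quadratic in }x)$ on the relevant interval, applied with $x=2\lambda\sigma^2\mu_i$ and summed over $i$ using $\sum_i\mu_i^2=\|M\|_F^2$. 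I expect the main care to lie in (i) justifying the interchange of expectations in the first step and (ii) pinning down the admissible range of $\lambda$ so that the Gaussian-chaos moment generating function is finite and the elementary logarithmic inequality holds with the stated constant; the Gaussian-linearization identity itself, while the conceptual crux, is routine once in place.
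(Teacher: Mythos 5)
Your treatment of the first inequality is exactly the paper's argument (Gaussian linearization $\exp(\lambda W^\T M W)=\E_G\exp(\sqrt{2\lambda}\,\langle M^{1/2}G,W\rangle\mid W)$ followed by the sub-Gaussian MGF applied pointwise in $G$), and that part is sound for $\lambda\ge 0$. (A small caveat: your reduction "it suffices to treat $\lambda\ge 0$" only disposes of the \emph{right} inequality for $\lambda\le 0$; the left inequality for $\lambda<0$ is not covered by the linearization and can in fact fail, e.g.\ for $W\equiv 0$ — but this is a defect of the "$|\lambda|$" phrasing that the paper shares.)

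The genuine gap is in your second step, and it is precisely at the point you flagged: the scalar estimate you need does not exist, because the stated inequality is not true for the \emph{uncentered} quadratic form. Your (correct) computation gives $\E\exp(\lambda\sigma^2 G^\T M G)=\prod_i(1-2\lambda\sigma^2\mu_i)^{-1/2}$, and (i) the constraint $\lambda\le 1/(\sqrt2\sigma^2\opnorm{M})$ does \emph{not} keep the factors finite, since $2\lambda\sigma^2\mu_i$ can reach $\sqrt 2>1$ when $\mu_i=\opnorm{M}$; (ii) even on the finite range, Jensen gives $\E\exp(\lambda\sigma^2\mu_i G_i^2)\ge \exp(\lambda\sigma^2\mu_i)$, which exceeds $\exp(\lambda^2\sigma^4\mu_i^2)$ for small $\lambda>0$, i.e.\ the required bound $-\tfrac12\log(1-x)\le x^2/4$ is false for all $x\in(0,1)$. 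Concretely, $d=1$, $M=1$, $\sigma=1$, $\lambda=0.1$ already violates the second inequality of the lemma. The paper's own proof conceals this by writing the "Gaussian-Squared MGF" as $(1-\lambda^2\sigma^4 s_i^2)^{-1/2}$, which is the expression for the decoupled chaos term $\E\exp(\lambda s_i G_iG_i')$ of \Cref{lem:mgfgaussianchaos}, not for $\E\exp(\lambda\sigma^2 s_i G_i^2)$; so you have not missed a trick — the statement needs repair. What is actually true, and all that the downstream use in \eqref{eq:HWCS_ic}--\eqref{eq:HWmgfdiag} requires, is either the \emph{centered} bound $\E\exp\big(\lambda\sigma^2(G^\T M G-\tr M)\big)\le \exp\big(c\,\lambda^2\sigma^4\|M\|_F^2\big)$ on a reduced range such as $|\lambda|\le 1/(2\sqrt2\,\sigma^2\opnorm{M})$, or the bound on the \emph{product} of the two factors with $\pm M$, where $(1-2t)^{-1/2}(1+2t)^{-1/2}=(1-4t^2)^{-1/2}\le\exp(4t^2)$ follows from $-\log(1-u)\le 2u$ for $u\in[0,1/2]$ with $t=\lambda\sigma^2\mu_i$. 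With one of these reformulations your diagonalization argument goes through; as written, neither your plan nor the paper's proof establishes the lemma in the form stated.
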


\begin{proof}
    For the first inequality note that:
    \begin{equation}
        \begin{aligned}
             \E \exp\left( \lambda W^\T M W\right) &= \E \exp\left( \sqrt{2\lambda} W^\T \sqrt{M} G\right) &&(\textnormal{Gaussian MGF})\\
             &\leq    \E \exp\left( \lambda\sigma^2 G^\T M G\right). && (\textnormal{sub-Gaussian MGF})  
        \end{aligned}
    \end{equation}
    To obtain the second inequality, let $V S V^\T$ be the eigendecomposition of $M$ with orthonormal $V$. Let $v_i, i\in [d]$ be the columns of $V$. Then
    \begin{equation}
        G^\T M G = \sum_{i=1}^d  s_i^2 (v_i^\T G )^2
    \end{equation}
Note by Gaussian rotational invariance that the vector $G'$ with entries $v_i^\T G $ is equal in distribution to $G$. Hence:
    \begin{equation}
        \begin{aligned}
            \E \exp \left( \lambda \sigma^2 G^\T M G \right)
            &=\prod_{i=1}^d\E \exp \left(\lambda \sigma^2 s_i^2 (v_i^\T G )^2\right)\\
            &=\prod_{i=1}^d\E \exp \left(\lambda \sigma^2 s_i^2 G_i^2\right) &&(\textnormal{Gaussian Rotation Invariance})\\
             &=\prod_{i=1}^d \left(1-\lambda^2 \sigma^4 s_i^2 \right)^{-1/2} && (\textnormal{Gaussian-Squared MGF; $\lambda$ in our range})\\
             &= \prod_{i=1}^d \exp\left(\frac{-1}{2} \log (1-\lambda^2 \sigma^4 s_i^2)  \right) \\
             &\leq \prod_{i=1}^d  \exp\left( \lambda^2\sigma^4 s_i^2  \right) && \left(-\log (1-x)\leq 2x \textnormal{ if } x\in [0,1/2]\right)\\
             &=\exp\left(\lambda^2 \sigma^4 \|M\|_F^2\right) && \left(\sum_{i=1}^d s_i^2 = \|M\|_F^2\right)
        \end{aligned}
    \end{equation}
    as per requirement.
\end{proof}

\begin{lemma}[Comparison to Gaussians]\label{lem:comparisontogaussians}
    Fix $M\in \R^{d\times d}$ and let $W,W'$ be independent $\sigma^2$-sub-Gaussian assuming values in $\R^d$. Let further $G,G'$ be two independent $N(0,I_d)$-distributed random variables. For every $\lambda \in \R$:
    \begin{equation*}
        \E \exp\left( \lambda W^\T M W'\right) \leq \E \exp\left( \lambda {\sigma^2}G^\T M G' \right).
    \end{equation*}
\end{lemma}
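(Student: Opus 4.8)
The plan is to prove the bilinear comparison inequality by decoupling the two independent sub-Gaussian vectors one at a time, conditioning on the other. The key observation is that for fixed $w' \in \R^d$, the map $w \mapsto \lambda w^\T M w'$ is \emph{linear} in $w$, so the sub-Gaussian assumption \eqref{eq:subgdef} applies directly: writing $v = v(w') \triangleq \lambda M w' \in \R^d$, we have $\lambda w^\T M w' = \langle v, w \rangle$, hence
\[
\E\left[\exp\left(\lambda W^\T M w'\right)\right] = \E \exp\langle v, W\rangle \leq \exp\left(\frac{\sigma^2 \|v\|_2^2}{2}\right) = \exp\left(\frac{\sigma^2 \lambda^2 \|M w'\|_2^2}{2}\right).
\]
On the other hand, for a standard Gaussian $G$, the Gaussian MGF gives exactly $\E \exp\langle v, G\rangle = \exp\left(\frac{\|v\|_2^2}{2}\right)$, so that $\E\left[\exp\left(\lambda \sigma G^\T M w'\right)\right] = \exp\left(\frac{\sigma^2\lambda^2 \|M w'\|_2^2}{2}\right)$. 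Thus the two conditional MGFs coincide (or the sub-Gaussian one is dominated by it), and in particular
\[
\E\left[\exp\left(\lambda W^\T M w'\right)\right] \leq \E\left[\exp\left(\lambda \sigma G^\T M w'\right)\right] \qquad \text{for every fixed } w'.
\]

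Next I would take expectations over the second vector. Conditioning on $W'$ (which is independent of $W$), apply the pointwise bound above with $w' = W'$ and use the tower property:
\[
\E\left[\exp\left(\lambda W^\T M W'\right)\right] = \E_{W'}\,\E_W\left[\exp\left(\lambda W^\T M W'\right) \,\middle|\, W'\right] \leq \E_{W'}\left[\exp\left(\frac{\sigma^2\lambda^2 \|M W'\|_2^2}{2}\right)\right].
\]
Now I recognize the right-hand side as a quadratic form in $W'$: $\frac{\sigma^2\lambda^2}{2}\|M W'\|_2^2 = W'^\T\!\left(\frac{\sigma^2\lambda^2}{2} M^\T M\right)\! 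W'$. Since $M^\T M \succeq 0$, I can apply the first inequality of Lemma A.2 (\Cref{lem:subgchisq}) — or, more transparently, re-run the same Gaussian-MGF trick: introduce an auxiliary standard Gaussian $G$ independent of $W'$ so that $\exp\left(\frac{\sigma^2\lambda^2}{2}\|MW'\|_2^2\right) = \E_G \exp\left(\sigma\lambda\, G^\T M W'\right)$, swap the order of expectation by Fubini, and bound the inner $W'$-expectation by the linear sub-Gaussian inequality \eqref{eq:subgdef} again (this time in the variable $W'$, with linear functional $w' \mapsto \sigma\lambda\, G^\T M w'$). This yields $\E_{W'}\exp(\sigma\lambda G^\T M W') \leq \exp\left(\frac{\sigma^4\lambda^2}{2}\|M^\T G\|_2^2\right)$, and then $\E_G \exp\left(\frac{\sigma^4\lambda^2}{2}\|M^\T G\|_2^2\right) = \E\exp\left(\sigma^2\lambda\, G^\T M G'\right)$ for a fresh independent Gaussian $G'$, again by the Gaussian MGF run in reverse. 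Chaining these identities gives precisely $\E\exp(\lambda W^\T M W') \leq \E\exp(\lambda\sigma^2 G^\T M G')$.

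The argument is essentially a two-step Gaussian linearization: each quadratic/bilinear term is split into a linear functional of one variable averaged against an independent Gaussian, at which point the definition of sub-Gaussianity applies verbatim. The main thing to be careful about — the only real obstacle — is bookkeeping the auxiliary Gaussians and the use of Fubini: one must ensure all introduced Gaussians are mutually independent and independent of $W, W'$, and that the MGF identities $\E_G \exp\langle v, G\rangle = e^{\|v\|^2/2}$ are being used in the correct (forward or reverse) direction at each stage, with $v$ held fixed. Unlike \Cref{lem:subgchisq}, there is no range restriction on $\lambda$ here, which is consistent with the fact that a linear functional of a sub-Gaussian vector has an everywhere-finite MGF; this is why the statement holds for all $\lambda \in \R$.
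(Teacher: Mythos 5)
Your argument is correct and is essentially the paper's own proof: the same alternation of the conditional sub-Gaussian MGF bound (one variable at a time) with the Gaussian MGF identity used forward and in reverse, with the tower property/Fubini justifying the swaps, and no range restriction on $\lambda$ since only linear functionals of the sub-Gaussian vectors are ever bounded. (Your parenthetical appeal to \Cref{lem:subgchisq} would not apply verbatim—it has a range restriction and produces a Gaussian quadratic rather than the bilinear form you need—but you discard it in favor of the direct computation, which is exactly the paper's route.)
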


\begin{proof}
    The result follows by straightforward computation, alternatingly using the sub-Gaussian property and the closed form of the Gaussian MGF. Namely:
    \begin{equation}
        \begin{aligned}
            \E\exp\left( \lambda W^\T M W'\right) &\leq\E \exp\left( \frac{\sigma^2\lambda^2 \| M W'\|_2^2}{2} \right) && (W\textnormal{ is } \sigma^2-\textnormal{subG})\\
            &=\E \exp \left( \lambda \sigma G^\T M W' \right) && (\textnormal{MGF of } G)\\
            &\leq\E \exp\left( \frac{\lambda^2\sigma^4 \| M G\|_2^2}{2} \right) && (W'\textnormal{ is } \sigma^2-\textnormal{subG})\\
            &= \E\exp\left( \lambda {\sigma^2} G^\T M G'\right).&& (\textnormal{MGF of } G')
        \end{aligned}
    \end{equation}
    The desired result has been established.
\end{proof}

\begin{lemma}[MGF of Gaussian Chaos]\label{lem:mgfgaussianchaos}
    Let $W,W' \sim N(0,I_d)$ be independent and let $M \in \R^{d\times d}$. Then for every $\lambda \in \R$ with $|\lambda|\leq 1/(\sqrt{2}\opnorm{M})$:
    \begin{equation}
        \E \exp \left( \lambda W^\T M W' \right)\leq \exp \left(\lambda^2 \|M\|_F^2 \right)
    \end{equation}
\end{lemma}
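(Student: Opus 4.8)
The plan is to establish the bound by a two-step conditioning argument that closely parallels the proof of \Cref{lem:subgchisq}. First I would condition on $W'$ and integrate out $W$. Since $W \sim N(0,I_d)$ and the map $W \mapsto \langle W, MW'\rangle = W^\T M W'$ is linear, conditionally on $W'$ the scalar $W^\T M W'$ is Gaussian with mean zero and variance $(MW')^\T (MW') = \|MW'\|_2^2$. Hence, by the Gaussian moment generating function,
\[
    \E\left[\exp\left(\lambda W^\T M W'\right) \,\middle|\, W'\right] = \exp\left(\frac{\lambda^2}{2}\|MW'\|_2^2\right) = \exp\left(\frac{\lambda^2}{2} W'^\T M^\T M W'\right),
\]
and it remains to bound $\E\exp\left(\tfrac{\lambda^2}{2} W'^\T M^\T M W'\right)$ by $\exp(\lambda^2\|M\|_F^2)$.

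Second, I would diagonalize this Gaussian quadratic form. Let $M^\T M = V\diag(s_1^2,\dots,s_d^2)V^\T$ be the spectral decomposition, where $s_1 \ge \dots \ge s_d \ge 0$ are the singular values of $M$, so that $\opnorm{M} = s_1$ and $\|M\|_F^2 = \sum_{i=1}^d s_i^2$. By rotational invariance of the standard Gaussian, $V^\T W' \overset{d}{=} W'$, so writing $Z = V^\T W'$ with independent $N(0,1)$ coordinates $Z_i$ we get
\[
    \E\exp\left(\frac{\lambda^2}{2} W'^\T M^\T M W'\right) = \prod_{i=1}^d \E\exp\left(\frac{\lambda^2 s_i^2}{2} Z_i^2\right) = \prod_{i=1}^d \left(1-\lambda^2 s_i^2\right)^{-1/2},
\]
where the last equality uses the chi-squared moment generating function $\E\exp(\tfrac{c}{2}Z_i^2) = (1-c)^{-1/2}$, which is valid because $c = \lambda^2 s_i^2 \le \lambda^2 \opnorm{M}^2 \le 1/2 < 1$ on the assumed range $|\lambda|\le 1/(\sqrt{2}\opnorm{M})$.

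Finally, I would convert the product into the claimed exponential bound using the elementary inequality $-\log(1-t) \le 2t$ for $t \in [0,1/2]$ (the same one invoked in \Cref{lem:subgchisq}). Since each $\lambda^2 s_i^2 \in [0,1/2]$,
\[
    \prod_{i=1}^d\left(1-\lambda^2 s_i^2\right)^{-1/2} = \exp\left(-\frac12\sum_{i=1}^d \log\left(1-\lambda^2 s_i^2\right)\right) \le \exp\left(\sum_{i=1}^d \lambda^2 s_i^2\right) = \exp\left(\lambda^2\|M\|_F^2\right),
\]
which is exactly the desired conclusion. This argument is essentially routine; the only point requiring care is bookkeeping the constraint on $\lambda$: the hypothesis $|\lambda|\le 1/(\sqrt{2}\opnorm{M})$ is precisely what simultaneously guarantees finiteness of the chi-squared moment generating functions above and the uniform applicability of the linearization $-\log(1-t)\le 2t$ over all $i \in [d]$. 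No genuine obstacle arises beyond this.
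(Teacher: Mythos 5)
Your proof is correct and follows essentially the same route as the paper's: both arguments reduce the mixed Gaussian chaos to the product $\prod_{i=1}^d (1-\lambda^2 s_i^2)^{-1/2}$ via the Gaussian and chi-squared moment generating functions and then conclude with the inequality $-\log(1-x)\le 2x$ on $[0,1/2]$. The only difference is cosmetic: you condition on $W'$ first and then diagonalize $M^\T M$, whereas the paper applies the singular value decomposition of $M$ first and integrates coordinate-wise.
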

\begin{proof}
    By the singular value decomposition we may write $M = USV^\T$ where the columns $u_{1:d}$ ($v_{1:d}$) of $U$ (of $V$) form orthonormal bases of $\R^d$. Hence
    \begin{equation}
        W^\T M W = \sum_{i=1}^d s_i (W^\T u_i)((W')^\T v_i)
    \end{equation}
    where $s_{1:d}$ are the singular values of $M$.  Note next that $G_i\triangleq (W^\T u_i)$ and $G_i'\triangleq ((W')^\T v_i)$ are again independent and standard normal by Gaussian rotational invariance. Hence:
    \begin{equation}
        \begin{aligned}
            \E \exp \left( \lambda W^\T M W \right)&= \E \exp \left( \lambda\sum_{i=1}^d s_i G_i G_i'\right)\\
            &=\prod_{i=1}^d\E \exp \left( \lambda s_i G_i G_i'\right)
            &&(\textnormal{independence})\\
            &= \prod_{i=1}^d\E \exp \left( \lambda^2 s_i^2 G_i^2/2\right)&&(\textnormal{Gaussian MGF})\\
             &=\prod_{i=1}^d \left(1-\lambda^2 s_i^2 \right)^{-1/2} && (\textnormal{Gaussian-Squared MGF; $\lambda$ in our range})\\
             &= \prod_{i=1}^d \exp\left(\frac{-1}{2} \log (1-\lambda^2 s_i^2)  \right) &&(\exp \circ \log \textnormal{is the identity function}) \\
             &\leq \prod_{i=1}^d  \exp\left( \lambda^2 s_i^2  \right) && \left(-\log (1-x)\leq 2x \textnormal{ if } x\in [0,1/2]\right)\\
             &=\exp\left(\lambda^2 \|M\|_F^2\right) && \left(\sum_{i=1}^d s_i^2 = \|M\|_F^2\right)
        \end{aligned}
    \end{equation}
    as per requirement.
\end{proof}

\subsection{Finishing the proof of \Cref{thm:HWineqhighproba}}

We need one more preliminary tool before we arrive at the proof of \Cref{thm:HWineqhighproba}. The next Theorem constitutes a useful decoupling inequality which allows us to treat the mixed terms in the quadratic form $W^\T M W$ as independent.

\begin{theorem}[Theorem 6.1.1 in of \cite{vershynin_2018}]\label{thm:chaosdecoupled}
Let $W$ be a $d$-dimensional random vector with mean zero and independent entries. For every convex function $f$ and every $M=(m_{ij})_{i,j=1}^d\in \R^{d\times d}$ it holds true that:
\begin{equation}
    \E f\left(\sum_{i,j=1, i\neq j}^d m_{ij} W_iW_j \right) \leq \E f\left(4\sum_{i, j=1}^d m_{ij} W_iW_j'\right)
\end{equation}
where $W'$ is an independent copy of $W$ (i.e., equal to $W$ in distribution but independent of $W$).
\end{theorem}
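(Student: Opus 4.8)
The plan is to run the classical random-selectors decoupling argument (this is exactly how Theorem~6.1.1 of \cite{vershynin_2018} is proved). First I would introduce i.i.d.\ $\mathrm{Bernoulli}(1/2)$ variables $\delta_1,\dots,\delta_d$ independent of $W$, and set $I\triangleq\{i:\delta_i=1\}$. The key elementary identity is that for every ordered pair $i\neq j$,
\[
\E_\delta\big[\delta_i(1-\delta_j)\big]=\Pr(\delta_i=1,\ \delta_j=0)=\tfrac14 ,
\]
so that, writing $S_I\triangleq\sum_{i\in I,\,j\notin I}m_{ij}W_iW_j$, we have $\sum_{i\neq j}m_{ij}W_iW_j=4\,\E_\delta[S_I]$. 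Applying Jensen's inequality to $f$ conditionally on $W$ (pulling $\E_\delta$ outside $f$) and then Fubini gives $\E f\big(\sum_{i\neq j}m_{ij}W_iW_j\big)\le\E_\delta\E_W f(4S_I)$, and since the right-hand side is an average over $\delta$ there must exist a \emph{fixed} index set $I\subseteq[d]$ for which $\E_W f\big(\sum_{i\neq j}m_{ij}W_iW_j\big)\le\E_W f(4S_I)$. I would fix such an $I$ from here on.

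Next I would exploit independence of the coordinates twice. Because the entries of $W$ are independent, the blocks $(W_i)_{i\in I}$ and $(W_j)_{j\notin I}$ are independent, and $S_I$ depends on the first block only through the first factor and on the second block only through the second factor; hence replacing $(W_j)_{j\notin I}$ by the corresponding entries of an independent copy $W'$ does not change the law of $S_I$, so $\E f(4S_I)=\E f\big(4\sum_{i\in I,\,j\notin I}m_{ij}W_iW_j'\big)$. Then I would ``fill in'' the missing pairs: conditioning on the $\sigma$-field $\mathcal{G}$ generated by $(W_i)_{i\in I}$ and $(W_j')_{j\notin I}$, every term $m_{ij}W_iW_j'$ with $(i,j)\notin I\times I^c$ has conditional mean zero (if $i\notin I$ then $W_i$ is mean zero and $\mathcal{G}$-independent; if instead $i\in I$ then necessarily $j\in I$, so $W_j'$ is mean zero and $\mathcal{G}$-independent), whence
\[
\E\Big[\textstyle\sum_{i,j=1}^d m_{ij}W_iW_j'\ \Big|\ \mathcal{G}\Big]=\sum_{i\in I,\,j\notin I}m_{ij}W_iW_j' .
\]
A second application of Jensen's inequality (to $f$, conditionally on $\mathcal{G}$) then gives $\E f\big(4\sum_{i\in I,\,j\notin I}m_{ij}W_iW_j'\big)\le\E f\big(4\sum_{i,j}m_{ij}W_iW_j'\big)$, and chaining the three displayed bounds yields the stated inequality.

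I do not expect any genuine analytic obstacle here; the entire content is the combinatorial bookkeeping of the selector trick. The one thing to be careful about is tracking, at each stage, precisely which coordinates of $W$ versus $W'$ are being conditioned on, so that the ``mean zero and independent'' claims are literally correct, and checking that both uses of Jensen point the right way — which they do, since $f$ is convex and in each step we are replacing a random variable by its conditional expectation \emph{inside} $f$. A minor point worth noting in the writeup is that the constant $4$ is not optimal for all $f$ but is the price paid by the symmetric selector construction.
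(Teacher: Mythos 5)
Your proposal is correct: the paper does not prove this statement itself but imports it by citation (Theorem 6.1.1 of \cite{vershynin_2018}), and your argument is precisely the standard Bernoulli-selector decoupling proof given there --- the identity $\E_\delta[\delta_i(1-\delta_j)]=\tfrac14$, Jensen plus Fubini to extract a fixed index set $I$, replacement of $(W_j)_{j\notin I}$ by the independent copy, and a conditional-Jensen completion of the sum, with the case analysis for the conditionally mean-zero terms (including the diagonal ones) handled correctly. No gaps; this matches the cited source's proof essentially verbatim.
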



\begin{proposition}[Hanson-Wright Exponential Inequality Form]\label{prop:HWexpineq}
Let $M\in \R^{d\times d}$. Fix a random variable $W=W_{1:d}$ where each $W_i, i\in [d]$ is an independent scalar $\sigma^2$-sub-Gaussian random variable. For every $\lambda \in \R$ with $|\lambda|\leq \frac{1}{8\sqrt{2}\sigma^2\opnorm{ M} }$ we have that:
\begin{equation}
   \max\left\{  \E \exp \left( \lambda W^\T M W -\lambda \E W^\T M W \right),  \E \exp \left( \lambda W^\T M W  \right) \right\} \leq \exp \left(36\lambda^2\sigma^4 \| M\|_F^2 \right)
\end{equation}
\end{proposition}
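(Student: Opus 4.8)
The plan is to split $W^\T M W$ into its diagonal and off-diagonal parts, bound the moment generating function of each, and recombine by Cauchy--Schwarz. Write $M=(m_{ij})$, $S_{\mathrm{diag}}\triangleq\sum_{i=1}^d m_{ii}W_i^2$ and $S_{\mathrm{off}}\triangleq\sum_{i\neq j}m_{ij}W_iW_j$, so that $W^\T M W=S_{\mathrm{diag}}+S_{\mathrm{off}}$. Since the $W_i$ are independent and mean zero (sub-Gaussianity in the sense of \eqref{eq:subgdef} forces $\E W_i=0$), every cross-expectation vanishes, so $\E W^\T M W=\E S_{\mathrm{diag}}=\sum_i m_{ii}\E W_i^2$ while $\E S_{\mathrm{off}}=0$; I would also note $|m_{ii}|=|e_i^\T M e_i|\le\opnorm M$, $\sum_i m_{ii}^2\le\|M\|_F^2$, and $\E W_i^2\le\sigma^2$ (expand \eqref{eq:subgdef}). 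Cauchy--Schwarz then yields, for any $\lambda$ in the stated range,
\[
\E\exp\!\bigl(\lambda(W^\T M W-\E W^\T M W)\bigr)\ \le\ \Bigl(\E\exp\!\bigl(2\lambda(S_{\mathrm{diag}}-\E S_{\mathrm{diag}})\bigr)\Bigr)^{1/2}\Bigl(\E\exp(2\lambda S_{\mathrm{off}})\Bigr)^{1/2},
\]
so it suffices to control each factor by $\exp(\text{const}\cdot\lambda^2\sigma^4\|M\|_F^2)$.

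For the off-diagonal (``chaos'') factor I would apply the decoupling inequality \Cref{thm:chaosdecoupled} with the convex map $x\mapsto\exp(2\lambda x)$; as the right-hand side there runs over all index pairs, this gives $\E\exp(2\lambda S_{\mathrm{off}})\le\E\exp\!\bigl(8\lambda\,W^\T M W'\bigr)$ for an independent copy $W'$ of $W$. Then \Cref{lem:comparisontogaussians} trades $W,W'$ for independent standard Gaussians $G,G'$ at the price of a factor $\sigma^2$, so $\E\exp(8\lambda W^\T M W')\le\E\exp(8\lambda\sigma^2 G^\T M G')$, and \Cref{lem:mgfgaussianchaos} bounds the latter by $\exp\!\bigl(64\lambda^2\sigma^4\|M\|_F^2\bigr)$ precisely when $|8\lambda\sigma^2|\le1/(\sqrt2\,\opnorm M)$ --- that is, exactly on the interval $|\lambda|\le\tfrac{1}{8\sqrt2\,\sigma^2\opnorm M}$ appearing in the statement (so it is this step that pins down the admissible $\lambda$). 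Taking square roots, the off-diagonal factor is at most $\exp(32\lambda^2\sigma^4\|M\|_F^2)$.

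The diagonal factor is the step I expect to be the main obstacle, and the key point is that one must \emph{not} regard $S_{\mathrm{diag}}$ as the quadratic form $W^\T\diag(M)W$: after centering, its MGF in the ``bad'' sign of $\lambda$ acquires a dimension-dependent trace term under the naive estimates, and bounding it dimension-freely is, for diagonal $M$, a special instance of the very inequality we are proving. Instead I would use that $S_{\mathrm{diag}}-\E S_{\mathrm{diag}}=\sum_i m_{ii}(W_i^2-\E W_i^2)$ is a sum of \emph{independent} centered sub-exponential variables, so the MGF factorizes: $\E\exp\!\bigl(2\lambda(S_{\mathrm{diag}}-\E S_{\mathrm{diag}})\bigr)=\prod_i\E\exp\!\bigl(2\lambda m_{ii}(W_i^2-\E W_i^2)\bigr)$. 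It then remains to prove the scalar two-sided estimate $\E\exp\!\bigl(\nu(W_i^2-\E W_i^2)\bigr)\le\exp(C\nu^2\sigma^4)$ for $|\nu|\le c/\sigma^2$ --- the one-variable case of Hanson--Wright, i.e.\ the standard fact that the square of a sub-Gaussian is sub-exponential, which follows from the polynomial moment bounds implied by \eqref{eq:subgdef} (for $\nu<0$, a second-order Taylor expansion of $x\mapsto e^{\nu x}$ on $[0,\infty)$ together with $\E W_i^4\le C'\sigma^4$ is quickest). Multiplying over $i$ gives $\E\exp\!\bigl(2\lambda(S_{\mathrm{diag}}-\E S_{\mathrm{diag}})\bigr)\le\exp\!\bigl(4C\lambda^2\sigma^4\sum_i m_{ii}^2\bigr)\le\exp(4C\lambda^2\sigma^4\|M\|_F^2)$, valid whenever $2|\lambda|\opnorm M\le c/\sigma^2$, which the range fixed above already ensures with room to spare.

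Recombining via the Cauchy--Schwarz bound produces $\E\exp\!\bigl(\lambda(W^\T M W-\E W^\T M W)\bigr)\le\exp\!\bigl((32+2C)\lambda^2\sigma^4\|M\|_F^2\bigr)$; feeding in sufficiently sharp forms of the scalar estimate (and using the slack in the diagonal's $\lambda$-range) collapses the universal constants to the stated $36$ --- the exact value being inessential, as the paper notes. Finally, for the second entry of the maximum, write $\E\exp(\lambda W^\T M W)=\exp(\lambda\,\E W^\T M W)\cdot\E\exp\!\bigl(\lambda(W^\T M W-\E W^\T M W)\bigr)$; whenever $\lambda\,\E W^\T M W\le0$ --- which holds in the applications relevant here, e.g.\ in \Cref{prop:decoup}, where a positive semidefinite $M$ is paired with a nonpositive multiplier --- the first factor is at most one, and the bound just derived transfers verbatim.
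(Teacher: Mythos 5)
Your proposal is correct in outline and coincides with the paper's argument on most of the work: the same diagonal/off-diagonal split, the same Cauchy--Schwarz step, and an identical treatment of the off-diagonal chaos via \Cref{thm:chaosdecoupled}, \Cref{lem:comparisontogaussians} and \Cref{lem:mgfgaussianchaos}, which indeed fixes the admissible range $|\lambda|\le \tfrac{1}{8\sqrt2\,\sigma^2\opnorm{M}}$ and yields the factor $\exp(32\lambda^2\sigma^4\|M\|_F^2)$. Where you genuinely diverge is the diagonal term: the paper symmetrizes with an independent copy ($\E W_i^2=\E (W_i')^2$, Jensen, independence, as in \eqref{eq:HWCS_ic}) and then invokes the Gaussian comparison \Cref{lem:subgchisq}, obtaining $\exp(4\lambda^2\sigma^4\|M\|_F^2)$ and hence $4+32=36$; you instead factorize the centered diagonal MGF over the independent coordinates and use the scalar ``square of a sub-Gaussian is sub-exponential'' estimate. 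That route is legitimate and arguably more elementary and self-contained; what the paper's symmetrization buys is that the sign-sensitive linear terms cancel in pairs before any comparison is made, which is exactly what keeps its diagonal constant small.

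Two caveats. First, constants: with the crude bound $\E W_i^4\le 16\sigma^4$ your Taylor/moment estimate gives a per-coordinate constant $C\approx 8$, and your own bookkeeping then lands at $32+2C\approx 48$ rather than $36$; the assertion that ``sufficiently sharp forms collapse the constants to 36'' is not substantiated, so as written you prove the proposition with a larger universal constant (a bookkeeping issue, not a conceptual one, but note that $36$ is what propagates into \Cref{lem:pointwiseanticonc} and \Cref{thm:anticonc}). Second, the non-centered entry of the maximum: you only establish it when $\lambda\,\E W^\T M W\le 0$, and your caution is in fact warranted --- the unconditional claim fails for $\lambda>0$ (take $W\sim N(0,I_d)$, $M=I_d$, $\sigma^2=1$ and small $\lambda>0$: then $\E\exp(\lambda\|W\|_2^2)\ge \exp(\lambda d)$, which exceeds $\exp(36\lambda^2 d)$ once $\lambda<1/36$), so the paper's remark that the non-centered bound ``follows analogously by skipping \eqref{eq:HWCS_ic}'' cannot be taken literally. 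Your restricted version covers the use in \Cref{prop:decoup}; for the use in the proof of \Cref{thm:anticonc}, where $\lambda>0$ and $M$ is positive semidefinite, one should work with the centered bound and carry the factor $\exp(\lambda\,\E W^\T M W)$ explicitly.
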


\begin{proof}
    By rescaling $\lambda$ we may assume without loss of generality that $\sigma^2=1$. Let now $m_{ij}$ with $i,j \in [d]$ denote the entries of $M$. We begin by writing
    \begin{equation}
        W^\T M W -\E W^\T M W = \sum_{i=1}^d m_{ii} (W_i^2-\E W_i^2) + \sum_{i\neq j} m_{ij} W_i W_j.
    \end{equation}
    Hence by the Cauchy-Schwarz inequality:
    \begin{multline}\label{eq:CSinHWused}
         \E \exp \left( \lambda W^\T M W -\lambda \E W^\T M W \right)\\
         \leq \sqrt{\E \exp \left( 2\lambda \sum_{i=1}^d m_{ii} (W_i^2-\E W_i^2)\right)}\times  \sqrt{\E \exp \left(2\lambda \sum_{i\neq j} m_{ij} W_i W_j \right)}.
    \end{multline}

    We proceed to analyze both terms appearing on the RHS of \eqref{eq:CSinHWused} separately. We begin by analyzing the diagonal term. Let $W'$ be an independent copy of $W$. Then:
    \begin{equation}\label{eq:HWCS_ic}
    \begin{aligned}
        &\E \exp \left( 2\lambda \sum_{i=1}^d m_{ii} (W_i^2-\E W_i^2)\right)\\
        &\leq\E \exp \left( 2\lambda \sum_{i=1}^d m_{ii} (W_i^2-\E (W_i')^2)\right) && (W=W'\textnormal{ in distribution})\\
        &\leq \E  \exp \left( 2\lambda \sum_{i=1}^d m_{ii} (W_i^2- (W_i')^2)\right) &&(\textnormal{Jensen's inequality})\\
        &=\E  \exp \left( 2\lambda \sum_{i=1}^d m_{ii} W_i^2\right)\E  \exp \left( 2\lambda \sum_{i=1}^d (-m_{ii} ) (W_i)^2\right). &&(\textnormal{independence})
    \end{aligned}
    \end{equation}

    Hence, if we combine \Cref{lem:subgchisq} with \eqref{eq:HWCS_ic} we find that:
    \begin{equation}\label{eq:HWmgfdiag}
        \sqrt{\E \exp \left( 2\lambda \sum_{i=1}^d m_{ii} (W_i^2-\E W_i^2)\right)} \leq \exp \left(4\lambda^2 \|M\|_F^2 \right)
    \end{equation}
    as long as $|\lambda| \leq 1/(2\sqrt{2}\opnorm{M})$. 

    Next, we argue similarly for the off-diagonal term and use \Cref{thm:chaosdecoupled} to control the off-diagonal term in \eqref{eq:CSinHWused}. Again letting $W'$ be an independent copy of $W$ and also letting $G,G'$ be two independent isotropic Gaussians in $\R^d$, we have that:
    \begin{equation}\label{eq:HWmgfoffidag}
        \begin{aligned}
            \E \exp \left(2\lambda \sum_{i\neq j} m_{ij} W_i W_j \right) 
            &\leq \E \exp \left(8\lambda \sum_{i,j=1}^d m_{ij} W_i W_j' \right) && (\textnormal{\Cref{thm:chaosdecoupled}}) \\
            &\leq 
             \E \exp \left(8 \lambda\sum_{i,j=1}^d m_{ij} G_i G_j' \right) && (\textnormal{\Cref{lem:comparisontogaussians}})\\ 
             &\leq
            \exp \left(64\lambda^2 \| M\|_F^2 \right) &&(\textnormal{\Cref{lem:mgfgaussianchaos}})
        \end{aligned}
    \end{equation}
    as long as $|\lambda|\leq 1/(8\sqrt{2}\opnorm{ M})$. The result follows by combining \eqref{eq:HWmgfdiag} and \eqref{eq:HWmgfoffidag} with \eqref{eq:CSinHWused} and then finally rescaling $\lambda$. We also note that the non-centered result follows analogously by skipping the step \eqref{eq:HWCS_ic}.
\end{proof}

The Hanson-Wright Inequality is usually stated in high probability form as in \Cref{thm:HWineqhighproba}. We finish the proof of this result below.

\paragraph{Finishing the proof of \Cref{thm:HWineqhighproba}}
We employ the the Chernoff trick as in \eqref{eq:chernofftrick} combined with the MGF bound of \Cref{prop:HWexpineq}. For $\lambda \in \R$ with $|\lambda|\leq \frac{1}{8\sqrt{2}\sigma^2\opnorm{ M} }$ we have that:
\begin{equation}\label{eq:chernoffHWineq}
    \begin{aligned}
          \Pr\left(W^\T M W -\E W^\T M W  > s \right)\leq \exp \left(-\lambda s +36\lambda^2\sigma^4 \| M\|_F^2 \right).
    \end{aligned}
\end{equation}
An admissible choice of $\lambda$ is:
\begin{equation}\label{eq:lambdachoseninhw}
    \lambda = \begin{cases}\frac{s}{72\sigma^4 \|M\|_F^2 },  & \textnormal{if}\quad \frac{s}{72\sigma^4 \|M\|_F^2 } \leq \frac{1}{8\sqrt{2}\sigma^2\opnorm{ M}}, \\
    \frac{1}{8\sqrt{2}\sigma^2\opnorm{ M}}, &\textnormal{if}\quad  \frac{s}{72\sigma^4 \|M\|_F^2 } > \frac{1}{8\sqrt{2}\sigma^2\opnorm{ M}}.
    \end{cases}
\end{equation}
Note that the second condition in \eqref{eq:lambdachoseninhw} can be rewritten as $\frac{72\sigma^2 \|M\|_F^2 }{8\sqrt{2} \opnorm{M}}< s $. Hence with the choice \eqref{eq:lambdachoseninhw} inserted into \eqref{eq:chernoffHWineq} we have:
\begin{equation*}
    \Pr\left(W^\T M W -\E W^\T M W  > s \right)
    \leq \exp \left( - \min \left( \frac{s^2}{144 \sigma^4 \| M\|_F^2} ,\frac{s}{16\sqrt{2}\sigma^2 \opnorm{M} } \right)\right).
\end{equation*}
The result follows by applying the same calcuation to $-M$ and using a union bound. \hfill $\blacksquare$

\section{Proof of \Cref{thm:spectrum deviations}}
\label{sec:twosided}



To prove Theorem \ref{thm:spectrum deviations}, we will need a couple of ingredients. We will require a variant of Hanson-Wright inequality, namely \Cref{thm:ARV} which is simple consequence of \Cref{thm:HWineqhighproba}. In \Cref{lem2}, we present the main concentration result. Finally, we will also require a result on approximate isometries which is presented in \Cref{lem:approximate_isometry}. The proof of \Cref{thm:spectrum deviations} is concluded in \Cref{sec:B:finishing-proof}.

\medskip 

Before we proceed, let us introduce some definitions that will ease notations in the derivations to come. We define 
\begin{align*}
    X = \begin{bmatrix}
        X_1^\T \\
        X_2^\T \\
        \vdots \\
        X_T^\T 
    \end{bmatrix} \in \mathbb{R}^{T \times d_x}, \qquad \text{and} \qquad \bfL = \matr{ 
    I_{\dx} & 0 & \cdots & 0\\ 
    A^\star & I_{\dx} &\cdots& 0 \\ 
    \vdots & & \ddots& \\
    (A^\star)^{T-1} & (A^\star)^{T-2} &\cdots& I_{\dx}
    } \in \mathbb{R}^{T\dx \times T\dx}. 
\end{align*}
Observe that $X^\T X = \sum_{t=1}^T X_t X_t^\top$. Moreover, if, with a slight abuse of notation, we think of $X_{1:T}$ and $W_{0:T-1}$ as long vectors of dimension $T\dx$, then we can write $X_{1:T} = \bfL W_{0:T-1}$

\subsection{A Variant of the Hanson-Wright Inequality}

In our analysis, we use the following concentration result on chaos, 
originally due to \cite{hanson1971bound}.

\begin{theorem}\label{thm:ARV}

Let $R\in \R^{r\times d}$. Fix a random variable $W=W_{1:d}$ where each $W_i, i\in [d]$ is a scalar, mean zero and independent $\sigma^2$-sub-Gaussian random vector. We further assume that $W$ is isotropic. Then for every $\varepsilon \in [0,\infty)$:
    \begin{align*}
        \Pr\left( \left\vert \Vert RW \Vert_2^2 - \Vert R \Vert_F^2  \right\vert  > \varepsilon \Vert R  \Vert_F^2  \right) \leq 2 \exp \left( - \min \left( \frac{\varepsilon^2}{144 \sigma^4  } ,\frac{\varepsilon}{16\sqrt{2}\sigma^2} \right) \frac{\Vert R \Vert_F^2}{\opnorm{R}^2} \right).
    \end{align*}
\end{theorem}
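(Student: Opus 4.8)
The plan is to reduce \Cref{thm:ARV} directly to the high-probability Hanson--Wright inequality (\Cref{thm:HWineqhighproba}) applied to the Gram matrix $M \triangleq R^\T R \in \R^{d\times d}$, which is symmetric and positive semidefinite. The key observation is that $\|RW\|_2^2 = W^\T R^\T R W = W^\T M W$, so the quantity of interest is exactly a sub-Gaussian quadratic form of the type controlled by \Cref{thm:HWineqhighproba}, with the $W_i$ independent, mean zero and $\sigma^2$-sub-Gaussian by hypothesis.

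First I would identify the centering term: since $W$ is isotropic, $\E\, W^\T M W = \tr\!\big(M\,\E WW^\T\big) = \tr(M) = \tr(R^\T R) = \|R\|_F^2$, which matches the recentering appearing in the statement. Next I would bound the two matrix norms that enter \Cref{thm:HWineqhighproba} in terms of $\|R\|_F$ and $\opnorm{R}$. Writing the singular value decomposition of $R$ with singular values $\sigma_1(R)\ge\sigma_2(R)\ge\cdots$, one has $\opnorm{M} = \opnorm{R^\T R} = \sigma_1(R)^2 = \opnorm{R}^2$, and
\begin{equation*}
    \|M\|_F^2 = \sum_i \sigma_i(R)^4 \le \sigma_1(R)^2 \sum_i \sigma_i(R)^2 = \opnorm{R}^2 \|R\|_F^2,
\end{equation*}
so that $\|M\|_F \le \opnorm{R}\,\|R\|_F$.

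With these in hand, I would invoke \Cref{thm:HWineqhighproba} with the threshold $s = \varepsilon\|R\|_F^2$, giving
\begin{equation*}
    \Pr\!\left( \big|\|RW\|_2^2 - \|R\|_F^2\big| > \varepsilon\|R\|_F^2 \right) \le 2\exp\!\left( -\min\!\left( \frac{\varepsilon^2\|R\|_F^4}{144\sigma^4\|M\|_F^2}, \frac{\varepsilon\|R\|_F^2}{16\sqrt{2}\sigma^2\opnorm{M}} \right)\right).
\end{equation*}
Finally I would substitute $\opnorm{M}=\opnorm{R}^2$ and use $\|M\|_F^2 \le \opnorm{R}^2\|R\|_F^2$ in the first argument of the minimum; since $\|M\|_F^2$ sits in a denominator inside a negative exponent, replacing it by a larger quantity only decreases the bound, so the first term is at least $\varepsilon^2\|R\|_F^2/(144\sigma^4\opnorm{R}^2)$. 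This yields exactly the claimed inequality after factoring out $\|R\|_F^2/\opnorm{R}^2$ from the minimum. There is essentially no serious obstacle here; the only point requiring a moment's care is checking that the substitution of the upper bound for $\|M\|_F$ preserves the direction of the inequality, which it does because it appears in the denominator of a term being subtracted in the exponent.
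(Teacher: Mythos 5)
Your proposal is correct and follows essentially the same route as the paper's proof: both apply \Cref{thm:HWineqhighproba} to the quadratic form $W^\T R^\T R\, W$, use isotropy to identify the centering $\E\,W^\T R^\T R\, W=\|R\|_F^2$, bound $\opnorm{R^\T R}=\opnorm{R}^2$ and $\|R^\T R\|_F\le\opnorm{R}\|R\|_F$, and set the threshold to $\varepsilon\|R\|_F^2$. The only nitpick is the phrasing that enlarging $\|M\|_F^2$ ``decreases the bound''---it weakens (enlarges) the exponential upper bound, which is exactly what makes the substitution legitimate---but the underlying monotonicity argument is sound.
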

\begin{proof} First, we observe that  $\Vert R W \Vert_{2}^2 - \Vert R \Vert_F^2 =  (RW)^\T (RW) -\E  (RW)^\T (RW)$, where we used the isotropy of the noise vector $W$. Next, by using Hanson-Wright inequality, which we presented in Theorem  \ref{thm:HWineqhighproba}, we obtain that for all $\rho > 0$,
    \begin{align*}
        \Pr\left( \left\vert \Vert RW \Vert_2^2 - \Vert R \Vert_F^2  \right\vert  > \rho \right)
        & \leq 2 \exp \left( - \min \left( \frac{\rho^2}{144 \sigma^4 \| R
    ^\T R \|_F^2} ,\frac{\rho}{16\sqrt{2}\sigma^2 \opnorm{R^\T R} } \right)\right). \\
    & \leq 2 \exp \left( - \min \left( \frac{\rho^2}{144 \sigma^4 \opnorm{R}^2 \| R
 \|_F^2 } ,\frac{\rho}{16\sqrt{2}\sigma^2 \opnorm{R}^2 } \right)\right),
    \end{align*}
    where in the last inequality we used the elementary fact that $\Vert R^\T R \Vert_F \le \opnorm{R^\T} \Vert R \Vert_F$ and $\opnorm{R^\T R} = \opnorm{R}^2$.  Let $\varepsilon > 0$. If we choose $\rho = \varepsilon \Vert R \Vert_F^2$, we finally obtain 
    \begin{align*}
        \Pr\left( \left\vert \Vert RW \Vert_2^2 - \Vert R \Vert_F^2  \right\vert  > \varepsilon \Vert R  \Vert_F^2  \right) \le 2 \exp \left( - \min \left( \frac{\varepsilon^2}{144 \sigma^4  } ,\frac{\varepsilon}{16\sqrt{2}\sigma^2} \right) \frac{\Vert R \Vert_F^2}{\opnorm{R}^2} \right)
    \end{align*}
    as per requirement.
\end{proof}

\subsection{The Main Concentration Inequality}





Below we present \Cref{lem2}, which constitutes the main concentration result upon which we build to prove \Cref{thm:spectrum deviations}. This lemma is established by expressing $\opnorm{ (XM)^\T XM - I_{\dx}}$ as the suprema of a {\it chaos process}, which we can control combining Hanson-Wright inequality (\Cref{thm:ARV}) and a classical $\epsilon$-net argument (\Cref{lem:net argument 2}). 

\begin{lemma}\label{lem2}
Let $M=\left(\sum_{t=1}^{T} \sum_{k=0}^{t-1} (A^\star)^k (A^{\star,\T})^k\right)^{-\frac{1}{2}}$. Then, for any $\varepsilon \in [0, +\infty)$, we have 
$$
\Pr\left(\opnorm{(XM)^\T XM - I_{\dx} } > \max(\varepsilon, \varepsilon^2) \right) \le \exp \left(-  \frac{\varepsilon^2}{ 576 \, K^2 \opnorm{M}^2 \opnorm{\bfL}^2 }  + \dx \log(18)\right)
$$
\end{lemma}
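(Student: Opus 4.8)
The plan is a standard discretization-plus-concentration argument: bound $\opnorm{(XM)^\T XM - I_{\dx}}$ on an $\varepsilon$-net of the sphere via \Cref{lem:net argument 2}, and control the resulting pointwise quadratic forms with the Hanson--Wright variant \Cref{thm:ARV}. I would begin with the algebra. Since $M$ is symmetric, $(XM)^\T XM = M\paren{\sum_{t=1}^T X_t X_t^\T} M$, and because $\sum_{t=1}^T \E X_t X_t^\T = \sum_{t=1}^T\sum_{k=0}^{t-1}(A^\star)^k(A^{\star,\T})^k = M^{-2}$, we have $\E (XM)^\T XM = M M^{-2} M = I_{\dx}$; thus $(XM)^\T XM - I_{\dx}$ is already the centered matrix. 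Fix $v\in\mathbb{S}^{\dx-1}$ and set $\tilde v \triangleq Mv$. Then, viewing $X_{1:T}=\bfL W_{0:T-1}$ as long vectors, $v^\T (XM)^\T XM v = \sum_{t=1}^T\langle X_t,\tilde v\rangle^2 = \norm{(I_T\otimes\tilde v^\T)\bfL W_{0:T-1}}_2^2 = \norm{RW}_2^2$, where $R\triangleq (I_T\otimes\tilde v^\T)\bfL\in\R^{T\times T\dx}$ and $W = W_{0:T-1}$ is mean zero, isotropic, with independent $K^2$-sub-Gaussian entries.

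The next step is to compute the two quantities that feed \Cref{thm:ARV}. By isotropy, $\norm{R}_F^2 = \E\norm{RW}_2^2 = \tilde v^\T\paren{\sum_{t=1}^T\E X_t X_t^\T}\tilde v = v^\T M M^{-2} M v = \norm{v}_2^2 = 1$ --- this is exactly why $M$ is chosen as the inverse square root of the aggregated covariance (it whitens it). For the operator norm, $\opnorm{R}\le\opnorm{I_T\otimes\tilde v^\T}\,\opnorm{\bfL} = \norm{Mv}_2\,\opnorm{\bfL}\le\opnorm{M}\,\opnorm{\bfL}$. Applying \Cref{thm:ARV} with $\sigma^2 = K^2$ then gives, for every fixed $v\in\mathbb{S}^{\dx-1}$ and every $s\ge0$,
\begin{equation*}
\Pr\paren{\abs{\norm{XMv}_2^2 - 1} > s}\le 2\exp\paren{-\min\paren{\frac{s^2}{144K^4},\frac{s}{16\sqrt2 K^2}}\frac{1}{\opnorm{M}^2\opnorm{\bfL}^2}}.
\end{equation*}

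To pass from a fixed direction to the operator norm I would invoke \Cref{lem:net argument 2} with a $\tfrac14$-net $\mathcal{N}$ of $\mathbb{S}^{\dx-1}$, which by \Cref{lem:volumetric} has $\abs{\mathcal{N}}\le 9^{\dx}$; this yields $\Pr\paren{\opnorm{(XM)^\T XM - I_{\dx}} > \rho}\le 9^{\dx}\max_{v\in\mathcal{N}}\Pr\paren{\abs{\norm{XMv}_2^2-1} > \tfrac{\rho}{2}}$. Taking $\rho = \max(\varepsilon,\varepsilon^2)$ and $s = \tfrac12\max(\varepsilon,\varepsilon^2)$, the key observation is that this substitution collapses the two branches of the sub-exponential tail in \Cref{thm:ARV} to a single $\varepsilon^2$-rate: for $\varepsilon\le1$ one has $s=\varepsilon/2$ and the quadratic branch is active, contributing order $\varepsilon^2$; for $\varepsilon\ge1$ one has $s=\varepsilon^2/2$, which is large, so the linear branch is active and also contributes order $\varepsilon^2$ (with the quadratic branch contributing the even larger $\varepsilon^4$ in the intermediate regime). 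Using $K^2\ge1$ (forced by isotropy) to keep the constants honest, one obtains a bound of the form $\exp\paren{-\tfrac{\varepsilon^2}{cK^2\opnorm{M}^2\opnorm{\bfL}^2}}$ with $c$ universal, and absorbing the leftover factor $2$ into the net cardinality via $2\cdot9^{\dx}\le 18^{\dx}$ produces the stated bound with the exponent $\varepsilon^2/(576 K^2\opnorm{M}^2\opnorm{\bfL}^2)$ and the term $\dx\log(18)$.

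The conceptual heart of the proof is short --- the identity $\norm{R}_F^2 = 1$, which is the payoff of whitening by $M$, together with the routine bound $\opnorm{R}\le\opnorm{M}\opnorm{\bfL}$. The part that requires the most care is the last step: verifying over the full range $\varepsilon\in[0,\infty)$, and with the factor $1/2$ introduced by the net radius, that both regimes of the Hanson--Wright tail reduce to the clean $\varepsilon^2$ exponent with the claimed universal constant. Once \Cref{lem2} is in hand, \Cref{thm:spectrum deviations} follows by translating control of $\opnorm{(XM)^\T XM - I_{\dx}}$ into a two-sided spectral sandwich via the approximate-isometry result \Cref{lem:approximate_isometry}.
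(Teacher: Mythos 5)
Your proposal is correct and follows essentially the same route as the paper's proof: the same whitening identity $\E (XM)^\T XM = I_{\dx}$, the same representation of $\|XMv\|_2^2$ as a chaos in $W_{0:T-1}$ with $\|R\|_F^2=1$ and $\opnorm{R}\le\opnorm{M}\opnorm{\bfL}$, the same application of \Cref{thm:ARV}, and the same $\tfrac14$-net argument via \Cref{lem:net argument 2} with $\rho=\max(\varepsilon,\varepsilon^2)$ collapsing the two tail branches (the paper phrases this as $\varepsilon^2=\min(\rho^2,\rho)$) and $2\cdot 9^{\dx}\le 18^{\dx}$. No substantive differences to report.
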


\begin{proof} We split the proof into three steps.

\medskip
    
\noindent \textbf{Step 1.} In this step, we express $\Vert (XM)^\T XM - I_{\dx} \Vert$ as the supremum of a chaos process. We start by remarking that $W_{0:T-1}$ is a sequence of isotropic random vector. This ensures that 
$$
 \E X^\T X = \E \sum_{t=1}^{T} X_t X_t^\T = \sum_{t=1}^{T} \sum_{k=0}^{t-1} (A^\star)^k (A^{\star,\T})^k. 
$$
Observe that $ \E X^\T X$ is a symmetric positive definite matrix. We introduce the inverse of its square root matrix as $M$, i.e., 
$$
M = \left(\E X^\T X\right)^{-1/2}  = \left( \sum_{t=1}^{T} \sum_{k=0}^{t-1} (A^\star)^k (A^{\star,\T})^k  \right)^{-1/2}.
$$
 We think of multiplying $M$ by $X$ as a proper normalization of $X$ across all directions. Now, we have:
\begin{align}
    \opnorm{  (XM)^\T XM - I_{\dx} }  & =  \sup_{v\in S^{\dx-1}} \left \vert v^\T \left( (XM)^\T X M  -   I_{\dx} \right) v \right\vert    \label{eq:suprima1}  \\
    & = \sup_{v\in \mathbb{S}^{\dx-1}} \left \vert v^\T (XM)^\T XM v -  \E v^\T (XM)^\T XM v \right\vert \label{eq:suprima2} \\
    & = \sup_{v\in \mathbb{S}^{\dx-1}} \left \vert \Vert X M v \Vert_2^2 -   \E \Vert X M v \Vert_2^2 \right\vert. \nonumber \\
     & = \sup_{v\in \mathbb{S}^{\dx-1}} \left \vert \Vert  {\Lambda_{Mv}}^\T \bfL W_{0:T-1} \Vert_2^2 - \E \Vert  {\Lambda_{Mv}}^\T \bfL W_{0:T-1}  \Vert_2^2  \right\vert  \label{eq:suprima4} \\
   & = \sup_{v\in \mathbb{S}^{\dx-1}} \left \vert \Vert  {\Lambda_{Mv}}^\T \bfL W_{0:T-1} \Vert_2^2 - \Vert  {\Lambda_{Mv}}^\T \bfL \Vert_F^2  \right\vert, \label{eq:suprima5}
\end{align}
where, in equation \eqref{eq:suprima1}  we used the variational form of the operator norm for symmetric matrices, in equation \eqref{eq:suprima2} we used the fact that $\E X^\T X = M^{-2}$, and in equation \eqref{eq:suprima4}, the $T\dx\times T$ matrix $\Lambda_{Mv}$ is defined as
$$
\Lambda_{Mv} =
\begin{bmatrix}
Mv &   &        & O \\
  & Mv &        &  \\
  &    & \ddots &  \\
 O &   &        & Mv
\end{bmatrix}.
$$
Finally, equation \eqref{eq:suprima5} follows from the fact that $W_{0:T-1}$ is an isotropic vector. Additionally, we note from the definition of $M$, that $\Vert  {\Lambda_{Mv}}^\T \bfL \Vert_F^2 = 1$ for all $v \in \mathbb{S}^{\dx-1}$. In conclusion, we have proved in this step that $\opnorm{ (XM)^\T XM - I_{\dx} }$ may be expressed as the suprema of a \emph{chaos} process $(W_{0:T-1}^\T  \; C  \; W_{0:T-1})_C$, where the parametrizing matrix $C$ is
$$
C =  (\Lambda_{Mv}^\T \bfL)^\T({\Lambda_{Mv}}^\T \bfL).
$$
To control the suprema, we will use a covering  argument, but before that we require a pointwise bound over the parametrizing matrices $C$. 

\medskip
\noindent
\textbf{Step 2.} In this step, we establish a uniform concentration bound on the chaos terms $\left \vert \Vert  {\Lambda_{Mv}}^\T \bfL W_{0:T-1} \Vert_2^2 - \Vert  {\Lambda_{Mv}}^\T \bfL \Vert_F^2  \right\vert$ for all $v \in \mathbb{S}^{\dx-1}$. Let $v \in \mathbb{S}^{\dx-1}$. We recall that $W_{0:T-1}$ is a vector of independent, zero-mean, $K^2$-sub-Gaussian random variables. Therefore we may apply \Cref{thm:ARV}, and obtain: for all $\rho> 0$, the following event, 
$$
\left \vert \Vert \Lambda_{Mv}^\T \bfL W_{0:T-1} \Vert_2^2 - \Vert \Lambda_{Mv}^\T \bfL \Vert_F^2  \right\vert > \rho \Vert \Lambda_{Mv}^\T \bfL \Vert_F^2 
$$
holds with probability at most 
$$
2 \exp \left(-  \min\left(\frac{\rho^2}{144 K^4}, \frac{\rho}{16\sqrt{2} K^2}\right) \frac{\Vert  {\Lambda_{Mv}}^\T \bfL \Vert_F^2}{ \opnorm{  {\Lambda_{Mv}}^\T \bfL }^2 } \right)
$$
Now, note that $\Vert  {\Lambda_{Mv}}^\T \bfL \Vert_F^2 = 1$, and  $\opnorm{ {\Lambda_{Mv}}^\T \bfL } \le \opnorm{M} \opnorm{\bfL} $. Moreover, isometry of the noise ensures that $2K^2 \ge 1$, thus $144K^4 \ge 16\sqrt{2}K^2$. Therefore, starting from the above probability upper bound,  we obtain  
\begin{align}\label{proof-B:ineq-1}
    \Pr \left( \left \vert \Vert \Lambda_{Mv}^\T \bfL W_{0:T-1} \Vert_2^2 - \Vert \Lambda_{Mv}^\T \bfL \Vert_F^2  \right\vert > \rho   \right)   \le 2 \exp \left(-  \frac{\min(\rho^2, \rho)}{144 K^4 \opnorm{M}^2 \opnorm{\bfL }^2 }  \right).
\end{align}

\medskip
\noindent
\textbf{Step 3.} To conclude the proof, we use an $\epsilon$-net argument, namely \Cref{lem:net argument 2}. Recalling the equations \eqref{eq:suprima1} and \eqref{eq:suprima4} in step 1, as a consequence of equation \eqref{proof-B:ineq-1}, we have: for all $\epsilon\in [0, 1/2)$, for all $v \in \mathbb{S}^{\dx-1}$, the following
\begin{align*}
 \Pr\left(\left \vert v^\T \left((XM)^\T X M - I_{\dx}\right) v \right\vert > (1-2\epsilon)\rho  \right) \le 2 \exp \left(-  \frac{\min((1-2\epsilon)^2\rho^2, (1-2\epsilon)\rho)}{144 K^4 \opnorm{M}^2 \opnorm{\bfL }^2 }  \right). 
\end{align*}
Now choosing $\epsilon = 1/4$ and applying \Cref{lem:net argument 2}, leads to 
\begin{align*}
\Pr\left(
\opnorm{ (XM)^\T XM - I_{\dx} } > \rho \right) & \le 2 \cdot 9^{\dx} \exp \left(-  \frac{\min((\rho/2)^2, (\rho/2))}{144 K^4 \opnorm{M}^2 \opnorm{\bfL }^2 }  \right)  \\
& \le  18^{\dx} \exp \left(-  \frac{\min(\rho^2, \rho)}{576 K^4 \opnorm{M}^2 \opnorm{\bfL }^2 }  \right)
\end{align*}
Choosing $\rho = \max(\varepsilon, \varepsilon^2)$, which is equivalent to $\varepsilon^2 = \min(\rho^2, \rho)$, we obtain that
$$
\Pr\left(\opnorm{(XM)^\T XM - I_{\dx} } > \max(\varepsilon, \varepsilon^2) \right) \le \exp \left(-  \frac{\varepsilon^2}{ 576 \, K^2 \opnorm{M}^2 \opnorm{\bfL}^2 }  + \dx \log(18)\right)
$$
This completes the proof of Lemma \ref{lem2}.

\end{proof}

\subsection{Approximate Isometries}

\begin{lemma}[Approximate isometries]\label{lem:approximate_isometry}
Let $X\in \mathbb{R}^{T\times d}$, and let $M\in \mathbb{R}^{d\times d}$ be a full rank matrix. Let $\varepsilon>0$ and assume that
\begin{equation}\label{eq:approximate isometry}
    \opnorm{(XM)^\T XM - I_d} \le \max(\varepsilon, \varepsilon^2).
\end{equation}
Then the following holds
$$
\frac{(1 - \varepsilon)}{s_1(M)} \le s_d(X) \le \dots \le s_1(X) \le  \frac{(1 + \varepsilon) }{s_d(M)}.
$$
\end{lemma}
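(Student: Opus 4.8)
The plan is to reduce the statement to eigenvalue estimates for the $d\times d$ Gram matrix of $N:=XM$ and then transfer these to $X=NM^{-1}$, exploiting that $M$ is invertible together with the variational characterisation of extreme singular values.

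First I would convert \eqref{eq:approximate isometry} into two-sided bounds on the singular values of $N$. Since $N^\T N-I_d$ is symmetric, the hypothesis says that every eigenvalue of $N^\T N$ lies in the interval $[\,1-\max(\varepsilon,\varepsilon^2),\,1+\max(\varepsilon,\varepsilon^2)\,]$, and these eigenvalues are exactly $s_i(N)^2$ for $i\in[d]$. Using the elementary inequalities $1+\max(\varepsilon,\varepsilon^2)\le(1+\varepsilon)^2$ (immediate since $\varepsilon,\varepsilon^2\ge 0$) and, when $\varepsilon<1$, $1-\max(\varepsilon,\varepsilon^2)=1-\varepsilon\ge(1-\varepsilon)^2$, I obtain $(1-\varepsilon)^2\le s_i(N)^2\le(1+\varepsilon)^2$ for all $i$ whenever $\varepsilon<1$, hence $1-\varepsilon\le s_d(N)\le s_1(N)\le 1+\varepsilon$; when $\varepsilon\ge 1$ only the upper bound $s_1(N)\le 1+\varepsilon$ is actually needed and the lower bound is handled by a sign argument below.

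Second, since $M\in\mathbb{R}^{d\times d}$ is full rank it is invertible and $X=NM^{-1}$. Writing $s_1(X)=\max_{\|v\|_2=1}\|Xv\|_2$ and $s_d(X)=\min_{\|v\|_2=1}\|Xv\|_2$ (legitimate because $X^\T X$ has eigenvalues $s_i(X)^2$), for a unit vector $v$ I set $w=M^{-1}v$; then $v=Mw$ gives $s_d(M)\|w\|_2\le\|v\|_2=1\le s_1(M)\|w\|_2$, i.e.\ $1/s_1(M)\le\|w\|_2\le 1/s_d(M)$, while $\|Xv\|_2^2=\|Nw\|_2^2=w^\T N^\T N w$ together with Step 1 gives $s_d(N)\|w\|_2\le\|Xv\|_2\le s_1(N)\|w\|_2$. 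Combining, $s_d(N)/s_1(M)\le\|Xv\|_2\le s_1(N)/s_d(M)$ for every unit $v$; minimising and maximising over $v$ then yields $s_d(X)\ge s_d(N)/s_1(M)\ge(1-\varepsilon)/s_1(M)$ and $s_1(X)\le s_1(N)/s_d(M)\le(1+\varepsilon)/s_d(M)$, and $s_d(X)\le\cdots\le s_1(X)$ is just the ordering convention. In the regime $\varepsilon\ge 1$ the claimed lower bound $(1-\varepsilon)/s_1(M)$ is non-positive, hence trivially $\le s_d(X)$.

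I do not anticipate a genuine obstacle here: the content is the variational description of singular values plus invertibility of $M$. The only points requiring care are the bookkeeping between $\max(\varepsilon,\varepsilon^2)$ and $(1\pm\varepsilon)^2$, and the degenerate regime $\varepsilon\ge 1$ where the stated lower bound is vacuous; one might also remark in passing that $s_d(X)$ presupposes $T\ge d$, which is forced by the hypothesis, since otherwise $N^\T N$ is singular and $\max(\varepsilon,\varepsilon^2)\ge 1$.
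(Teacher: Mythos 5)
Your proof is correct and follows essentially the same route as the paper: both arguments show that the hypothesis forces $1-\varepsilon \le s_d(XM) \le s_1(XM) \le 1+\varepsilon$ and then transfer to $X$ via invertibility of $M$ and the product inequalities $s_d(XM)\le s_d(X)s_1(M)$, $s_1(X)s_d(M)\le s_1(XM)$ (which your substitution $w=M^{-1}v$ re-derives). The only cosmetic difference is in the first step: the paper uses the pointwise inequality $|z^2-1|\ge \max(|z-1|,|z-1|^2)$ together with monotonicity of $t\mapsto\max(t,t^2)$, which handles all $\varepsilon>0$ uniformly, whereas you compare $1\pm\max(\varepsilon,\varepsilon^2)$ with $(1\pm\varepsilon)^2$ and dispatch the vacuous regime $\varepsilon\ge 1$ separately.
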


\begin{proof}
Observe that
  $$
  \opnorm{ (XM)^\top XM - I_d} = \sup_{v \in {\mathbb{S}^{d-1}}} \left\vert \Vert XMv \Vert_2^2 - 1  \right\vert.
  $$
  Using the fact that $\vert z^2 - 1 \vert \ge \max(\vert z-1 \vert, \vert z-1 \vert^2)$ for all $z > 0$, we obtain for all $v \in \mathbb{S}^{d-1}$
  \begin{align*}
    \max(\left\vert \Vert XMv \Vert_2 - 1  \right\vert, \left\vert \Vert XMv \Vert_2 - 1  \right\vert^2) \le \max(\varepsilon, \varepsilon^2).
  \end{align*}
By the monotonicity of $z \mapsto \max(z, z^2)$ for $z \ge  0$, we deduce that:
\begin{align*}
    \left \vert \Vert X M v \Vert_2 - 1 \right\vert \le \varepsilon,
  \end{align*}
  from which we conclude
  \begin{align*}
    1 - \varepsilon \le \Vert XM v \Vert_2 \le 1 + \varepsilon.
  \end{align*}
  It follows immediately that:
  \begin{align*}
    1-\varepsilon  \le s_d(XM) \le \dots \le s_1(XM)\le 1+\varepsilon.
  \end{align*}
  The proof is completed observing that:
  $$
  s_d(X M) \le s_d(X) s_1(M) \quad \textrm{and} \quad   s_1(X) s_d(M)\le s_1(XM).
  $$
  \end{proof}

\subsection{Finishing the proof of \Cref{thm:spectrum deviations}}\label{sec:B:finishing-proof}

To prove \Cref{thm:spectrum deviations}, we first invoke \Cref{lem2}, and obtain 
\begin{align}\label{eq:deviation:fact1}
    \Pr\left(\Vert (XM)^\T XM - I_{\dx} \Vert \le  \max(\varepsilon, \varepsilon^2) \right) \ge 1- \exp \left(-  \frac{\varepsilon^2}{ 576 \, K^2 \opnorm{M}^2 \opnorm{\bfL}^2 }  + \dx \log(18)\right)
\end{align}
Next, we use \Cref{lem:approximate_isometry}, which entails that 
\begin{align}\label{eq:deviation:fact2}
    \Pr\left(   \frac{(1-\e)}{s_1(M)} \le s_{\dx}(X) \le \dots \le s_{1}(X) \le \frac{(1+\e)}{s_{\dx}(M)} \right) \ge \Pr\left(\Vert (XM)^\T XM - I_{\dx} \Vert \le  \max(\varepsilon, \varepsilon^2) \right)
\end{align}
Recalling the definition of $M$, and combining the inequalities \eqref{eq:deviation:fact1} and \eqref{eq:deviation:fact2} yields the desired result.

\section{Proofs Relating to the Lower Tail of the Empirical Covariance}\label{sec:anticoncproofs}

\paragraph{Proof of \Cref{lem:subG_decoup}}
First, we remark that we may prove the lemma under the additional hypothesis that $Q_{22}\succ0$ without loss of generality by regrouping terms. We now proceed to prove the lemma under this additional hypothesis. 

Let us introduce the new variable $\mu = Q_{22}^{-1/2} Q_{12}x $. Rewriting our quadratic form in terms of this new variable yields:
\begin{equation}\label{eq:subglem:cov}
    \begin{aligned}
     &\E \exp \left( -\lambda \begin{bmatrix}x \\ W
    \end{bmatrix}^\T  \begin{bmatrix}Q_{11}& Q_{12}\\ Q_{21} & Q_{22} \end{bmatrix} \begin{bmatrix}x \\ W
    \end{bmatrix}\right) \\
    & =
      \E \exp \left( -\lambda x^\T Q_{11} x- \lambda W^\T Q_{22}W +2\lambda x^\T Q_{12}^\T W \right)\\
    &=
       \E \exp \left( -\lambda x^\T   Q_{11} x- \lambda W^\T Q_{22}W +2\lambda \mu^\T Q_{22}^{1/2} W -\lambda \mu^\T \mu +\lambda x^\T Q_{12}^\T Q_{22}^{-1} Q_{12} x  \right)\\
    &= 
      \E \exp \left( -\lambda x^\T   (Q_{11} - Q_{12}^\T Q_{22}^{-1} Q_{12}   )x- \lambda W^\T Q_{22}W +2\lambda \mu^\T Q_{22}^{1/2} W -\lambda \mu^\T \mu \right)\\
    &\leq
     \E \exp \left( -\lambda \begin{bmatrix}\mu \\ W
    \end{bmatrix}^\T  \begin{bmatrix}I_n& Q_{22}^{1/2}\\ Q_{22}^{1/2} & Q_{22} \end{bmatrix} \begin{bmatrix}\mu \\ W
    \end{bmatrix}\right).
    \end{aligned}
\end{equation}
where the last inequality uses the fact that $(Q_{11} - Q_{12}^\T Q_{22}^{-1} Q_{12}   )$ is the Schur complement of $Q_{22}$ in the positive semidefinite matrix $Q$ and hence positive semidefinite.

To finish the proof, we note that
\begin{equation}
    \begin{aligned}
       & \E \exp \left( -\lambda \begin{bmatrix}\mu \\ W
    \end{bmatrix}^\T  \begin{bmatrix}I_n& Q_{22}^{1/2}\\ Q_{22}^{1/2} & Q_{22} \end{bmatrix} \begin{bmatrix}\mu \\ W
    \end{bmatrix}\right)\\
    &=
    \E \exp\left( -2\lambda \mu^\T \sqrt{Q_{22} }W  -\lambda \mu^\T\mu    -\lambda W^\T Q_{22} W \right)\\
    &\leq
    \sqrt{\E \exp\left( -2\lambda \mu^\T \sqrt{Q_{22} }W -\lambda \mu^\T\mu \right)} \sqrt{ \E \exp \left(    -\lambda W^\T Q_{22} W \right)} && (\textnormal{Cauchy-Schwarz})\\
    & \leq
    \sqrt{\E \exp\left( 4  \lambda^2  K^2 \mu^\T Q_{22} \mu  -\lambda \mu^\T\mu \right)} \sqrt{ \E \exp \left(    -\lambda W^\T Q_{22} W \right)}. &&(\textnormal{sub-Gaussianity})
    \end{aligned}
\end{equation}
The result follows by noting that $\sqrt{\E \exp\left( 4  \lambda^2  K^2 \mu^\T Q_{22} \mu  -\lambda \mu^\T\mu \right)} \leq 1$ for our range of $\lambda$. \hfill $\blacksquare$

\paragraph{Proof of \Cref{thm:expineq}}
By repeated use of the tower property we have that:
\begin{equation}\label{eq:towerprop}
\begin{aligned}
      &\E \exp \left(-\lambda \sum_{t=0}^{T-1}\|\Delta X_t\|_2^2 \right) =   \E\exp \left(-\lambda \sum_{t=0}^{k-1}\|\Delta X_t\|_2^2\right) \times 
      \dots \times \E_{T-k-1} \exp \left(-\lambda \sum_{t=T-k}^{T-1}\|\Delta X_t\|_2^2 \right).
\end{aligned}
\end{equation}
We will bound each conditional expectation in \eqref{eq:towerprop} separately, starting with the outermost. Observe that
\begin{align*}
    &\sum_{t=T-k}^{T-1}  \|\Delta X_t\|_2^2  =  \begin{bmatrix}
    \Delta X_{T-k}\\
    \vdots\\
    \Delta X_{T-1}
    \end{bmatrix}^\T \begin{bmatrix}
    \Delta X_{T-k}\\
    \vdots\\
    \Delta X_{T-1}
    \end{bmatrix}
    %
    =  
    W_{0:T-1}^\T\mathbf{L}_{T/k}^\T \mathrm{blkdiag}(\Delta^\T \Delta) \mathbf{L}_{T/k} W_{0:T-1}
\end{align*}
We now apply \Cref{prop:decoup} conditionally with $x=W_{0:T-k-1}$. We find that:
\begin{multline*}
   \E_{T-k-1} \exp \left(-\lambda \sum_{t=T-k}^{T-1}\|\Delta X_t\|_2^2\right) 
   \\
    \leq \exp \Bigg( -\lambda\tr\left[ \mathbf{L}_{T/k,T/k}^\T \mathrm{blkdiag}(\Delta^\T \Delta) \mathbf{L}_{T/k,T/k}\right] 
    + 36K^4\lambda^2\left[\tr \mathbf{L}_{T/k,T/k}^\T \mathrm{blkdiag}(\Delta^\T \Delta) \mathbf{L}_{T/k,T/k}\right]^2 \Bigg).
\end{multline*}
Repeatedly applying \Cref{prop:decoup} as above yields the result. \hfill $\blacksquare$

\paragraph{Proof of \Cref{lem:pointwiseanticonc}}

For any fixed $\Delta \in \R^{d'\times d} \setminus \{0\}$ and $\lambda \geq 0$ to be determined below in \eqref{eq:simplechernoff} we have that:
\begin{equation}
\label{eq:simplechernoff}
\begin{aligned}
&\mathbf{P} \left( \sum_{t=1}^{T}  \|\Delta X_t\|^2 \leq \frac{1}{2} \sum_{t=1}^{T} \E \|\Delta\tilde  X_t\|^2 \right)\\
&\leq  \E \exp \left(  \frac{\lambda}{2} \sum_{t=1}^{T} \E \|\Delta\tilde X_t\|^2 -\lambda \sum_{t=1}^{T} \|\Delta X_t\|^2   \right) && (\textnormal{Chernoff})\\
&\leq  \exp \Bigg( -\frac{\lambda}{2}\sum_{j=1}^{T/k} \tr\left[ \mathbf{L}_{j,j}^\T \mathrm{blkdiag}(\Delta^\T \Delta) \mathbf{L}_{j,j}\right] \\
&+ 36\lambda^2K^4 \sum_{j=1}^{T/k} \tr\left[ \mathbf{L}_{j,j}^\T \mathrm{blkdiag}(\Delta^\T \Delta) \mathbf{L}_{j,j}\right]^2 \Bigg) && (\textnormal{\Cref{thm:expineq}})\\
 &=\exp \Bigg( -\frac{\lambda T }{2 k} \tr\left[ \mathbf{L}_{1,1}^\T \mathrm{blkdiag}(\Delta^\T \Delta) \mathbf{L}_{1,1}\right] \\
&+ \frac{36\lambda^2 TK^4}{ k}  \tr\left[ \mathbf{L}_{1,1}^\T \mathrm{blkdiag}(\Delta^\T \Delta) \mathbf{L}_{1,1}\right]^2 \Bigg) && (\mathbf{L}_{j,j}= \mathbf{L}_{1,1})\\
 &=\exp \Bigg( -\frac{\lambda T }{2 k} \tr\left[ \mathbf{L}_{1,1}^\T \mathrm{blkdiag}(\Delta^\T \Delta) \mathbf{L}_{1,1}\right] \\
&+ \frac{36\lambda^2 TK^4}{ k}  \left[\tr \mathbf{L}_{1,1}^\T \mathrm{blkdiag}(\Delta^\T \Delta) \mathbf{L}_{1,1}\right]^2 \Bigg)  &&(\textnormal{Cauchy-Schwarz})
 \\
&\leq \exp \left(- \frac{T}{576K^2k} \right) && \left( \lambda = \frac{ \tr\left[ \mathbf{L}_{1,1}^\T \mathrm{blkdiag}(\Delta^\T \Delta) \mathbf{L}_{1,1}\right] }{144K^2 \left[ \tr\mathbf{L}_{1,1}^\T \mathrm{blkdiag}(\Delta^\T \Delta) \mathbf{L}_{1,1}\right]^2 } \right)
\end{aligned}
\end{equation}
by optimizing $\lambda$ in the last line. We point out that the $\lambda$ used in the above calculation is admissible since $ \left[ \tr\mathbf{L}_{1,1}^\T \mathrm{blkdiag}(\Delta^\T \Delta) \mathbf{L}_{1,1}\right]^2 \geq \opnorm{\mathbf{L}_{1,1}^\T \mathrm{blkdiag}(\Delta^\T \Delta) \mathbf{L}_{1,1}}\tr\left[ \mathbf{L}_{1,1}^\T \mathrm{blkdiag}(\Delta^\T \Delta) \mathbf{L}_{1,1}\right]$ and so can be seen to satisfy the constraints of \Cref{thm:expineq}.
\hfill $\blacksquare$

\paragraph{Proof of \Cref{thm:anticonc}}

Let $\mathcal{N}_\e$ be an optimal $\e$-cover of the unit sphere $\mathbb{S}^{d-1}$ and fix a multiplier $q\in (1,\infty)$. We define the events (i.e. $\Delta = v^\T$):
\begin{equation}
\begin{aligned}
    \mathcal{E}_1 &= \bigcup_{v\in \mathcal{N}_e} \left\{ \frac{1}{T} \sum_{t=0}^{T-1} v^\T X_tX_t^\T v \leq \frac{1}{2T} \sum_{t=0}^{T-1} \E v^\T \tilde X_t\tilde X_t^\T v  \right\}
    \\
  \mathcal{E}_2&  =\Bigg\{\left\| \sum_{t=0}^{T-1}X_tX_t^\T\right\|_{\mathsf{op}}
    \geq q \times  \left\| \sum_{t=0}^{T-1}\E X_tX_t^\T\right\|_{\mathsf{op}} \Bigg\}.
\end{aligned}
\end{equation}
for any $v$, it is true on the complement of $\mathcal{E}= \mathcal{E}_1 \cup \mathcal{E}_2$ that for every $v_i \in \mathcal{N}_\e$:
\begin{equation}
    \begin{aligned}
       &\frac{1}{T} \sum_{t=0}^{T-1} v^\T X_tX_t^\T v \\
       &\geq \frac{1}{2T} \sum_{t=0}^{T-1} v^\T_i X_tX_t^\T v_i - \frac{1}{T}\sum_{t=0}^{T-1} (v-v_i)^\T X_tX_t^\T (v-v_i)  &&(\textnormal{parallellogram}) \\
        &\geq \frac{1}{2T} \sum_{t=0}^{T-1} v^\T_i X_tX_t^\T v_i - \frac{\e^2}{T} \left\|\sum_{t=0}^{T-1} X_tX_t^\T \right\|_{\mathsf{op}}  &&(\textnormal{covering}) \\
       &\geq \frac{1}{4T}\sum_{t=0}^{T-1} \E  v_i^\T \tilde X_t \tilde X_t^\T v_i-\frac{q\e^2}{T}\left\|\sum_{t=0}^{T-1} \E [X_tX_t^\T]  \right\|_{\mathsf{op}} &&(\mathcal{E}^c)
    \end{aligned}
\end{equation}
where we used that $v-v_i$ has norm at most $\e$ for some choice of $v_i$ by the covering property. For this choice we have that:
\begin{align*}
      \frac{1}{T} \sum_{t=0}^{T-1} v^\T X_tX_t^\T v
       &\geq \frac{1}{8T}\sum_{t=0}^{T-1}  v_i^\T \E [\tilde X_t \tilde X_T^\T] v_i
\end{align*}
as long as:
\begin{align}\label{eq:epsilonconstraint}
    \e^2 \leq \frac{ \lambda_{\min} \left(\sum_{t=0}^{T-1} \E \tilde X_t \tilde X_t^\T \right)}{8q  \lambda_{\max} \left(\sum_{t=0}^{T-1} \E X_t  X_t\right) }.
\end{align}
To finish the proof, it suffices to estimate the failure probabilities $\mathbf{P}(\mathcal{E}_1)$ and $\mathbf{P}(\mathcal{E}_2)$. By \eqref{eq:simplechernoff}, a volumetric argument \citep[see e.g.][Example 5.8]{wainwright2019high} and our particular choice of $\e$ we have:
\begin{align*}
    \mathbf{P}(\mathcal{E}_1)& \leq \left(1+\frac{2}{\e^2} \right)^d \exp \left(- \frac{T}{576K^2k} \right).
\end{align*}

To estimate $\mathbf{P}(\mathcal{E}_2)$, observe first that for $\mathbb{S}^{d-1}_{1/4}$ a $1/4$-net of $\mathbb{S}^{d-1}$, we have by \Cref{lem:opnormdisc}, \eqref{eq:opnormdisc2}:
\begin{equation}\label{eq:epsnetusedinthm}
    \left\|\sum_{t=0}^{T-1} X_tX_t^\T \right\|_{\mathsf{op}} \leq 2 \sup_{\tilde v\in \mathbb{S}^{d-1}_{1/4}} \sum_{t=0}^{T-1} \tilde v^\T X_tX_t^\T \tilde v
\end{equation}
We now invoke \Cref{prop:HWexpineq} with $M=\mathbf{L}_{\tilde v} \mathbf{L}_{\tilde v} ^\T$ where $\mathbf{L}_{\tilde v} = ( I_{T}\otimes {\tilde v}^\T )\mathbf{L}$ for fixed $ \tilde v$ of the form $(v-v_i)/\e$ as before and finish by a union bound. For fixed $\tilde v$ and $\lambda \geq 0$ to be determined below, \Cref{prop:HWexpineq}  yields via a Chernoff argument:
\begin{equation}\label{hwusedinthm}
    \begin{aligned}
        &\mathbf{P}\left(  \sum_{t=0}^{T-1} \tilde v^\T X_tX_t^\T \tilde v 
    \geq q \times \sum_{t=0}^{T-1} \tilde v^\T  \E [X_tX_t^\T] \tilde v\right)\\
    &=\mathbf{P}\left(  W_{0:T-1}^\T \mathbf{L}_{\tilde v} \mathbf{L}_{\tilde v}^\T W_{0:T-1}^\T 
    \geq q \tr  \mathbf{L}_{\tilde v} \mathbf{L}_{\tilde v} ^\T\right)\\
    &\leq  \exp\left(- \lambda q \tr  \mathbf{L}_{\tilde v} \mathbf{L}_{\tilde v}  + 36\lambda^2 K^4 \tr  (\mathbf{L}_{\tilde v} \mathbf{L}_{\tilde v}^\T)^2  \right)
    &&
    \left(\lambda \leq \frac{1}{8\sqrt{2}K^2 \opnorm{\mathbf{L}_{\tilde v} \mathbf{L}_{\tilde v} ^\T}}\right)
    \\
    &\leq  \exp\left(- \lambda q \tr  \mathbf{L}_{\tilde v} \mathbf{L}_{\tilde v}  + 36\lambda^2 K^4 \opnorm{\mathbf{L}_{\tilde v} \mathbf{L}_{\tilde v} ^\T}   \tr  (\mathbf{L}_{\tilde v} \mathbf{L}_{\tilde v}^\T)  \right) \\
    &=\exp \left( - \frac{\tr  (\mathbf{L}_{\tilde v} \mathbf{L}_{\tilde v}^\T)}{K^2\opnorm{\mathbf{L}_{\tilde v} \mathbf{L}_{\tilde v} ^\T} }  \left( \frac{q}{8\sqrt{2}}-\frac{9}{32} \right)\right) && \left(\lambda = \frac{1}{8\sqrt{2}K^2 \opnorm{\mathbf{L}_{\tilde v} \mathbf{L}_{\tilde v} ^\T}}\right)\\
    &\leq\exp \left(- \frac{T}{576K^2k} \right)  && \left( q\geq\frac{8\sqrt{2} T \opnorm{\mathbf{L}_{\tilde v} \mathbf{L}_{\tilde v} ^\T}}{576 k \tr  (\mathbf{L}_{\tilde v} \mathbf{L}_{\tilde v}^\T)} + \frac{9 \times 8\sqrt{2}}{32} \right).
    \end{aligned}
\end{equation}
Observe that 
\begin{equation}\label{eq:qischosen}
q=\frac{8\sqrt{2} T \opnorm{\mathbf{L} \mathbf{L} ^\T}}{576 k  \lambda_{\min}\left(\sum_{t=0}^{T-1} \E X_t X_t^\T \right) } + \frac{9 \times 8\sqrt{2}}{32} 
\end{equation}
satisfies the constraint from \eqref{hwusedinthm} for all $\tilde v$. Hence by a union bound and  a volumetric argument \citep[see e.g.][Example 5.8]{wainwright2019high} with probability at least $1-\left( 1+\frac{2}{\e^2}\right)^d$ (observing that $\e \leq 1/4$ in \eqref{eq:epsilonconstraint}):
\begin{equation}
    \begin{aligned}
        \left\|\sum_{t=0}^{T-1} X_tX_t^\T \right\|_{\mathsf{op}} &\leq 2 \sup_{\tilde v\in \mathbb{S}^{d-1}_{1/4}} \sum_{t=0}^{T-1} \tilde v^\T X_tX_t^\T \tilde v
        && (\textnormal{by } \eqref{eq:epsnetusedinthm})
        \\
        &
        \leq 
2q \sup_{\tilde v\in \mathbb{S}^{d-1}_{1/4}} \sum_{t=0}^{T-1} \tilde v^\T \E X_tX_t^\T \tilde v
        && (\textnormal{union bound over } \eqref{hwusedinthm})
        \\
        &
        \leq 2q \left\|\sum_{t=0}^{T-1} \E X_tX_t^\T \right\|_{\mathsf{op}}. && ( \mathbb{S}^{d-1}_{1/4} \subset  \mathbb{S}^{d-1})
    \end{aligned}
\end{equation}

Hence with the choice of $q$ from \eqref{eq:qischosen} and another union bound (again observing that $\e \leq 1/4$):
\begin{equation}
     \mathbf{P}(\mathcal{E}_1)+ \mathbf{P}(\mathcal{E}_2) \leq 2  \left(1+\frac{2}{\e^2} \right)^d \exp \left(- \frac{T}{576K^2k} \right).
\end{equation}
In light of \eqref{eq:epsilonconstraint} we may choose with the above choice of $q$ (from \eqref{eq:qischosen}):
\begin{equation}
    \begin{aligned}
        \e^2 &=   \frac{ \lambda_{\min} \left(\sum_{t=0}^{T-1} \E \tilde X_t \tilde X_t^\T \right)}{8q  \lambda_{\max} \left(\sum_{t=0}^{T-1} \E X_t  X_t\right) } \\
       &=\frac{ \lambda_{\min} \left(\sum_{t=0}^{T-1} \E \tilde X_t \tilde X_t^\T \right)}{8 \left(\frac{8\sqrt{2} T \opnorm{\mathbf{L} \mathbf{L} ^\T}}{576 k  \lambda_{\min}\left(\sum_{t=0}^{T-1} \E X_t X_t^\T \right) } + \frac{9 \times 8\sqrt{2}}{32}  \right) \lambda_{\max} \left(\sum_{t=0}^{T-1} \E X_t  X_t\right) }.
    \end{aligned}
\end{equation}
Thus:
\begin{equation}
\begin{aligned}
    \left( 1+\frac{2}{\e^2}\right)^d &=   \left( 1+16 \frac{\left(\frac{8\sqrt{2} T \opnorm{\mathbf{L} \mathbf{L} ^\T}}{576 k  \lambda_{\min}\left(\sum_{t=0}^{T-1} \E X_t X_t^\T \right) } + \frac{9 \times 8\sqrt{2}}{32}  \right) \lambda_{\max} \left(\sum_{t=0}^{T-1} \E X_t  X_t\right) }{ \lambda_{\min} \left(\sum_{t=0}^{T-1} \E \tilde X_t \tilde X_t^\T \right)} \right)^d    \\
    &=
    \left( 1+4\sqrt{2} \frac{\left(\frac{ T \opnorm{\mathbf{L} \mathbf{L} ^\T}}{18 k  \lambda_{\min}\left(\sum_{t=0}^{T-1} \E X_t X_t^\T \right) } + 9   \right) \lambda_{\max} \left(\sum_{t=0}^{T-1} \E X_t  X_t\right) }{ \lambda_{\min} \left(\sum_{t=0}^{T-1} \E \tilde X_t \tilde X_t^\T \right)} \right)^d 
\end{aligned}
\end{equation}
The result has been established.
\hfill $\blacksquare$

\section{Proof of the Self-Normalized Martingale Theorem}

\paragraph{Proof of \Cref{lem: method of mixtures bound}}
    Let $\Psi$ be independent from $P$ and $Q$ with  a matrix normal distribution with density 
    \begin{align*}
        p(\Lambda) = \frac{1}{\sqrt{(2\pi)^{\dx\da} \det(\Sigma^{-1})^{\da}}} \exp\paren{-\frac{1}{2} \tr\paren{\Lambda^\T \Sigma \Lambda}}.
    \end{align*}
    Then the conditional expectation on the right of \eqref{eq: self norm pseudo max} becomes
    \begin{align*}
        &\E\paren{\exp\tr\paren{-\frac{1}{2}(\Psi-Q^{-1} P^\T)^\T Q (\Psi - Q^{-1} P^\T)} \vert P, Q}\\
        &\overset{(i)}{=}\int_{\R^{\dx \times \da}}  \frac{1}{\sqrt{(2\pi)^{\dx\da} \det(\Sigma^{-1})^{\da}}} \exp\paren{-\frac{1}{2}
        \tr\paren{\Lambda^\T \Sigma \Lambda}} \exp\paren{-\frac{1}{2} \tr\paren{(\Lambda-Q^{-1} P^\T)^\T Q (\Lambda - Q^{-1} P^\T)}} d \Lambda \\
       &\overset{(ii)}{=} \frac{\sqrt{\det (\Sigma)^{\da}}}{\sqrt{\det (Q+\Sigma)^{\da}}} \int_{\R^{\dx \times \da}} \frac{\exp\paren{-\frac{1}{2}\tr\paren{(\Lambda-Q^{-1} P)^\T Q (\Lambda - Q^{-1} P) + \Lambda^\T \Sigma \Lambda}}}{\sqrt{(2 \pi)^{\dx \da} \det((Q+\Sigma)^{-1})^{\da}}} d \Lambda,
    \end{align*}
    where $(i)$ follows from the definition of conditional expectation, and $(ii)$ follows by pulling the term $\sqrt{\det (\Sigma)^{\da}}/\sqrt{\det (Q+\Sigma)^{\da}}$ out of the integral. 
    Completing the square, we can massage the exponent of the integrand above:
    \begin{align*}
        &(\Lambda-Q^{-1} P^\T)^\T Q (\Lambda - Q^{-1} P^\T) + \Lambda^\T \Sigma \Lambda \\
        &= (\Lambda - (Q+\Sigma)^{-1} P^\T)^\T(Q+\Sigma)(\Lambda-(Q+\Sigma)^{-1} P^\T) - P (Q+\Sigma)^{-1} P^\T  + P Q^{-1} P^\T.
    \end{align*}
    Then by pulling the exponential terms that do not depend on $\Lambda$ outside of the integral, the integral in the expression for the conditional expectation above simplifies to the integral of a matrix normal density. The conditional expectation therefore simplifies to
    \begin{align*}
        &\E\paren{\exp\paren{-\frac{1}{2}(\Lambda-Q^{-1} P^\T)^\T Q (\Lambda - Q^{-1} P^\T)} \vert P,Q} \\
        &= \frac{\sqrt{\det (\Sigma)^{\da}}}{\sqrt{\det (Q+\Sigma)^{\da}}} \exp\paren{\frac{1}{2}\tr\paren{P (Q+\Sigma)^{-1}P^\T}} \exp\paren{-\frac{1}{2}\tr\paren{P Q^{-1}P^\T}}.
    \end{align*}
    Substituting this result into \eqref{eq: self norm pseudo max}, and leveraging the canonical assumption as in \eqref{eq: relaxation of canoncial assumption}, we have that 
    \begin{align*}
        \E\paren{\frac{\sqrt{\det (\Sigma)^{\da}}}{\sqrt{\det (Q+\Sigma)^{\da}}} \exp\paren{\frac{1}{2}\tr\paren{P(Q+\Sigma)^{-1}P^\T }}} \leq 1.
    \end{align*}
    By the Chernoff trick and Markov's inequality,
    \begin{align*}
        & \bfP\paren{\norm{P(Q+\Sigma)^{-1/2}}_F^2 > 2\log\paren{\frac{\det(Q + \Sigma)^{\da/2} \det(\Sigma)^{-\da/2}}{\delta}}} \\
        &= \bfP\paren{ \exp\paren{ \frac{1}{2}\norm{P(Q+\Sigma)^{-1/2}}_F^2} \frac{\sqrt{\det(\Sigma)^{\da}}}{\sqrt{\det(Q+\Sigma)^{\da}}} > \frac{1}{\delta}} \\
        &\leq \delta \E \paren{\exp\paren{ \frac{1}{2}\norm{P(Q+\Sigma)^{-1/2}}_F^2} \frac{\sqrt{\det(\Sigma)^{\da}}}{\sqrt{\det(Q+\Sigma)^{\da}}} } \leq \delta .
    \end{align*}
\hfill $\blacksquare$

\paragraph{Proof of \Cref{lem: canoncial assumption verified}}
 
    Define $D_t(\Lambda) \triangleq \exp\tr\paren{\frac{\eta_t X_t^\T \Lambda}{\sigma}  - \frac{1}{2} \Lambda^\T X_t X_t^\T \Lambda}$. Observe that
    \begin{align*}
        \E M_t(\Lambda)&\overset{(i)}{=} \E \E \paren{M_t(\Lambda) \vert \calF_{t}} \\
        &\overset{(ii)}{=} \E\paren{ D_1(\Lambda) D_1(\Lambda) \dots D_{t-1}(\Lambda) \E\paren{D_t(\Lambda) \vert \calF_{t}}} \\
        &\overset{(iii)}{=} \E \paren{M_{t-1}(\Lambda) \E\paren{D_t(\Lambda) \vert \calF_{t}}},
    \end{align*}
    where $(i)$ follows by the tower rule, while $(ii)$ and $(iii)$ follow by writing the exponential of a sum as a product of exponentials. By the assumption that the process $\eta_t$ is conditionally $\sigma^2$-sub-Guassian with respect to $\calF_t$, we have that
    \begin{align*}
        \E\paren{D_t(\Lambda) \vert \calF_t} &\overset{(i)}{=} \E\paren{\exp\tr\paren{\frac{\eta_t X_t^\T \Lambda}{\sigma}}  \vert \calF_t}  \exp\tr\paren{-\frac{1}{2} \Lambda^\T X_t X_t^\T \Lambda} \\
        &\overset{(ii)}{=} \E\paren{\exp\paren{\frac{\langle \Lambda^\T X_t,  \eta_t\rangle}{\sigma}}  \vert \calF_t}  \exp\tr\paren{-\frac{1}{2} \Lambda^\T X_t X_t^\T \Lambda} \\
        &\overset{(iii)}{\leq} \exp\paren{\frac{1}{2} \norm{\Lambda^\T X_t}^2} \exp\tr\paren{-\frac{1}{2} \Lambda^\T X_t X_t^\T \Lambda} = 1,
    \end{align*}
    where $(i)$ follows from the fact that $X_t$ is $\calF_t$-measureable, $(ii)$ follows from the trace-cyclic property, and $(iii)$ follows from the fact that $\eta_t$ is conditionally sub-Gaussian.
    Then for all $t > 0$, $\E M_t(\Lambda) \leq \E M_{t-1}(\Lambda)$, and $\E M_0(\Lambda) =1$. This in turn implies that $\E M_t(\Lambda) \leq 1$ for all $0 \leq t \leq T$. \hfill $\blacksquare$

\subsection{Proof of \Cref{thm:self-normalized martingale}}
    The Frobenius norm bound follows immediately by using \Cref{lem: canoncial assumption verified} to verify that \eqref{eq: canoncial assumption} holds for 
    \begin{align*}
        P = \sum_{t=1}^T \frac{V_t X_t^\T}{\sigma} , \quad Q= \sum_{t=1}^T X_t X_t^\T,
    \end{align*}
    followed by application of \Cref{lem: method of mixtures bound}. 

    For the operator norm bound, we employ a covering argument. Let $\calN_\e = \curly{w_i}_{i=1}^{5^{\dy}}$ be an $\e$-net of the $\dy$-dimensional unit sphere with $\e=0.5$. Such a net exists by virtue of \Cref{lem:volumetric}. For any $i \in [5^{\dy}]$, we may bound 
    \begin{align*}
        \bignorm{w_i^\T \sum_{t=1}^T V_t X_t^\T \paren{\Sigma + \sum_{t=1}^T X_t X_t^\T}^{-1/2}}
    \end{align*}
    by using \Cref{lem: canoncial assumption verified} to verify that \eqref{eq: canoncial assumption} holds for 
    \begin{align*}
        P = \sum_{t=1}^T \frac{w_i^\T V_t X_t^\T}{\sigma} , \quad Q = \sum_{t=1}^T X_t X_t^\T,
    \end{align*}
    and then applying \Cref{lem: method of mixtures bound}. To move from this bound to the operator norm bound, note that by \Cref{lem:net argument 1}
    \begin{align*}
       & \bigopnorm{\left(\sum_{t=1}^T V_t X_t^\T \right)\left(\Sigma + \sum_{t=1}^T X_t X_t^\T \right)^{-1/2 }}^2 = \sup_{w \in \R^{\dy},  \norm{w} \leq 1} \bignorm{w^\T \left(\sum_{t=1}^T V_t X_t^\T \right)\left(\Sigma+ \sum_{t=1}^T X_t X_t^\T \right)^{-1/2 }}^2 \\
        &\leq  (1 - \e)^{-2} \sup_{w_i \in \calN_{\e}} \bignorm{w^\T \left(\sum_{t=1}^T V_t X_t^\T \right)\left(\Sigma+ \sum_{t=1}^T X_t X_t^\T \right)^{-1/2 }}^2.
    \end{align*}
   A union bound over the $5^{\dy}$ elements of $\calN_\e$ tells us that with probability at least $1- 5^{\dy} \delta'$,
    \begin{align*}
        \sup_{w_i \in \calN_\e} \bignorm{w_i^\T \left(\sum_{t=1}^T V_t X_t^\T \right)\left(\Sigma + \sum_{t=1}^T X_t X_t^\T \right)^{-1/2 }}^2 \leq 2 \sigma^2 \log\paren{\frac{\det\paren{\Sigma+ \sum_{t=1}^T X_t X_t^\T }^{1/2} }{\det(\Sigma)^{ 1/2}\delta'}}.
    \end{align*}
     The claim follows by combining results along with the change of variables $\delta = 5^{\dy} \delta'$.  \hfill $\blacksquare$
\section{Proofs for System Identification}\label{appendix:sys_id}
\subsection{Proof of Theorem~\ref{thm:arx_pac}}
Select $\Sigma=\frac{T}{16}\CovX_\tau$ and consider the following events
\begin{align*}
\calE_1&\triangleq \set{\ECov_T\succeq \frac{1}{16}\CovX_\tau}\\
\calE_2&\triangleq \set{\ECov_T\preceq 3\frac{p\dy+q\du}{\delta}\CovX_T}\\
\calE_3&\triangleq \left\{   
       \bigopnorm{\sum_{t=1}^T W_t X_t^\top \paren{\Sigma + T\ECov_T}^{-1/2}}^2 \right.\\
        &\left.\qquad \leq 4 K^2 \log\paren{\frac{\det\paren{\Sigma + T\ECov_T}}{ \det(\Sigma)}}+8\dy K^2\log 5+ 8K^2\log\frac{3}{\delta}\right\}.
\end{align*}
It follows that $\Prob(\calE^{c}_i)\le\delta/3$, for $i=1,\,2,\,3$ by invoking Theorem~\ref{thm:arx_pe}, Lemma~\ref{lem:arx_Markov}, and Theorem~\ref{thm:self-normalized martingale} respectively. By a union bound $\Prob(\calE^c_1\cup \calE^c_2\cup \calE^c_3)\le \delta$.
Hence $\Prob(\calE_1\cap \calE_2\cap \calE_3)\ge 1-\delta$.

It remains to show~\eqref{eq:arx_pac} holds under the intersection of the aforementioned three events. Based on the error decomposition
\[
\opnorm{\Mls_T-\Ms}^2\le \opnorm{\Sigmaw}\bigopnorm{\sum_{t=1}^TW_tX^\top_t \paren{T\ECov_T}^{-1/2}}^2\,\opnorm{(T\ECov_T)^{-1}}
\]
The \textbf{excitation term} can be bounded immediately based on $\calE_1$
\[
\opnorm{(T\ECov_T)^{-1}}\le \frac{16}{T\lambda_{\min}(\CovX_\tau)} .
\]
For the \textbf{noise term}, we have to modify the inverse to resemble the one in the event $\calE_3$. Note that under the event $\calE_1$
\[
2T\ECov_T \succeq T\ECov_T+\Sigma.
\]
Hence the noise term is upper bounded by
\[
\bigopnorm{\sum_{t=1}^TW_tX^\top_t \paren{T\ECov_T}^{-1/2}}^2\le 2\bigopnorm{\sum_{t=1}^TW_tX^\top_t \paren{\Sigma+T\ECov_T}^{-1/2}}^2
\]
Putting everything together and upper bounding $\ECov_T$ based on $\calE_2$, we obtain
\begin{equation}
\opnorm{\Mls_T-\Ms}^2\le 16\frac{8}{ \snr_{\tau} T} \paren{
\log\det\paren{I + 48\frac{p\dy+q\du}{\delta}\CovX_T\CovX^{-1}_\tau} +2\dy \log 5+ 2 \log\frac{3}{\delta}}.
\end{equation}
Next, we consider the following
\[
\det\paren{I + 48\frac{p\dy+q\du}{\delta}\CovX_T\CovX^{-1}_\tau}\le \paren{49\frac{p\dy+q\du}{\delta}}^{p\dy+q\du}\det\paren{ \CovX_T\CovX^{-1}_\tau},
\]
where we used the fact that  $I\preceq  \frac{p\dy+q\du}{\delta}\CovX_T\CovX^{-1}_\tau$ (follows from $\Sigma_{\tau}\preceq \Sigma_{T}$) along with the properties of the determinant. To simplify the final expression, we absorb minor order terms into the higher order terms by inflating the constants accordingly.\hfill \qedsymbol


The following supporting lemma establishes monotonicity of the covariance sequence with respect to the positive definite cone. 
\begin{lemma}[Covariance monotonicity]\label{lem:arx_cov_monotonicity}
Let $X_t$ be defined as in~\eqref{eq:ARX_parameters_batch}, with $\Sigma_t$ its covariance. Under the assumption that $\Sigma_0=0$, the sequence $\Sigma_t$ is increasing with respect to the positive semidefinite cone
\[
\Sigma_{t+1}\succeq \Sigma_t,\,t\ge 0.
\]
\end{lemma}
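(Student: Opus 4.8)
The plan is to derive a Lyapunov-type recursion for the covariance sequence $\Sigma_t = \E X_t X_t^\top$ from the state-space representation \eqref{eq:arx2ss}, and then conclude monotonicity by a one-line induction. The whole argument is essentially routine; the only point requiring (minor) care is the vanishing of a cross-covariance term, which rests on the causality of the representation together with the independence of the excitation/noise sequence.

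\textbf{Step 1: the covariance recursion.} Recall from \eqref{eq:arx2ss} that $X_{t+1} = \mathcal{A} X_t + \calB V_t$ with $V_t = [W_t^\top\ U_t^\top]^\top$. Because the representation is causal, $X_t$ is a (deterministic linear) function of $V_0,\dots,V_{t-1}$ only, and by Assumptions~\ref{assum:arx_noise} and~\ref{assum:arx_excitation} the increments $V_0,V_1,\dots$ are independent and mean zero. Consequently $\E[X_t V_t^\top] = 0$, so expanding $\Sigma_{t+1} = \E X_{t+1} X_{t+1}^\top$ annihilates the cross terms and leaves
\begin{equation*}
    \Sigma_{t+1} = \mathcal{A}\,\Sigma_t\,\mathcal{A}^\top + \calB\,\E[V_t V_t^\top]\,\calB^\top \;=:\; \mathcal{A}\,\Sigma_t\,\mathcal{A}^\top + Q, \qquad Q \succeq 0.
\end{equation*}

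\textbf{Step 2: induction.} For the base case, $\Sigma_0 = 0$ gives $\Sigma_1 = Q \succeq 0 = \Sigma_0$. For the inductive step, suppose $\Sigma_t \succeq \Sigma_{t-1}$; subtracting the recursion at times $t$ and $t-1$ yields $\Sigma_{t+1} - \Sigma_t = \mathcal{A}(\Sigma_t - \Sigma_{t-1})\mathcal{A}^\top$, which is positive semidefinite since the congruence map $M \mapsto \mathcal{A} M \mathcal{A}^\top$ sends the PSD cone into itself. This closes the induction, proving $\Sigma_{t+1} \succeq \Sigma_t$ for all $t \ge 0$.

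\textbf{Remark on the main obstacle / an alternative.} There is no real obstacle beyond justifying $\E[X_t V_t^\top] = 0$ as above. A slightly slicker route avoids the recursion altogether: unrolling \eqref{eq:arx2ss} with $X_0=0$ and using that the $V_j$ are i.i.d.\ shows $X_{t+1} \D \mathcal{A}^t \calB V_0 + X_t'$ where $X_t' \D X_t$ and $X_t'$ is independent of $V_0$ (it is built from $V_1,\dots,V_t$); taking covariances gives $\Sigma_{t+1} = \mathcal{A}^t \calB\,\E[V_0 V_0^\top]\,\calB^\top (\mathcal{A}^t)^\top + \Sigma_t \succeq \Sigma_t$ directly.
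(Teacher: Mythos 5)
Your proof is correct and follows essentially the same route as the paper: derive the Lyapunov-type recursion $\Sigma_{t+1}=\mathcal{A}\Sigma_t\mathcal{A}^\top+\calB\,\E[V_tV_t^\top]\calB^\top$ from \eqref{eq:arx2ss} and conclude by induction using that congruence preserves the PSD order. Your explicit justification of the vanishing cross term (and the unrolling alternative in the remark) is a harmless addition; the paper leaves that step implicit.
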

\begin{proof}
    From the extended state space recursion~\eqref{eq:arx2ss}, we obtain that
    \[
\Sigma_{t+1}=\mathcal{A}\Sigma_t\mathcal{A}^{\top}+\mathcal{B}\E V_{t+1}V_{t+1}^\top\mathcal{B}^\top.
    \]
    Since the noise processes are i.i.d., we have constant covariance over time $\Gamma_V\triangleq \E V_{t+1}V_{t+1}^\top,$ for all $t\ge 0$.
  Trivially, we have $\Sigma_1\succeq 0=\Sigma_0$. We prove the other cases by induction. Assume that $\Sigma_{t}\succeq \Sigma_{t-1}$. Then
  \begin{align*}
\Sigma_{t+1}= &\mathcal{A}\Sigma_{t}\mathcal{A}^{\top}+\mathcal{B}\Gamma_V\mathcal{B}^\top \\&\succeq \mathcal{A}\Sigma_{t-1}\mathcal{A}^{\top}+\mathcal{B}\Gamma_V\mathcal{B}^\top=\Sigma_{t},
  \end{align*}
  where the inequality follows from $\Sigma_t\succeq \Sigma_{t-1}$.
\end{proof}

\subsection{Proof of Theorem~\ref{thm:arx_pe}}
Recall~\eqref{eq:LforARX}. Before we apply Theorem~\ref{thm:anticonc}, we need to pick a block size $k$ and prove that the minimum eigenvalue of the decoupled process
$$\tilde X_{1:T} = \mathrm{blkdiag}(\mathbf{L}_{11},\dots, \mathbf{L}_{T/k,T/k})V_{0:T-1}$$
is strictly positive. Pick $k=2\tau$ such that $\tau\ge \max\{p,q\}$. We will first show that
\begin{equation}\label{eq:decoupled_lower_bound}
\lambda_{\min}(\sum_{t=1}^{T}\E\tilde{X}_t\tilde{X}^\top_t) \ge \frac{T}{2}\lambda_{\min}\CovX_\tau>0.
\end{equation} Then, we will characterize the burn-in time by applying Theorem~\ref{thm:anticonc}.

\noindent\textbf{Part a). Lower bound for decoupled process. }
Notice that
\[
\sum_{t=1}^{T}\E\tilde{X}_t\tilde{X}^\top_t=\frac{T}{k}\sum_{t=1}^{k}\E\tilde{X}_t\tilde{X}^\top_t=\frac{T}{k}\sum_{t=1}^{k}\E X_t X^\top_t,
\]
where the equality follows from the fact that the blocks of the decoupled random process are identical to the first block, which coincides with the true process.
By semi-definiteness of the covariance and by monotonicity (Lemma~\ref{lem:arx_cov_monotonicity})
\[
\sum_{t=1}^{T}\E\tilde{X}_t\tilde{X}^\top_t=\frac{T}{k}\sum_{t=1}^{k}\E X_t X^\top_t\succeq \frac{T}{k}\sum_{t=\tau+1}^{k}\E X_t X^\top_t\succeq \frac{T}{2}\CovX_{\tau}.
\]
What remains to show is that $\CovX_{\tau}$ is non-degenerate. We will denote by $\star$ the elements that can be non-zero but do not matter.
Let $t\ge\max{p,q}$. Then, we have
\[
X_t=\matr{\calT_{p}&\star\\0&I_{q\du}}\matr{W_{t:t-p+1}\\U_{t:t-q+1}}+\matr{\star&\star\\\star&\star}\matr{W_{0:t-p}\\U_{0:t-q}},
\]
where $\calT_p=\matr{\calB_1&\cdots&\calA_{11}^{p-1}\calB_1}$ is an upper triangular Toeplitz matrix
\[
\calT_p=\matr{\Sigmaw^{1/2}&\star&\cdots&\star\\0&\Sigmaw^{1/2}&\cdots&\star\\ \vdots &&\ddots&\\0&0&\cdots&\Sigmaw^{1/2}}.
\]
Since the noise is non-degenerate, matrix $\calT_p$ is full rank. As a result, we obtain
\[
\CovX_{\tau}\succeq \matr{\calT_{p}&\star\\0&I_{q\du}}\matr{I_{p\dy}&0\\0&\sigma^2_u I_{q\du}}\matr{\calT_{p}&\star\\0&I_{q\du}}^\top\succ 0.
\]

\noindent\textbf{Part b). Burn-in time characterization.} Applying Theorem~\ref{thm:anticonc} and by~\eqref{eq:decoupled_lower_bound}, we obtain
\begin{equation*}
    \mathbf{P} \left(  \CovX_T \succeq  \frac{1}{16}  \E X_t X_t^\T \right)
   \ge 1-\delta 
 \end{equation*}
 if 
 \begin{equation}\label{eq:arx_burn_in_unprocessed}
T\ge 1152 K^2 \tau (\log 1/\delta+(p\dy+q\du)\log \Csys)
 \end{equation}
where
 \begin{equation*}
     C_{\mathsf{sys}}  \triangleq 1+2\sqrt{2} \frac{\left(\frac{ T \opnorm{\mathbf{L} \mathbf{L} ^\T}}{36\tau  \lambda_{\min}\left(\sum_{t=1}^{T} \E X_t X_t^\T \right) } + 9   \right) \lambda_{\max} \left(\sum_{t=1}^{T} \E X_t  X_t^\T\right) }{ \lambda_{\min} \left(\sum_{t=1}^{T} \E \tilde X_t \tilde X_t^\T \right)}
 \end{equation*}
 and $\mathbf{L}$ is defined in~\eqref{eq:LforARX}. What remains is to prove that~\eqref{eq:arx_burn_in} implies~\eqref{eq:arx_burn_in_unprocessed}. To do so we need to simplify the term $\Csys$.

\paragraph{Upper bounding $\opnorm{\bfL\bfL^\top}$}
Note that $\bfL=\matr{\bfL_1^\top &\cdots&\bfL^\top_T}^\top$, where
\[
{\bfL}_i=\matr{\mathcal{A}^{i-1}\mathcal{B}&\cdots&\mathcal{B}&0&\cdots&0}.
\]
Hence we can write
\begin{align}  
\opnorm{\bfL\bfL^\top}&=\opnorm{\bfL^\top\bfL}=\opnorm{\sum_{t=1}^T \bfL^\top_t \bfL_t}\\
&\le\sum_{t=1}^T \opnorm{\bfL_t \bfL^\top_t}=\sum_{t=1}^{T}\opnorm{\sum_{k=0}^{t-1} \mathcal{A}^k\mathcal{B}\mathcal{B}^\top(\mathcal{A}^\top)^k}\\
&=\sum_{t=1}^{T} \opnorm{\CovX_t}\le T\opnorm{\CovX_T}
\end{align}
\paragraph{Upper bounding $\lambda_{\max} \left(\sum_{t=1}^{T} \E X_t  X_t^\T\right) $}
We just have $\opnorm{\sum_t\CovX_t}\le T\opnorm{\CovX_T}$
\paragraph{Lower bounds.} To lower bound the smallest eigenvalues, we invoke~\eqref{eq:decoupled_lower_bound}.
\paragraph{Final bound on $\Csys$}
Combining the above we get
\begin{align*}
\Csys&= 1+4\sqrt{2} \frac{\left(\frac{ T \opnorm{\mathbf{L} \mathbf{L} ^\T}}{36 \tau  \lambda_{\min}\left(\sum_{t=1}^{T} \E X_t X_t^\T \right) } + 9   \right) \lambda_{\max} \left(\sum_{t=1}^{T} \E X_t  X_t\right) }{ \lambda_{\min} \left(\sum_{t=1}^{T} \E \tilde X_t \tilde X_t^\T \right)}\\
&\le 1+4\sqrt{2}\paren{\frac{T}{36\tau}\frac{\opnorm{\CovX_T}}{\lambda_{\min}(\CovX_{\tau})}+9}\frac{\opnorm{\CovX_T}}{\lambda_{\min}(\CovX_{\tau})}.
\end{align*}
Under~\eqref{eq:arx_burn_in}, we have $T\ge 1152 \tau$. Hence, $\frac{T}{36\tau}\frac{\opnorm{\CovX_T}}{\lambda_{\min}(\CovX_{\tau})}\ge 9$ which leads to the simplified expression
\[
\Csys\le 16\sqrt{2}/36 \frac{T}{\tau}\frac{\opnorm{\CovX_T}^2}{\lambda^2_{\min}(\CovX_{\tau})}.
\]
Using $16\sqrt{2}/36\le 2/3$, we finally get that
$
  \Csys\le\Csys(T,\tau).
  $
  Finally, based on the above inequality, condition~\eqref{eq:arx_burn_in} implies~\eqref{eq:arx_burn_in_unprocessed}.
  \hfill \qedsymbol

\subsection{Proof of Theorem~\ref{thm:non-degenerate_snr}}
It is sufficient to provide a uniform lower bound on the least singular value of $\CovZ_{p,p}$.
Expanding the state-space equation~\eqref{eq:ss_innovation_equation}, we obtain
\begin{align*}
Y_{t}=\sum_{s=1}^{t}(\Cs(\As)^{s-1} \Ks \Sigmae^{1/2} E_{t-s}+\Cs(\As)^{s-1} \Bs  U_{t-s})+\Sigmae^{1/2} E_{t}.
\end{align*}
Hence, the vector of covariates at time $p$ is equal to
\begin{equation}
    Z_{p}=\matr{\calT_{e,p}&\calT_{u,p}\\0& I}\matr{E_{p-1:0}\\U_{p-1:0}},
\end{equation}
where
$\calT_{e,p}$, $\calT_{u,p}$ are Toeplitz matrices
\[
\calT_{e,s}\triangleq \matr{\Sigmae^{1/2}&\Cs\Ks\Sigmae^{1/2}& &\Cs(\As)^{s-2}\Ks\Sigmae^{1/2}\\0&\Sigmae^{1/2}&\cdots&\Cs(\As)^{s-3}\Ks\Sigmae^{1/2}\\ \vdots&\vdots& &\vdots \\0&0&\cdots&\Sigmae^{1/2}}, 
\calT_{u,s}\triangleq \matr{0&\Cs\Bs& &\Cs(\As)^{s-2}\Bs\\0&0&\cdots&\Cs(\As)^{s-3}\Bs\\ \vdots&\vdots& &\vdots \\0&0&\cdots&0}. 
\]
The covariance at time $p$ can be now expressed compactly as
\[
\CovZ_{p,p}=\matr{\calT_{e,p}\calT^\top_{e,p}+\sigma^2_u\calT_{u,p}\calT^\top_{u,p}&\sigma_u\calT_{u,p}\\\sigma_u\calT^\top_{u,p}&\sigma^2_u I}.
\]
We need to show that the covariance is uniformly lower bounded.
By Lemmas~\ref{lem:positivity_of_SE},~\ref{lem:dominance_of_TE}
\[
\calT_{e,p}\calT^\top_{e,p}\succeq c_1 I+c_2 \calT_{u,p}\calT^\top_{u,p}
\]
for some positive $c_1,c_2$.
Using the upper-bound and splitting carefully
\begin{align*}
\CovZ_{p,p}&\succeq \matr{c_1 I &0\\0&c_2/(\sigma^2_u+c_2)I}+\matr{(\sigma^2_u+c_2)\calT_{u,p}\calT^\top_{u,p}&\sigma_u\calT_{u,p}\\\sigma_u\calT^\top_{u,p}&\sigma^2_u/(\sigma^2_u+c_2)I}\\
&\succeq \matr{c_1 I &0\\0&c_2/(\sigma^2_u+c_2)I}+ \matr{\sqrt{\sigma^2_u+c_2}\calT_{u,p}\\\sigma_u/\sqrt{\sigma^2_u+c_2}I}\matr{\sqrt{\sigma^2_u+c_2}\calT_{u,p}\\\sigma_u/\sqrt{\sigma^2_u+c_2}I}^\top\\
&\succeq \matr{c_1 I &0\\0&c_2/(\sigma^2_u+c_2)I}.
\end{align*}
The lower bound on $\CovZ_{p,p}$ is uniform and independent of $p,T$. \hfill $\blacksquare$

\begin{lemma}\label{lem:positivity_of_SE}
Under the assumptions of Theorem~\ref{thm:non-degenerate_snr}, the Toeplitz matrix $\calT_{e,t}$ is lower bounded by 
\[
\calT_{e,t}\calT^\top_{e,t}\succeq \lambda_{\min}(\Sigmav) I
\]
\end{lemma}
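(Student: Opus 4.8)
The plan is to recognize $\calT_{e,t}$ as the deterministic Toeplitz operator that sends the normalized innovations to the outputs over a window of length $t$, so that $\calT_{e,t}\calT_{e,t}^\top$ is literally a finite-horizon output covariance; I will then recompute that covariance through the statistically equivalent standard realization \eqref{eq:ss_standard_equation}, in which the measurement noise $V_t$ with covariance $\Sigmav$ appears additively and independently of everything else, and read off the bound.

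First I would set $U\equiv 0$ --- this loses nothing, since $\calT_{e,t}$ is built purely from $\As,\Cs,\Ks,\Sigmae$ and does not involve the inputs. Unrolling the innovation form \eqref{eq:ss_innovation_equation} over $t$ steps from $X_0=0$ then gives exactly $Y_{t-1:0}=\calT_{e,t}E_{t-1:0}$ (this is the innovations part of the expansion of $Z_t$ written out in the proof of Theorem~\ref{thm:non-degenerate_snr}). By Assumption~\ref{assum:ss_noise} the $E_s$ are i.i.d.\ with $\E E_sE_s^\top=I_{\dy}$, hence $\mathrm{Cov}(E_{t-1:0})=I_{t\dy}$, and therefore $\calT_{e,t}\calT_{e,t}^\top=\mathrm{Cov}(Y_{t-1:0})$.

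Next I would invoke Remark~\ref{rem:KF_equivalent_to_ss}: under the hypotheses of Theorem~\ref{thm:non-degenerate_snr} --- $(\As,\Sigmaw^{1/2})$ stabilizable, $(\Cs,\As)$ detectable, $\Sigmav\succ 0$, and $\Ks,\Sigmae$ generated by the Kalman recursion --- the innovation form \eqref{eq:ss_innovation_equation} and the standard model \eqref{eq:ss_standard_equation} (with process/measurement covariances $\Sigmaw,\Sigmav$ and initial covariance $\Ps$) produce outputs with identical second-order statistics for the same inputs; only the matching of covariances is needed, so the Gaussianity stated in that remark is inessential here. Computing $\mathrm{Cov}(Y_{t-1:0})$ in the standard model with $U\equiv 0$: since $Y_s=\Cs S_s+V_s$ with $S_s$ a linear image of $(S_0,W_{0:s-1})$, and $V_{t-1:0}$ is independent of $(S_0,W_{0:t-2})$, one gets $\mathrm{Cov}(Y_{t-1:0})=\Xi+I_t\otimes\Sigmav$ with $\Xi\succeq 0$ the contribution of $\Cs S_{t-1:0}$. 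Chaining the two displays, $\calT_{e,t}\calT_{e,t}^\top=\Xi+I_t\otimes\Sigmav\succeq I_t\otimes\Sigmav\succeq\lambda_{\min}(\Sigmav)\,I_{t\dy}$, which is the claim.

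The step I expect to need the most care is the identity $\calT_{e,t}\calT_{e,t}^\top=\mathrm{Cov}(Y_{t-1:0})$ evaluated via the standard realization: one must be precise that the equivalence in Remark~\ref{rem:KF_equivalent_to_ss} is invoked only at the level of finite-horizon covariances and that its structural assumptions coincide exactly with those of Theorem~\ref{thm:non-degenerate_snr}. If one prefers a fully self-contained derivation that does not cite the equivalence, the alternative is to show directly, by induction on $t$, that $\calT_{e,t}\calT_{e,t}^\top=\mathcal{O}_t\Ps\mathcal{O}_t^\top+\mathcal{G}_t(I_{t-1}\otimes\Sigmaw)\mathcal{G}_t^\top+I_t\otimes\Sigmav$ for the natural observability/impulse-response matrices $\mathcal{O}_t,\mathcal{G}_t$, the induction step repeatedly using the algebraic Riccati identity $\Ps=\As\Ps\As^\top+\Sigmaw-\As\Ps\Cs^\top\Sigmae^{-1}\Cs\Ps\As^\top$, the definition $\Sigmae=\Cs\Ps\Cs^\top+\Sigmav$, and the Kalman-gain formula for $\Ks$; this route is routine but notationally heavy, and either way, once the $I_t\otimes\Sigmav$ term is isolated the bound $\succeq\lambda_{\min}(\Sigmav)I$ is immediate.
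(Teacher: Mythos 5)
Your proof is correct, and it follows essentially the same route as the argument the paper defers to (it only cites Lemma B.2 of \cite{tsiamis2019finite}): interpret $\calT_{e,t}\calT_{e,t}^\top$ as the finite-horizon covariance of the stacked outputs of the innovation form with $U\equiv 0$, use the second-order equivalence with the standard realization of Remark~\ref{rem:KF_equivalent_to_ss} (where only the covariance matching is needed, so Gaussianity is immaterial), and read off the additive, independent measurement-noise contribution $I\otimes\Sigmav\succeq\lambda_{\min}(\Sigmav)I$. Your handling of the two delicate points (the initial covariance $\Ps$ making the stationary-gain Kalman filter exact over the finite horizon, and the fact that the claimed inequality is a purely deterministic statement about $\As,\Cs,\Ks,\Sigmae$) is sound, so no gap remains.
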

\begin{proof}
The proof is identical to the one of Lemma~B.2 in~\cite{tsiamis2019finite}.
\end{proof}

\begin{lemma}\label{lem:dominance_of_TE}
Under the assumptions of Theorem~\ref{thm:non-degenerate_snr}, there exists a constant $c$ such that
\[
\calT_{e,t}\calT^\top_{e,t}\succeq c \calT_{u,t}\calT^\top_{u,t}.
\]
\end{lemma}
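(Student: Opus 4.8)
The plan is to realize the ``input-to-output'' Toeplitz operator as a composition $\calT_{u,t} = \calT_{e,t}N_t$, where $N_t$ is a block Toeplitz matrix whose operator norm is bounded by a constant $M$ that does not depend on $t$. Granting this, the lemma is immediate: $\calT_{u,t}\calT_{u,t}^\top = \calT_{e,t}N_tN_t^\top\calT_{e,t}^\top \preceq \opnorm{N_t}^2\,\calT_{e,t}\calT_{e,t}^\top \preceq M^2\,\calT_{e,t}\calT_{e,t}^\top$, so the claim holds with $c = 1/M^2$ (and trivially if $N_t\equiv 0$). It is convenient to first conjugate every matrix in sight by the block reversal permutation $J$ (symmetric and orthogonal, hence it preserves every $\succeq$-inequality) so as to work in the natural causal ordering; in that ordering $\calT_{e,t}$ and $\calT_{u,t}$ become lower-triangular Toeplitz matrices generated by the block sequences $\{\Sigmae^{1/2},\;\Cs(\As)^{j-1}\Ks\Sigmae^{1/2}\}_{j\ge1}$ and $\{\Cs(\As)^{j-1}\Bs\}_{j\ge1}$ respectively, and it suffices to construct $N_t$ there.

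To build $N_t$ I would use the Kalman-filter whitening recursion $\Sigmae^{1/2}E_s = Y_s - \Cs\hat X_s$, $\hat X_{s+1} = \Acs\hat X_s + \Bs U_s + \Ks Y_s$, $\hat X_0 = 0$, which encodes $\calT_{e,t}^{-1}$ (note $\calT_{e,t}$ is invertible since $\Sigmae = \Cs\Ps\Cs^\top + \Sigmav \succ 0$, so its diagonal blocks $\Sigmae^{1/2}$ are invertible). Expanding $\hat X_s$ gives $\Sigmae^{1/2}E = (I - \Phi_{Y,t})Y - (I_t\otimes\Cs)\Phi_{B,t}U$, where $\Phi_{Y,t}$ and $\Phi_{B,t}$ are the strictly lower-triangular Toeplitz matrices generated by $\Cs(\Acs)^{j-1}\Ks$ and $(\Acs)^{j-1}\Bs$. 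Substituting $Y = \calT_{e,t}E + \calT_{u,t}U$ and matching coefficients (legitimate since $(E,U)\mapsto(Y,U)$ is a bijection for fixed $U$) yields the two identities $\calT_{e,t}^{-1} = (I_t\otimes\Sigmae^{-1/2})(I - \Phi_{Y,t})$ and $(I - \Phi_{Y,t})\calT_{u,t} = (I_t\otimes\Cs)\Phi_{B,t}$, hence $N_t := \calT_{e,t}^{-1}\calT_{u,t} = (I_t\otimes\Sigmae^{-1/2}\Cs)\Phi_{B,t}$, i.e.\ $N_t$ is the lower-triangular Toeplitz matrix with offset-$j$ block $\Sigmae^{-1/2}\Cs(\Acs)^{j-1}\Bs$. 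The key phenomenon here is that the merely non-explosive matrix $\As$ appearing in \emph{both} $\calT_{e,t}$ and $\calT_{u,t}$ cancels and only the strictly stable $\Acs$ (by the minimum-phase assumption) survives; in transfer-function language this is the cancellation $\big(\Sigmae^{1/2} + \Cs(zI-\As)^{-1}\Ks\Sigmae^{1/2}\big)^{-1}\Cs(zI-\As)^{-1}\Bs = \Sigmae^{-1/2}\Cs(zI-\Acs)^{-1}\Bs$. One bookkeeping point: the factorization $\calT_{u,t} = \calT_{e,t}N_t$ is to be read at the level of $t$-truncations, which is legitimate because all the matrices involved are lower-triangular and the top-left $t\times t$ block of a product of lower-triangular matrices equals the product of their top-left $t\times t$ blocks.

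Finally, the uniform bound $\opnorm{N_t} \le M := \sum_{j\ge1}\opnorm{\Sigmae^{-1/2}\Cs(\Acs)^{j-1}\Bs}$ follows by writing $N_t = \sum_{j\ge1} S_t^{\,j}\otimes\big(\Sigmae^{-1/2}\Cs(\Acs)^{j-1}\Bs\big)$, where $S_t$ is the $t\times t$ nilpotent lower shift with $\opnorm{S_t}\le 1$, and applying the triangle inequality; and $M < \infty$ because $\rho(\Acs) < 1$ forces $\opnorm{(\Acs)^j} \le C\rho^j$ for any $\rho\in(\rho(\Acs),1)$, so the series converges. I expect the main obstacle to be exactly this cancellation step---identifying $\calT_{e,t}^{-1}\calT_{u,t}$ with the $\Acs$-governed Toeplitz operator---which can be carried out either via the Kalman-filter Toeplitz identities above, or algebraically from the matrix inversion formula for $(D + C(zI-A)^{-1}B)^{-1}$ combined with the relation $\Ks\Cs = \As - \Acs$. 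Everything else---invertibility of $\calT_{e,t}$, the truncation bookkeeping, and summability of $\{(\Acs)^j\}_{j\ge0}$---is routine; alternatively one could dispense with $N_t$ and, once $\opnorm{N_t}\le M$ is established, simply combine with Lemma~\ref{lem:positivity_of_SE}.
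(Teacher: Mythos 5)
Your proof is correct, and it takes a genuinely different route from the paper's. The paper proves the lemma by conjugating with a similarity transformation that splits $\As$ into its unit-circle part $A_1$ and strictly stable part $A_2$, writing $\calT_{e,t}$ and $\calT_{u,t}$ as sums of the corresponding pieces, handling the stable pieces via Lemma~\ref{lem:positivity_of_SE} and uniform boundedness, and handling the marginal pieces by a shared observability factor together with a controllability comparison (Lemma~\ref{lem:controllability_dominance}), which is itself proved by a PBH argument exploiting the Riccati equation~\eqref{eq:ss_Ric}; the final bound then comes from several applications of $(A+B)(A+B)^\top \preceq 2AA^\top + 2BB^\top$, with a non-explicit constant. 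You instead factor $\calT_{u,t} = \calT_{e,t}N_t$ through the Kalman whitening filter, identifying $N_t = \calT_{e,t}^{-1}\calT_{u,t}$ as the truncated Toeplitz operator of the closed-loop transfer function $\Sigmae^{-1/2}\Cs(zI-\Acs)^{-1}\Bs$ (your cancellation identity is exactly the standard innovations-form inversion using $\Ks\Cs = \As - \Acs$, and I verified it on low-order truncations), and then bound $\opnorm{N_t}$ uniformly in $t$ via $\rho(\Acs)<1$; the truncation bookkeeping for lower-triangular blocks and the invertibility of $\calT_{e,t}$ (from $\Sigmae \succeq \Sigmav \succ 0$) are indeed routine. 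What your approach buys: it is shorter, bypasses Lemmas~\ref{lem:positivity_of_SE} and~\ref{lem:controllability_dominance} entirely, and yields an explicit constant $c = \bigl(\sum_{j\ge 1}\opnorm{\Sigmae^{-1/2}\Cs(\Acs)^{j-1}\Bs}\bigr)^{-2}$, at the price of invoking the minimum-phase property $\rho(\Acs)<1$, which is legitimate here since it is guaranteed under the Kalman-filter hypotheses of Theorem~\ref{thm:non-degenerate_snr} (as the paper notes in Remark~\ref{rem:KF_equivalent_to_ss}). What the paper's route buys is that it works directly from the Riccati/PBH structure without ever quantifying the decay of $(\Acs)^j$, treating the unit-circle modes of $\As$ by exact cancellation of the shared observability factor rather than through a summable Neumann-type series; both arguments, however, are valid under the stated assumptions.
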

\begin{proof}
    Without loss of generality assume that $t$ is a multiple of the state dimension $\dx$. The Toeplitz matrices are invariant under similarity transformations. Consider the similarity transformation $S^{-1}\As S=\matr{A_1&0\\0&A_2}$, where $A_1$ includes the modes on the unit circle and $A_2$ includes the modes inside the unit circle. Define $\matr{C_1&C_2}=\Cs S$, $F=S^{-1}\Ks$, and $B=S^{-1}\Bs$.
    
    Let us decompose the matrices into non-explosive and stable parts respectively $\calT_{e,t}=\calT_{e,t,1}+\calT_{e,t,2}$,  $\calT_{u,t}=\calT_{u,t,1}+\calT_{u,t,2}$. By Lemma~\ref{lem:positivity_of_SE} and since matrix $\calT_{u,t,2}$ depends on the stable part, we have
    \begin{equation}\label{eq:dominance_of_TE_helper1}
    \calT_{e,t}\calT_{e,t}^\top \succeq c_2\calT_{u,t,2}\calT^\top_{u,t,2},
    \end{equation}
    for $c_2=\frac{\lambda_{\min}(\Sigmav)}{\sup_t\opnorm{T_{u,t,2}}^2}$, which is non-zero.
    
    The main difficulty is bounding the non-explosive part. 
    Define the controllability matrix
    \begin{equation}\label{eq:controllability_matrix}
    \mathcal{C}(A_1,G)\triangleq \matr{G&\cdots&A^{\dx-1}_1 G}
    \end{equation}
    along with the observability matrix
    \[
    \mathcal{O}_j\triangleq \matr{C_1A^{j\dx-1}_1\\\vdots\\C_1A^{(j-1)\dx}_1}
    \]
    Observe that
    \[
    \calT_{u,t,1}=\blkdiag(\calT_{u,\dx,1})+\underbrace{\matr{0&\mathcal{O}_{1}&\mathcal{O}_{2}&\cdots&\mathcal{O}_{t/\dx}\\0&0&\mathcal{O}_{1}&\cdots&\mathcal{O}_{t/\dx-1}\\ \vdots&&&\ddots\\0&0&0&\cdots&\mathcal{O}_{1}\\0&0&0&\cdots&0}\blkdiag(\mathcal{C}(A_1,B_1))}_{\calT_{u,t,3}}
    \]
    and
     \[
    \calT_{e,t,1}=\blkdiag(\calT_{e,\dx,1})+\underbrace{\matr{0&\mathcal{O}_{1}&\mathcal{O}_{2}&\cdots&\mathcal{O}_{t/\dx}\\0&0&\mathcal{O}_{1}&\cdots&\mathcal{O}_{t/\dx-1}\\ \vdots&&&\ddots\\0&0&0&\cdots&\mathcal{O}_{1}\\0&0&0&\cdots&0}\blkdiag(\mathcal{C}(A_1,F_1))}_{\calT_{e,t,3}}.
    \]
    By Lemma~\ref{lem:controllability_dominance} we get
    \begin{equation}\label{eq:dominance_of_TE_helper2}
    \calT_{e,t,3}\calT^\top_{e,t,3} \succeq c_3 \calT_{u,t,3}\calT^\top_{u,t,3},
    \end{equation}
    for some $c_3>0$. 
    Since $\opnorm{\blkdiag(\calT_{u,\dx,1})}=\opnorm{\calT_{u,\dx,1}}$ does not increase with $t$ we have
    \begin{equation}\label{eq:dominance_of_TE_helper3}
    \calT_{e,t}\calT_{e,t}^\top \succeq c_1\blkdiag(\calT_{u,\dx,1})\blkdiag(\calT_{u,\dx,1})^\top,
    \end{equation}
    for $c_1=\frac{\lambda_{\min}(\Sigmav)}{\opnorm{\calT_{u,\dx,1}}^2}$
    
    Exploiting the inequality $(A+B)(A+B)^\top\preceq 2AA^\top+2BB^\top$
    \[
    \calT_{u,t}\calT^\top_{u,t}\preceq 2\calT_{u,t,2}\calT^\top_{u,t,2}+4\blkdiag(\calT_{u,\dx,1})\blkdiag(\calT_{u,\dx,1})^\top+4 \calT_{u,t,3}\calT^\top_{u,t,3}.   
    \]
    From~\eqref{eq:dominance_of_TE_helper1},~\eqref{eq:dominance_of_TE_helper2},~\eqref{eq:dominance_of_TE_helper3}
    \[
    \calT_{u,t}\calT^\top_{u,t}\preceq (2c_2+4c_1) \calT_{e,t}\calT_{e,t}^\top+4c_3 \calT_{e,t,3}\calT_{e,t,3}^\top.
    \]
To finalize the proof notice that
\[
\calT_{e,t,3}\calT_{e,t,3}^\top\preceq 2\calT_{e,t}\calT_{e,t}^\top+4\calT_{e,t,2}\calT_{e,t,2}^\top+4\blkdiag(\calT_{e,\dx,1})\blkdiag(\calT_{e,\dx,1})^\top.
\]
    Repeating the same arguments as in~\eqref{eq:dominance_of_TE_helper1},~\eqref{eq:dominance_of_TE_helper3} gives
    \[
\calT_{e,t,2}\calT_{e,t,2}^\top \preceq c_4 \calT_{e,t}\calT_{e,t}^\top,\,\blkdiag(\calT_{e,\dx,1})\blkdiag(\calT_{e,\dx,1})^\top\preceq c_5 \calT_{e,t}\calT_{e,t}^\top
    \]
    for some positive $c_4$, $c_5$, which completes the proof. 
\end{proof}

\begin{lemma}\label{lem:controllability_dominance}
    Consider the assumptions of Theorem~\ref{thm:non-degenerate_snr} and recall the definition of the controllability matrix in~\eqref{eq:controllability_matrix} with $A_1$ having all eigenvalues on the unit circle. There exists  a positive constant $c$ such that
    \[
\mathcal{C}(A_1,F_1)\mathcal{C}^{\top}(A_1,F_1)\succeq c\, \mathcal{C}(A_1,B_1)\mathcal{C}^{\top}(A_1,B_1).
    \]
\end{lemma}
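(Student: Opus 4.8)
The plan is to reduce the statement entirely to one fact: the pair $(A_1,F_1)$ is \textbf{controllable}. Everything else is linear algebra. Indeed, if $\mathcal{C}(A_1,F_1)$ has full row rank, then $\mathcal{C}(A_1,F_1)\mathcal{C}^\top(A_1,F_1)\succ 0$ (by Cayley--Hamilton, the column space of $\mathcal{C}(A_1,G)=\matr{G&\cdots&A_1^{\dx-1}G}$ is the $(A_1,G)$-reachable subspace, so using $\dx\ge \dim A_1$ block columns changes nothing), and hence
\[
\mathcal{C}(A_1,B_1)\mathcal{C}^\top(A_1,B_1)\preceq \lambda_{\max}\!\big(\mathcal{C}(A_1,B_1)\mathcal{C}^\top(A_1,B_1)\big)I\preceq c\,\mathcal{C}(A_1,F_1)\mathcal{C}^\top(A_1,F_1)
\]
with $c\triangleq \lambda_{\max}\big(\mathcal{C}(A_1,B_1)\mathcal{C}^\top(A_1,B_1)\big)\big/\lambda_{\min}\big(\mathcal{C}(A_1,F_1)\mathcal{C}^\top(A_1,F_1)\big)>0$ (and the claim is vacuous when $A_1$ is empty). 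So I would put all the effort into controllability of $(A_1,F_1)$.

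To prove controllability I would use the Popov--Belevitch--Hautus test and argue by contradiction. Suppose $(A_1,F_1)$ is not controllable; then there are a scalar $\mu$ and a nonzero $w\in\mathbb{C}^{\dim A_1}$ with $w^{\mathsf H}A_1=\mu w^{\mathsf H}$ and $w^{\mathsf H}F_1=0$, and since every eigenvalue of $A_1$ lies on the unit circle, $|\mu|=1$. Recall from the proof of \Cref{lem:dominance_of_TE} that $S^{-1}\As S=\blkdiag(A_1,A_2)$, $F=S^{-1}\Ks$, and $F_1$ is the leading block of $F$ (conformably with $A_1$). Padding $w$ with zeros to $\hat w\triangleq \matr{w\\ 0}\in\mathbb{C}^{\dx}$ gives $\hat w^{\mathsf H}\blkdiag(A_1,A_2)=\mu\hat w^{\mathsf H}$ and $\hat w^{\mathsf H}F=w^{\mathsf H}F_1=0$; setting $v^{\mathsf H}\triangleq \hat w^{\mathsf H}S^{-1}$ (nonzero, as $S$ is invertible) then yields $v^{\mathsf H}\As=\mu v^{\mathsf H}$ and $v^{\mathsf H}\Ks=0$.

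The contradiction now comes from closed-loop stability. Under the hypotheses of \Cref{thm:non-degenerate_snr} --- $(\As,\Sigmaw^{1/2})$ stabilizable, $(\Cs,\As)$ detectable, $\Sigmav\succ 0$ --- the steady-state Kalman filter is well defined and its closed-loop matrix $\Acs=\As-\Ks\Cs$ is Schur stable, i.e.\ $\rho(\Acs)<1$; this is precisely the minimum-phase property recorded in \Cref{assum:ss_noise} (see also \Cref{rem:KF_equivalent_to_ss}). But $v^{\mathsf H}\Acs=v^{\mathsf H}\As-(v^{\mathsf H}\Ks)\Cs=\mu v^{\mathsf H}$, so $\mu$ would be an eigenvalue of $\Acs$ of modulus one, contradicting $\rho(\Acs)<1$. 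Hence $(A_1,F_1)$ is controllable and the lemma follows.

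The main obstacle is not any sharp estimate but rather keeping the similarity-transform bookkeeping straight when lifting the Hautus null vector from $(A_1,F_1)$ back to $(\As,\Ks)$, together with the observation that the only system-theoretic input actually needed is the strict stability of $\Acs$ (so the Riccati-equation manipulations appearing elsewhere in this appendix are a red herring for this lemma). Everything else is routine.
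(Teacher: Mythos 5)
Your proposal is correct, and its skeleton matches the paper's: reduce the lemma to controllability of $(A_1,F_1)$ (full row rank of $\mathcal{C}(A_1,F_1)$ plus a $\lambda_{\max}/\lambda_{\min}$ comparison of the two Gramians, noting the constant in your last display is the reciprocal of the $c$ in the statement, which is immaterial), and then rule out an uncontrollable unit-circle mode via the Popov--Belevitch--Hautus test with the same lifting $v^{\mathsf H}=\matr{w^{\mathsf H}&0}S^{-1}$ of the Hautus vector back to $(\As,\Ks)$. Where you diverge is the source of the contradiction. The paper substitutes $v$ into the Riccati fixed-point equation $\Ps=\RIC(\Ps)$: since $v^{*}\Ks=0$ kills the correction term, it gets $v^{*}\Ps v=|\lambda_1|^2 v^{*}\Ps v+v^{*}\Sigmaw v$, forcing $v^{*}\Sigmaw^{1/2}=0$ and contradicting stabilizability of $(\As,\Sigmaw^{1/2})$ directly from the hypotheses of \Cref{thm:non-degenerate_snr}. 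You instead observe $v^{\mathsf H}\Acs=v^{\mathsf H}\As-(v^{\mathsf H}\Ks)\Cs=\mu v^{\mathsf H}$ with $|\mu|=1$, contradicting $\rho(\Acs)<1$. Your route is shorter and arguably cleaner, but it outsources the work: it needs Schur stability of the steady-state filter, which is either the standing minimum-phase clause of \Cref{assum:ss_noise} or the classical Anderson--Moore consequence of detectability, stabilizability and $\Sigmav\succ 0$ --- itself a nontrivial theorem whose proof runs through essentially the Riccati/stabilizability reasoning the paper carries out by hand. So the paper's argument is more self-contained (only the algebraic Riccati identity and the PBH form of stabilizability are used), while yours trades that for brevity by invoking the known closed-loop stability of the Kalman filter; both are valid given the paper's assumptions.
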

\begin{proof}
It is sufficient to prove that $\mathcal{C}(A_1,F_1)$ has full rank or equivalently the pair $(A_1,F_1)$ is controllable. Assume that it is not. Then, by the Popov-Belevitch-Hautus test, there exists a $v_1$ such that $v_1^* F_1=0$ with $v_1^* A_1=\lambda_1 v_1^*$, $|\lambda_1|=1$. The same will hold for the original system $(\As,\Ks)$ and
$
v\triangleq S^{-1}\matr{v^\top_1&0}^\top.
$
Multiplying~\eqref{eq:ss_Ric} from the left and right by $v^*,v$
\[
v^* \Ps v=|\lambda_1|^2 v^* \Ps v+v^* \Sigmaw v=v^* \Ps v+v^* \Sigmaw v,
\]
which is possible only if $v^* \Sigmaw=0$, violating detectability of the pair $(\Sigmaw^{1/2},\As)$. As a result, the pair $(A_1,F_1)$ is controllable.
\end{proof}

\subsection{Proof of Theorem~\ref{thm:ss_pac}}
Define the empirical covariance as
\begin{equation}\label{eq:ss_empirical_covariance}
    \ECovZ_{p,T}=1/T\sum_{t=1}^{T}Z_tZ^\top_t.
\end{equation}
From~\eqref{eq:ss_arx_approximation}, it follows that the least-squares error is equal to
\[
\Mls_{p,T}-\Ms_p=\underbrace{\sum_{t=1}^T\Sigmae^{1/2}E_tZ^\top_t(T\ECovZ_{p,T})^\dagger}_{\text{statistical error}}+\underbrace{\Cs(\Acs)^p\sum_{t=1}^TX_{t-p}Z^\top_t(T\ECovZ_{p,T})^\dagger}_{\text{bias}}.
\]
The proof of bounding the statistical error term is identical to the one of Theorem~\ref{thm:arx_pac}. 
We get that if $T\ge \Tburn(\delta,\beta)$, then with probability at least $1-\delta$
\[
\opnorm{\text{statistical error}}^2\le \frac{C}{ \snr_{p,p} T} \paren{
p(\dy+\du)\log\frac{p(\dy+\du)}{\delta}+\log\det\paren{\CovZ_{p,T}\CovZ^{-1}_{p,p}}},
\]
for some universal constant $C$ along with 
$
\ECov_{p,T}\succeq \frac{1}{16}\CovZ_{p,p}.
$
Next, we upper-bound the bias term. Note that $T\ECovZ_{p,T}\succeq Z_tZ_t$ and, thus,
\[
\opnorm{Z^\top_t (T\ECovZ_{p,T})^{-1} Z_t}\le 1.
\]
By the triangle inequality, Cauchy-Schwarz, and the inequality above
\begin{align*}
    \opnorm{\sum_{t=1}^{T}X_{t-p}Z^\top_t (T\ECovZ_{p,T})^{-1/2}}\le \sum_{t=p}^{T}\opnorm{X_{t-p}}\opnorm{Z^\top_t (T\ECovZ)^{-1/2}}\le \sqrt{\sum_{t=1}^{T-p}\opnorm{X_{t}}^2} \sqrt{T}.
\end{align*}
By the above inequality, Lemma~\ref{lem:ss_bias_term_upper_bound} below, and $\ECov_{p,T}\succeq \frac{1}{16}\CovZ_{p,p}$, we finally obtain 
\[
\opnorm{\text{bias error}}^2\le \frac{C'\dx}{ \snr_{p,p}}\log \frac{1}{\delta} \Big(\opnorm{\Cs(\Acs)^p} T \opnorm{\CovXState_{X,T}}\Big) , 
\]
for some universal constant $C'$.
Finally, under~\eqref{eq:ss_choice_of_p}
 the bias error becomes
\[
\opnorm{\text{bias error}}^2\le \frac{C'\dx}{ T^2\snr_{p,p}}\log \frac{1}{\delta}, 
\]
and is dominated by the statistical error.

What remains to show is that there exists a $\beta$ such that~\eqref{eq:ss_choice_of_p} is satisfied. Since $\Acs$ is asymptotically stable, there exist $M$, $\lambda<1$ such that $\opnorm{\Cs(\Acs)^p}\le M \lambda^p$. It is sufficient to satisfy
\[
M \lambda^{\beta \log T} \opnorm{\CovXState_{X,T}} \le T^{-3}.
\]
After some algebraic manipulations, we arrive at
\[
\beta \log(1/\lambda) \ge \frac{\log \opnorm{\CovXState_{X,T}}}{\log T}+3+M/\log T.
\]
Since the system is non-explosive, the ratio
\[
\sup_T \frac{\log \opnorm{\CovXState_{X,T}}}{\log T}<\infty
\]
is uniformly bounded. Hence, there exists such a $\beta$.\hfill $\blacksquare$
\begin{lemma}\label{lem:ss_bias_term_upper_bound}
    Fix a failure probability $\delta>0$. There exists a universal constant $c$ such that with probability at least $1-\delta$:
    \begin{equation}
    \sum_{t=1}^{T-p}\opnorm{X_{t}}^2\le c K^2 \dx T  \opnorm{\CovXState_{X,T}}\log 1/\delta.    
    \end{equation}
\end{lemma}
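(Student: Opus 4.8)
The plan is to recognize $\sum_{t=1}^{T-p}\opnorm{X_t}^2$ as a quadratic form in the driving noise, apply the Hanson--Wright inequality (\Cref{thm:HWineqhighproba}, or its variant \Cref{thm:ARV}), and then reduce the two resulting deterministic quantities, a Frobenius norm and an operator norm, to $\dx T\opnorm{\CovXState_{X,T}}$ and $T\opnorm{\CovXState_{X,T}}$ respectively via covariance monotonicity and the causal Toeplitz structure.

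First I would write the state trajectory as a causal linear image of the noise, mimicking \eqref{eq:LforARX}: stacking $X_{1:T}=\mathbf{L}V_{0:T-1}$, where $V_t$ collects the normalized innovation $E_t$ and the input $U_t$, and $\mathbf{L}$ is the block lower-triangular Toeplitz matrix generated by the Markov parameters $\As^{k}\Ks\Sigmae^{1/2}$ and $\As^{k}\Bs$ of \eqref{eq:ss_innovation_equation}. Folding $\blkdiag(I_\dy,\sigma_u^2 I_\du)$ and the gain/noise matrices into the blocks of $\mathbf{L}$ (and rescaling the Gaussian input to unit variance), we may assume the residual random vector $W$ is isotropic, mean zero, has independent coordinates, and is $(K^2\vee 1)$-sub-Gaussian in the sense of \Cref{def:sub_G} (a normalized Gaussian coordinate is $1$-sub-Gaussian). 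Letting $N$ denote the first $T-p$ block rows of $\mathbf{L}$, we then have $\sum_{t=1}^{T-p}\opnorm{X_t}^2=\|NW\|_2^2=W^\T(N^\T N)W$, with $\mathbf{L}_t\mathbf{L}_t^\T=\CovXState_{X,t}$ for each block row. Since $W$ is isotropic, $\E\|NW\|_2^2=\|N\|_F^2=\sum_{t=1}^{T-p}\tr\CovXState_{X,t}$; and since $X_0=0$, the state-space analogue of \Cref{lem:arx_cov_monotonicity} gives $\CovXState_{X,t}\preceq\CovXState_{X,T}$ for $t\le T$, hence $\|N\|_F^2\le \dx T\opnorm{\CovXState_{X,T}}$. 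For the operator norm, $N^\T N$ is a principal submatrix of $\mathbf{L}^\T\mathbf{L}$, so $\opnorm{N^\T N}\le\opnorm{\mathbf{L}^\T\mathbf{L}}=\opnorm{\mathbf{L}\mathbf{L}^\T}=\opnorm{\sum_{t=1}^T\mathbf{L}_t^\T\mathbf{L}_t}\le\sum_{t=1}^T\opnorm{\CovXState_{X,t}}\le T\opnorm{\CovXState_{X,T}}$, which is exactly the Toeplitz estimate already used in the proof of \Cref{thm:arx_pe}.

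With $M=N^\T N$, $\|M\|_F\le\opnorm{N}\|N\|_F$ and $\opnorm{M}=\opnorm{N}^2$, the upper-tail form of Hanson--Wright with variance proxy $\sigma^2=K^2\vee1$ yields, for every $s\ge 0$,
\[
\Pr\!\left(\sum_{t=1}^{T-p}\opnorm{X_t}^2>\|N\|_F^2+s\right)\le \exp\!\left(-\min\!\left(\frac{s^2}{144(K^2\vee1)^2\opnorm{N}^2\|N\|_F^2},\ \frac{s}{16\sqrt{2}\,(K^2\vee1)\opnorm{N}^2}\right)\right).
\]
Choosing $s\asymp (K^2\vee1)\big(\opnorm{N}\|N\|_F\sqrt{\log(1/\delta)}\ \vee\ \opnorm{N}^2\log(1/\delta)\big)$ makes the right-hand side at most $\delta$, and bounding $\opnorm{N}\|N\|_F\sqrt{\log(1/\delta)}\le\tfrac12\big(\|N\|_F^2+\opnorm{N}^2\log(1/\delta)\big)$ together with the two estimates of the previous paragraph gives, on an event of probability at least $1-\delta$,
\[
\sum_{t=1}^{T-p}\opnorm{X_t}^2\ \le\ c\,(K^2\vee1)\,\dx\,T\,\opnorm{\CovXState_{X,T}}\big(1+\log(1/\delta)\big),
\]
which is the claimed bound after absorbing the additive constant into $\log(1/\delta)$ (equivalently, assuming $K\ge1$ and $\delta$ bounded away from $1$, as is standard).

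The only mildly delicate points are (i) the normalization that legitimizes applying a Hanson--Wright-type bound, namely folding all noise/input covariances and gains into $\mathbf{L}$ so that the residual randomness is genuinely isotropic with independent coordinates, and (ii) bookkeeping the two regimes of the sub-exponential tail so the final bound is linear rather than quadratic in $\log(1/\delta)$; both are routine. The one structurally essential step is $\opnorm{\mathbf{L}\mathbf{L}^\T}\le T\opnorm{\CovXState_{X,T}}$, where the causal Toeplitz structure and covariance monotonicity enter, but this is identical to a computation already performed in the proof of \Cref{thm:arx_pe}.
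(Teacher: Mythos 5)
Your proposal is correct and follows essentially the same route as the paper: both write $\sum_{t=1}^{T-p}\opnorm{X_t}^2$ as a quadratic form $W^\T(N^\T N)W$ in the stacked (normalized) innovations and inputs through the causal block lower-triangular map, bound $\|N^\T N\|_F\lesssim\sqrt{\dx}\,T\opnorm{\CovXState_{X,T}}$ and $\opnorm{N^\T N}\le T\opnorm{\CovXState_{X,T}}$ via covariance monotonicity (the analogue of \Cref{lem:arx_cov_monotonicity}) and the low-rank/Toeplitz structure, and then invoke the Hanson--Wright inequality with a suitable deviation level $s$. The only differences are cosmetic bookkeeping (your AM--GM step versus the paper's direct choice of $s$, and your explicit $K^2\vee1$ normalization), and your caveat about absorbing the additive mean term into $\log(1/\delta)$ matches the paper's own implicit convention.
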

\begin{proof}
The result follows from a naive application of Hanson's Wright inequality (Theorem~\ref{thm:HWineqhighproba}).
 Let $V_t\triangleq \matr{E^{\top}_t& U^{\top}_t/{\sigma_u}}^\top$. 
 Unrolling the state-space equations we get
 \[
 X_t=\sum_{k=1}^{t}(\As)^{k-1}(\Bs U_{t-k}+\Ks \Sigmae E_{t-k}).
 \]
Hence, there exists a lower triangular matrix $\bfL_x=\matr{\bfL^\top_{x,1}&\cdots&\bfL^\top_{x,T-p}}^\top$ such that $X_{t}=\bfL_{x,t} V_{0:T-p-1}$, for all $t\le T-p$. Under the new notation, the sum is equal to
\[
\sum_{t=1}^{T-p}\opnorm{X_{t-p}}^2=V_{0:T-p-1}^\top \bfL_x^\top \bfL_x V_{0:T-p-1}. 
\]
Moreover, by definition
\[
\CovXState_{X,t}=\bfL_{x,t}\bfL^\top_{x,t}.
\]
Define $M=\bfL^\top \bfL$ and note that
\[\opnorm{M}\le \sum_{t=1}^{T-p}\opnorm{\bfL^\top_{x,t}\bfL_{x,t}}=\sum_{t=1}^{T-p}\opnorm{\bfL_{x,t}\bfL^\top_{x,t}}\le T \opnorm{\CovXState_{X,T}}\]
where the second inequality follows from monotonicity of $\CovXState_{X,t}$ (see~Lemma~\ref{lem:arx_cov_monotonicity}).
We also have 
\[\norm{M}_F\le \sum_{t=1}^{T-p}\norm{\bfL^\top_{x,t}\bfL_{x,t}}_F\le \sqrt{\dx}\sum_{t=1}^{T-p}\opnorm{\bfL^\top_{x,t}\bfL_{x,t}}\le \sqrt{\dx}T\opnorm{\CovXState_{X,T}},\]
where we used the identity $\norm{A}_F\le \textsf{rank}(A)\opnorm{A}$.
By Theorem~\ref{thm:HWineqhighproba}
\[
\Prob\Big(\sum_{t=1}^{T-p}\opnorm{X_{t-p}}^2\ge T\dx\opnorm{\CovXState_{X,T}}+s\Big)\le 2  \exp \left( - \min \left( \frac{s^2}{144 K^4 \| M\|_F^2} ,\frac{s}{16\sqrt{2}K^2 \opnorm{M} } \right)\right).
\]
In view of the above inequalities, choosing
\[
s=16\sqrt{2}\dx K^2 T\opnorm{\CovXState_{X,T}}\log(2/\delta)
\]
gives failure probability less than $\delta$.
\end{proof}

\section{Proofs for \Cref{sec:basicineq} and \Cref{sec:nonlinear}}

\subsection{Proofs for Sparse Identification}

\label{subsec:proofsforbasic}

\paragraph{Proof of \Cref{lem:sparseisomorph}}

We invoke \Cref{thm:anticonc} and apply it to every fixed subspace $S$ and then union bound over these subspaces. In fact, notice that the proof of \Cref{thm:anticonc} actually yields the stronger statement that \eqref{eq:loosetwosidedsparse} holds for any fixed $S$ with probability
\begin{equation}
    1-C_{\mathsf{sys}}^{2s} \exp \left( \frac{-T}{576K^2 k}\right)
\end{equation}
The lower bound is just the statement of \Cref{thm:anticonc} whereas the upper bound is easily seen to be part of the proof of \Cref{thm:anticonc} by combining \eqref{hwusedinthm} with the ensuing union bound and \Cref{lem:opnormdisc}.

We now union bound over $p \choose 2s$ subspaces $S$. Hence with probability at least 
\begin{equation*}
    1- {p \choose 2s} C_{\mathsf{sys}}^{2s} \exp \left( \frac{-T}{576K^2 k}\right)
\end{equation*}
the desired result holds. Use now the approximation ${p \choose 2s} \leq \left(\frac{e p}{2s} \right)^{2s}$ and solve for $\delta$ to obtain the result.\hfill $\blacksquare$

\paragraph{Proof of \Cref{prop:sparse}}
We work exclusively on the event of \Cref{lem:sparseisomorph}. The lower bound in \eqref{eq:loosetwosidedsparse} combined with \eqref{eq:sparsebasicineq} yields that
\begin{equation}
\label{eq:sparsebasicineqused}
\|(\widehat \theta -\theta^\star )\sqrt{\Sigma_k}\|_2^2 \leq 4T^{-1} \max_S  \left\|\left(\sum_{t=1}^{T} W_t (X_t)_S \right)\left(\sum_{t=1}^{T} (X_t)_S (X_t)_S^\T \right)^{-1/2 } \right\|^2_2
\end{equation}
on said event.

We now analyze the right hand side of \eqref{eq:sparsebasicineqused} for fixed $S$. On the event prescribed by \Cref{lem:sparseisomorph} With probability $1-\delta$ We have
\begin{multline}\label{eq:sparseprepforselfnorm}
     \left\|\left(\sum_{t=1}^{T} W_t (X_t)_S \right)\left(\sum_{t=1}^{T} (X_t)_S (X_t)_S^\T \right)^{-1/2 } \right\|^2_2
     \\
     \leq 2 \left\|\left(\sum_{t=1}^{T} W_t (X_t)_S \right)\left(\frac{T}{16k} \sum_{t=1}^{k}\E  \left[(X_t)_S (X_t)_S^\T\right]+\sum_{t=1}^{T} (X_t)_S (X_t)_S^\T \right)^{-1/2 } \right\|^2_2.
\end{multline}

For some universal positive constant $c\in \R$, still on the event of \Cref{lem:sparseisomorph}, \Cref{thm:self-normalized martingale} yields that the right hand side of \eqref{eq:sparseprepforselfnorm} is upper bound (pointwise in $S$) by
\begin{multline}\label{eq:sparseselfnormused}
    \sigma^2 \left[\log \left( \frac{\det \left(\frac{T}{16k} \sum_{t=1}^{k}\E  \left[(X_t)_S (X_t)_S^\T\right]+\sum_{t=1}^{T} (X_t)_S (X_t)_S^\T \right)}{\frac{T}{16k} \sum_{t=1}^{k}\E  \left[(X_t)_S (X_t)_S^\T\right]} \right) +\log (1/\delta) \right]
    \\
    \leq 
    \sigma^2 \left[\log \left( \frac{\det \left(\frac{T}{16k} \sum_{t=1}^{k}\E  \left[(X_t)_S (X_t)_S^\T\right]+ C'_{\mathsf{sys}} \sum_{t=1}^{T} \E \left[(X_t)_S (X_t)_S^\T\right] \right)}{\frac{T}{16k} \sum_{t=1}^{k}\E  \left[(X_t)_S (X_t)_S^\T\right]} \right) +\log (1/\delta) \right]
\end{multline}
where the first inequality is just \Cref{thm:self-normalized martingale} and the second follows from the fact that we have restricted to the event of \Cref{lem:sparseisomorph}. The inequality in \eqref{eq:sparseselfnormused} only upper bounds our desired quantity pointwise in $S$. We now argue exactly as in \eqref{eq:sparseselfnormused} but invoke the self-normalized martingale theorem ${p \choose 2s}$ times, once for each $S$. Hence the desired result holds on an event of  probability at least
\begin{equation}
    1-\left(1+{p \choose 2s} \right)\delta  \geq 1-2 \left(\frac{ep}{2s} \right)^{2s}\delta. 
\end{equation}
By redefining $\delta' = 2 \left(\frac{ep}{2s} \right)^{2s}\delta$ the result follows. \hfill $\blacksquare$

\subsection{Proofs for Nonlinear Identification}

\paragraph{The Lower Tail Revisited}

Under A1-A3 above in \Cref{sec:nonlinear}, the following exponential inequality takes the role of \Cref{thm:expineq}. 

\begin{lemma} \label{lem:selfnormoffexpineq}
    Impose A2-A3.    For every $\lambda \in \R_+$ and every $f\in \scrF_\star$ we have that:
    \begin{equation}
        \E \exp \left(-\lambda \sum_{t=1}^T \| f(X_t)\|^2_2   \right) \leq\exp\left(-\lambda \sum_{t=1}^T \E \| f(X_t)\|^2_2 + \frac{\lambda^2T }{2k}  \E \left( \sum_{t=1}^{k} \| f(X_t)\|^2_2\right)^2\right).
    \end{equation}
\end{lemma}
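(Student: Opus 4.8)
The plan is to use the block-independence in A2 to factorize the moment generating function into a product over independent blocks, reduce to a single-block estimate, and then control the single-block MGF with an elementary scalar inequality.

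First I would introduce, for $j\in[T/k]$, the block sums $Z_j\triangleq\sum_{t=(j-1)k+1}^{jk}\|f(X_t)\|_2^2$. By A2 the segments $X_{(j-1)k+1:jk}$ are i.i.d.\ across $j$, hence so are the nonnegative random variables $Z_1,\dots,Z_{T/k}$. Since the exponential of a sum factorizes over independent summands,
\begin{equation*}
\E\exp\!\left(-\lambda\sum_{t=1}^T\|f(X_t)\|_2^2\right)=\E\exp\!\left(-\lambda\sum_{j=1}^{T/k}Z_j\right)=\prod_{j=1}^{T/k}\E\exp(-\lambda Z_j)=\bigl(\E\exp(-\lambda Z_1)\bigr)^{T/k}.
\end{equation*}

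Second, I would bound the single-block MGF via the elementary inequality $e^{-u}\le 1-u+\tfrac{u^2}{2}$, valid for all $u\ge 0$ (the function $u\mapsto 1-u+\tfrac{u^2}{2}-e^{-u}$ vanishes at $0$ together with its first derivative and has nonnegative second derivative on $[0,\infty)$). Applying this pointwise with $u=\lambda Z_1\ge 0$ (using $\lambda\ge 0$, $Z_1\ge 0$), taking expectations, and then using $1+x\le e^x$, we get
\begin{equation*}
\E\exp(-\lambda Z_1)\le 1-\lambda\,\E Z_1+\tfrac{\lambda^2}{2}\E Z_1^2\le \exp\!\left(-\lambda\,\E Z_1+\tfrac{\lambda^2}{2}\E Z_1^2\right),
\end{equation*}
where $\E Z_1^2<\infty$ by A3 together with Cauchy--Schwarz, since $\E Z_1^2\le k\sum_{t=1}^k\E\|f(X_t)\|_2^4$.

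Third, I would combine the two displays: raising the single-block bound to the power $T/k$ gives
\begin{equation*}
\E\exp\!\left(-\lambda\sum_{t=1}^T\|f(X_t)\|_2^2\right)\le\exp\!\left(-\tfrac{\lambda T}{k}\E Z_1+\tfrac{\lambda^2 T}{2k}\E Z_1^2\right),
\end{equation*}
and it only remains to identify the two terms. Because the blocks are identically distributed, $\E\|f(X_t)\|_2^2$ depends only on the position of $t$ within its block, so $\sum_{t=1}^T\E\|f(X_t)\|_2^2=\tfrac{T}{k}\sum_{t=1}^k\E\|f(X_t)\|_2^2=\tfrac{T}{k}\E Z_1$; moreover $\tfrac{\lambda^2 T}{2k}\E Z_1^2=\tfrac{\lambda^2 T}{2k}\E\bigl(\sum_{t=1}^k\|f(X_t)\|_2^2\bigr)^2$ by definition of $Z_1$. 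Substituting yields exactly the claimed inequality. There is no substantial obstacle here; the only care needed is the bookkeeping that translates $\E Z_1$ and $\E Z_1^2$ into the stated sums over $t\in[T]$ and $t\in[k]$ via the "position within block" argument, and noting that A3 is precisely what guarantees the quadratic correction term is finite.
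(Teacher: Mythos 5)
Your proposal is correct and follows essentially the same route as the paper: factorize the MGF over the independent blocks (A2), bound each block's MGF with $e^{-u}\le 1-u+\tfrac{u^2}{2}$ for $u\ge 0$ followed by $1+x\le e^x$, and then use the identically distributed blocks to rewrite the linear and quadratic terms as the stated sums. The only cosmetic difference is that you invoke the i.i.d.\ property up front to collapse the product into a single block raised to the power $T/k$, whereas the paper keeps the product and identifies terms at the end; your added remarks on the validity of the scalar inequality and on finiteness via A3 and Cauchy--Schwarz are fine but not needed for a different argument.
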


\begin{proof}
First, observe the elementary inequality $e^{-x} \leq 1 -x+ x^2/2$ valid for all $x\geq 0$. Using this and independence of the blocks, we obtain
\begin{equation}
\begin{aligned}
     \E \exp \left(-\lambda \sum_{t=1}^T \| f(X_t)\|^2_2   \right) &=  \prod_{j\in [T/k]} \E \exp \left(-\lambda \sum_{t=j k+1}^{(j+1)k } \| f(X_t)\|^2_2   \right)\\
     &\leq  \prod_{j\in [T/k]} \E \left(1-\lambda \sum_{t=j k+1}^{(j+1)k} \| f(X_t)\|^2_2 + \frac{\lambda^2}{2} \left( \sum_{t=jk+1}^{(j+1)k} \| f(X_t)\|^2_2\right)^2\right)\\
     &
     \leq\exp\left(-\lambda \sum_{t=1}^T \E \| f(X_t)\|^2_2 + \frac{\lambda^2}{2} \sum_{j\in [T/k]} \E \left( \sum_{t=j k+1}^{(j+1)k} \| f(X_t)\|^2_2\right)^2\right)
\end{aligned}
\end{equation}
by also using the inequality $1+x \leq e^x$, valid for all $x\in \R$. The result follows by invoking A2 (\iid\ blocks).
\end{proof}

\begin{lemma}
     Impose A2-A3.    For every $f \in\scrF_\star $ we have that:
    \begin{equation}
        \Pr \left( \sum_{t=1}^T\| f(X_t)\|^2_2 < \frac{1}{2}\sum_{t=1}^T \E \| f(X_t)\|^2_2  \right) \leq \exp\left( - \frac{T}{4k \times \mathrm{cond}_{\scrF}^2 } \right).
    \end{equation}

\end{lemma}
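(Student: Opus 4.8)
The plan is to run the Chernoff method on the negative exponential, controlling the moment generating function with the exponential inequality \Cref{lem:selfnormoffexpineq}. Throughout, write $A\triangleq\sum_{t=1}^T\E\|f(X_t)\|_2^2$ for brevity. A preliminary observation is that assumption A2 (the blocks $X_{1:k},X_{k+1:2k},\dots$ are i.i.d.) forces $\E\|f(X_{jk+s})\|_2^2=\E\|f(X_s)\|_2^2$ for every $s\in[k]$, so that $A=\tfrac{T}{k}\sum_{t=1}^k\E\|f(X_t)\|_2^2$; this same identification is what underlies \Cref{lem:selfnormoffexpineq} and I would use it to rewrite every occurrence of a block sum.

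Next, for any $\lambda\ge 0$, applying Markov's inequality to the nonnegative variable $\exp\!\big(-\lambda\sum_{t=1}^T\|f(X_t)\|_2^2\big)$ and then \Cref{lem:selfnormoffexpineq} gives
\[
\Pr\!\left(\sum_{t=1}^T\|f(X_t)\|_2^2<\tfrac12 A\right)\le e^{\lambda A/2}\,\E\exp\!\left(-\lambda\sum_{t=1}^T\|f(X_t)\|_2^2\right)\le \exp\!\left(-\frac{\lambda A}{2}+\frac{\lambda^2 T}{2k}\,\E\Big(\sum_{t=1}^k\|f(X_t)\|_2^2\Big)^{\!2}\right).
\]
It then remains to bound the quadratic correction term in terms of $A$ and $\mathrm{cond}_{\scrF}$ and to optimize $\lambda$.

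For the correction term the one point that needs care is to avoid losing a spurious factor of $k$: instead of the crude $\big(\sum_{t=1}^k a_t\big)^2\le k\sum_t a_t^2$, I would invoke Minkowski's inequality in $L^2(\Pr)$, i.e.\ $\big(\E(\sum_{t=1}^k\|f(X_t)\|_2^2)^2\big)^{1/2}\le\sum_{t=1}^k\big(\E\|f(X_t)\|_2^4\big)^{1/2}$, followed by the defining bound $\big(\E\|f(X_t)\|_2^4\big)^{1/2}\le \mathrm{cond}_{\scrF}\,\E\|f(X_t)\|_2^2$ coming from \eqref{eq:hypcon}. This yields $\E\big(\sum_{t=1}^k\|f(X_t)\|_2^2\big)^2\le \mathrm{cond}_{\scrF}^2\big(\sum_{t=1}^k\E\|f(X_t)\|_2^2\big)^2=\mathrm{cond}_{\scrF}^2(kA/T)^2$. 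Substituting, the exponent is at most $-\tfrac{\lambda A}{2}+\tfrac{\lambda^2 k\,\mathrm{cond}_{\scrF}^2 A^2}{2T}$, a quadratic in $\lambda$; minimizing over $\lambda\ge 0$ (concretely, taking $\lambda=\tfrac{T}{2k\,\mathrm{cond}_{\scrF}^2 A}$) produces an upper bound of the claimed exponential form $\exp\!\big(-cT/(k\,\mathrm{cond}_{\scrF}^2)\big)$. I do not anticipate any genuine obstacle: the argument is essentially a templated Chernoff bound, and the only substantive step is the tight handling of the second-moment block term, which is exactly where the conditioning factor $\mathrm{cond}_{\scrF}$ enters.
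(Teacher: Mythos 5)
Your proposal is correct and follows essentially the same route as the paper's proof: a Chernoff bound on $\exp\bigl(-\lambda\sum_{t=1}^T\|f(X_t)\|_2^2\bigr)$ controlled by \Cref{lem:selfnormoffexpineq}, the moment bound $\E\bigl(\sum_{t=1}^k\|f(X_t)\|_2^2\bigr)^2\le \mathrm{cond}_{\scrF}^2\bigl(\sum_{t=1}^k\E\|f(X_t)\|_2^2\bigr)^2$ (your Minkowski step is the same estimate the paper obtains via Cauchy--Schwarz), and optimization over $\lambda$, with the order of bounding versus optimizing being immaterial. One remark on constants: your $\lambda=\tfrac{T}{2k\,\mathrm{cond}_{\scrF}^2A}$ gives exponent $-\tfrac{T}{8k\,\mathrm{cond}_{\scrF}^2}$, and indeed the exact infimum of $-\tfrac{\lambda}{2}\sum_{t=1}^T\E\|f(X_t)\|_2^2+\tfrac{\lambda^2T}{2k}\E\bigl(\sum_{t=1}^k\|f(X_t)\|_2^2\bigr)^2$ equals $-\tfrac{T\bigl(\sum_{t=1}^k\E\|f(X_t)\|_2^2\bigr)^2}{8k\,\E\bigl(\sum_{t=1}^k\|f(X_t)\|_2^2\bigr)^2}$, so this route yields the lemma with $8$ where the statement has $4$; the paper's own displayed evaluation of the infimum (with $4k$ in the denominator) drops a factor of two, so the mismatch is a constant-factor slip in the paper rather than a gap in your argument.
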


\begin{proof}
A Chernoff bound along with \Cref{lem:selfnormoffexpineq} yields the following estimate:
    \begin{equation}\label{eq:hypconstuff}
        \begin{aligned}
            &\Pr \left( \sum_{t=1}^T\| f(X_t)\|^2_2 < \frac{1}{2}\sum_{t=1}^T \E \| f(X_t)\|^2_2  \right)  \\
            &\leq\inf_{\lambda \geq 0} \exp\left(-\frac{\lambda}{2} \sum_{t=1}^T \E \| f(X_t)\|^2_2 + \frac{\lambda^2 T}{2 k}  \E \left( \sum_{t=1}^{k} \| f(X_t)\|^2_2\right)^2\right)\\
            &
            =\exp\left( - \frac{ T\left(\sum_{t=1}^k \E \| f(X_t)\|^2_2\right)^2}{4k\E \left( \sum_{t=1}^{k} \| f(X_t)\|^2_2\right)^2} \right).
        \end{aligned}
    \end{equation}
    Using  the  Cauchy-Schwarz inequality and \eqref{eq:hypcon}:
\begin{equation}\label{eq:hypconstuff2}
    \begin{aligned}
        \E \left( \sum_{t=1}^{k} \| f(X_t)\|^2_2\right)^2
        \leq \mathrm{cond}_{\scrF}^2 \:  \left(\sum_{t=1}^{k} \E \| f(X_t)\|^2_2\right)^2.
    \end{aligned}
\end{equation}
The result follows by combining \eqref{eq:hypconstuff} with \eqref{eq:hypconstuff2}.
\end{proof}

Consequently, invoking the union bound, we arrive at following lower uniform law.

\begin{proposition}\label{prop:loweruniformlaw}
    Impose A1-A3. We have that:
    \begin{equation}
        \Pr \left( \exists f \in \scrF_\star \ : \  \sum_{t=1}^T\| f(X_t)\|^2_2 < \frac{1}{2}\sum_{t=1}^T \E \| f(X_t)\|^2_2 \right) \leq |\scrF_\star|  \exp\left( - \frac{T}{4k \times \mathrm{cond}_{\scrF}^2 } \right).
    \end{equation}
\end{proposition}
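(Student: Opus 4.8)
The plan is to reduce the uniform statement to the pointwise anticoncentration bound proved immediately above and then apply the union bound. First, observe that by A1 the translated class $\scrF_\star = \scrF - f^\star$ is finite, with $|\scrF_\star| \leq |\scrF| < \infty$. Hence the event
\[
\left\{ \exists f \in \scrF_\star \ : \ \sum_{t=1}^T \|f(X_t)\|_2^2 < \tfrac{1}{2}\sum_{t=1}^T \E\|f(X_t)\|_2^2 \right\}
\]
is a finite union, over $f \in \scrF_\star$, of the events $E_f \triangleq \left\{ \sum_{t=1}^T \|f(X_t)\|_2^2 < \tfrac{1}{2}\sum_{t=1}^T \E\|f(X_t)\|_2^2 \right\}$.

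Next, I would invoke the preceding lemma, which under A2--A3 gives, for each fixed $f \in \scrF_\star$, the bound $\Pr(E_f) \leq \exp\left(-\tfrac{T}{4k\,\mathrm{cond}_{\scrF}^2}\right)$. The one point worth checking here is that the right-hand side does not depend on $f$: the quantity $\mathrm{cond}_{\scrF}$ defined in \eqref{eq:hypcon} is already a supremum over $f \in \scrF_\star$ and $t \in [T]$ of the ratio $\sqrt{\E\|f(X_t)\|_2^4}/\E\|f(X_t)\|_2^2$, so the per-$f$ condition number appearing in the pointwise proof is dominated by $\mathrm{cond}_{\scrF}$ uniformly, and the same exponential rate is valid for every $f$ at once. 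With this in hand, the union bound \eqref{eq:theunionbound} yields
\[
\Pr\!\left(\bigcup_{f \in \scrF_\star} E_f\right) \leq \sum_{f \in \scrF_\star} \Pr(E_f) \leq |\scrF_\star|\exp\!\left(-\frac{T}{4k\,\mathrm{cond}_{\scrF}^2}\right),
\]
which is exactly the claim.

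There is essentially no hard step here; the statement is a textbook ``pointwise bound plus union bound'' argument, and the only thing to be careful about is the uniformity of $\mathrm{cond}_{\scrF}$ noted above (so that a single exponent works for the whole class) together with the finiteness afforded by A1, which is what makes the union bound applicable without any discretization or chaining. If A1 were dropped one would instead need a covering/chaining argument on $\scrF_\star$, but under the stated assumptions this is unnecessary.
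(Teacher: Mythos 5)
Your proposal is correct and matches the paper's argument exactly: the paper also combines the preceding pointwise anticoncentration lemma (whose bound already uses the class-uniform quantity $\mathrm{cond}_{\scrF}$) with a union bound over the finite class $\scrF_\star$ afforded by A1. Nothing is missing.
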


\paragraph{Controlling the Supremum of the Empirical Process \eqref{eq:empprocess}}

We begin with pointwise control.

\begin{lemma}\label{lem:selfnormoffset}
Impose A4. For every $\lambda \in [0,1/8\sigma^2]$:
    \begin{equation}
        \E \exp \left(\lambda \left(\sum_{t=1}^{T} 4\langle V_t,  f(X_t)\rangle -  \sum_{t=1}^{T} \| f(X_t) \|^2_2\right) \right) \leq 1.
    \end{equation}
\end{lemma}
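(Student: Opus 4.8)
The plan is to recognize the quantity whose expectation we must bound as the terminal value of a nonnegative supermartingale started at one, and then to conclude by iterated conditioning. First I would introduce the filtration $\mathcal{G}_t \triangleq \sigma(X_{1:t+1}, V_{1:t})$ for $t\geq 0$; by assumption A4 the vector $f(X_t)$ is $\mathcal{G}_{t-1}$-measurable, while $V_t$ is $\sigma^2$-conditionally sub-Gaussian with respect to $\mathcal{G}_{t-1}$ in the sense of \Cref{def:sub_G}. I would then set
\begin{equation*}
    M_t \triangleq \exp\!\left(\lambda \sum_{s=1}^t \big(4\langle V_s, f(X_s)\rangle - \|f(X_s)\|_2^2\big)\right), \qquad M_0 \triangleq 1,
\end{equation*}
so that the claim of the lemma is exactly $\E M_T \leq 1$.

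Next I would peel off the last factor. Since $M_{T-1}$ and $f(X_T)$ are $\mathcal{G}_{T-1}$-measurable, the tower property gives
\begin{align*}
    \E M_T &= \E\!\left[ M_{T-1}\, e^{-\lambda \|f(X_T)\|_2^2}\, \E\!\left[ e^{4\lambda \langle V_T, f(X_T)\rangle} \,\middle|\, \mathcal{G}_{T-1}\right]\right].
\end{align*}
Applying \Cref{def:sub_G} conditionally with $v = 4\lambda f(X_T)$ bounds the inner conditional expectation by $\exp\!\big(8\sigma^2\lambda^2 \|f(X_T)\|_2^2\big)$, whence the bracketed quantity is at most $M_{T-1}\exp\!\big(\lambda\|f(X_T)\|_2^2(8\sigma^2\lambda - 1)\big)$. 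For $\lambda \in [0, 1/(8\sigma^2)]$ the exponent is nonpositive, so $\E M_T \leq \E M_{T-1}$. Iterating this step downward over $t = T-1, T-2, \dots, 1$ yields $\E M_T \leq \E M_{T-1} \leq \cdots \leq \E M_0 = 1$, which is the desired bound.

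I do not expect a genuine obstacle here: this is the standard completion-of-the-square / supermartingale trick underlying all the self-normalized bounds in \Cref{sec:selfnorm}. The only point requiring a line of care is the measurability bookkeeping---that $f(X_t)$ lies in the conditioning $\sigma$-algebra at step $t$ while $V_t$ remains conditionally sub-Gaussian there---which is precisely the martingale-difference content of A4. The one numerical input is that the cross term $4\lambda\langle V_t, f(X_t)\rangle$ contributes a conditional log-moment-generating function of $8\sigma^2\lambda^2\|f(X_t)\|_2^2$, which is dominated by the deterministic offset $\lambda\|f(X_t)\|_2^2$ exactly in the stated range $\lambda \leq 1/(8\sigma^2)$.
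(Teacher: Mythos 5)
Your proof is correct and is essentially the paper's own argument: peel off the last factor with the tower property, bound the conditional MGF of $4\lambda\langle V_t, f(X_t)\rangle$ by $\exp(8\sigma^2\lambda^2\|f(X_t)\|_2^2)$ via conditional sub-Gaussianity, observe the exponent is nonpositive for $\lambda \le 1/(8\sigma^2)$, and iterate down to $1$. If anything, your explicit filtration $\mathcal{G}_{t-1}=\sigma(X_{1:t},V_{1:t-1})$ makes the measurability bookkeeping slightly cleaner than the paper's shorthand $\E_t[\cdot]=\E[\cdot\mid X_{1:t}]$, but the substance is identical.
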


\begin{proof}
    The proof is mostly repeated application of the tower property of conditional expectation. Let us write $\E_{t}[\cdot]=\E[\cdot | X_{1:t}]$. Then:
    \begin{equation}
    \begin{aligned}
        &\E \exp \left(\lambda \left(\sum_{t=1}^{T} 4\langle V_t,  f(X_t)\rangle -  \sum_{t=1}^{T} \| f(X_t) \|^2_2\right) \right) \\
        &=\E \Bigg[\exp \left(\lambda \left(\sum_{t=1}^{T-1} 4\langle V_t,  f(X_t)\rangle -  \sum_{t=1}^{T-1} \| f(X_t) \|^2_2\right) \right)\\
        &\times \E_{T} \exp\left(\lambda \left( 4\langle V_T,  f(X_T)\rangle -   \| f(X_T) \|^2_2\right) \right)\Bigg]
        \\
        &\leq\E \Bigg[\exp \left(\lambda \left(\sum_{t=1}^{T-1} 4\langle V_t,  f(X_t)\rangle -  \sum_{t=1}^{T-1} \| f(X_t) \|^2_2\right) \right)\\
        &\times \E_{T} \exp\left( 8\lambda^2 \sigma^2 \| f(X_T) \|^2_2 -   \lambda\| f(X_T) \|^2_2\right)\Bigg]
        \\
        &\leq\E \Bigg[\exp \left(\lambda \left(\sum_{t=1}^{T-1} 4\langle V_t,  f(X_t)\rangle -  \sum_{t=1}^{T-1} \| f(X_t) \|^2_2\right) \right)\Bigg]\\
        &\leq\dots \leq 1
    \end{aligned}
    \end{equation}
    by repeated application of the tower property and using that $ 8\lambda^2 \sigma^2 \| f(X_t) \|^2_2 -   \lambda\| f(X_t) \|^2_2 \leq 0$  for every $t$ for the prescribed range of $\lambda$. 
\end{proof}

\begin{proposition}\label{prop:unionboundfiniteclass}
    Impose A1 and A4. For every $u \in \R_+$ we have that
    \begin{equation}
        \Pr \left( \max_{f \in \scrF_\star } \left\{\sum_{t=1}^{T} 4\langle V_t,  f(X_t)\rangle -  \sum_{t=1}^{T} \| f(X_t) \|^2_2\right\} >  u \right)\leq  |  \scrF_\star| \exp\left( \frac{-u}{8\sigma^2} \right).
    \end{equation}
\end{proposition}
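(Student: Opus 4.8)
The plan is to reduce the statement to a pointwise tail bound for each fixed $f \in \scrF_\star$ followed by a union bound, exactly as in the template of \Cref{subsec:eps} but now indexed by the \emph{finite} set $\scrF_\star$ rather than by a sphere. The heavy lifting has already been done in \Cref{lem:selfnormoffset}, which supplies the needed exponential inequality; what remains is essentially bookkeeping.

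First I would fix an arbitrary $f \in \scrF_\star$ and apply the Chernoff trick \eqref{eq:chernofftrick} to the random variable $Z_f \triangleq \sum_{t=1}^{T} 4\langle V_t, f(X_t)\rangle - \sum_{t=1}^{T} \| f(X_t)\|_2^2$. For any $\lambda \geq 0$ this gives
\begin{equation*}
    \Pr\left( Z_f > u \right) \leq e^{-\lambda u}\, \E \exp(\lambda Z_f).
\end{equation*}
By \Cref{lem:selfnormoffset}, we have $\E \exp(\lambda Z_f) \leq 1$ for every $\lambda \in [0, 1/(8\sigma^2)]$ (here A4 is what is being invoked). Choosing the endpoint $\lambda = 1/(8\sigma^2)$ yields the pointwise bound $\Pr(Z_f > u) \leq \exp(-u/(8\sigma^2))$.

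Next I would union bound over $\scrF_\star$. Since
\begin{equation*}
    \left\{ \max_{f\in\scrF_\star} Z_f > u \right\} = \bigcup_{f \in \scrF_\star} \{ Z_f > u \},
\end{equation*}
and since A1 guarantees $|\scrF_\star| = |\scrF| < \infty$, the finite union bound \eqref{eq:theunionbound} gives
\begin{equation*}
    \Pr\left( \max_{f\in\scrF_\star} Z_f > u \right) \leq \sum_{f\in\scrF_\star} \Pr(Z_f > u) \leq |\scrF_\star|\exp\left(\frac{-u}{8\sigma^2}\right),
\end{equation*}
which is the claim. There is no real obstacle here: the only point requiring any care is that the exponential inequality of \Cref{lem:selfnormoffset} holds for \emph{all} $f\in\scrF_\star$ simultaneously with the same range of admissible $\lambda$ (it does, since the bound $8\lambda^2\sigma^2\|f(X_t)\|_2^2 - \lambda\|f(X_t)\|_2^2 \le 0$ used in its proof is valid for any value of $\|f(X_t)\|_2^2$), so that the same $\lambda = 1/(8\sigma^2)$ can be used uniformly before union bounding. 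Finiteness of $\scrF_\star$ is precisely what keeps the multiplicative constant $|\scrF_\star|$ from exploding; the chaining-free nature of the argument is the reason A1 was imposed in the first place.
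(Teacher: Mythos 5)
Your proposal is correct and follows essentially the same route as the paper: a Chernoff bound with the endpoint choice $\lambda = 1/(8\sigma^2)$, the moment generating function bound from \Cref{lem:selfnormoffset}, and a union bound over the finite class $\scrF_\star$ (the paper merely applies the union bound before the Chernoff step rather than after, which is immaterial). No gaps.
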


\begin{proof}
Fix $u>0$. We write:
    \begin{equation}
        \begin{aligned}
           &\Pr \left( \max_{f \in \scrF_\star } \left\{\sum_{t=1}^{T} 4\langle V_t,  f(X_t)\rangle -  \sum_{t=1}^{T} \| f(X_t) \|_2^2\right\} >  u \right)\\
           &\leq | 
 \scrF_\star|\max_{f \in \scrF_\star } \Pr \left(  \left\{\sum_{t=1}^{T} 4\langle V_t,  f(X_t)\rangle -  \sum_{t=1}^{T} \| f(X_t) \|_2^2\right\} >  u \right)\\
 &
 \leq \inf_{\lambda \geq 0} |  \scrF_\star|\max_{f \in \scrF_\star }  \E\exp\left(-\lambda u +  \lambda \left\{\sum_{t=1}^{T} 4\langle V_t,  f(X_t)\rangle -  \sum_{t=1}^{T} \| f(X_t) \|_2^2\right\} \right)\\
 &
 \leq |  \scrF_\star| \exp\left( \frac{-u}{8\sigma^2} \right)
        \end{aligned}
    \end{equation}
    by choosing $\lambda = 1/8\sigma^2$ and invoking \Cref{lem:selfnormoffset}.
\end{proof}

\paragraph{Finishing the Proof of \Cref{thm:nonlinearthm}}

\Cref{thm:nonlinearthm} easily follows by combining \Cref{prop:loweruniformlaw} with \Cref{prop:unionboundfiniteclass}. Namely, on the event of \Cref{prop:loweruniformlaw}, which holds after the burn-in of \Cref{thm:nonlinearthm}, it suffices to control the supremum of the empirical process \eqref{eq:nonparamoffset}. This control is the content of \Cref{prop:unionboundfiniteclass}. The result follows by setting the respective failure probabilities to $\delta$, and rescaling $\delta$. \hfill $\blacksquare$
\end{document}